\newcommand{\beq}{\begin{equation}}
\newcommand{\eeq}{\end{equation}}
\def\R{\mathbb{R}}
\def\C{\mathbb{C}}
\def\Z{\mathbb{Z}}
\newcommand{\bC}{\ensuremath{\mathbb{C}}}
\newcommand{\bR}{\ensuremath{\mathbb{R}}}
\newcommand{\bZ}{\ensuremath{\mathbb{Z}}}
\newcommand{\scA}{\ensuremath{\mathcal{A}}}
\newcommand{\scM}{\ensuremath{\mathcal{M}}}
\newcommand{\scN}{\ensuremath{\mathcal{N}}}
\newtheorem{theorem}{Theorem}[section]
\newtheorem{lemma}[theorem]{Lemma}
\newtheorem{proposition}[theorem]{Proposition}
\newtheorem{conjecture}[theorem]{Conjecture}
\theoremstyle{definition}
\newtheorem{remark}[theorem]{Remark}
\newcommand{\lk}{\operatorname{lk}}
\newcommand{\Tr}{\operatorname{Tr}}
\renewcommand{\Re}{\operatorname{Re}}
\def\k{k}
\def\n{n}
\def\Hopf{\textrm{Hopf}}
\begin{document}

\baselineskip=18pt  
\baselineskip 0.7cm

\begin{titlepage}

\setcounter{page}{0}

\renewcommand{\thefootnote}{\fnsymbol{footnote}}

\begin{flushright}
\end{flushright}

\vskip 0.5cm

\begin{center}
{\LARGE \bf
Topological Strings,}
{\center}
{ \LARGE \bf D-Model,}
{\center}
{\LARGE \bf and }
{\center}
{\LARGE \bf Knot Contact Homology}

\vskip 1.5cm

{\large
Mina Aganagic$^{1,2}$, Tobias Ekholm$^{3,4}$, Lenhard Ng$^5$\\ and\\ Cumrun Vafa$^6$
\\
\medskip
}


{\it
$^1$Center for Theoretical Physics, University of California, Berkeley,  USA\\
$^2$Department of Mathematics, University of California, Berkeley,  USA\\
$^3$Department of Mathematics, Uppsala University,  Uppsala, Sweden\\
$^4$Institut Mittag-Leffler, Djursholm, Sweden\\
$^5$Department of Mathematics, Duke University, Durham, USA\\
$^6$Jefferson Physical Laboratory, Harvard University, Cambridge, USA\\
}

\end{center}

\centerline{{\bf Abstract}}
\medskip
\noindent
We study the connection between topological strings and contact homology
recently proposed in the context of knot invariants. In particular, we establish the proposed relation between the Gromov-Witten disk
amplitudes of a Lagrangian associated to a knot and augmentations of its contact homology algebra. This also implies the equality between the $Q$-deformed $A$-polynomial and the augmentation polynomial of knot contact homology (in the irreducible case).
We also generalize this relation to the case of links and to higher rank representations for knots.
The generalization involves a study of the quantum moduli space of special Lagrangian branes with higher Betti numbers probing the Calabi-Yau. This leads to
an extension of SYZ, and a new notion of mirror symmetry, involving higher dimensional mirrors. The mirror theory is a topological string, related to D-modules, which we call the ``D-model.'' In the present setting, the mirror manifold is the augmentation variety of the link. Connecting further to contact geometry, we study intersection properties of branches of the augmentation variety guided by the relation to D-modules. This study leads us to propose concrete geometric constructions of Lagrangian fillings for links.
We also relate the augmentation variety
with the large $N$ limit of the colored HOMFLY, which we conjecture to be related to a $Q$-deformation
of the extension of $A$-polynomials associated with the link complement.

\end{titlepage}
\setcounter{page}{1} 

\section{Introduction}
The subject of topological strings has been important in both physics and mathematics. From the
physical perspective it has been studied from many different viewpoints and plays a prominent role
in the understanding of string dualities. From the mathematical perspective, its underpinnings (Gromov-Witten invariants) are well understood and play a central role in symplectic geometry and related subjects. Moreover, in many cases one can directly compute the amplitudes using either physical methods or mathematical techniques and the two agree.

An interesting application of topological strings involves the study of knot and link invariants.
In particular it is known that HOMFLY polynomials for links $K$ can be reformulated in terms
of topological strings on $T^*S^3$ where one includes $N$ Lagrangian D-branes wrapping
$S^3$ and some Lagrangian brane $L_K$ intersecting $S^3$ along the link and asymptotic to the conormal of $K$ at infinity. In this setup, it is possible to shift $L_K$ off of the $0$-section $S^3\subset T^{\ast}S^{3}$, which suggests that the data of the link is imprinted in how $L_K$ intersects the boundary of $T^*S^3$ at infinity. This ideal boundary is the unit cotangent bundle $U^{\ast}S^{3}$, which is a contact manifold (topologically $S^3\times S^2$) with contact 1-form given by the restriction of the action form $p\,dq$.  The intersection of $L_K$
with $U^{\ast}S^{3}$ is a Legendrian torus $\Lambda_K$.

Large $N$ transition relates this theory to a corresponding theory in the geometry where $S^3$ shrinks to zero size and is blown up with blow-up parameter given by $Ng_{s}$, where $g_{s}$ is the string coupling constant. Under the transition, boundary conditions corresponding to strings ending on branes wrapping $S^{3}$ close up and disappear, but the transition does not affect the Legendrian torus $\Lambda_K$, which is part of the geometry infinitely far away from the tip of the singular cone where $S^3$ shrinks and gets replaced by $S^2$.

Contact homology is a theory that was introduced in a setting corresponding exactly to the above mentioned  geometry at infinity. More precisely, one studies the
geometry of $\R\times\Lambda_{K} \subset \R\times U^{\ast} S^{3}$, which we can naturally view
as the far distance geometry of $L_K$, using holomorphic disks. In this particular case of conormal tori $\Lambda_K$ in the unit cotangent bundle of $S^{3}$, the resulting theory is known as knot contact homology. From the viewpoint of string theory, this theory involves a Hilbert space of physical open string states corresponding to Reeb chords (flow segments of the Reeb vector field with endpoints on $\Lambda_K$) and a BRST-operator deformed by holomorphic disks in $\R\times U^{\ast}S^{3}$ with boundary on $\R\times\Lambda_{K}$ and with punctures mapping to Reeb chords at infinity. In this language, knot contact homology is the corresponding BRST-cohomology. From a mathematical perspective this structure can be assembled into a differential graded algebra (DGA) which is well studied and in particular can be computed from a braid presentation of a link through a concrete combinatorial description of all relevant moduli spaces of holomorphic disks.

In view of the above discussion it is natural to ask how this DGA is related with the open version of Gromov-Witten theory of $L_{K}$. In \cite{AV} a partial answer for this was proposed: it was conjectured that the augmentation variety associated to a knot, which is a curve that parametrizes one-dimensional representations of the DGA associated to the knot, is the same as the moduli space of a single Lagrangian  $L_K$ corrected by disk instantons. This moduli space is in turn, according to a generalization of SYZ conjecture, related to the mirror curve, which is the locus of singular fibers in the bundle of conics that constitutes the mirror, and as such determines the mirror.\footnote{The A-polynomial curves and their deformations were studied in a context of (generalizing) the volume conjecture in \cite{Gukov:2003na}, and its relation to topological string was studied in \cite{2009ForPh, Dijkgraaf:2010ur, Gukov:2011qp, Borot:2012cw, Fuji:2012nx}. Mirror symmetry and torus knots were studied in  \cite{marino} and also in \cite{Jockers:2012pz}.}

One aim of this paper is to indicate a path to a mathematical proof of this conjecture: in Section \ref{sec:qchmaps} we show how to get a local parametrization of the augmentation curve in terms of the open Gromov-Witten potential of a Lagrangian filling of the conormal torus of a knot, see Section \ref{s:GWdef} for the definition, and that expression agrees with the local parametrization of the mirror curve derived using physical arguments, see Section \ref{sec:SYZmirror}. In particular, for knots with irreducible augmentation variety, using the conormal filling we obtain a proof of the conjecture, see Remark \ref{rmk:proofA=A}.

A further aim is to generalize this from one-component knots to many-component links, and the generalization turns out to involve interesting new ingredients. We consider new Lagrangian branes that have the same asymptotics as $L_K$ but have a topologically different filling. Similar new Lagrangian branes in fact appear already in the case of knots, where different $1$-cycles of the Legendrian torus bound and can be shrunk in the different branes.
For links with $n>1$ components, there are further possibilities with Lagrangian branes of various numbers of components. The maximal number of components is $n$, in which case no two components of the conormal tori of the link at infinity are connected through the brane. However, there are Lagrangian fillings and corresponding branes that connect some of the components at infinity. For example, there is a single component Lagrangian brane which connects them all, see Section \ref{ssec:antibrane} for a conjectural geometric construction.

Our study of the quantum moduli space of all these Lagrangians leads
to a new reformulation of mirror symmetry: the moduli space of branes for an $n$-component link is $n$-dimensional and the mirror geometry, instead of being a $1$-dimensional curve, will be an $n$-dimensional variety. It turns out that this variety can be viewed as a Lagrangian variety $V\subset (\C^{*})^{2n}$ in a canonical way and we identify it with the augmentation variety from knot contact homology.

An unusual feature of this structure is that we encounter higher-dimensional geometries as the mirror.
In fact, it may appear that one cannot formulate topological string in this context since the critical dimension for that theory is effectively 1 for non-compact Calabi-Yau 3-folds but for $n$-component links we have effective dimension $n>1$.  Nevertheless we propose a string theory mirror even in these cases, based on what we call the topological ``D-model''. The D-model is the A-model topological string on $({\mathbb C^{*}})^{2n}$ with a Lagrangian D-brane $V$ and a canonical coisotropic brane filling the whole  $({\mathbb C^{*}})^{2n}$.  In this context the topological string is exact at the level of the annulus, with the analog of higher genus A-model corrections being captured not by higher genera of strings, but by the choice of a flux turned on along the coisotropic brane.  The D-model leads to a natural definition of the theory in terms of D-modules (hence the name), while for $n=1$ (in particular for knots), the D-model is already known to be equivalent to the B-model.

The organization of this paper is as follows. In Section \ref{Sec:review} we review the relation between topological strings and Chern-Simons theory, large $N$ transition, and knot invariants. Furthermore, we describe, using a generalized SYZ formulation, how any knot gives a mirror geometry.  In Section \ref{Sec:CSandfillings} we introduce new Lagrangians associated to knots with the usual asymptotics at infinity but with different interior topology. We then generalize the discussion from knots to $n$-component links and show how $n$-dimensional Lagrangian varieties in flat space $(\C^*)^{2n}$ arise in the the study of the moduli space of Lagrangian branes filling the link conormal.  In Section \ref{Sec:knotch}, we introduce basic elements of knot contact homology; furthermore, we
relate augmentation varieties with the (disk instanton corrected) moduli spaces of Lagrangians associated to knots and links, and study intersection properties of branches of the augmentation variety guided by the D-model. In Section \ref{sec:augmentations}, we present a mathematical argument relating knot contact homology for links to disk amplitudes in Gromov-Witten theory, and study geometric constructions of Lagrangian fillings for conormals as well as properties and applications of the resulting Lagrangians.  In Section \ref{sec:ex} we present some examples. In Section \ref{Sec:Dmodel}, we formulate a conjecture
about how to quantize higher dimensional augmentation varieties in terms of the D-model, by relating them to D-modules. Finally, Appendices \ref{app:Hopf} and \ref{app:toruslink} contain calculations of a more technical nature.

\section{Review}\label{Sec:review}

Consider Chern-Simons gauge theory with gauge group $G=SU(N)$ on a closed $3$-manifold $M$ at level $k$, where $k$ is a positive integer. The Chern-Simons partition function is given by the path integral
$$
Z_{CS}(S^{3})= \int {\cal D} {A} \;e^{\,\frac{ik}{4\pi}\,S_{CS}(A)}
$$
over the space of connections $A$ with values in the Lie algebra of the gauge group, where
$$
S_{CS}(A) = \int_{M} {\rm Tr}\Bigl(A\wedge dA  + \tfrac{2}{3} A\wedge A \wedge A\Bigr)
$$
is the Chern-Simons action of the connection $A$. The path integral is independent of metric on $M$ and hence gives a topological invariant of $3$-manifolds. Submanifolds of dimension $0$ in $M$, i.e.~points, carry no topological information but submanifolds of dimension $1$, i.e.~knots and links, do and in Chern-Simons theory there are corresponding topologically invariant observables. More precisely, we associate a Wilson loop observable in representation $R$ to a knot $K$ by inserting the path ordered exponential
$$
\Tr_R U(K),
$$
where $U(K)=P \exp i\oint_{K} A$ is the holonomy of $A$ along $K$, in the path integral.
In fact, to define this we also have to choose a framing of the knot, i.e.~a non-vanishing vector field in the normal bundle of the knot $K\subset M$.
The case of many-component links is similar: a link $K$ is a collection
$$
K=K_1\cup K_2\cup \ldots K_n
$$
of disjoint knots $K_{j}$ in $M$. We specify a framing of each knot component of the link and representations
$$
R_1, \; R_2, \ldots, R_n
$$
coloring $K_1,\dots, K_n$, respectively. The corresponding link invariant is then the expectation value
$$
\left\langle \Tr_{R_1}U(K_1) \ldots \Tr_{R_n}U(K_n)\right\rangle=
Z_{R_1, \ldots, R_n}(M;K_1,\ldots,K_n)/Z(M)
$$
obtained by computing the Chern-Simons path integral with insertion of link observables:
$$
Z_{R_1, \ldots, R_n}(K_1,\ldots,K_n)= \int {\cal D} {A} \;e^{{i k \over 4 \pi}S_{CS}(A)}\;
\Tr_{R_1}U(K_1)\ldots \Tr_{R_n}U(K_n),
$$
and normalizing it with the path integral in the vacuum.

In \cite{Witten:1988hf}, Witten explained how to solve the above theory exactly. Any three dimensional topological theory corresponds to a two dimensional rational conformal field theory (CFT). The Hilbert space of the three dimensional theory and operators acting on it can be constructed from conformal blocks of the CFT and from representations of the corresponding modular group. In the Chern-Simons case, the relevant conformal field theory is the $SU(N)_k$ Wess-Zumino-Witten (WZW) model, and one finds that knowledge of the $S$, $T$ and braiding matrices is all that is needed to solve the theory on any $3$-manifold.

In this way, invariants of knots and links in the $3$-sphere $S^{3}$ that arise from Chern-Simons theory can be explicitly computed. In particular, the polynomial knot invariants considered earlier by Jones correspond to the gauge group $G=SU(2)$ and Wilson lines in the fundamental representation. More generally, for a link $K\subset S^{3}$ with knot components $K=K_{1}\cup\dots\cup K_{n}$, the expectation values
$$
\left\langle \Tr_{R_1} U(K_1)\ldots \Tr_{R_n}U(K_n)\right\rangle= H_{R_1, \ldots, R_n}( K_1,\ldots,K_n)(q,Q)
$$
are polynomials in the variables
$$
q = e^{2\pi i \over k+N},\qquad Q=q^N
$$
with integer coefficients that are independent of both $k$ and $N$. These polynomials are known as HOMFLY polynomials and were constructed from a mathematical point of view in \cite{HOMFLY}.

\subsection{Chern-Simons theory and topological string}\label{ssec:CStopstring}
Chern-Simons theory on $S^3$ with gauge group $SU(N)$ is intimately related to the open topological A-model, or Gromov-Witten theory, on
$$
X= T^*S^3
$$
with $N$ Lagrangian branes on the zero section $S^{3}\subset T^{\ast}S^{3}$ as follows. The topological A-model corresponds to counting holomorphic maps with Lagrangian boundary conditions. In $T^*S^3$, any holomorphic map with boundary on the zero section has vanishing area and is therefore constant. Thus, all maps that contribute to the A-model partition function $Z_{GW}(X)$ are degenerate and it was shown in \cite{Witten:1992fb} that their contributions are exactly captured by the Feynman diagrams of $SU(N)$ Chern-Simons theory on $S^3$. Consequently, the partition functions of the topological A-model on $X$ equals the Chern-Simons partition function on $S^3$:
$$
Z_{GW}(X)=Z_{CS}(S^3),
$$
where, as mentioned above, $Z_{GW}$ localizes on the $0$-dimensional space of holomorphic maps and is thus  given by the (exponentiated) generating function
$$
Z_{GW}(X) = \exp\left(\sum_{g,h\ge 0} F_{g,h}\;N^h g_s^{2g-2+h}\right),
$$
where ${F}_{g,h}$ captures the contribution of maps of connected genus $g$ Riemann surfaces to $X$ with $h$ boundary components mapping to $S^3\subset X$. Here, each boundary component is weighted by a factor $N$ corresponding to the choice of which of the $N$ D-branes wrapping $S^3$ that it lands on, and
the genus counting parameter (or string coupling constant) of the open topological string, $g_s$, equals the effective value of Chern-Simons coupling constant:
$$
g_s  = \frac{2 \pi i}{k+N}.
$$
From the perspective of Chern-Simons perturbation theory, the numbers $F_{g,h}$ arise by organizing the Feynman graphs in the following way: thicken the graphs into ribbon graphs with gauge index labels on the boundary; the number $F_{g,h}$ then keeps track of the contributions from the graphs that give rise to ribbon graphs corresponding to a genus $g$ Riemann surface with $h$  boundary components.
We also point out that the parameter $q$, in terms of which the Chern-Simons knot invariants become polynomial, is given by
$$
q = e^{\,g_s}.
$$

Knots and links can be included in the correspondence between Chern-Simons and the topological A-model in the following way \cite{OV}.
To each knot $K$ in $S^3$, we associate a Lagrangian $L_K$ in $X$, which we take to be its conormal in $T^*S^3$ consisting of all covectors along the knot that annihilate its tangent vector. In particular,
intersecting $L_{K}$ with the zero section, we get the knot itself,
$$
L_K \cap S^3 = K.
$$
For $n$-component links $K=K_{1}\cup\dots \cup K_{n}$ in $S^3$, we consider the Lagrangian $L_{K}$ which is the union of the conormals of its components
$$
L_{K}=L_{K_1} \cup \ldots \cup L_{K_n}.
$$
We will consider D-branes on $L_{K_j}$ and therefore need to include a sector in the theory that corresponds to worldsheets with boundaries both in the branes on $L_{K}$ and in the branes on the zero section $S^3$. We write the partition function of the topological string on $X$ with these branes present as
$$
Z_{GW}(X, L_{K_1},\ldots,L_{K_n})
$$
and note that, in addition to depending on $g_s$ and $N$, it also depends on the moduli of the Lagrangians $L_{K_j}$ which in particular keeps track of the class in $H_1(L_{K_j})$ represented by the boundaries of the worldsheets.

In the case under consideration, $L_{K_j}$ each has the topology of $S^{1}\times\R^2$ and we get one modulus $x_j$ for each Lagrangian $L_{K_j}$. Here, the complex parameter $x_j$ can be written as $x_j=r_j + i \oint_{S^1} A_j$, where the real part $r_j$ can be viewed as coming from the moduli of the deformations of the Lagrangian and the imaginary part is the holonomy of the $U(1)$ gauge field of the brane around the nontrivial cycle $S^1\times\{0\}\subset L_{K_j}$. In later sections, we will study similar Lagrangians for $K$ of first Betti number $b_1=r>1$, in which case the moduli of the Lagrangian is $r$-dimensional. Giving $x_j$ a nonzero value corresponds to lifting the Lagrangian $L_{K_j}$ off of the $S^3$.

From the Chern-Simons perspective, assuming there is a single D-brane on $L_{K}$ for a knot $K$, computing the partition function $Z_{GW}(X;L_{K})$ corresponds to inserting the operator
\beq\label{bif}
\det(1 - e^{-x} \, U(K))^{-1}.
\eeq
This describes the effect of integrating out the bifundamental strings, with one boundary on $S^3$ and the other on $L_K$. To relate this to knot invariants, we formally expand the determinant:
$$
\det(1\otimes 1 - e^{-x} \otimes U(K))^{-1}\ = \sum_{S_{k}} \Tr_{S_k} U(K) \;e^{-kx},
$$
where the sum ranges over all totally symmetric representations $S_k$ of $SU(N)$ of rank $k$.
Thus, computing in $SU(N)$ Chern-Simons theory on $S^3$ the following weighted sum of expectation values,
$$
\Psi_{K_1, \ldots, K_n} (x_1,\ldots,x_n) =\sum_{{k_1},\ldots {k_n} } \left\langle \Tr_{S_{k_1}}U(K_1)\ldots  \Tr_{S_{k_n}}U(K_n)\right\rangle \;e^{-k_1 x_1-\ldots -k_n x_n},
$$
gives the topological string partition function on $X$ with single branes on $L_{K_1},\ldots,L_{K_n}$ and $N$ branes on $S^3$.
In what follows, we will denote the HOMFLY polynomials
$$
\left\langle \Tr_{S_{k_1}}U(K_1)\ldots  \Tr_{S_{k_n}}U(K_n)\right\rangle =
H_{S_{k_1}\ldots S_{k_n}}(K_1, \ldots , K_n)
$$
simply by
$$
H_{S_{k_1}\ldots S_{k_n}}(K_1, \ldots , K_n)=
H_{{k_1}\ldots {k_n}}(K_1, \ldots , K_n).
$$
Since we isolated the part of the topological string amplitude on $X$ with some boundary component on the Lagrangian branes on $L_{K_1} \cup \ldots \cup L_{K_n}$, we get the following equation explicitly relating HOMFLY to topological string partition functions:
$$
\Psi_{K_1 \ldots K_n} (x_1,\ldots,x_n)=Z_{GW}(X, L_{K_1},\ldots,L_{K_n})/Z_{GW}(X).
$$

\subsection{Higher representations and multiple branes}
\label{ssec:multiplebranes}
In Section \ref{ssec:CStopstring}, we considered knots and links with a single brane on the conormal of each component. It is natural to ask what happens if we instead insert several branes on the conormals.
As it turns out, this reduces to a special case of single brane insertions. We explain this in the case of a single component knot $K$; the case of many component links is then an immediate generalization.

Consider  $X=T^*S^3$ with $\n>1$ branes on the Lagrangian conormal $L_K$ of a knot  $K$. Let $V$
be the $\n\times \n$ matrix of holonomies on the branes, with eigenvalues $(e^{x_1}, \ldots, e^{x_\n})$. The topological string partition function of the $\n$ branes
$$
\Psi_K^{[\n]}(x_1,\ldots,x_\n).
$$
has a contribution from worldsheets with at least one boundary component on one of the branes wrapping $S^{3}$ that can be computed by inserting
$$
 \det(1\otimes 1 -V^{-1} \otimes U(K))^{-1}
$$
in the Chern-Simons path integral, and which describes the effect of integrating out the bifundamental strings between $L_K$ and the $S^3$, generalizing \eqref{bif} to the case of more than one brane on $L_K$.

Consider instead $\n$ distinct Lagrangians that are copies of $L_{K}$ separated from it by moduli corresponding to ${\rm Re}(x_i)$. More precisely, we take the conormal of the $\n$-component link
${\tilde K} =K_{\epsilon}\cup \ldots \cup K_{\n\epsilon}$ where $K_{j\epsilon}$ is the knot obtained by shifting $K$ a distance $j\epsilon$, where $\epsilon>0$ is very small, along the framing vector field of $K$ used to define the quantum invariants (i.e.~the expectation values $\left\langle \Tr_{R} U(K)\right\rangle$). Note that the topological string partition function for a single brane on each $L_{K_{j\epsilon}}$ has exactly the same contribution, as follows e.g.~from the simple mathematical fact that
$$
\det(1\otimes 1 -V^{-1} \otimes U)^{-1}=\prod_i \det(1 -e^{-x_i} U)^{-1},
$$
which then holds inside the expectation values as well. In fact the system with $\n$ branes on $L_{K}$ is physically indistinguishable from the system with single branes on all components of the conormal $L_{\tilde K}$. The expression on the left is more naturally associated to the former, whereas the one on the right is more naturally associated to the latter.

Treating $\tilde K$ as a general link as considered in Section \ref{ssec:CStopstring}, disregarding the effects of the branes being very close, we get a corresponding string partition function
\[
\Psi_{\tilde K}(x_1,\dots,x_\n)
\]
given by the insertion in the Chern-Simons path integral described above. However, when computing the string partition for $\n$ coincident branes on $L_{K}$, it is natural to take these effects into account: contributions that are not captured by Chern-Simons theory correspond to worldsheets with no boundary component on $S^{3}$. For parallel branes these come from short strings connecting different branes; such strings only contribute nontrivially to the annulus diagram of the topological string with amplitude $\sum_{k>0}\frac{1}{n} e^{kx_i}e^{-kx_j}$ corresponding to strings connecting $L_{K_i}$ to $L_{K_j}$, $i<j$. Exponentiating this contribution, we then find the following relation between the partition function for branes on $L_K$ and the partition function of single branes on $L_{\tilde K}$, treated as a general link:
$$
\Psi_{K}^{[\n]}(x_1,\ldots,x_\n)=\Psi_{\tilde K}(x_1,\ldots,x_\n) \Delta(x_1,\ldots,x_\n),
$$
where
$$
\Delta(x_1,\ldots,x_\n)= \prod_{i<j}(e^{x_i}-e^{x_j}).
$$

Note that $\Psi^{[\n]}_{K}$ encodes the HOMFLY of the knot $K$ colored by representations with $\n$ rows:
$$
\Psi_K(x_1,\ldots,x_\n)= \left\langle\det\left(1\otimes 1 -V^{-1} \otimes U(K)\right)^{(-1)} \right\rangle  \Delta(x),
$$
using the expansion
$$
\det\left(1\otimes 1 -V^{-1} \otimes U\right)^{-1}
= \sum_{R}  \Tr_{R} U \;  \Tr_{R} V^{-1},
$$
where the sum ranges over all $SU(N)$ representations $R$ with at most $\n$ rows.

\subsection{Large $N$ duality}\label{ssec:largeN}
It was conjectured in \cite{GV} that $SU(N)$ Chern-Simons theory on $S^3$, or equivalently the topological A-model string on $X=T^*S^3$ with $N$ D-branes on $S^3$, has a dual description in terms of the topological A-model on the resolved conifold $Y$ which is the total space of the bundle
$$
{\cal O}(-1) \oplus {\cal O}(-1)\rightarrow {\C P}^1.
$$
Note that $X$ and $Y$ both approach the conical symplectic manifold $\R\times U^{\ast}S^{3}$, which is topologically $\R\times S^2\times S^3$, at infinity. At the apex of the cone sits an $S^{3}$ in $X$, while in $Y$, there sits a ${\C P}^{1}\approx S^{2}$. The duality arises from the geometric transition from $X$ to $Y$ that shrinks $S^3$ and replaces it by ${\C P}^1$ without altering the geometry at infinity. Furthermore, the $N$ branes on $S^3$ in $X$ disappear under the  transition, but their number is related to the size (symplectic area) $t$ of ${\C P}^1$ in $Y$ as follows: $t=Ng_s$, where $g_{s}$ is the string coupling constant. We write
\begin{equation}\label{eq:Q=expNg_s}
Q=\exp(-t) = q^N.
\end{equation}
The partition function of the closed topological string on $Y$ counts holomorphic maps into $Y$. All such maps arise from perturbation of branched covers of the central ${\C P}^1\subset Y$, and the variable $Q$ keeps track of their degree. In \cite{GV}, the partition function of the closed topological A-model string on $Y$ was shown to agree with the partition function of $SU(N)$ Chern-Simons theory on $S^3$, and consequently with the partition function of A-model topological string on $X$ in the background of $N$ D-branes on $S^3$, provided that the Chern-Simons parameters, $k$ and $N$, and the string coupling constant $g_s$ (for topological strings in both $X$ and $Y$) are related as $g_s = \frac{2 \pi i}{k+N}$, and that \eqref{eq:Q=expNg_s} holds. In other words, for parameters related as described:
$$
Z_{CS}(S^3;k,N) = Z_{GW}(X;g_s,N) = Z_{GW}(Y;g_s,Q).
$$

We next describe how to include knots and links in this picture, see \cite{OV}. Let $K=K_1\cup\dots\cup K_n\subset S^{3}$ be a link. As described in Section \ref{ssec:CStopstring}, adding branes along the Lagrangian conormal $L_{K}$, we relate the Chern-Simons path integral with insertions corresponding to the link with open topological string on $X$ with boundaries on either the $N$ branes on $S^{3}$ or on the branes on $L_K$.

Recall that $L_{K}\subset X$ can be pushed off of the zero section $S^{3}$, corresponding to turning on $x_{j}\ne 0$ along $L_{K_j}$. We thus assume that $L_{K}\subset X$ is disjoint from $S^{3}\subset X$ and consider the effect of the geometric transition from $X$ to $Y$. Since the transition affects only a small neighborhood of the tip of the cone, corresponding to small neighborhoods of $S^{3}\subset X$ and ${\C P}^{1}\subset Y$, the Lagrangian $L_{K}$ is canonically pushed through the transition as a Lagrangian in $Y$. Furthermore, in analogy with the closed string case discussed above, boundary conditions corresponding to worldsheet boundaries on the $N$ branes on $S^{3}$ close up and disappear, while worldsheet boundaries on the branes on $L_K$ remain unchanged. Thus, the partition function of branes on the components of $L_K$ in $Y$
$$
\Psi_{K_1, \ldots, K_n}(x_1, \ldots, x_n) =Z_{GW}(Y, L_{K_1}, \ldots, L_{K_n})/Z_{GW}(Y)
$$
also gives the partition function of branes in $X$, wrapping the Lagrangian in $X$ corresponding to $L_K\subset Y$ under transition, and $N$ branes on $S^{3}\subset X$, provided \eqref{eq:Q=expNg_s} holds.

\subsection{SYZ mirror symmetry for knots}\label{sec:SYZmirror}
Consider the A-model topological string on the resolved conifold $Y$ with a D-brane wrapping the Lagrangian conormal $L_K\subset Y$ of a knot $K\subset S^{3}$, see Section \ref{ssec:largeN}. There is then a contribution from short strings beginning and ending on $L_{K}$. Noting that a small neighborhood of $L_K$ is symplectomorphic to a neighborhood of the zero section in $T^{\ast}L_K$ and applying the construction of \cite{Witten:1992fb}, we find that this contribution is given by the partition function of $U(1)$ Chern-Simons theory on $L_{K}\approx S^{1}\times \R^{2}$. At infinity, $L_K$ looks like $\R\times\Lambda_K$, with ideal boundary $\Lambda_K\approx T^2$, which means we should study $U(1)$ (or more precisely, $GL(1)$) Chern-Simons theory on the manifold $S^{1}\times D^{2}$ with $T^2$ boundary. Let $\lambda$ be the longitudinal cycle of the $T^2$, which determines the parallel of the knot $K$ that links it trivially, and let $\mu$ be the meridional cycle. We denote by $x$ the holonomy along $\lambda$, and by $p$ the holonomy along $\mu$:
$$
 \oint_\lambda A = x, \qquad \oint_\mu A  = p,
$$
where $A$ is the connection $1$-form.
In Chern-Simons theory on $L_K$, the holonomies of $\lambda$ and $\mu$ are canonically conjugate:
$$
[p,x] = g_s,
$$
where $g_s$ is the string coupling constant. This means in particular that for a D-brane on $L_K$ in $Y$, $p$ and $x$ are not independent; rather, if $\Psi_K(x)$ denotes the partition function of $L_K$, we have
$$
p \, \Psi_K(x) = g_s {\partial \over \partial x}\Psi_K(x).
$$
In particular, we should view $\Psi_K(x)$ as a wave function with asymptotics
$$
\Psi_K(x)\sim \exp\left(\frac{1}{g_s}\int p \,dx +\dots\right).
$$
From the perspective of topological strings, in the classical limit $g_s\to 0$, maps of the disks dominate the perturbation expansion, with maps of more complicated Riemann surfaces giving contributions of higher order. This means that
$$
\Psi_K(x)\sim \exp\left(\frac{1}{g_s} W_K(x,Q)+\dots\right),
$$
where
$$
W_K(x,Q) = \sum_{n,k\geq 0} a_{n, k} \, Q^k\, e^{-n x}
$$
is the generating function of Gromov-Witten invariants $a_{n,k}$ corresponding to counting holomorphic disks in $Y$ with boundary on $L_K$. Thus, at the level of the disk,
\begin{equation}\label{eq:p=dW/dx}
p = p(x)= {\partial W_K\over \partial x}(x,Q).
\end{equation}

We  point out that \eqref{eq:p=dW/dx} is consistent with the fact that, classically in $L_K$, the cycle $\mu$ bounds. Here $\mu$ bounding in $L_K$ means that the holonomy around $\mu$ vanishes if we disregard the disk instanton corrections: ${\partial W_K\over \partial x}$ vanishes up to contributions of instantons.
This leads to the interpretation of the moduli space of the brane as a Lagrangian curve $V_K\subset T^*T^2$, with coordinates $x,p$ and symplectic form $dx\wedge dp$, given by the equation
\begin{equation}\label{AQ}
A_K(e^{x}, e^{p},Q)=0
\end{equation}
for a polynomial $A_K(e^{x}, e^{p},Q)$ (where we view $Q$ as a parameter).
In general, for large $x$, \eqref{AQ} may have more than one solution $p=p(x)$ for $p$ in terms of $x$. This corresponds to the fact that the theory may have more than critical point in this phase; to get the solution corresponding to $L_K$, we need to pick the one where $p\sim 0 +{\cal O}(e^{-x})$.

The curve \eqref{AQ} sums up disk instanton corrections to the moduli space of a D-brane on $L_K$ probing $Y$. It was argued in \cite{AV} that this gives rise to a mirror Calabi-Yau manifold, $X_K$, given by the hypersurface 
\begin{equation}\label{CYAAQ}
A_K(e^{x}, e^{p},Q) =uv.
\end{equation}
Mirror symmetry exchanges A-branes, i.e.~Lagrangian submanifolds, of $Y$ with B-branes, i.e.~holomorphic submanifolds, of the mirror $X_{K}$. Moreover, quantum mirror symmetry implies that for every (special) Lagrangian brane in $Y$ there is a B-brane in $X_K$ such that the quantum corrected moduli space of $L_K$ in $Y$ is the same as the classical moduli space of the mirror B-brane in $X_K$. Here we see that the quantum dual B-branes are given by the line $\{u=0, v \text{ arbitrary}\}$ over a point $q\in V_K$, i.e.~if $q=(e^{x_0},e^{p_0})$ then \eqref{AQ} holds for $(e^{x},e^{p})=(e^{x_0},e^{p_0})$.

In the special case when $K$ is the unknot, we obtain the same mirror of $Y$ as in \cite{HV}, but for more general knots, we get new mirrors. In fact, for every knot $K$ in $S^3$ and its associated Lagrangian $L_K$ in $Y$, we get a canonical mirror geometry \eqref{CYAAQ}, where the mirror of
$L_K$ is determined by a point on the mirror Riemann surface $V_K$.
Of course, we expect each of the mirror Calabi-Yau manifolds \eqref{CYAAQ} to contain not just the mirror of the Lagrangian $L_K$ corresponding to the knot $K$, but also, by mirror symmetry, mirrors of all other Lagrangian branes $L_{K'}$ for knots $K'\neq K$. In general, however, these mirrors have more complicated descriptions.

\section{Chern-Simons theory and Lagrangian fillings}\label{Sec:CSandfillings}
Via large $N$ duality, the Chern-Simons path integral on $S^3$ contains non-perturbative information about the A-model topological string in the resolved conifold $Y$. Before the transition, in $X=T^{\ast} S^{3}$, the data that go into specifying the theory are just the knot $K$, the number of branes on it, and the holonomy at infinity.  We will set the number of branes on $K$ to be $1$ in this section, and consider the higher rank generalization in the next section. The data are imprinted at infinity of the ambient Calabi-Yau manifold, and are thus visible both before and after the transition. After the transition, the data at infinity may be compatible with more than one filling in the interior. In Section \ref{sec:SYZmirror}, we focused on the filling that gives the Lagrangian conormal $L_K$, but, as we shall see, Chern-Simons theory encodes information about other Lagrangian fillings as well; these correspond to different classical solutions of the topological string.

At infinity, the Calabi-Yau looks like $\R\times U^{\ast}S^{3}\approx \R\times S^3\times S^2$ and the Lagrangian $L_{K}$ approaches $\R\times \Lambda_K$, where $\Lambda_K$, the conormal of the knot, is naturally identified with the boundary of a tubular neighborhood of $K$ in $S^3$ and is thus topologically a torus
\footnote{As will be further discussed in Section \ref{Sec:knotch}, $U^{\ast}S^{3}\approx S^2\times S^3$ is naturally a contact manifold, and $\Lambda_K$ a Legendrian submanifold. These observations are the starting  point for the knot contact homology approach to topological string in this background.}.
There are two natural basic $1$-cycles in $\Lambda_K$: the longitude $\lambda$, which is a parallel copy of the knot, and the meridian $\mu$, which is the boundary of a fiber disk in the tubular neighborhood.

A vacuum of the topological string with these boundary conditions is determined once we specify the holonomy of the $GL(1)$ connection around a $1$-cycle of the torus $\Lambda_K$ at infinity and find a Lagrangian brane that fills in $\R\times\Lambda_K$ in $Y$ with these moduli. Note that the holonomy of the $GL(1)$ connection encodes both the position of the brane and the $U(1)$-holonomy on it, see Section \ref{sec:SYZmirror}.\footnote{We cannot expect to specify holonomies around both cycles of $\Lambda_K$ simultaneously, as the theory on the Lagrangian A-brane in a Calabi-Yau three-fold is Chern-Simons theory, and in Chern-Simons theory on a manifold with a $T^2$ boundary holonomies of the 1-cycles generating $H_1(T^2)$ are canonically conjugate. } In the discussion so far we used the filling $L_K$ and fixed the
holonomy around the longitude cycle $x$. With a fixed connection at infinity, there may be more than one corresponding filling. Moreover, having found a vacuum with one choice of the filling, there is an $SL(2,\mathbb{Z})$ family of choices of flat connections, related by canonical transformations, that gives rise to the same filling. This corresponds to a choice of the framing of the Lagrangian, which we will discuss in more detail in Section \ref{sec:framingoflag}.

By taking the holonomy $x$ around the longitude to be very large, we always get a choice of filling corresponding to the conormal Lagrangian $L_K$. The mirror Riemann surface
\begin{equation}
0= A_K(e^{x}, e^{p},Q),
\end{equation}
which encodes the geometry of the mirror,  also has information about fillings of $\R\times\Lambda_K$.
As we explained, the Lagrangian $L_K$ corresponds to a branch of the Riemann surface where
$$
p= \frac{\partial W_K}{\partial x}(x) \sim 0.
$$
Since $p$ is the holonomy around the meridian of the knot, the fact that it vanishes classically means that the meridian cycle gets filled in in $L_K$, as is indeed the case topologically. The Lagrangian $L_K$ has a one-dimensional moduli space, and the branch $p\sim 0$ is the branch where $x\rightarrow \infty$ and the worldsheet instanton corrections are maximally suppressed.
As we vary $x$, we can probe all of the Riemann surface and go to a region where this is no longer the case.

There can be other ways of filling in. The region of $p$ and $x$ where the mirror geometry is well approximated by
\begin{equation}\label{asymp}
m p+n x \sim \text{const}
\end{equation}
should correspond to filling in which the $m\mu+n\lambda$ cycle in $\Lambda_K\approx T^2$ bounds. In all cases, ``$\sim$'' denotes ``up to instanton corrections'': if the approximation in \eqref{asymp} can be made arbitrarily good in the mirror, on the original Calabi-Yau $Y$, then the instanton correction can be made arbitrarily small, and the corresponding classical geometry exists in $Y$. The relation between the different phases should be akin to flop transitions in closed Gromov-Witten theory, in the sense that phase transitions change the topology of the cycles in the manifold: in this case, of the Lagrangian.

For example, filling in the longitude cycle $\lambda$ gives
$$
x\sim 0.
$$
The Lagrangian $M_K$ one obtains in this way is related to the knot complement $S^3- K$. In the special case when the knot $K$ is fibered, the complement can actually be constructed as a Lagrangian submanifold of $Y$ asymptotic to $\R\times\Lambda_K$, see Section \ref{ssec:constrlag}. In the general case we will give a conjectural construction of a Lagrangian filling with the same classical asymptotics, see Section \ref{ssec:antibrane}. We will discuss this with further details in later sections. Here we simply let $M_K$ denote a Lagrangian filling of $\Lambda_K$ in $Y$ that has the classical asymptotics $x\sim 0$. In this setting, the holonomy $p$ around the meridian cycle is the natural parameter. In fact, the topological string partition function of $Y$ with a single brane on $M_K$ turns out to be
\begin{equation}\label{kcp}
\tilde{\Psi}_{K}(p)  = H_{p/g_s}(K),
\end{equation}
computed
by the HOMFLY polynomial $H_m(K)$ of the knot colored by the totally symmetric representation with $m$ boxes. Here $m =p/g_s$ where $p$ is fixed and $g_s$ is small. This follows from existence of a geometric transition that relates $M_K$ and $L_K$.  Despite the fact that the transition changes the topology of the cycle that gets filled in, it is quantum mechanically completely smooth due to instanton corrections, in the case of knots. In later sections, we will study the case of links, where again there are different choices of filling in the cycle at infinity. However, there the phase structure of the theory becomes far more intricate.

\subsection{Geometric transition between $L_K$ and $M_K$}
Let us clarify the relation between $L_K$ and $M_K$ and  the nature of the transition between them. We start with $N$ D-branes on the $S^3$ and a single brane on the conormal bundle $L_K$, intersecting along the knot $K$
$$
L_K \cap S^3 = K.
$$
The fact that $L_K$ and $S^3$ intersect makes the configuration slightly singular, but one can remedy this by using the fact that $L_K$ moves in a one-parameter family, parametrized by $x$. For any $\Re(x)\neq 0$, we can then move $L_K$ off of $S^3$ so that they no longer intersect.

However, there is another way to smooth out the singularity by smoothing out the intersection between the $S^3$ and $L_K$. This requires breaking the gauge symmetry on $S^3$ by picking one of the $N$ D-branes.
Cut out a neighborhood of the knot $K$ from both the D-brane on $S^3$ and on $L_K$. This gives $S^3-K$ and
$L_K-K$ both with a torus boundary. The two manifolds can then be glued together along their boundaries. Topologically, this results in $S^{3}-K$, since gluing in $T^2 \times {\R}$ does not change the topology. (This way of obtaining $S^{3}-K$ discussed, in a related context, in \cite{Dimofte2}.)
As mentioned above, it is not always possible to move the Lagrangian version of $S^{3}-K$ off of the zero section, and we have a somewhat more involved construction with similar features to deal with this case. As in the previous section, we will use $M_K$ to denote a Lagrangian that classically gives $x\sim 0$.

Before the transition, the Lagrangian $S^{3}- K$ and $L_K$ project with different degrees to $S^3$ and no transition between them is possible. Also before the transition, $N$ is finite, and hence $Q=e^{N g_s}=1$ classically. In this limit, the curve factorizes and contains
\begin{equation}
0= A_K(e^{x}, e^{p},Q=1) = (1-e^p)A_{K \text{ class}}(e^x,e^p)\times \ldots
\end{equation}
where $A_{K\text{ class}}$ is the classical $A$-polynomial of the knot, describing the moduli space of flat connections on the knot complement. The $e^p-1=0$ branch, corresponding to $L_K$, disconnects from the curve describing $S^{3}-K$.

After the transition, the $S^3$ has shrunk, $Q$ gets an expectation value $Q\neq 1$,
the curve generally becomes irreducible
\begin{equation}
0= A_K(e^{x}, e^{p},Q),
\end{equation}
and the distinction between $L_K$ and $M_K$ disappears. Since both $p\sim 0$ and $x\sim 0$ branches lie on the same Riemann surface,  $M_K$ and $L_K$ are  smoothly connected once we include disk instanton corrections, with no phase transition between them.

To make this point even clearer, it is useful to consider one very simple example of this when
$K$ is the unknot.  We will study this both before and after the transition.
In the mirror geometry $L_K$ is given by choosing a point $\lambda$ in the curve
\begin{equation}\label{eq:V_O}
Q-\lambda-\mu+\lambda \mu=0
\end{equation}
where $\lambda=e^x$ and $\mu=e^p$.
When $|\lambda|\gg 1$ this gives the Lagrangian mirror of $L_K$ for the unknot.  Note that
this is consistent with $\mu=1$, i.e.,  $p=0$.  The Lagrangian mirror to $M_K$ is given by
$|\mu|\gg 1$, which requires $\lambda =1$, i.e.,  $x=0$ as expected.  Note that going from $L_K$ to $M_K$
can be done via a smooth path in the mirror curve, even though in the original
geometry they are classically distinct.\footnote{Note also that before the transition, to go from $L_K$
to $M_K$, one must add one copy of $S^3$, as is clear from the path in the mirror geometry
going from one asymptotic region to the other, taking into account that the periods
get modified by one unit around the cycle mirror to the blown-up ${\C P}^1$.}

The transition between $L_K$ and $M_K$ described above has a simple interpretation in the topological string. At the intersection of the D-branes wrapping $L_K$ and $S^3$ lives a pair of complex scalars $A$, $B$ transforming in the bifundamental representation $(N,-1)$, $(\bar N,1)$,  corresponding to strings with one boundary on the zero section and the other on $L_K$. As long as $L_K$ and $S^3$ intersect, the bifundamental is massless (the real mass vanishes).  In one phase, we move the Lagrangian $L_K$ off of the zero section. In this phase, the bifundamental hypermultiplet gets a mass $r=\Re(x)$ corresponding to the modulus of moving off. From this perspective, the partition function $\Psi_K(x)$ arises as follows. Integrating out the massive bifundamental of mass $x$ generates the determinant \cite{OV}
$\det(1\otimes 1 - e^{-x} \otimes U)^{-1}$,
where, as usual, $U = P e^{i \oint_K A}$ is the holonomy along the knot and $x$ is identified with holonomy on $L_K$.
In Chern-Simons theory, we integrate over $A$ and thereby compute the expectation value of the determinant
$\Psi_K(x) =\langle \det(1\otimes 1 - e^{-x} \otimes U)^{-1}\rangle_{S^3}$, which is in turn computed by the colored HOMFLY polynomial
$$
\Psi_K(x) = \sum_{n=0}^{\infty} H_n(K)\; e^{-n x },
$$
as explained earlier.
This is in fact the derivation of the partition function given in \cite{OV}.

The theory has a second phase, where the bifundamental hypermultiplet
remains massless and gets an expectation value $p$. This requires
the gauge group on the $S^3$ to be broken from $U(N)$ to $U(N-1)\times
U(1)$, and then the $U(1)$ factors on the $S^3$ and on $L_K$ to be
identified.  Giving the expectation value to the bifundamental gives rise to the smoothing of the intersection between $L_K$ and $S^3$, which we described above.
In the next subsection we will show that the partition function in this phase is given by \eqref{kcp}.

\subsection{Framing of the Lagrangians}\label{sec:framingoflag}
To write down the partition function of the theory, in addition to choosing the vacuum by picking a point on the Riemann surface, we have to choose a framing of the Lagrangian: a flat connection at infinity. Since the holonomies around different cycles of the $\Lambda_K\approx T^2$ do not commute, different choices are related by wave-function transforms.
In $L_K$, the natural variable is the holonomy along the longitude of the knot $x$, and the wave function is given by
$$
  \Psi_{K}(x) = \sum_{S_{n}} H_n(K) \;e^{-nx}= \exp\left(\frac{1}{g_s} W_K(x) +\ldots\right).
$$

In $M_K$, the natural variable is $p$, the holonomy around the meridian. Since $x$ and $p$ are canonically conjugate in the theory on the brane,
$$
[p,x]=g_s,
$$
and moreover since there is no distinction between $M_K$ and the $L_K$ phase, we can identify the Fourier transform of $\Psi_K(x)$ with the partition function of $M_K$:
$$
\tilde{\Psi}_K(p) = \int_{-\infty}^{\infty} e^{\frac{1}{g_s} px}  \Psi_K(x).
$$
We can simply deform the contour, at least in perturbation theory, to find \eqref{kcp}:
$$
\tilde{\Psi}_K(p) = H_{p/g_s}(K) = \exp\left(\frac{1}{g_s}U_K(p) +\ldots\right).
$$
In particular, the Gromov-Witten ``potentials'' $W_K(x)$ and $U_K(p)$ that we previously associated to $L_K$ and $M_K$ are simply dual to each other,
$$
U_K(p) \;=_{\text{crit}}\;  px + W_K(x),
$$
related by Legendre transformation. In Section \ref{sec:legtransf}, we will provide a different path to this result by directly comparing contributions of disk instantons to Gromov-Witten potentials of $M_K$ and $L_K$.

\section{Large $N$ duality, mirror symmetry, and HOMFLY invariant for links}
\label{sec:largeN}
Consider Chern-Simons theory on $S^3$ with an $\n$-component link $K$,
$$
K = K_1 \cup \ldots \cup K_\n,
$$
where $K_j$ are the knot components of $K$. The conormal $L_K$ of $K$
in $X=T^*S^3$ is the union of the $\n$ conormals of the knot components,
$$
L_K = L_{K_1} \cup \dots \cup L_{K_\n},
$$
and $L_K\cap S^{3}=K$. Under large $N$ transition as in Section \ref{ssec:largeN}, $X$ is replaced by the total space $Y$ of the bundle ${\cal O}(-1)\oplus{\cal O}(-1)\rightarrow {\C P}^1$. As we explained in Section \ref{sec:framingoflag}, it is natural to view the boundary data as the only fixed data, in which case we should consider the geometry of $L_K$ far from the apex of the cone.
Here the Lagrangian $L_K$ approaches $\R \times \Lambda_K$, where
$\Lambda_K$ is the union of the $\n$ conormal tori $\Lambda_{K_j}$ of the components of $K$ that describe the imprint of the corresponding branes at the  $S^2\times S^3$ at infinity of the Calabi-Yau.
As in the knot case, there are different ways to fill in $\Lambda_K$ in the interior of $Y$. As we shall see, the link case is more involved than the case of a single knot: there is a larger number of ways to fill in $\R\times\Lambda_K$, connecting different numbers of components of $\Lambda_K$ in the interior.

As in Section \ref{sec:SYZmirror}, we are led to consider holonomies of a $GL(1)$ gauge field on the brane around the $1$-cycles of $\Lambda_K$. This gives a phase space of the system, which is the cotangent bundle of a $2\n$-dimensional torus:
$$
{\mathcal M}_\n=(\C^{\ast})^{2\n} = T^*(T^{2\n}).
$$
The torus is the torus of holonomies of the $U(1)$ gauge field on the brane around the $2\n$ $1$-cycles
of $\Lambda_K$. The cotangent direction arises from the moduli of the Lagrangians. Equivalently, the phase space ${\mathcal M}_\n$ arises from holonomies of the $GL(1)$ gauge field on $\R\times\Lambda_K$ around the cycles of $\n$ copies of  $T^{2}$ that comprise the infinity $\Lambda_K$.  Thus on ${\mathcal M}_\n$ we have $(\C^\ast)^{2n}$-coordinates
$$
e^{x_i}, \quad e^{p_j}, \qquad i,j = 1,\ldots, n
$$
associated to the longitudes and the meridians of knots $K_i$, and a symplectic form
$$
\omega = \sum_i dx_i \wedge dp_i.
$$
Note that $\mathcal{M}_\n$ is in fact hyper-K{\"a}hler, a fact which we will make use of later in this paper.

Inside $\mathcal{M}_\n$ there is a Lagrangian subvariety $V_K$ associated to the link (i.e.~$V_K$ is half-dimensional and $\omega|_{V_K} = 0$). As we shall see, $V_K$ is in general reducible:
$$
V_K = \bigcup_P V_K(P),
$$
where the subvarieties correspond to different fillings of $\Lambda_K$ labeled by partitions $P$ of $\{1,\ldots,n\}$. We want to identify the fillings that can be obtained by smoothly varying the holonomies, since these give rise to the same Lagrangian submanifold $V_K(P)$.  We will initiate the study of the varieties in this section, using Chern-Simons theory and large $N$ duality. In Sections \ref{Sec:knotch} and \ref{sec:augmentations} we will use another approach based on knot contact homology, where $V_K$ will be identified with the augmentation variety of the knot. In Section \ref{sec:Dmirror}, we will study quantum mirror symmetry, where we explain how to quantize this variety.

\subsection{Conormal bundle $L_K$}
In the simplest case, $\R\times\Lambda_K$ is filled by the conormal bundle of the link
$$
L_K=L_{K_1} \cup\ldots \cup L_{K_\n}.
$$
This is simply a union of $\n$ disconnected Lagrangians $L_{K_j}$, which we move off of $S^3$ independently and then transition to $Y$.  This was studied previously in \cite{LMV}.

On $Y$, in the presence of $L_K$, there are no holomorphic disks with boundary on more than one component of $L_{K}$, as the conormals $L_{K_j}$ are disjoint from each other. So the disk amplitude is simply a sum of $k$ contributions, coming from disks with boundaries on one of the $L_{K_j}$:
\beq\label{Von}
W_{K,1^\n}(x_1,\dots,x_\n) = W_{K_1}(x_1)+\dots +W_{K_\n}(x_\n).
\eeq
From the perspective of large $N$ dualities, this arises as follows. We start with $X=T^*S^3$, with the $\n$ Lagrangians $L_{K_j}$ intersecting the zero section along the link $K$. Pushing $L_{K_j}$ off of $S^3$ by $x_j$ for each $j$,
we give a mass $x_j$ to the bifundamental hypermultiplet at the intersection of $K_j$ and $S^3$.
Integrating these $k$ massive hypermultiplets out, we end up computing the expectation value of
\begin{align*}
&\prod_{i=1}^\n \det(1\otimes 1 - e^{-x_i} \otimes U(K_i))^{-1}  \\
&= \sum_{m_1, \ldots, m_\n}  {\Tr}_{m_1}U(K_1)\ldots {\Tr}_{m_\n}U(K_\n) \; e^{-m_1x_1-\cdots-m_\n x_\n}
\end{align*}
where $m_i$ denotes the $m_i$-th symmetric representation of the fundamental representation, in the topological string or Chern-Simons theory.
This can be written as
\beq\label{exacpsilink}
 \Psi_{K, 1^\n}(x_1,\dots,x_\n, g_s, N)= \sum_{m_1, \ldots, m_\n} H_{S_{m_1},...,S_{m_\n}}(K) \; e^{-m_1x_1-\cdots-m_\n x_\n},
\eeq
where
\beq\label{representation}
H_{m_1,...,m_\n}(K)=\langle {\Tr}_{m_1}U(K_1)\ldots {\Tr}_{m_\n}U(K_\n) \rangle.
\eeq
As was shown in \cite{LMV}, the classical limit of this is
$$
\Psi_{K, 1^\n}(x_1,\dots,x_\n, g_s, Q)= \exp\left(\frac{1}{g_s}W_{K,1^\n}(x_1,\dots,x_\n)+\ldots\right),
$$
where $W_{K,1^\n}$ is as in \eqref{Von}. In this case, the link indeed gives rise to a $\n$-dimensional variety, but one which is simply a direct product of $\n$ $1$-dimensional curves,
$$
V_K(1^\n):\qquad p_i = \frac{\partial W_{K_i}}{\partial x_i}(x_i).
$$
We will denote this variety by $V_{K}( 1^\n)$, to indicate that it factors as a product of $\n$ pieces, each of dimension $1$.

\subsection{Link-complement-like fillings $M_K$}\label{sec:linkcompl}
Just like in the knot case, there are other fillings of $\R\times{\Lambda_K}$ in $Y$. Here we focus on fillings that in the classical limit looks like the link complement. We denote a general such filling $M_K$, see Section \ref{ssec:antibrane} for geometric constructions of such fillings.

Physically, $M_K$ is obtained by giving expectation values $p_i = m_i g_s$ to the $\n$ hypermultiplets at the intersections of $L_{K_i}$ and $S^3$ in $X=T^*S^3$. This smooths the intersections of the D-branes on $L_{K_i}$ with one of the D-branes on the $S^3$, to give a single Lagrangian $M_K$.  This Lagrangian has first Betti number $b_1(M_K) = \n$ so we still have $\n$ moduli that allow us to move $M_K$ off of the zero section. These $\n$ moduli are the $p_i$'s, which are also the holonomies around the meridians. After that, the geometric transition relates the Lagrangian $M_K$ to a copy of itself on $Y$.

The disk amplitude depends on $\n$ meridian holonomies $p_i$, and is obtained from the colored HOMFLY polynomial of the link,
\beq\label{representationa}
H_{m_1,\ldots,m_\n}(K)=\langle {\Tr}_{m_1}U(K_1)\ldots {\Tr}_{m_\n}U(K_\n) \rangle,
\eeq
by rewriting it in terms of $p_i = m_i g_s$:
$$
\begin{aligned}
\tilde{\Psi}_{K, \n}(p_1, \ldots , p_\n, g_s, Q) &= H_{p_1/g_s,\ldots, p_\n/g_s}(K, g_s, Q) \cr
&= \exp\left(\frac{1}{g_s}U_K(p_1,\ldots p_\n, Q)+\ldots\right).
\end{aligned}
$$
Equivalently, we can write this as
$$
U_K(p_1,\ldots p_\n, Q) = \lim_{g_s \rightarrow 0} [g_s \log(H_{p_1/g_s,\ldots, p_\n/g_s}(K, g_s, Q))].
$$

The quantum corrected moduli space of the brane on $M_K$ in $Y$ is an $\n$-dimensional variety $V_{K}(\n)$ given by
\begin{equation}\label{eq:x_i=dU/dp_i}
V_K(\n):\qquad x_i = \frac{\partial U_K}{\partial p_i}(p_1,\ldots,p_\n, Q).
\end{equation}
By the above construction,
$V_K(\n)$ can be
viewed as a Lagrangian subspace in the $2\n$-dimensional space of $(x_i,p_i)\in (\C)^{2\n}$ relative to the symplectic form
$\sum_i dx_i\wedge dp_i$.

Note that if the link $K$ is a completely split link (i.e., one can find disjoint solid balls $B_1,\ldots,B_\n \subset S^3$ such that $K_i \subset B_i$ for each $i$), 
then in fact $V_K(\n)$ and $V_K(1^\n)$ coincide. This is because the HOMFLY polynomial of the link factor in this case,
\beq
H_{m_1,...,m_\n}(K)=_{K\text{ split}}H_{m_1}(K_1) \ldots H_{m_\n}(K_\n),
\eeq
and consequently the brane partition function, is an exact product of $V_{K_i}(1)$. This suggests that in this case, one can go smoothly from the phase where we have a single Lagrangian $M_K$, to the conormal Lagrangian $L_K$ consisting of $\n$ disconnected Lagrangians.
Since $V_K(\n)$ would have no new information in this case, we will simply identify it with $V_K(1^\n)$ and reserve the notation of $V_K(\n)$ for a non-split link $K$ of $\n$ components: the fact that the link is non-split translates into the fact that $V_K(\n)$ is not a direct product of some lower-dimensional varieties.

Just as in the case of the knots it is natural to expect $V_K(\n)$ to correspond to moduli
space of $SL(2,\C)$ flat connections on $S^3- K$.  More precisely, we expect this to be a $Q$-deformed
version of it, where $SL(2,\C)$ is embedded in the canonical way in $SL(N,\C)$.  This is natural
from the viewpoint of the Higgsing construction discussed before.

\subsection{General fillings}
We will now describe the set of physically distinct ways of filling in the torus $\Lambda_K$ in the interior of $Y.$ As we will explain, the physically distinct fillings are labeled by the ways to partition an $\n$-component link $K$ into sublinks.
The different ways to do this are labeled by integer partitions $P$ of $\{1,\ldots,\n\}$, where we view the parts $P_a$ of the partition $P$ as giving corresponding sublinks $K_a$ of $K$. An alternative way to label the fillings is by a link $K(P)$ which is a split union of sublinks $K_a$:
$$K(P) =\bigcup_{P_a \subset P} K_a.$$
(That is, move each of the sublinks away from the others, so that the result is split.) We conjecture that the general distinct fillings are labeled by the \textit{primitive} partitions $P$, namely those for which each sublink $K_a$ is non-split, or, equivalently, those that have no refinement $P'$ such that $K(P) = K(P')$. 
The set of primitive partitions of an $\n$-component link $K$ is a subset of the set of all partitions of $\{1,\ldots,\n\}$.  For example, for a fully split link, there is a single primitive partition $P=\{\{1\},\{2\},\ldots, \{\n\}\}$, and a single distinct partition of $K$ into sublinks: $K=K(P)$.

Let us now explain why this is reasonable. The Lagrangian fillings of $\Lambda_K$, taken crudely, describe how components of $\Lambda_K$ come together in the interior of $Y$. What one misses in this description is any information about which cycles get filled in; however, as we have seen, this fact is not invariant (one should recall the knot case, where phases with different cycles filled in get smoothly connected, due to instanton corrections).  We want to identify those fillings which, while they may result in distinct classical geometries of the Lagrangian, are indistinguishable once disk instanton corrections are taken into account. Since one can interpolate smoothly between such Lagrangians, they lead to the same mirror variety.   If this were the only consideration, we would simply label the fillings by all partitions $P$.  However, we also have to take linking information into account.  If some sublinks are split, since the amplitudes
should not depend on where the corresponding Lagrangians are, we can separate them infinitely far from one
another, which suggests there could not have existed a single Lagrangian
with those asymptotics.  Even if such Lagrangians existed, one could smoothly interpolate between the would-be connected ones and the disconnected phases of the Lagrangian by BRST trivial deformation as we separate them infinitely apart. Thus, partitions $P$ that are not primitive contain only redundant information.
Thus, for an $\n$-component link $K$, 
the physically inequivalent fillings are labeled by primitive partitions $P$ of $\{1,\ldots,n\}$, the ones that result in sublinks $K_a$ that do not split further.

Given such a primitive partition $P$ and the associated collection of sublinks $K(P)$, we will denote the corresponding Lagrangian filling by
$$
F_{K(P)} \qquad \subset \qquad Y.
$$
The filling $F_{K(P)}$ gives rise to an $\n$-dimensional Lagrangian submanifold
$$
V_K(P) = \prod_{P_a\subset P} V_{K_a} \subset {\cal M}_\n,
$$
which is a product of the varieties $V_{K_a}$ associated to the sublinks $K_a$ of $K(P)$. Each sublink $K_a$ is by assumption non-split; suppose that $K_a$ consists of $\n_a$ knot components, where $\n = \sum_a \n_a$. We now explain the meaning of $V_K(P)$.

Consider the disk amplitude
$$
U_{K(P)}(p_1, \ldots, p_\n)
$$
obtained by studying the topological A-model string on $Y$, together with the Lagrangian brane ${F}_{K(P)}$.
We define the ``D-mirror" variety $V_K(P)$ as
$$
V_K(P):\qquad x_i = \frac{\partial U_{K(P)}}{\partial p_i}(p_1,\dots, p_k).
$$
We have written $U_{K(P)}$ in terms of the meridian variables, which is a convenient choice when the meridian cycles get filled in.
Since there cannot be any disk instantons ending on disconnected components of a Lagrangian filling, the disk potential is a sum
$$
U_{K(P)} = \sum_{a} U_{K_a}.
$$
Each disconnected component $M_{K_a}$ of the filling is homologically like the complement of the link $K_a$ in the classical limit, and $U_{K_a}$ is the corresponding potential.
This together with the fact that the potentials $U_{K_a}$ give rise to varieties $V_{K_a}(\n_a)$ gives the result we claimed.
This is consistent with what we expect from large $N$ duality: since the link $K(P)$ is a split link, with $K_a$ its non-split sublinks, the corresponding potential should be captured by the classical limit of the HOMFLY polynomial of $K(P)$:
$$
U_{K(P)}(p_1,\ldots p_{\n_a}) = \lim_{g_s \rightarrow 0} g_s \log(H_{p_1/g_s,\ldots, p_{\n_a}/g_s}(K(P))).
$$
It is important to note that the HOMFLY polynomial $H_{K(P)}$ of the link $K(P)$ is {\it not} the full topological string amplitude for this filling;
$P$ and $K(P)$ only label the filling. The branes in the filling labeled by $P$ do remember the full data of the link $K$ they came from. The partitions $P$ for which we get nontrivial fillings correspond to saddle (critical) points of the HOMFLY polynomial $H_K$ of the link, and the full 
partition function of topological string with the Lagrangian branes on the filling $F_{K(P)}$ is encoded by the expansion of $H_K$, the HOMFLY polynomial of the original link $K$, around its saddle point $P$. We will discuss this in more detail below.

To summarize, for each different splitting of $K$ into non-split sublinks $K_{a}$,  labeled by the primitive partition $P$, we get a Lagrangian filling $F_{K(P)}$ of the Legendrian tori $\Lambda_K$ at infinity. This filling is a union of link-complement-like Lagrangians $M_{K_a}$, one for each sublink of $K(P)$.  The mirror variety corresponding to this filling is $V_{K}{(P)}$, which is a product of the corresponding varieties $V_{K_a}(\n_a)$.

\subsection{D-mirror variety}\label{sec:Dmirror}
Given the link $K$, it is natural to define the mirror variety so that it depends on the data at infinity of the Calabi-Yau alone, and not the specific filling. With this in mind, we propose to identify
$$
V_K = \bigcup_P V_K(P)
$$
as the mirror variety to the link $K$. This is a natural proposal as it contains information both about knots that comprise the link, and the way they are linked together. Moreover, this is exactly the same data that goes into defining the dual Chern-Simons partition function on $S^3$ with the link $K$.
In Section~\ref{Sec:Dmodel}, we propose a way to quantize this variety.

Information about all the fillings $P$ is contained in the HOMFLY polynomial of the link. In particular, quantum mechanically, $V_K(P)$ are the Lagrangian submanifolds of ${\cal M}_\n$ that are associated to different classical saddle points of  the wave function\footnote{Non-perturbative aspects of topological strings have been studied in \cite{Eynard:2008he,Marino:2009dp,Beem:2012mb}.}
$$
\Psi_K(x_1, \ldots, x_\n) = \sum_{m_1, \ldots, m_\n} H_{m_1, \ldots, m_\n}(K) e^{-m_1 x_1 - \ldots - m_\n x_\n}.
$$
In general, different saddle points contribute to $\Psi_K$:
$$
\Psi_K(x_1, \ldots, x_\n) = \sum_P c_P \Psi^{P}_{K}(x_1, \ldots, x_\n),
$$
where $ \Psi^{P}_{K}$ is a wave function canonically associated to the corresponding saddle point $P$, and the coefficients $c_P$ are integers.
Which saddle point dominates the Chern-Simons path integral depends on the values of the parameters $x_i$, $g_s$, and $N$, and the classical action at the saddle point
$$
\Psi^P_K(x_1,\ldots,x_\n) = \exp\left(\frac{1}{g_s}W_K(P)(x_1, \ldots, x_\n) +\ldots\right),
$$
where $W_K(P)$ is the potential defining the variety $V_K(P)$ via
$$
V_K(P): \qquad p_i = \frac{W_K(P)}{\partial {x_i}}.
$$
In the regime where one of the saddle points dominates the others, are exponentially suppressed.

The above way of writing $\Psi_K$ is a little bit crude because it neglects the fact that in general, a single $P$ may give rise to more than one wave function, as $V_K(P)$ may give rise to several vacua. We will disregard this fact for two reasons: notational simplicity, and the fact that it is $V_K(P)$ that plays the crucial role in quantizing the theory. (While the subtlety affects the possible $W_K(P)$, it does not affect $V_K(P)$.)

The existence of a single function $\Psi_K(x_1, \ldots, x_\n)$ that has all these different classical limits imposes constraints on the structure of $V_K$, viewed as a reducible variety. As we will discuss in Section~\ref{Sec:Dmodel}, quantization of the variety $V_K$ gives rise to a D-module on ${\cal M}_\n$. The wave function $\Psi_K(x_1, \ldots, x_\n)$ is a section of this D-module. The fact that the D-module that arises in this way is irreducible (it corresponds to a single irreducible quantum system, rather than being a direct sum of super-selection sectors) implies that different components $V_K(P)$ of $V_K$ must fit together in a specific way, encoded by a graph that we now describe.

To every link $K$ we can associate the graph $\Gamma_K$ as follows.
The vertices of the graph correspond to the primitive partitions $P$ of $\{1,\ldots,\n\}$, corresponding to the ways of partitioning $K$ into non-split sublinks. The set of all such partitions is a subset of partitions of the set
$\{1,\ldots,\n\}$. 
Two vertices $P$ and $P'$ of the graph are connected by an edge if $P'$ is a refinement of
$P$ and there is no other graph vertex $P''$ between the two (where $P''$
is a refinement of $P$ and $P'$ is a refinement of $P''$).

We claim that the graph $\Gamma_K$ obtained in this way captures the geometry of $V_K$ as follows: for any pair of vertices of the graph that are connected by an edge, we conjecture that
$$
\textrm{codim}(V_K(P) \cap V_K(P')) =1,
$$
where codimension is counted inside either of the $\n$-dimensional varieties $V_K(P),V_K(P')$. Note that, based on counting dimensions, a generic intersection would be over points (codimension $\n$). The presence of an edge means that the intersection of $V_K(P)$ and $V_K(P')$ is as non-generic as possible without them coinciding. In a generic situation thus the graph would consist of vertices alone. The existence of a single function $\Psi_K$ implies that this is never the case for multi-component links.
Furthermore, it is natural to generalize our conjecture as follows: for any two primitive partitions $P$ and $P'$, the codimension of the intersection of $V_K(P)$ and $V_K(P')$ is at most the distance between the partitions, defined as the minimum number of edges in $\Gamma_K$ needed to pass from $P$ to $P'$:
$$
{\rm codim}(V_P\cap V_{P'}) \leq d(P, P').
$$

If all partitions are primitive and thus give vertices of the graph, then we can connect the coarsest partition\footnote{For convenience, we will sometimes abbreviate $P_c=(12\cdots \n)$ by $\n$ and $P_f = (1)\cdots(\n)$ by $1^\n$ when writing $V_K(P_c)$ or $V_K(P_f)$.}
$P_c=(12\cdots \n)$ with the finest partition $P_f = (1)\cdots(\n)$ by a sequence of
partitions $P_c=P_1$,$P_2,\ldots,P_\n=P_f$, where $P_i = (1)\cdots(i-1)(i\cdots
\n)$. 
Passing successively
from $P_1$ to $P_\n$ involves removing the knot components of $K$ one
by one, starting with the first component, until we have removed all
of them. We expect that for each $i$,
$V_K(P_i)$ and $V_K(P_{i+1})$ intersect in codimension $1$, and this
allows us to move between components of $V_K$ in such a way that each
move is between components whose intersection has codimension $1$. See
Section~\ref{ssec:augvarcomps} for further discussion for general links and a reformulation of this discussion in terms of the
augmentation variety.

Let us now explain why the  irreducibility of the D-module implies the
codimension condition.
A way to characterize the fact that $\Psi_K$ corresponds to an irreducible D-module is that, first of all, it satisfies a set of differential (or, more precisely, difference) equations generated by a finite set of operators
\beq\label{Wo}
{\cal A}_{\alpha} \Psi_K = 0,
\eeq
where each of ${\cal A}_{\alpha}$ is of the form
$$
{\cal A}_\alpha= \sum_{\k_1,\ldots,\k_\n}a_{\k_1, \ldots, \k_\n}(e^{x_1},\ldots , e^{x_\n}) e^{g_s \sum_i \k_i \partial_{x_i}}.
$$
where $a_{\k_1, \ldots, \k_\n}$ are polynomials in $e^{x_i}$. The fact that the equation \eqref{Wo} is linear implies that not only $\Psi_K$ but also each of $\Psi_K^P$ satisfy the equation.
It was proven in \cite{Garoufalidis}  that the colored HOMFLY polynomial of any link satisfies such a finite set of difference equations, and that the set is holonomic (or more precisely, $q$-holonomic). What this means is that, in the classical limit $g_s \rightarrow 0$,
$$
 {\cal A}_{\alpha}(e^{x_i}, e^{g_s \partial_j}) \qquad 
 \longrightarrow \qquad { A}_{\alpha}(e^{x_i}, e^{p_j}),
 $$
when $x_i$, $p_j = g_s \partial_{x_j}$ become numbers, the set of equations \eqref{Wo} parametrize a Lagrangian submanifold of ${\cal M}_\n$. This Lagrangian submanifold is just what we called $V_K$,
$$
V_K : \qquad \qquad {A}_{\alpha}(e^{x_i}, e^{p_j})=0.
$$
The fact that irreducible components of $V_K = \bigcup_P V_K(P)$ intersect over the varieties of codimension $1$ is a consequence of a general theory of systems of differential equations of this type \cite{kashiwara, miwa, kashiwarad}.\footnote{The system at hand is $q$-holonomic, rather than holonomic. The theorems we need are best developed in the ordinary holonomic case. However, we can view the $q$-holonomic system as holonomic, by working locally, so presumably the distinction is immaterial.}. We will discuss this point further in Section \ref{Sec:Dmodel}, where we discuss quantization of the system from the physical perspective.

\subsection{An example: Hopf link}
\label{ssec:Hopf}

 The simplest nontrivial 2-component link is the Hopf link $K=\Hopf=K_1\cup K_2$, with $K_1,K_2 = \bigcirc$. In this case, we expect two distinct fillings, corresponding to partitions $P=(12)$ (which we will abbreviate as $2$) and $P=(1)(2)$ (which we will abbreviate as $1^2$).
In the first case $P=(12)$,  Lagrangian filling of $\Lambda_\Hopf$ is the complement of the entire Hopf link in $S^3$,
$$
M_{\Hopf} = S^3- \Hopf,
$$
and this gives rise to $V_{\Hopf}(2)$. The filling is easy to see from the toric diagram, before and after the transition. The Hopf link is realized by taking two Lagrangian branes on opposite toric legs before the transition. They can each glue to one copy of the brane on $S^3$ to move off the $S^3$. The topology of the resulting Lagrangian is $T^2\times {\mathbb R}$. This has $b_1 = 2$, so the moduli space is $2$-dimensional. Moreover, there are no disk instanton corrections to it\footnote{This is the case because any holomorphic map with an $S^1$ boundary on the $T^2$ comes in an $S^1$ family. The fact that the Euler characteristic of the $S^1$ vanishes leads to vanishing of the corrections.}, so the potential is determined classically. It is easy to see that
\begin{equation}\label{eq:Hopfpotential}
U_{\Hopf}(p_1, p_2) = p_1 p_2,
\end{equation}
since the critical points associated to this simply state that the meridian of one knot in the Hopf link gets identified with the longitude of the other, once we glue $M_K$ from pieces, as we described. Namely, in terms of
$$
 e^{p_i} = \mu_i , \qquad  e^{x_i} = e^{\partial_{p_i} U_\Hopf}=\lambda_i,
$$
we have
$$
V_{\Hopf}(2): \qquad  \lambda_1 =  \mu_2, \qquad  \lambda_2= \mu_1.
$$

The other filling corresponding to $P=(1)(2)$ has two branes $L_{K_i}$, each simply a conormal to the unknot, probing the conifold geometry after the transition.
At the level of the disk potential, they cannot talk to each other, so
$$
U_{\Hopf}(1^2)(p_1,p_2) = U_{\bigcirc}(p_1)+U_{\bigcirc}(p_2),
$$
leading to a direct product of two copies of Riemann surfaces mirror to the unknot:
$$
V_{\Hopf}(1^2): \qquad Q- \lambda_1 - \mu_1 + \lambda_1 \mu_1=0, \qquad Q- \lambda_2 - \mu_2 + \lambda_1 \mu_1=0.
$$
The variety
$$
V_\Hopf = V_{\Hopf}(1^2) \cup V_\Hopf(2)
$$
coincides with the augmentation variety of the Hopf link, as we will see in the next section.

The same result can be obtained in a number of different ways, two of which we mention here. First, the result can be obtained by calculating the HOMFLY polynomials of the Hopf link, colored by totally symmetric representations, and taking the classical limit. We will show this in Appendix \ref{app:Hopf} in detail.
Second, as we will brush on in Section \ref{Sec:Dmodel}, one can obtain the same result by considering a pair of branes on the Riemann surface mirror to the unknot.

These two approaches allow one not only to recover the classical variety $V_\Hopf$, but also give a prediction for its quantization. Namely,
$$
\Psi_\Hopf(x_1, x_2) = \sum_{m_1, m_2} H_{m_1, m_2} e^{m_1 x_1+m_2 x_2}
$$
is given in terms of the $SU(N)_\n$ WZW S matrix
$$
H_{m_1,m_2} = S_{m_1, m_2}/S_{00}
$$
evaluated at  $q=e^{2\pi i/(\n+N)}$and $Q=q^N$. Here $S_{m_1m_2}$ is the matrix element corresponding to the totally symmetric representations with $m_1,m_2$ boxes. Using the well-known explicit expressions (see Appendix \ref{app:Hopf}), one can show that $\Psi_\Hopf$ satisfies a set of difference equations
$$
{\cal A}_{\alpha} \Psi_\Hopf =0, \qquad \alpha=1,2,3,
$$
where
\begin{align*}
{\cal A}_1 &=- e^{x_1}+ e^{x_2} - (1-Q e^{x_1})e^{g_s\partial_1} +
 (1-Qe^{x_2})e^{g_s\partial_2}, \\
{\cal A}_2&=  (1- q^{-1} e^{x_2} - (1- Q e^{x_2} )  e^{g_s\partial_2})( e^{g_s\partial_1} - e^{x_2}), \\
{\cal A}_3&=    (1- q^{-1} e^{x_1} - (1- Q e^{x_1} )  e^{g_s\partial_1})( e^{g_s\partial_2} - e^{x_1}).
\end{align*}
In the classical limit, these reduce to $A_{\alpha}=0$, where
\begin{align*}
{A}_1 &=- \lambda_1+\lambda_2- (1-Q \lambda_1)\mu_1+ (1-Q \lambda_2)\mu_2,\\
{A}_2&=  (1- \lambda_2-\mu_2   +
Q  \lambda_2\,\mu_2 )(\mu_1- \lambda_2), \\
{A}_3&=  (\mu_2- \lambda_1)(1- \lambda_1-\mu_1   +
Q  \lambda_1\,\mu_1 )(\mu_2- \lambda_1).
\end{align*}
The Lagrangian solutions to this are precisely the variety
$V_\Hopf = V_\Hopf(1^2)\cup V_\Hopf(2).$
(There are other solutions that do not lead to Lagrangians. These are not of interest, as they do not describe saddle points.)
Finally, note that the intersection
$V_\Hopf(1^2)\cap V_\Hopf(2)$
is indeed codimension $1$: it is a curve, given by
$$
\mu_2 = \lambda_1, \qquad \lambda_2 = \mu_1,
$$
where $\mu_1, \lambda_1$ lie on the unknot curve
$$
1- \lambda_1- \mu_1 + Q \lambda_1 \mu_1=0.
$$

The equations are simple enough that we can solve them exactly. Two linearly independent solutions, with different asymptotics, can be obtained by considering two different contours of integration ${\cal C}_{(2)}$, ${\cal C}_{(1^2)}$ in
$$
\Psi_\Hopf^P(x_1,x_2) = \int_{{\cal C}_P} dp_2\; {1\over 1-e^{-p_2} e^{x_1}} \Psi_{\bigcirc}(x_1) \Psi_{\bigcirc}^{-1}(p_2) \;e^{p_2 x_2/g_s}\\
$$
where $\Psi_{\bigcirc}(x)$ is the partition function of the unknot.\footnote{We have simplified things slightly by shifts of variables. For details see Appendix~\ref{app:Hopf}.}
The wave function corresponding to $V_\Hopf(2)$ is obtained by taking a contour ${\cal C}_{(2)}$ which is a small circle $p_2=x_1$; this gives
$$
\Psi_\Hopf^{(2)}(x_1,x_2) = e^{x_1 x_2/g_s}.
$$
The wave function corresponding to $V_{\Hopf}(1^2)$ can be obtained by taking the contour ${\cal C}_{(1^2)}$ to run along the real axis. Using the fact that Fourier transform $\Psi_{\bigcirc}^{-1}(p)$ is $\Psi_{\bigcirc}(x)$, this  gives
$$
\Psi_\Hopf^{(1^2)}(x_1, x_2) = \exp(W_{\bigcirc}(x_1)/g_s + W_{\bigcirc}(x_2)/g_s +\ldots).
$$

\subsection{Knot parallels and higher rank representations}
\label{ssec:knotparallels}
Consider a link ${\tilde K}$ obtained by taking $\n$ parallels of a single knot $K$. This is closely related to studying $\n$ branes on a single Lagrangian associated to the knot $K$. As explained in Section \ref{ssec:multiplebranes}, the partition functions are related by
$$
Z_{{\tilde K}}(x_1, \ldots, x_\n) =  Z^{(\n)}_{{ K}}(x_1, \ldots, x_\n)/\Delta(x),
$$
where
$$\Delta(x) = \prod_{1\leq i<j\leq \n}(e^{(x_i-x_j)/2}-e^{(x_j-x_i)/2})
$$
arises from integrating out short strings, with boundary on $L_K$ alone and with no boundaries on the $S^3$. This is a sum over annuli and hence has no $g_s$ dependence.

This relation between $Z_{{\tilde K}}$ and $Z^{(\n)}_{{ K}}$ implies that for every saddle point of $Z_{{\tilde K}}$ we get a saddle point of  $Z^{(\n)}_{{ K}}$
as well, at least as long $\Delta(x)$ has no zeroes there.
Thus we expect the saddle points of $Z_K$ to be a subset of saddle points of $Z_{{\tilde K}}$. The ones that may be missing are those where short bifundamental strings get expectation values. These are missing in  $ Z^{(\n)}_{{ K}}(x_1,\ldots,x_\n)$, where by assumption these strings are massive and we have integrated them out to get $\Delta(x)$. We will give a rigorous mathematical proof of this statement in Section \ref{Sec:knotch}, from the perspective of knot contact homology. It is satisfying that the same picture emerges from Chern-Simons theory and large $N$ duality.

\section{Knot contact homology}\label{Sec:knotch}
In this section we discuss the connection between knot contact homology and topological
strings in the context of knot and link invariants.  We first give a brief discussion of this connection
in the next subsection, before giving a more detailed discussion in the following subsections
and in Section \ref{sec:augmentations}.

\subsection{Brief summary of relation between topological strings and knot contact homology}
Knot contact homology uses the Lagrangian brane $L_K$ in $X=T^*S^3$, but
with the zero section $S^3\subset T^{\ast} S^{3}$ deleted. In other words, one considers the
geometry far from the apex of the cone where the geometry is $\R\times U^{\ast} S^{3}\approx \R\times S^3\times S^2$; this shares the basic feature of the geometry of $Y$
after the transition, where $S^2$ does not bound.

For simplicity of notation, we temporarily assume that $K$ is a single-component knot. Then
$L_K$ is $\R\times \Lambda_K\approx \R\times T^2$ and it is natural to view the $\R$-factor as the ``time'' direction. Here the torus $\Lambda_K$ can be viewed as the geometry of the Lagrangian
brane at infinity.  The knot leaves its imprint on how the Legendrian submanifold $\Lambda_K$ is embedded in the contact manifold $U^{\ast}S^{3}\approx S^3\times S^2$.

Consider the physical open string states ending on $\Lambda_K$. Among these curves with endpoints on $\Lambda_K$, the paths that are stationary for the action $\int p\,dq$ are ``Reeb chords'' $a_i$, which are string trajectories that are flow segments of the Reeb vector field. (Reeb chords also have the property that holomorphic maps can end on them.)  In the physical setup we would say that the $a_i$ are classically annihilated by the BRST symmetry $Q_B$:
$$
Q_B\cdot a_i=0.
$$
However, disk instantons modify the operation of $Q_B$ in a way similar to Witten's formulation
of Morse theory \cite{Witten:1982im}, where the critical points of the Morse function
correspond to vacua, but instanton corrections, which correspond
to gradient flows (rigid up to translation), modify the supersymmetry algebra.  In the case at hand the role
of the gradient flows are played by disk instantons, which are disks $D$ (rigid up to $\R$-translation) that at $t\rightarrow +\infty$ start with the Reeb chord $a_i$ and at $t\rightarrow -\infty$ approach
the Reeb chords $a_{j_1},\dots, a_{j_{k_D}}$ as ordered multi-pronged strips with $k_D+1$ punctures. The boundary of the disk maps to $\R\times\Lambda_K$ and hence (upon choosing some fixed paths capping the ends of the Reeb chords) represents an element in $H_1(\Lambda_K)$, and thus picks up the holonomy factor $\exp(l_D x+m_D p)$ from the Wilson lines on the Lagrangian brane.  Let $n_D$ denote
the intersection of this disk (again suitably capped off) with the 4-cycle dual to $S^2$, i.e.~$\R\times S^3$.  Then there is a deformed $Q_B$ operator (called $\partial$ below), which we interpret as the quantum corrected $Q_B$, given by
$$
Q_B\cdot a_i=\sum_{\text{rigid disks }D} Q^{n_D} e^{l_Dx+m_Dp} a_{j_1}\cdots a_{j_{k_D}},
$$
where $x,p$ correspond to the holonomies of the probe brane around the longitudinal/meridional directions of the knot. Thus the open string states correspond
to nontrivial elements of the deformed $Q_B$ cohomology.  The existence of
a $1$-dimensional representation of this cohomology leads to an algebraic constraint $A(e^x,e^p,Q)=0$.
Such $1$-dimensional representations can be interpreted as the disk corrected moduli space of one such Lagrangian brane after the large $N$ transition; see Section \ref{ssec:nonexact}, where it is explained how a Lagrangian filling with associated moduli space of disks gives an augmentation. Using the large $N$ duality, this should be
 the same as the $Q$-deformed $A$-polynomial we defined earlier, which is the quantum corrected moduli space of the single brane $L_K$.  The fact that we find agreement between these two
 setups can be interpreted as further evidence for large $N$ duality.

 The same setup applies
 to the case of links.  The main novelty there is that we have $\n$ Legendrian tori and thus
 we have $\n^2$ sectors of open strings associated to the $\n^2$ choices for the endpoint components of the Reeb chords. As in the knot case, we can use disk instantons to find the $Q_B$ corrected
 cohomology.  The representation theory of the resulting algebra, where there are
 nontrivial open string states between all pairs, leads to an $\n$-dimensional variety $V(\n)$,
 which we identify with the corresponding moduli space $V_K(\n)$ discussed earlier; see Section \ref{ssec:antibrane} for possible  geometric constructions of corresponding Lagrangian fillings.
 Moreover, we can consider other representations where part of the open strings between
 some pairs of Legendrian tori are mapped to zero.  This gives other varieties $V_K(P)$ as discussed
 in Section \ref{sec:largeN}.

We now turn our attention to a more mathematical description of knot contact homology. We begin with a
brief general introduction to contact homology, an object associated to
contact manifolds and to their Legendrian submanifolds. Next we
describe knot contact homology, which is the contact homology
associated to the Legendrian torus $\Lambda_K \subset U^*\R^3$ (which can be viewed as a patch of $U^*S^3=S^2\times S^3$), first
geometrically and then algebraically. We then present augmentations
from an algebraic perspective and use these to define a polynomial
knot invariant, the augmentation polynomial, or a variety in the case
of a multi-component link. The augmentation variety is conjectured to
agree with the mirror-symmetry variety $V_K$ from previous
sections. Besides agreeing
with $V_K$ in a number of examples, we will see that the augmentation
variety shares many properties of $V_K$. This will be discussed
further in Section~\ref{sec:augmentations}.

\subsection{Mathematical overview of knot contact homology}
\label{ssec:mathintro}

Here we provide a summary of the mathematics behind knot contact
homology and augmentations. This goes into more detail than the
previous subsection and
also places the construction in the context of contact and symplectic
geometry. The full story is rather long, and we will omit many
technical details
in the interest of readability. As a result, some of the discussion
below is rather imprecise, but we will provide references to more
detailed treatments in the
literature for the interested reader. A slightly less brief overview
can be found in \cite{Ngsurvey}.

Knot contact homology is a special case of \textit{Legendrian contact
  homology}, which is itself a small part of the more elaborate
Symplectic Field Theory package in symplectic topology introduced in
\cite{EGH}. The general setup begins with a contact manifold, which
for our purposes is a
$(2m-1)$-dimensional manifold $V$ equipped with a $1$-form $\alpha$
such that $\alpha \wedge (d\alpha)^{m-1}$ is a volume form on
$V$. The $1$-form $\alpha$ determines the Reeb vector field $R_\alpha$
on $V$ given by $d\alpha(R_\alpha,\cdot) = 0$ and $\alpha(R_\alpha) = 1$.
Associated to the contact manifold $V$ is its symplectization, the
$2m$-dimensional manifold $\R \times V$ equipped with the symplectic
form $\omega = d(e^t\alpha)$, where $t$ is the coordinate in the
additional $\R$-factor; on the symplectization we may choose an
$\R$-invariant almost complex structure $J$ pairing $\partial/\partial
t$ and the Reeb vector field $R_\alpha$.

Legendrian contact homology is associated to the contact manifold $V$
along with a \textit{Legendrian submanifold} $\Lambda \subset V$,
which is a submanifold along which $\alpha$ is identically $0$, of
maximal dimension; the contact condition on $\alpha$ forces this
maximal dimension to be $m-1$. In this case, $\R\times\Lambda$ is a
Lagrangian submanifold of $\R\times V$.

Define a \textit{Reeb chord}
of $\Lambda$ to be a flowline of $R_\alpha$ that begins and ends on $\Lambda$.
We define the algebra $\mathcal{A}=\mathcal{A}(\Lambda)$ to be the free (tensor)
algebra over the ring $\Z[H_2(V,\Lambda)]$ generated by Reeb chords of
$\Lambda$. That is, if $a_1,\ldots,a_r$ are the Reeb chords of
$\Lambda$, then an element of $\mathcal{A}$ is a linear combination of
monomials of the form
\[
e^{\gamma} a_{i_1}\cdots a_{i_k}
\]
where $1\leq i_1,\ldots,i_k\leq r$ and $\gamma \in
H_2(V,\Lambda)$. (Here the $a_i$'s do not commute with each other,
although for many purposes one can abelianize and replace
$\mathcal{A}$ by the polynomial ring generated by $a_1,\ldots,a_r$.)
The algebra $\mathcal{A}$ has a grading (by Conley--Zehnder indices)
that we do not describe here.

\begin{figure}
\centerline{
\includegraphics[width=0.75\linewidth]{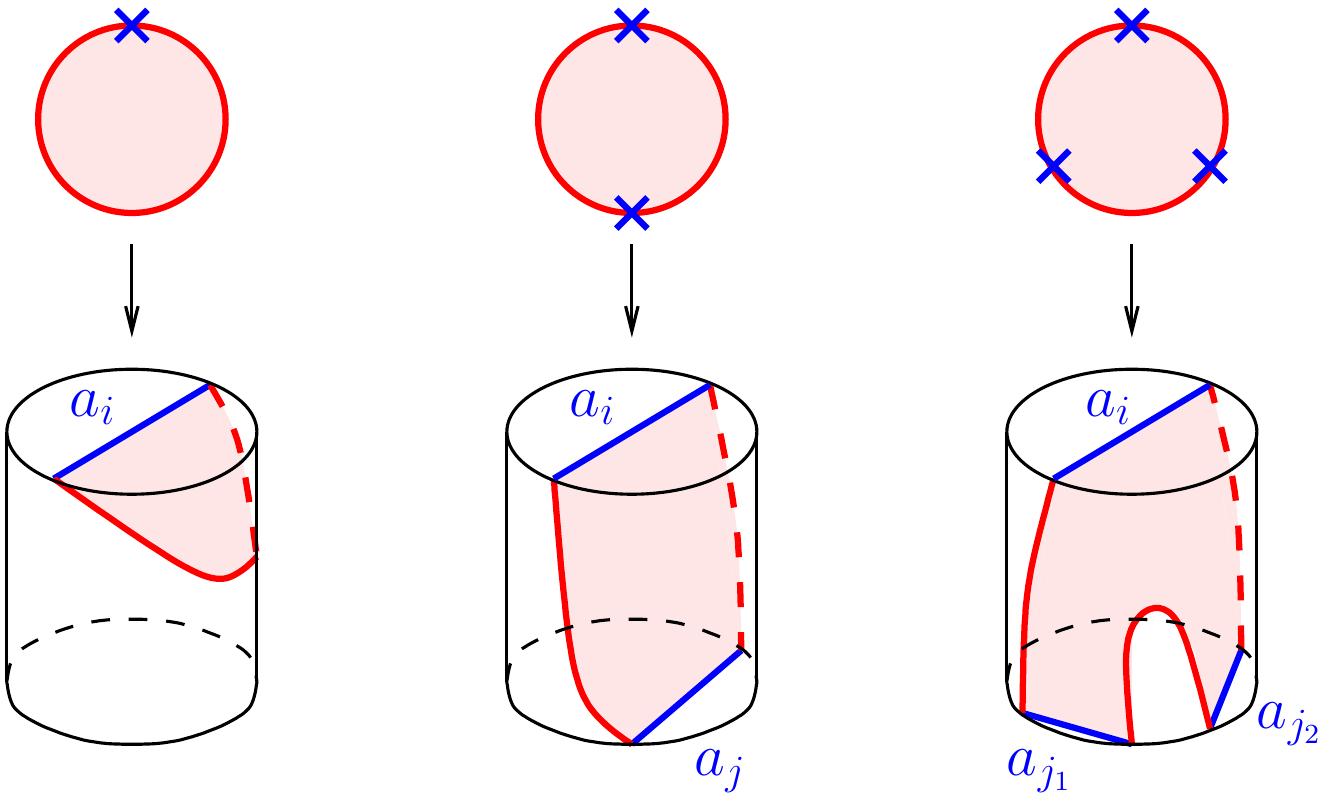}
}
\caption{
Holomorphic disks contributing terms $1$, $a_j$, $a_{j_1j_2}$,
respectively, to $\partial(a_i)$. The cylinders represent
$\R\times\Lambda \subset \R\times V$, with top and bottom
corresponding to $+\infty$ and $-\infty$ in $\mathbb{R}$, respectively.
}
\label{fig:holdisks}
\end{figure}

We can construct a differential map $\partial
:\thinspace \mathcal{A} \to \mathcal{A}$ by counting certain
$J$-holomorphic curves in $\R\times V$ with boundary on
$\R\times\Lambda$. More precisely, for $1\leq i\leq r$, define
\[
\partial(a_i) = \sum_{k\geq 0} \sum_{\Delta} (\operatorname{sgn} \Delta) e^{[\Delta]} a_{j_1} \cdots
a_{j_k},
\]
where the second sum is over all rigid (i.e., $0$-dimensional moduli space)
$J$-holomorphic maps $\Delta$ from a disk
with $k+1$ punctures along its boundary to $\R\times V$, such that the
boundary of the disk is mapped to $\R\times\Lambda$ and the punctures
are mapped in order to a neighborhood of $a_i$ near $t=\infty$, and
neighborhoods of $a_{j_1},\ldots,a_{j_k}$ near $t=-\infty$; $(\operatorname{sgn}
\Delta)$ is a sign associated to $\Delta$; and $[\Delta]$ is the
homology class of $\Delta$ in $H_2(V,\Lambda)$. See Figure~\ref{fig:holdisks}.

In appropriate circumstances, $\partial^2=0$,
$\partial$ lowers degree by $1$, and
the homology of
the differential graded algebra $(\mathcal{A},\partial)$ is an invariant of
the Legendrian submanifold $\Lambda$, called the \textit{Legendrian
  contact homology} of $\Lambda$. (See e.g.~\cite{EES, Ekholm_rsft}. From a physical perspective, the main necessary condition is that the closed string sector, Reeb orbits, decouples from the open string sector, Reeb chords.)

  As mentioned in previous sections,
there is a strong parallel to open
topological strings, where $\R\times\Lambda$ is a brane, $\mathcal{A}$
keeps track of open topological strings on $\R\times\Lambda$, and
$\partial$ is the BRST operator $Q_B$.

We now apply this general setup to one particular case of interest. Let
$V$ be the bundle $U^*S^3$ consisting of unit-length cotangent
vectors in $S^3$; this is a five-dimensional contact manifold with
$\alpha = p\,dq$. If $K \subset S^3$ is a link, then define the unit
conormal bundle $\Lambda_K
\subset U^*S^3$ to be the subset consisting of all unit covectors in
$T^*S^3$ lying over $K$ and annihilating $TK$. It is an exercise to
check that $\Lambda_K$ is Legendrian in $U^*S^3$. We can then define
the \textit{knot contact homology} of $K$ to be the homology of the
differential graded algebra $(\mathcal{A},\partial)$ defined above
with $\Lambda=\Lambda_K$.

If $K$ is an $n$-component link $K_1 \cup \cdots \cup K_n$, then
$\Lambda_K = \Lambda_{K_1} \cup \cdots \cup \Lambda_{K_n}$ is a
disjoint union of $n$ two-dimensional tori, and we
can write
\begin{align*}
H_2(V,\Lambda) &= H_2(S^3\times S^2,\Lambda_K) \cong H_2(S^2) \oplus
H_1(\Lambda_K)\\
& \cong \Z \oplus \Z^{2n} \cong \Z\langle t,x_1,p_1,\ldots,x_n,p_n \rangle.
\end{align*}
Here $t$ is the generator of $H_2(S^2)$ and $x_i,p_i$ are generators
of $H_1(\Lambda_{K_i}) = H_1(T^2)$ such that $p_i$ is a meridian and
$x_i$ is a longitude for $K_i$ with respect to the standard Seifert
framing (so that $x_i$ and $K_i$ have linking number $0$).
Correspondingly, we can write the coefficient ring for $\mathcal{A}$
as
\[
\Z[H_2(V,\Lambda)] \cong \Z[Q^{\pm 1},\lambda_1^{\pm 1},\mu_1^{\pm 1},\ldots,
\lambda_n^{\pm 1},\mu_n^{\pm 1}]
\]
where $Q = e^t$, $\lambda_i = e^{x_i}$, and $\mu_i = e^{p_i}$.

Given a braid with $m$ strands such that $K$ is the closure of the braid (i.e., the result of gluing together corresponding ends of the braid in $S^3$), there is a combinatorial formula for the differential graded algebra $(\mathcal{A},\partial)$ associated to $K$. Outside of computations, we will not need the precise formula, and we omit it here; please see the Appendix of \cite{Ngsurvey} for the full definition of the version of the invariant that we use in this paper, noting that our $Q$ is denoted by $U$ in that paper. (The algebra given in \cite{Ngsurvey} is in turn based on work that originally appeared in the series of papers \cite{Ngframed,EENStransverse,Ngtransverse,EENS}.) For our purposes, it suffices to note that $\mathcal{A}$ is finitely generated as an algebra over the above coefficient ring $\Z[Q^{\pm 1},\lambda_1^{\pm 1},\mu_1^{\pm 1},\ldots,
\lambda_n^{\pm 1},\mu_n^{\pm 1}]$, and the generators of $\mathcal{A}$ are all of degree $\geq 0$. These generators include $m(m-1)$ generators of degree $0$ that we write as $a_{ij}$ for $1\leq i,j\leq m$ and $i\neq j$.

As an example, consider the Hopf link, which is represented by the
$2$-strand braid $\sigma_1^2$. For this link, $\mathcal{A}$ is
generated by the following generators: $a_{12},a_{21}$ of degree $0$;
$b_{12},b_{21},c_{11},c_{12},c_{21},c_{22}$ of degree $1$; and a
number of other generators of degree $1$ and $2$ that are irrelevant
for our purposes. From the formula from \cite{Ngsurvey}, the
differential $\partial$ is given by:
\begin{align*}
\partial(a_{12}) &= \partial(a_{21}) = 0 \\
\partial(b_{12}) &= \lambda_1\mu_1\lambda_2^{-1}\mu_2^{-1}a_{12}-a_{12} \\
\partial(b_{21}) &= a_{21} - \lambda_1^{-1}\mu_1^{-1}\lambda_2\mu_2 a_{21} \\
\partial(c_{11}) &= Q-\lambda_1-\mu_1+\lambda_1\mu_1+\lambda_1\mu_1\mu_2^{-1}
a_{12}a_{21} \\
\partial(c_{12}) &= Q a_{12} + \lambda_1 \mu_2^{-1} a_{12}(\mu_1\mu_2-\mu_1-\mu_2+\mu_1a_{21}a_{12}) \\
\partial(c_{21}) &= \lambda_2 a_{21}-\mu_1 a_{21} \\
\partial(c_{22}) &= Q-\lambda_2-\mu_2+\lambda_2\mu_2+\lambda_2 a_{12}a_{21}.
\end{align*}
We will use this computation in the next subsection.

\subsection{Augmentations, the augmentation variety, and the
  $A$-polynomial}
\label{ssec:augmentations}

Having defined the differential graded algebra for knot contact
homology, we now turn to augmentations. In general, an \textit{augmentation} of a differential graded algebra $(\mathcal{A},\partial)$ is an algebra map $\epsilon$ (preserving multiplication) from $\mathcal{A}$ to a ring $S$, such that $\epsilon(1)=1$, $\epsilon\circ\partial = 0$, and $\epsilon$ vanishes on elements of nonzero degree. Here we will take our augmentations to be graded, where we think of the ring $S$ as a DGA with trivial differential concentrated in degree zero. Thus, our augmentations $\epsilon$ are nonzero only on generators of degree zero.

Augmentations arise naturally in symplectic topology in the context of symplectic fillings. Suppose that $V$ is a contact manifold and $W$ is a symplectic filling of $V$; this is a symplectic manifold whose boundary is $V$, satisfying a certain compatibility condition ($V$ should be a convex end of $W$). We further assume that $W$ is an \textit{exact} symplectic filling if the contact $1$-form $\alpha$ on $V$ extends to a $1$-form $\alpha$ on $W$ such that $\omega = d\alpha$, where $\omega$ is the symplectic form on $W$.
Next let $\Lambda \subset V$ be Legendrian, and suppose that $L \subset W$ is a Lagrangian filling of $\Lambda$; this is an embedded Lagrangian submanifold whose boundary is $\Lambda$ (and which looks like $\R\times\Lambda$ near $V$).

A special case of this construction is when $L$ is an \textit{exact}
Lagrangian filling of $\Lambda$, i.e., when $\alpha|_L$ has a
primitive on $L$. In this case, we can associate to $L$ a canonical
augmentation $\epsilon$ of the DGA for $\Lambda$,
\[
\epsilon :\thinspace \mathcal{A} \to \Z[H_2(W,L)],
\]
obtained by counting holomorphic disks in $W$ with
boundary on $L$ and asymptotic to a Reeb chord for $\Lambda$.
More generally, if $L$ is non-exact, we can use $L$ to deduce certain
structures on the space of abstract augmentations of $\Lambda$. This
perspective relating augmentations to Lagrangian fillings is central
to the mathematical side of this paper, and we return to it in detail
in Section~\ref{sec:augmentations}.

For now, we examine the space of augmentations in our particular
case. When $K$ is an $n$-component link and $(\mathcal{A},\partial)$
is its differential graded algebra with coefficient ring
$\mathbb{Z}[\lambda_1^{\pm 1},\mu_1^{\pm 1},\ldots,
\lambda_n^{\pm 1},\mu_n^{\pm 1},Q^{\pm 1}]$, an augmentation to $\C$
is a (graded) algebra map $\epsilon :\thinspace \mathcal{A} \to \C$ such that
$\epsilon(\partial(a)) = 0$ for all Reeb chords $a$.
Note that an augmentation restricts to a ring homomorphism from the
coefficient ring of $\mathcal{A}$ to $\C$.

With this in mind, define the \textit{augmentation variety} of the
$n$-component link $K$ to be the subset of $(\C^*)^{2n+1} =
(\C-\{0\})^{2n+1}$ given by
\[
V_K = \{(\epsilon(\mu_1),\ldots,\epsilon(\mu_n),\epsilon(\lambda_1),\ldots,\epsilon(\lambda_n),\epsilon(Q))\,|\,
\epsilon\colon\mathcal{A}\to\bC \text{ augmentation}\}.
\]
(More precisely, the augmentation variety is the closure of the
maximal-dimensional piece of this set.) This variety is
the set of ways to assign
nonzero complex numbers to $\mu_1,\ldots,\mu_n,\lambda_1,\ldots,\lambda_n,Q$
such that a collection of polynomials in $m(m-1)$ variables
$a_{ij}$ (with coefficients involving
$\mu_1,\ldots,\mu_n,\lambda_1,\ldots,\lambda_n,Q$), namely the images under $\partial$ of the generators of degree one, has a common root.
Note that the augmentation variety is algebraic (it is cut out in
$(\C^*)^{2n+1}$ by the zero locus of some collection of polynomials),
and that it is a link invariant since the DGA for knot contact homology is invariant up to homotopy.
Also, in line with previous sections, one can view the
augmentation variety as a one-parameter
family of varieties in $(\bC^*)^{2n}$ with parameter given by $Q$.

As an example, for the Hopf link, the expression for the differential
from Section~\ref{ssec:mathintro} implies that any augmentation
$\epsilon$ must satisfy either $\epsilon(a_{12})=\epsilon(a_{21})=0$,
in which case
\[
\epsilon(Q)-\epsilon(\lambda_1)-\epsilon(\mu_1)+\epsilon(\lambda_1\mu_1)
=
\epsilon(Q)-\epsilon(\lambda_2)-\epsilon(\mu_2)+\epsilon(\lambda_2\mu_2)
=0,
\]
or $\epsilon(a_{12}),\epsilon(a_{21}) \neq 0$, in which case
$\epsilon(\lambda_2)=\epsilon(\mu_1)$ and
$\epsilon(\lambda_1)=\epsilon(\mu_2)$. It follows that
\[
V_\Hopf = \{Q-\lambda_1-\mu_1+\lambda_1\mu_1 =
Q-\lambda_2-\mu_2+\lambda_2\mu_2 = 0\} \cup
\{\lambda_2-\mu_1=\lambda_1-\mu_2=0\}.
\]
Further examples of augmentation varieties are given in
Section~\ref{sec:ex}.

As observed in \cite{Ngframed}, there is a close connection between
the augmentation variety and the $A$-polynomial, which we now
discuss. First assume that $K$ is a single-component knot, and let
$m,l\in \pi_1(S^3- K)$ denote the meridian and longitude of $K$.
For a
representation $\rho :\thinspace \pi_1(S^3- K) \to SL(2,\bC)$,
simultaneously diagonalize
$\rho(m)$ and $\rho(l)$, and let $(\mu,\lambda)$ and
$(\mu^{-1},\lambda^{-1})$ be the corresponding eigenvalues. Then the
closure of the (highest-dimensional part of) the set
$\{(\mu,\lambda)\}$ over all $SL(2,\bC)$ representations is a curve
in $(\bC^*)^2$ whose defining equation is the $A$-polynomial
$A_K(\lambda,\mu)$.

When $K$ is a knot, the augmentation variety $V_K \subset (\bC^*)^3$
is also the
vanishing locus of a polynomial, the \textit{augmentation polynomial}
$\operatorname{Aug}_K(\lambda,\mu,Q)$. Similarly, the slice $V_K \cap
\{Q=1\} \subset (\bC^*)^2$ is the vanishing locus of the \textit{two-variable
augmentation polynomial} $\operatorname{Aug}_K(\lambda,\mu)$. It is
conjectured that $\operatorname{Aug}_K(\lambda,\mu)$ is equal to
$\operatorname{Aug}_K(\lambda,\mu,Q=1)$ up to repeated factors.

The relation between the augmentation variety and the $A$-polynomial
is then as follows \cite[Prop.~5.9]{Ngframed}:
\[
(\mu-1) A_K(\lambda,\mu^{1/2}) \, | \, \operatorname{Aug}_K(\lambda,\mu).
\]
Here ``$|$'' denotes ``divides''. In particular, $(\mu-1)$ and
$(\lambda-1)$ are both factors of $\operatorname{Aug}_K(\lambda,\mu)$
for any knot $K$, where the latter comes from reducible $SL(2,\bC)$
representations. These two factors have a nice interpretation in terms
of Lagrangian fillings: $(\mu-1)$ comes from the filling of
$\Lambda_K$ by the conormal Lagrangian $L_K$, while $(\lambda-1)$
comes from the filling by the knot complement Lagrangian $M_K$.

We now generalize to the case where $K$ is an $\n$-component link.
There is a well-known generalization of the $A$-polynomial to links,
as follows. As before, let $\rho :\thinspace
\pi_1(S^3- K) \to
SL(2,\bC)$ be a representation, and let $m_i,l_i$ denote the meridian
and longitude for component $i$. For any single $i=1,\ldots,\n$, $\rho(m_i)$ and $\rho(l_i)$
can be simultaneously diagonalized, with eigenvalues $(\mu_i,\lambda_i)$ and
$(\mu_i^{-1},\lambda_i^{-1})$. Then
(the closure of the highest-dimensional part of)
\[
\{(\mu_1,\ldots,\mu_\n,\lambda_1,\ldots,\lambda_\n) \,|\,
\rho \text{ any } SL(2,\bC) \text{ representation}\}
\]
is a variety in $(\bC^*)^{2\n}$, which we denote by $V_K^A$ and call
the \textit{$A$-polynomial variety} of $K$.\footnote{This is not new; see, e.g., \cite{Tillmann}.}

Note that for any fixed $i=1,\ldots,\n$, one can abelianize
$\pi_1(S^3- K)$ and set all meridians but $m_i$ to $0$, to
obtain $\bZ$. It follows that for any $\mu_i\in\bC^*$, the curve
\[
C_i = \{(\mu_1,\ldots,\mu_\n,\lambda_1,\ldots,\lambda_\n,1) \,|\,\mu_j= 1
\text{ and }
\lambda_j=\mu_i^{a_{ij}} \text{ for
} j\neq i, \text{ and } \lambda_i=1 \}
\]
is contained in $V_K$. This is the analogue of the fact that
$(\lambda-1)$ divides the $A$-polynomial for knots.

Arguing along the lines of Proposition 5.9 from \cite{Ngframed} then yields the
following result: if
$(\mu_1,\ldots,\mu_\n,\lambda_1,\ldots,\lambda_\n) \in V_K^A$, then
\[
(\mu_1^2,\ldots,\mu_\n^2,\tilde{\lambda}_1,\ldots,\tilde{\lambda}_\n,1)
\in V_K.
\]
Here we define $\tilde{\lambda}_i = \lambda_i \mu_1^{\k_{i1}} \cdots
\mu_\n^{\k_{i\n}}$, where the $\k_{ij}$ are integers determined by
$[l_i] = \sum_{j=1}^\n \k_{ij}[m_j] \in H_1(S^3- K)$.
Thus the $A$-polynomial variety of $K$ can be viewed as a subset at
$Q=1$ of the augmentation variety, for links as well as for knots.

We conclude this section with a remark about the nature of the
augmentation variety. For all examples where the variety has been
computed, the following holds: for fixed $Q$, $V_K \subset (\bC^*)^{2\n}$
is Lagrangian with respect to the symplectic form
\[
\sum_{i=1}^n \frac{1}{\lambda_i\mu_i} \,d\lambda_i \wedge
d\mu_i.
\]
In $x,p$ coordinates, this says that the augmentation variety is
Lagrangian with respect to the symplectic form $\sum_{i=1}^n dx_i
\wedge dp_i$, in accordance with the notion that the variety is
locally given by a potential. (In particular, $V_K$ is $\n$-dimensional.) The underlying geometric reason for this observation is related to Lagrangian fillings and will be discussed further in Section \ref{sec:augmentations}
from the viewpoint of knot contact homology; it agrees well with corresponding
predictions from large $N$
duality and the relation with Chern-Simons theory. See
Section~\ref{sec:largeN}.

\subsection{Components of the augmentation variety}
\label{ssec:augvarcomps}

Here we discuss how partitions give different components of the
augmentation variety of a link, in parallel to the physics discussion from
Section~\ref{sec:Dmirror}. As discussed there, for an $\n$-component link $K$, any partition $P$ of $\{1,\ldots,\n\}$ divides $K$ into a collection of sublinks; if we move these sublinks far from each other, we obtain a split link $K(P)$, and a partition is primitive if each of these sublinks is non-split. (Recall that a link is split if it is the union of two sublinks that lie in two disjoint solid balls.)

\begin{conjecture}\label{cnj:auglink}
The augmentation variety $V_K$  of a link $K$ is a union of components
$$
V_K = \bigcup_P V_K(P)
$$
labeled by the primitive partitions $P$ of $\{1,\ldots,\n\}$. Furthermore, $V_K(P)$ coincides with the augmentation variety of the link $K(P)$.
\end{conjecture}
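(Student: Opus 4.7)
My plan is to associate to each augmentation $\epsilon\colon\mathcal{A}_K\to\bC$ a partition $P_\epsilon$ of $\{1,\ldots,\n\}$ by tracking the nonvanishing pattern of \emph{mixed} Reeb chords (those with endpoints on two distinct tori $\Lambda_{K_i}$, $\Lambda_{K_j}$), show that $P_\epsilon$ is primitive, and finally identify the stratum of augmentations with $P_\epsilon=P$ with the augmentation variety of the split link $K(P)$. Declare $i\sim_\epsilon j$ if some degree-zero mixed chord $c$ between $\Lambda_{K_i}$ and $\Lambda_{K_j}$ satisfies $\epsilon(c)\neq 0$. The first technical task is to verify that $\sim_\epsilon$ is transitive: this is not automatic and must be extracted from the combinatorial form of $\partial$ on degree-one mixed generators (generalizing the Hopf-link formulas of Section~\ref{ssec:mathintro}). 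The expected pattern is that $\partial$ of certain degree-one mixed chords contains terms of the schematic form $a_{ij}a_{jk}+(\text{monomial})\cdot a_{ik}+\cdots$, so applying $\epsilon$ and using $\epsilon\circ\partial=0$ forces $\epsilon(a_{ij})\epsilon(a_{jk})\neq 0 \Rightarrow \epsilon(a_{ik})\neq 0$. Establishing this explicitly in the braid DGA of \cite{Ngsurvey} is the first concrete combinatorial step.

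Next I would prove that $P_\epsilon$ is primitive. Suppose a block of $P_\epsilon$ corresponded to a sublink that is itself split. By invariance of the DGA under Legendrian isotopy (up to stable tame isomorphism), one may isotope the two subpieces arbitrarily far apart and transport $\epsilon$ to an augmentation of the isotoped algebra. In the limit of large separation, every mixed chord between the two pieces has action bounded below by the separation distance, so an action-filtration argument on the transported DGA forces every such chord to be sent to zero, contradicting the partition assignment. Hence each block of $P_\epsilon$ represents a non-split sublink, i.e.~$P_\epsilon$ is primitive.

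Finally, to identify $V_K(P)$ with $V_{K(P)}$, I would proceed in two directions. For ``$\supseteq$,'' any augmentation of $\mathcal{A}_{K(P)}$ is pulled back to $\mathcal{A}_K$ by setting all inter-block mixed chords to zero; compatibility with $\partial$ follows after separating the sublinks geometrically and appealing to DGA invariance, since then no differential of an intra-block generator can produce inter-block mixed chords. For ``$\subseteq$,'' an augmentation $\epsilon$ of $\mathcal{A}_K$ with $P_\epsilon=P$ restricts to the subalgebra generated by intra-block chords and coefficients, and this subalgebra with its inherited differential should be identified with (a quotient of) $\mathcal{A}_{K(P)}$ through the same separation isotopy, so $\epsilon$ descends to an augmentation of $\mathcal{A}_{K(P)}$ whose image in $(\bC^*)^{2\n+1}$ coincides with that of $\epsilon$.

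The main obstacle is the primitivity step: making the ``action filtration in the separation limit'' rigorous inside the purely combinatorial DGA, where geometric action is replaced by the symbolic bookkeeping of \cite{EENS}, requires real care, and the identification of the intra-block subalgebra with $\mathcal{A}_{K(P)}$ on the nose (rather than only up to stable tame isomorphism) is the subtlest point. A geometric alternative would be to produce, for each primitive $P$, the Lagrangian filling $F_{K(P)}$ conjectured in Section~\ref{ssec:antibrane} and apply the filling-to-augmentation construction of Section~\ref{ssec:nonexact}; this would yield the ``$\supseteq$'' inclusion automatically but would push the difficulty into the open existence problem of realizing every augmentation by a geometric filling.
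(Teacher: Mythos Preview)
The statement is a \emph{conjecture}, and the paper does not prove it. What the paper actually establishes after stating the conjecture is much more modest: (i) for any partition $P$, sending all inter-block mixed Reeb chords to zero shows $V_{K_1}\times_Q V_{K_2}\subset V_K$, giving the inclusion $V_{K(P)}\subseteq V_K$; and (ii) Proposition~\ref{prop:split}, which says that if $K$ is split then \emph{every} augmentation of $\mathcal{A}_K$ splits, so non-primitive partitions contribute nothing new. The paper explicitly labels everything else as ``evidence.''

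Your proposal attempts a full proof and therefore necessarily goes beyond the paper. Two of your steps have genuine gaps that you should be aware of. First, transitivity of $\sim_\epsilon$ is not a routine consequence of the braid DGA formulas: the schematic term $a_{ij}a_{jk}+(\text{monomial})\cdot a_{ik}$ you posit in $\partial$ of a degree-one generator is not known to appear with invertible coefficient in general, and even if it did, there could be other degree-zero terms in that differential whose $\epsilon$-values cancel against $\epsilon(a_{ij})\epsilon(a_{jk})$. Indeed, the paper's own definition of $V_K(P)$ does not use your equivalence relation at all; it instead declares an augmentation to be ``associated to $P$'' if it factors through the sublink DGAs by killing inter-block chords, and ``irreducible'' if not associated to any refinement. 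Your $P_\epsilon$ and the paper's assignment need not coincide a priori.

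Second, your primitivity argument via an action filtration in the separation limit is weaker than what the paper actually uses, and as stated does not work: augmentations are purely algebraic and need not respect any action filtration, so ``large action'' of mixed chords does not by itself force $\epsilon$ to vanish on them. The paper's proof of Proposition~\ref{prop:split} is cleaner and avoids this entirely: for a split link one exhibits a contact form (via the connected-sum/Weinstein $1$-handle description) for which there are literally \emph{no} Reeb chords between $\Lambda_{K_1}$ and $\Lambda_{K_2}$, so there is nothing to augment. This is the argument you should use for that step, not an action-filtration limit.
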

In the rest of the subsection we will provide evidence for the conjecture.
First, we will prove that every partition $P$ of $\{1, \ldots, \n\}$ leads to a (possibly empty) component of the augmentation variety of $K$, which we label $V_K(P)$ in accordance with Section~\ref{sec:Dmirror}.  Next, we will provide evidence that only the primitive partitions $P$ lead to nonempty components of augmentation variety, and that moreover these are all components of $V_K$.

If we view a 
link $K$ as a union of two sublinks $K_1
\cup K_2$, then
the fiber product
\[
V_{K_1} \times_Q V_{K_2} =
\{(\vec{\mu}_1,\vec{\mu_2},\vec{\lambda}_1,\vec{\lambda}_2,Q) \,|\,
(\vec{\mu}_1,\vec{\lambda}_1,Q) \in V_{K_1} \text{ and }
(\vec{\mu}_2,\vec{\lambda}_2,Q) \in V_{K_2}\}
\]
is contained in $V_K$.
Here $\vec{\mu}_1 = (\mu_1,\ldots,\mu_{\n_1})$, $\vec{\mu}_2 =
(\mu_{\n_1+1},\ldots,\mu_\n)$, and similarly for $\vec{\lambda}_1$ and
$\vec{\lambda}_2$, where the first $\n_1$ components of $K$ belong
to $K_1$ and the remainder to $K_2$.
This follows from considering
augmentations of the DGA for $K$ that send all ``mixed Reeb chords''
$a_{ij}$, i.e., Reeb chords with one endpoint in $\Lambda_{K_1}$ and the other in
$\Lambda_{K_2}$, to $0$. The idea of separating the DGA for a
Legendrian link into the DGAs of sublinks by sending mixed chords to
$0$ is well-established in contact geometry; see e.g. \cite{Mishachev}.

By iterating this process, one sees that if $P$ is a partition of
$\{1,\ldots,\n\}$ consisting of $\ell$ subsets, then augmentations of
(the DGAs for)
the corresponding $\ell$ sublinks of $K$ produce an augmentation of
$K$, where mixed Reeb chords connecting different sublinks are sent to
$0$. Say that an augmentation of $K$ obtained in this way is
\textit{associated to} the partition $P$. Note that
if $P'$ is a
refinement of $P$, then any augmentation associated to $P'$ is also
associated to $P$, and in particular that all augmentations are
associated to the partition $(12\cdots \n)$.

Say that an augmentation
associated to a partition $P$ is
\textit{irreducible} if it is not associated to any refinement of
$P$. Then we can define $V_K(P)$ to be the subset of the augmentation
variety $V_K$ corresponding to irreducible augmentations associated to
$P$. If $P$ consists of $\ell$ subsets, then $V_K(P)$ is a fiber
product of the $\ell$ augmentation varieties associated to the
relevant sublinks. As in Section~\ref{sec:Dmirror}, we sometimes abbreviate $V_K((1\cdots \n))$ to
$V_K(\n)$ and $V_K((1)\cdots(\n))$ to $V_K(1^\n)$.

For example, for the Hopf link, from the computation in
Section~\ref{ssec:augmentations}, we have $V_\Hopf = V_\Hopf(2) \cup V_\Hopf(1^2)$, where
\begin{eqnarray*}
V_\Hopf(2) &=& \{\lambda_1-\mu_2 = \lambda_2-\mu_1 = 0\} \\
V_\Hopf(1^2) &=& \{Q-\lambda_1-\mu_1+\lambda_1\mu_1 =
Q-\lambda_2-\mu_2+\lambda_2\mu_2 = 0\}.
\end{eqnarray*}

Under certain circumstances, $V_K(P)$ may be forced to be
empty:

\begin{proposition}
Suppose that $K = K_1 \cup K_2$ is a split link;
\label{prop:split}
that is, there is
an embedded $S^2$ in $S^3$ that separates $K_1$ from $K_2$. Then any
augmentation of the DGA of $K$ splits into augmentations of the DGAs
of $K_1$ and $K_2$, and thus $V_K = V_{K_1} \times_Q V_{K_2}$.
\end{proposition}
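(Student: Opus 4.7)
The plan is to reduce the statement to the claim that every augmentation $\epsilon \colon \mathcal{A}(K) \to \bC$ sends each mixed Reeb chord to zero, where ``mixed'' means a Reeb chord with one endpoint on $\Lambda_{K_1}$ and the other on $\Lambda_{K_2}$. Granted this, the expression $\epsilon(\partial a)$ for any non-mixed generator $a$ collapses to the part of $\partial a$ involving only non-mixed chords of the same sublink as $a$; hence $\epsilon$ restricts to a pair of augmentations $\epsilon_1, \epsilon_2$ of $\mathcal{A}(K_1), \mathcal{A}(K_2)$ sharing the value $\epsilon(Q)$. Together with the reverse inclusion already established above (extend any compatible pair by zero on mixed chords), this yields $V_K = V_{K_1} \times_Q V_{K_2}$.

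To prove the main claim, I would first use the splitting hypothesis to realize $K$ as the closure of a braid $\beta = \beta_1 \sqcup \beta_2$ on $m = m_1+m_2$ strands, where $\beta_i$ acts only on its own block of strands and no crossing mixes the two blocks; such a presentation exists because an embedded separating $S^2$ lets us place each $K_i$ inside a disjoint solid torus in $S^3$. Feed this braid into the combinatorial definition of $(\mathcal{A},\partial)$ from \cite{EENS, Ngsurvey}. Because $\beta$ has no mixing crossings, the braid monodromy matrix controlling $\partial$ is block diagonal with respect to the partition $\{1,\dots,m_1\} \sqcup \{m_1+1,\dots,m\}$.

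For any mixed pair $(i,j)$, the degree-one generator $b_{ij}$ therefore has differential of the form
\[
\partial b_{ij} = \bigl(\phi^L_i(\lambda_1,\mu_1,Q)\,\phi^R_j(\lambda_2,\mu_2,Q)-1\bigr)\,a_{ij} + \mathcal{E}_{ij},
\]
where $\phi^L_i$ depends only on the $K_1$-variables, $\phi^R_j$ only on the $K_2$-variables, and $\mathcal{E}_{ij}$ is quadratic or higher in mixed chords. Applying $\epsilon$ and using an induction on the total length of monomials in the ideal generated by mixed chords, one obtains $\epsilon(a_{ij})\bigl(\epsilon(\phi^L_i\phi^R_j)-1\bigr)=0$, so either $\epsilon(a_{ij}) = 0$ for every mixed pair, or the augmentation lies on the proper subvariety cut out by some $\phi^L_i\phi^R_j = 1$.

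The hard part is ruling out the second alternative. This is where one uses the definition of the augmentation variety as the closure of the top-dimensional stratum of augmentations. The product $V_{K_1}\times_Q V_{K_2}$ already has the expected complex dimension $n$ at fixed $Q$, and the additional polynomial equation $\phi^L_i\phi^R_j = 1$, being a single non-trivial Laurent constraint in independent variables from the two sides, strictly reduces the dimension. Hence any augmentation with some mixed chord nonzero lives on a locus of dimension strictly less than $n$ and is therefore excluded when one passes to $V_K$. Combined with the inclusion $V_{K_1}\times_Q V_{K_2} \subseteq V_K$, this gives equality and completes the proof. The delicate point to verify carefully is that $\phi^L_i\phi^R_j - 1$ is a genuinely non-trivial element (rather than vanishing identically on $V_{K_1}\times_Q V_{K_2}$ for some accidental reason), which in practice amounts to identifying $\phi^L_i$ and $\phi^R_j$ explicitly from the braid formula as monomials whose exponent vectors in $(\lambda_i,\mu_i,Q)$ are nonzero; the factorization $\phi^L_i\phi^R_j$ then cannot collapse to $1$ on a top-dimensional subvariety of $V_{K_1}\times_Q V_{K_2}$.
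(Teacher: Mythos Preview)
Your opening move matches the paper's first argument: present $K$ as the closure of a split braid so that the matrices $\Phi^L,\Phi^R$ in the combinatorial DGA are block-diagonal. But you then take an unnecessarily hard route and incur real gaps.

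First, the displayed form $\partial b_{ij}=(\phi^L_i\phi^R_j-1)a_{ij}+\mathcal{E}_{ij}$ is not justified. The entries of $\Phi^L,\Phi^R$ lie in the algebra (they involve the $a_{kl}$), so for a mixed pair $(i,j)$ the differential is a linear combination of many mixed $a_{kl}$ with algebra-valued coefficients, not a scalar multiple of the single generator $a_{ij}$ plus higher-order terms. Second, and more seriously, your dimension argument at the end explicitly appeals to the definition of $V_K$ as the closure of the top-dimensional stratum in order to discard the locus $\phi^L_i\phi^R_j=1$. This addresses only the variety equality, not the stronger first assertion that \emph{every} augmentation splits; it does not show such augmentations fail to exist, nor does it rule out that augmentations with nonzero mixed chords cut out their own top-dimensional component away from $V_{K_1}\times_Q V_{K_2}$.

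The paper's argument is shorter and avoids all of this. The key point you are missing is that you do not need to force mixed chords to zero. Block-diagonality of $\Phi^L,\Phi^R$ (with each block's entries built only from that block's generators) implies that the differential of any \emph{non-mixed} degree-one generator involves only non-mixed degree-zero generators from the same block. Hence for any augmentation $\epsilon$ of $\mathcal{A}(K)$, its restriction to the generators of $\mathcal{A}(K_i)$ already satisfies all the equations $\epsilon\circ\partial=0$ for $K_i$, irrespective of the values $\epsilon$ assigns to mixed chords. Both assertions then follow at once. The paper also gives a clean geometric alternative: view $(S^3,K)$ as the connected sum $(S^3,K_1)\#(S^3,K_2)$, so that $U^*S^3$ is obtained by Weinstein $1$-handle attachment; one can then choose a contact form with \emph{no} Reeb chords whatsoever between $\Lambda_{K_1}$ and $\Lambda_{K_2}$, and the result is immediate.
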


\begin{proof}
If $K$ is split, then it can be given as the closure of a braid $B$
that is also split. In this case, the matrices $\Phi^L,\Phi^R$ from
the definition of the DGA for knot contact homology are
block-diagonal, and the result follows from the formula for the
differential.

A more geometric proof is as follows: the split link is a connected sum $(S^{3}, K_1)\#(S^{3},K_2)$ and the unit disk cotangent bundle is then obtained by joining the two unit disk cotangent bundles with the corresponding Weinstein $1$-handle. This shows that there exists a contact form on $U^{\ast}S^{3}$ for which there are no Reeb chords connecting $\Lambda_{K_1}$ to $\Lambda_{K_2}$, and the result follows immediately.
\end{proof}

By Proposition~\ref{prop:split}, if $K = K_1 \cup K_2$ is split and
$P$ is the corresponding two-set partition of $\{1,\ldots,\n\}$, then
$V_K(P) = \emptyset$. More generally, if $K$ is a general link and $P$
is an $\ell$-set partition for which one of the $\ell$ sublinks is
split, then $V_K(P) = \emptyset$. We are not currently aware of any
other circumstance in which $V_K(P)$ is empty; see also the discussion
in Section~\ref{sec:Dmirror}.

\subsection{The link graph $\Gamma_K$ and the augmentation variety}
\label{ssec:augvarcomps2}
As in Section~\ref{sec:Dmirror}, to an $\n$-component link $K$ we can associate a
graph $\Gamma_K$ whose vertices are the primitive partitions of the link $K$, and where $P$ and $P'$ are connected if one
is a refinement of the other, and there is no intermediate refinement
$P''$ between the two such that $P''$ is also primitive. Based on
computations of the augmentation variety for a number of links, as
well as the D-module argument from Section~\ref{sec:Dmirror}, we make
the following conjecture.

\begin{conjecture}\label{cnj:codim1}
The graph associated to a link is connected, and if $P$ and $P'$ are
connected by an edge, then \label{conj:intersection}
\[
\textrm{codim}(V_K(P) \cap V_K(P')) = 1.
\]
More generally, if $P_1,\ldots,P_r$ are vertices such that $P_{i+1}$
is a refinement of $P_i$ and $P_i,P_{i+1}$ are joined by an edge for
all $1\leq i\leq r-1$, then $\textrm{codim}(V_K(P_1) \cap V_K(P_r))
\leq r$.
\end{conjecture}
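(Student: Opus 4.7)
The plan is to prove the conjecture in two parts, exploiting both the algebraic structure of the augmentation variety (as in Section \ref{ssec:augvarcomps}) and the $\scD$-module structure of colored HOMFLY (Section \ref{sec:Dmirror}). First I would reduce the general chain bound to the adjacent codimension-one statement. Setting $Y_1 = V_K(P_1)$ and $Y_{i+1} = Y_i \cap V_K(P_{i+1})$, one has $Y_{i+1} \subset V_K(P_i) \cap V_K(P_{i+1})$, so $Y_{i+1}$ is the intersection inside $V_K(P_i)$ of two subvarieties of codimensions $\textrm{codim}_{V_K(P_i)}(Y_i)$ and one. By Serre's subadditivity of codimension for closed subvarieties in a smooth ambient, this gives $\dim Y_r \geq \n - (r-1)$, and since $V_K(P_1) \cap V_K(P_r) \supset Y_r$ the bound $\textrm{codim} \leq r$ follows. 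Connectedness of $\Gamma_K$ would come from the combinatorial observation that starting from any primitive $P \neq (1)(2)\cdots(\n)$ one can always refine one block of $P$ to produce a new primitive partition with no primitive partition strictly between, bottoming out at the always-primitive finest partition.

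The heart of the argument is the codimension-one claim for adjacent $P,P'$ where $P'$ refines $P$ by splitting a single block $P_a$, with non-split sublink $K_a$, into two blocks $P_a',P_a''$ whose sublinks $K_a',K_a''$ are also non-split. Since $V_K(P)$ and $V_K(P')$ are fiber products over $Q$ of augmentation varieties of their constituent sublinks, and agree outside the $K_a$-factor, the claim reduces to a local statement on $K_a$:
$$
\textrm{codim}_{V_{K_a}}\bigl(V_{K_a} \cap (V_{K_a'} \times_Q V_{K_a''})\bigr) = 1.
$$
Non-emptiness follows from the neck-stretching construction behind Proposition \ref{prop:split}: pulling $K_a'$ apart from $K_a''$ in $S^3$ degenerates $U^\ast S^3$ into a connected sum, producing, in the limit, augmentations that lie in both varieties. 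The upper bound demands that a single polynomial relation is enough to cut out this intersection inside $V_{K_a}$, which should follow from a careful analysis of the degree-one Reeb chord differentials in the DGA of $K_a$ at the mixed-chord-zero locus, using that $P'$ is an \emph{edge}-refinement of $P$ to rule out further drops in dimension from hidden intermediate primitive refinements.

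A complementary, more structural route is through the $\scD$-module side. By Garoufalidis's theorem \cite{Garoufalidis}, $\Psi_K$ generates a $q$-holonomic $\scD$-module on $(\C^\ast)^{2\n}$ whose classical characteristic variety is identified with $V_K$. A general theorem of Kashiwara \cite{kashiwara, kashiwarad} asserts that the characteristic variety of an irreducible holonomic $\scD$-module is pure of middle dimension and that any two of its irreducible components linked by the $\scD$-module structure must intersect in codimension one. Matching components of the characteristic variety with the $V_K(P)$, this simultaneously yields connectedness of $\Gamma_K$ and the codimension-one intersection for adjacent vertices.

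The main obstacle is irreducibility of the $\scD$-module generated by $\Psi_K$. A priori this module might split as a direct sum indexed by clusters of partitions, which would decouple the corresponding branches of $V_K$ and leave the codimension-one property unexplained. I would try to establish irreducibility directly by exhibiting monodromy of $\Psi_K$ in the $Q$-parameter that permutes the different saddle-point wavefunctions $\Psi_K^P$, showing no proper invariant subsystem annihilates a strict submodule. Failing this, I would fall back on the symplectic-geometric approach: using SFT neck-stretching along a splitting sphere to construct explicit one-parameter families of augmentations interpolating between $V_K(P)$ and $V_K(P')$, thereby realizing the intersection locus as a codimension-one wall by construction.
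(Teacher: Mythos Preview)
This statement is a \emph{conjecture} in the paper, not a theorem; the paper offers no proof, only heuristic motivation and computational evidence. So there is no ``paper's own proof'' to compare against. Your task was ill-posed from the start, and what you have written is not a proof but a program---one that largely recapitulates the paper's own heuristics.

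Your $\scD$-module route is exactly the argument the paper sketches in Section~\ref{sec:Dmirror}: irreducibility of the holonomic module forces codimension-one intersections among components of its characteristic variety. The paper even cites the same Kashiwara references and, in a footnote, flags the same gap you glide over: the system is $q$-holonomic, not holonomic, and the relevant structure theorems are not established in that setting. You correctly isolate irreducibility as the main obstacle, but your proposed monodromy-in-$Q$ argument is a hope, not a proof. More seriously, you assume that the irreducible components of the characteristic variety are precisely the $V_K(P)$ indexed by primitive partitions---this is Conjecture~\ref{cnj:auglink}, itself unproven, so you are stacking one conjecture on another.

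Your reduction of the chain bound to the adjacent case also has real gaps. Serre's codimension inequality requires the ambient variety $V_K(P_i)$ to be regular (or at least Cohen--Macaulay) at the intersection locus, which is not known; and in the non-projective ambient $(\bC^\ast)^{2\n}$ you have no guarantee that the successive intersections $Y_i$ are nonempty. Your connectedness argument for $\Gamma_K$ is the one place where the reasoning is essentially complete: taking a maximal primitive proper refinement does produce an adjacent vertex. The contact-homology paragraph (``a careful analysis of the degree-one Reeb chord differentials\ldots should follow'') is too vague to be a step in a proof; the paper's own Section~\ref{ssec:antibrane} attempts a geometric attack along similar lines and likewise stops short, leaving unobstructedness of the relevant immersed Lagrangians as a further conjecture.
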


There is one important special case where we can recast this
conjecture in terms of ``Lagrangian reduction''. (In Section \ref{ssec:antibrane} we discuss Conjecture \ref{cnj:codim1} and Lagrangian reduction from a geometric perspective that physically corresponds to adding an anti-brane to a filling brane.)
Suppose that $P =
(1\cdots \n)$ and $P' =
(1\cdots (\n-1))(\n)$. Write $K = (K-K_\n) \cup K_\n$ where $K_\n$ is the
$\n$-th component of $K$ and $K-K_\n$ is the rest. Suppose that both
$V_K(\n) = V_K(P)$ and $V_K(P') = V_{K-K_\n}(k-1) \times_Q V_{K_\n}(1)$
are nonempty. According to the remark at the end of
Section~\ref{ssec:augmentations}, for fixed $Q$, $V_K(\n)$ is a
Lagrangian submanifold of $(\bC^*)^{2(\n-1)}$ with the appropriate
symplectic structure, as is $V_{K_\n}(1) \subset (\bC^*)^2$ (trivially).

Suppose more generally that we have a product symplectic manifold $W =
W_1 \times W_2$, where $W_1$
and $W_2$ are symplectic, and Lagrangian submanifolds $L \subset W$
and $L_2 \subset W_2$. If $W_1 \times L_2$ is transverse to $L$ in
$W$, then the projection of $L \cap (W_1\times L_2) \subset W$ in
$W_1$ is also a Lagrangian submanifold that we call the reduction of
$L$ along $L_2$.

In our case (with $W_1 = (\bC^*)^{2\n-2}$, $W_2 = (\bC^*)^2$, $W = (\bC^*)^{2\n}$), we have the following conjecture.

\begin{conjecture} \label{conj:reduction}
$V_{K-K_\n}(\n-1)$ is a subset of the reduction of $V_{K}(\n)$ along $V_{K_\n}(1)$.
Concretely, we have:
\begin{align*}
V_{K-K_\n}(k-1) &\subset \{(\vec{\mu}_1,\vec{\lambda}_1,Q) \in (\bC^*)^{\n-1} \times (\bC^*)^{\n-1} \times \bC^* \,|\, \\
&\qquad \text{there exist } \mu_2,\lambda_2\in\bC^* \text{ with } (\mu_2,\lambda_2,Q) \in V_{K_\n}(1) \text{ and } \\
&\qquad (\vec{\mu}_1,\mu_2,\vec{\lambda}_1,\lambda_2,Q) \in
V_K(\n) \}.
\end{align*}
\end{conjecture}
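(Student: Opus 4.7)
The plan is to exploit the correspondence (developed in Sections \ref{ssec:nonexact} and \ref{ssec:antibrane}) between augmentations of the knot contact DGA and (possibly non-exact) Lagrangian fillings, combined with a Lagrangian surgery construction. Given a target point $(\vec\mu_1,\vec\lambda_1,Q) \in V_{K-K_\n}(\n-1)$, I would first lift it to an augmentation $\epsilon_1\colon \scA(K-K_\n)\to\bC$ realizing the non-split component, and, by the filling--augmentation correspondence, produce a Lagrangian filling $L'$ of $\Lambda_{K-K_\n}$ in the symplectization whose disk count recovers $\epsilon_1$. Independently, for any $(\mu_\n^0,\lambda_\n^0)\in V_{K_\n}(1)$, use the corresponding filling $L_\n$ of $\Lambda_{K_\n}$ (e.g.\ the conormal, the knot-complement-like filling, or the relevant Lagrangian from Section \ref{ssec:antibrane}). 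The disjoint union $L'\sqcup L_\n$ by itself only produces a split augmentation landing in $V_K(P')$, so a further step is required.

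The core geometric step is to perform a Lagrangian connect sum by attaching a Weinstein $1$-handle between $L'$ and $L_\n$. This is a local operation that does not change the Legendrian boundary $\Lambda_K$, produces a connected filling $\tilde L$, and introduces new holomorphic disks passing through the handle region. Reading off an augmentation $\tilde\epsilon$ of $\scA(K)$ from $\tilde L$ by counting holomorphic disks as in Section \ref{ssec:nonexact}, one expects $\tilde\epsilon$ to take generically nonzero values on the mixed Reeb chords between $\Lambda_{K-K_\n}$ and $\Lambda_{K_\n}$, hence to land in the non-split component $V_K(\n)$. Because the handle is attached in a region away from the asymptotic cylindrical ends, the restriction of $\tilde\epsilon$ to the coefficient variables $(\vec\mu_1,\vec\lambda_1,Q)$ agrees with $\epsilon_1$, and the restriction to $(\mu_\n,\lambda_\n)$ differs from $(\mu_\n^0,\lambda_\n^0)$ only by disk-instanton corrections that keep the image on $V_{K_\n}(1)$ (since the restricted data still arise from a filling of $\Lambda_{K_\n}$).

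To close the argument, I would verify the three items required by the statement: that the component assignment of $\tilde\epsilon$ to $(\vec\mu_1,\mu_\n,\vec\lambda_1,\lambda_\n,Q)$ lies in $V_K(\n)$ (non-splitness), that its first block equals the given point of $V_{K-K_\n}(\n-1)$, and that its $K_\n$-block lies in $V_{K_\n}(1)$. The first is the content of the surgery/disk-counting calculation; the second and third follow from the locality of the $1$-handle and the fact that the restriction of a filling-induced augmentation to a sub-DGA coming from a cylindrical Legendrian component is itself a filling-induced augmentation for that component.

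The main obstacle will be rigorously establishing that the surgered filling $\tilde L$ produces an augmentation in the non-split stratum $V_K(\n)$ rather than falling back into $V_K(P')$, and that as $L_\n$ varies over fillings realizing all of $V_{K_\n}(1)$ one genuinely obtains every preimage required by the reduction. This needs a transversality statement for the new disks crossing the handle and a compactness/gluing analysis showing their signed counts are generically nonzero. A fallback strategy, should the geometric analysis prove intractable, is to work directly from the combinatorial braid formula for $\partial$ reviewed in Section \ref{ssec:mathintro}: explicitly parametrize augmentations of $\scA(K)$ by the degree-zero mixed generators $a_{ij}$ and solve the resulting polynomial system, using the observation that sending the mixed chord block to a generic point in its own variety realizes the non-split stratum while reducing the remaining equations to those defining $V_{K-K_\n}(\n-1)$ together with $V_{K_\n}(1)$.
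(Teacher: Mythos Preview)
The statement you are attempting to prove is labeled a \emph{Conjecture} in the paper, and the paper does not claim a proof: immediately after stating it, the authors write only that it ``has been verified in all computed examples.'' The closest the paper comes to an argument is the heuristic anti-brane construction in Section~\ref{ssec:antibrane}, which runs in the opposite direction from yours: one starts from a connected filling $M_K$ of the full link, glues in an anti-brane along a filling $M_{K_1}$ of the component $K_1$, and reads off a filling of $K-K_1$; the continuity condition \eqref{eq:reductioncond} on obstruction chains is exactly the reduction constraint. Even there the authors explicitly flag the missing step: ``In order to turn this into an actual proof, the main point would be to verify the assumptions we made on the immersed Lagrangians being unobstructed under the conditions stated.''

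Your proposal has a genuine gap beyond what the paper leaves open. You assert that after attaching a Weinstein $1$-handle to $L'\sqcup L_\n$, ``the restriction of $\tilde\epsilon$ to the coefficient variables $(\vec\mu_1,\vec\lambda_1,Q)$ agrees with $\epsilon_1$'' because the handle is away from the cylindrical ends. But in the non-exact setting of Section~\ref{sec:qchmaps}, the values of an augmentation on the coefficient variables are \emph{not} determined by the asymptotic geometry alone: they are governed by the GW disk potential of the filling, via $p_i=\partial W/\partial x_i$. Attaching a handle creates new holomorphic disks (indeed, you need them to make mixed chords augment nontrivially), and those same disks feed into the potential and will generically shift $(\vec\mu_1,\vec\lambda_1)$ away from their original values. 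The companion claim, that the $K_\n$-block ``still arises from a filling of $\Lambda_{K_\n}$'' and hence stays on $V_{K_\n}(1)$, is likewise unjustified: once the filling is connected there is no separate filling of $\Lambda_{K_\n}$, and you cannot restrict $\tilde\epsilon$ to the $K_\n$ sub-DGA to get an augmentation unless mixed chords go to zero---which would land you in $V_K(P')$, not $V_K(\n)$. Your fallback combinatorial strategy does not circumvent this: solving the polynomial system for mixed $a_{ij}$ at a \emph{generic} point is precisely what is hard, and there is no reason the $(\vec\mu_1,\vec\lambda_1)$-equations decouple from the mixed variables.
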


Conjecture~\ref{conj:reduction} has been verified in all computed examples.
Note that it is not necessarily the case that $V_{K-K_\n}(\n-1)$ is precisely equal to, rather than just a subset of, the reduction of $V_K(\n)$ along $V_{K_\n}(1)$: see Section~\ref{ssec:toruslinks}.

For our particular choice of $P = (1\cdots \n)$ and $P' = (1\cdots (\n-1))(\n)$, it would follow immediately from Conjecture~\ref{conj:reduction} that $V_K(P)$ and $V_K(P')$ have codimension-1 intersection, since $V_{K-K_\n}(\n-1)$ is $(\n-1)$-dimensional. This is a key special case of Conjecture~\ref{conj:intersection}.

\subsection{Higher rank representations of knot contact homology}
\label{ssec:higherrank}

Here we trace out the mathematical story that parallels the discussion of multiple branes in Section~\ref{ssec:multiplebranes}.
Augmentations of a differential graded algebra $(\mathcal{A},\partial)$, such as the DGA for knot contact homology, are maps $\epsilon :\thinspace \mathcal{A} \to \C$, and can thus be viewed as one-dimensional representations of the DGA. One can also consider a generalization to arbitrary dimension: define a \textit{rank $\n$ representation} of a DGA $(\mathcal{A},\partial)$ to be an algebra map
\[
\rho :\thinspace \mathcal{A} \to \operatorname{End} \C^\n
\]
such that $\rho\circ\partial = 0$, $\rho$ sends any element of nonzero degree to $0$, and $\rho(Q)$ is a scalar multiple of the identity. (Here $\operatorname{End} \C^\n$ is the space of endomorphisms of $\C^\n$, or $\n\times \n$ matrices.)

As a technical note, it is important for rank $\n>1$ representations that we replace the DGA $(\mathcal{A},\partial)$ described in previous subsections, in which $\lambda,\mu$ commute with all Reeb chords, with a slight variant, the \textit{fully noncommutative DGA}, in which $\lambda,\mu$ commute with each other (and with $Q$) but not with Reeb chords. This allows $\rho(\lambda)$ and $\rho(\mu)$ to be something besides scalar multiples of the identity, even in an irreducible representation. See \cite{Ngsurvey} for a description of the fully noncommutative DGA associated to a knot.

Since $\lambda$ and $\mu$ commute with each other, the matrices
$\rho(\lambda)$ and $\rho(\mu)$ can be simultaneously
diagonalized. Say that a rank $\n$ representation of
$(\mathcal{A},\partial)$ is \textit{diagonal} if $\rho(\lambda)$ and
$\rho(\mu)$ are both diagonal matrices.

Our goal here is to relate diagonal higher rank representations of the knot
contact homology of a single-component knot $K$ to augmentations of
a link consisting of multiple parallel copies of $K$. This follows
from a more general result about higher rank representations of
Legendrian submanifolds, which we now describe.

Let $\Lambda$ be a Legendrian submanifold in a contact manifold
$V$. The Legendrian neighborhood theorem states that a tubular
neighborhood of $\Lambda$ is contactomorphic to (i.e., diffeomorphic
to with the same contact structure as) the $1$-jet space $J^1(\Lambda)
= T^*\Lambda \times \R$. Here $J^1(\Lambda)$ is equipped with the
contact $1$-form $dz-p\,dq$, where $z$ is the $\R$ coordinate and
$p\,dq$ is the Liouville $1$-form on $T^*\Lambda$. Define a
\textit{parallel} to $\Lambda$ to be a Legendrian $\Lambda' \subset V$
such that $\Lambda'$ lies in a small neighborhood of $\Lambda$ and
$\Lambda'$ corresponds to the graph of a function in $J^1(\Lambda)$:
there is some $f :\thinspace \Lambda\to\R$ such that
\[
\Lambda' = \{(q,p,z) \in J^1(\Lambda) \,:\,
z=f(q),~p=df_q\}.
\]
Finally, let \textit{$\n$ parallel copies} of $\Lambda$ denote an $\n$-component Legendrian submanifold of $V$ given by the union of
$\Lambda$ and $\n-1$ parallels of $\Lambda$, where these $\n$ components
are pairwise disjoint. (This is a generalization of the notion of an
$N$-copy from \cite{Mishachev}. Note that our notion of parallel
copies is not unique.)

Let $\widetilde{\Lambda} = \Lambda_1 \cup \Lambda_2 \cup \cdots \cup
\Lambda_\n$ be an $\n$-component Legendrian submanifold of $V$ given by
$\n$ parallel copies of $\Lambda = \Lambda_1$. The Reeb chords of
$\widetilde{\Lambda}$ can be divided into two types: \textit{short}
chords, which lie within the tubular neighborhood of $\Lambda$, and
\textit{long} chords, which do not. By shrinking the tubular
neighborhood if necessary, we can ensure that there is an $\n^2$-to-$1$
correspondence between long Reeb chords of $\widetilde{\Lambda}$ and
all Reeb chords of $\Lambda$: for a Reeb chord $a_i$ of $\Lambda$ and
any $j_1,j_2\in\{1,\ldots,\n\}$, there is a unique long chord
$a_i^{j_1,j_2}$ of $\widetilde{\Lambda}$ that begins on
$\Lambda_{j_1}$, ends on $\Lambda_{j_2}$, and limits to $a_i$ as the
neighborhood shrinks to zero.

In the case where $K \subset S^3$ is a knot and $\Lambda_K$ is its
conormal bundle, one can consider the conormal bundle
$\Lambda_{\widetilde{K}}$ to $\n$ parallel copies $\widetilde{K}$ of $K$. (These
parallel copies of $K$ can be given by pushoffs of $K$ with respect to
any framing, or more generally by any collection of $\n$ disjoint closed curves
in a tubular neighborhood of $K$ that are each the graph of a section
of the normal bundle to $K$, viewed as the tubular neighborhood.) It
is an exercise in local coordinates (see \cite{EENS}) to show that
$\Lambda_{\widetilde{K}}$ comprises $\n$ parallel copies of
$\Lambda_K$: write $\Lambda_{\widetilde{K}} =
\widetilde{\Lambda}_K$. Then the knot contact homology of the link
$\widetilde{K}$ is the Legendrian contact homology of
$\widetilde{\Lambda}_K$. Our main result in this subsection is now
as follows.

\begin{theorem}
Let $K$ be a knot. There is a one-to-one correspondence between
diagonal rank $\n$ representations of the knot contact homology of $K$,
and augmentations of the knot contact homology of the $\n$-component
link $\widetilde{K}$ that send all short chords to $0$.
\end{theorem}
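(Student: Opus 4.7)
My plan is to construct explicit inverse assignments between diagonal rank $\n$ representations of the DGA $\scA_K$ of knot contact homology of $K$ and augmentations of the DGA $\scA_{\widetilde K}$ of $\widetilde K$ that vanish on short chords, and verify that the DGA relations translate correctly under the assignments. Given a diagonal rank $\n$ representation $\rho\colon\scA_K\to \operatorname{End}\bC^{\n}$, I set
\begin{equation*}
\epsilon(a_i^{j_1,j_2}) = \rho(a_i)_{j_1,j_2}, \quad \epsilon(\lambda_j) = \rho(\lambda)_{jj}, \quad \epsilon(\mu_j) = \rho(\mu)_{jj}, \quad \epsilon(Q) = \rho(Q),
\end{equation*}
together with $\epsilon(c) = 0$ for every short chord $c$. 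Reading these formulas backwards recovers $\rho$ from any augmentation $\epsilon$ that vanishes on short chords. The content of the theorem is then that $\rho\circ\partial_K = 0$ is equivalent to $\epsilon\circ\partial_{\widetilde K} = 0$.

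The key ingredient is the structure of the differential $\partial_{\widetilde K}$ on long chords. Using the Legendrian neighborhood theorem, I would place the $\n$ parallels as graphs of small Morse functions in $J^{1}(\Lambda_K)$, so long chords cluster in $\n^{2}$-tuples $\{a_i^{l_0,l_k}\}$ near each chord $a_i$ of $\Lambda_K$ while short chords have arbitrarily small action. A compactness and gluing analysis in Symplectic Field Theory, modeled on Mishachev's $N$-copy arguments \cite{Mishachev} and on Ekholm's parallel-copy arguments for Legendrian contact homology, identifies every rigid holomorphic disk on $\R\times\widetilde\Lambda_K$ with only long-chord punctures with a rigid disk $D$ on $\R\times\Lambda_K$ equipped with an index labelling of the boundary arcs of $D$. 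If $\partial_K a_i = \sum_D w_D\,a_{j_1}\cdots a_{j_{k_D}}$ in $\scA_K$, then in $\scA_{\widetilde K}$,
\begin{equation*}
\partial_{\widetilde K}\bigl(a_i^{l_0,l_{k_D}}\bigr) \;=\; \sum_{D}\sum_{l_1,\dots,l_{k_D-1}} w_D^{l_\bullet}\,a_{j_1}^{l_0,l_1}a_{j_2}^{l_1,l_2}\cdots a_{j_{k_D}}^{l_{k_D-1},l_{k_D}}\;+\;S_i^{l_0,l_{k_D}},
\end{equation*}
where $S_i^{l_0,l_{k_D}}$ collects monomials containing at least one short-chord factor and $w_D^{l_\bullet}$ is the labelled lift of the coefficient $w_D$ to $\Z[H_2(U^{\ast}S^{3},\widetilde\Lambda_K)]$. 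Since $\epsilon$ annihilates $S_i^{l_0,l_{k_D}}$ and $\rho(\lambda),\rho(\mu)$ are diagonal with the right eigenvalues, applying $\epsilon$ to the right-hand side produces exactly the $(l_0,l_{k_D})$ entry of the matrix $\rho(\partial_K a_i)$. Hence $\epsilon\circ\partial_{\widetilde K}$ vanishes on every long chord if and only if $\rho\circ\partial_K = 0$. Separately, an action-filtration argument using the smallness of the perturbation ensures that each term in $\partial_{\widetilde K}$ of a short chord contains at least one short-chord factor, so such differentials are killed by $\epsilon$ automatically.

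The principal obstacle is the disk analysis: establishing that in the parallel-copy limit no additional rigid holomorphic disks with only long-chord punctures appear beyond those obtained by index labellings of $\Lambda_K$-disks, and pinning down the coefficient translation $w_D \mapsto w_D^{l_\bullet}$. In the general Legendrian setting this requires an SFT compactness and gluing argument near the degeneration $J^{1}(\Lambda_K)\to \Lambda_K$. For knot contact homology, however, one can bypass the analytic input by working with the explicit braid-combinatorial description of the DGA from \cite{Ngsurvey}: the $\n$-fold framed cabling of a braid presenting $K$ presents $\widetilde K$, and the long-chord part of the cabled differential is exactly the matrix-entry expansion of the original one under $\n$-fold tensor enlargement of the braid matrices $\Phi^L,\Phi^R$. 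Verifying this identity by direct inspection, together with the coefficient lift, completes the proof.
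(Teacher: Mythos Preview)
Your proposal is correct and follows essentially the same approach as the paper: both arguments rest on the bijection between rigid disks for $\widetilde\Lambda_K$ with only long-chord punctures and rigid disks for $\Lambda_K$ equipped with a labelling of boundary arcs by $\{1,\dots,\n\}$, which yields the matrix identity $\partial(A_i)=\sum \Gamma_0 A_{j_1}\Gamma_1\cdots A_{j_k}\Gamma_k$ and hence the equivalence of $\rho\circ\partial_K=0$ and $\epsilon\circ\partial_{\widetilde K}=0$ on long chords. Your write-up is in fact slightly more complete than the paper's, since you explicitly justify via the action filtration that $\partial_{\widetilde K}$ of a short chord always contains a short-chord factor (so $\epsilon\circ\partial_{\widetilde K}$ vanishes automatically on short chords), a point the paper leaves implicit; your mention of the combinatorial braid-cabling alternative is also sound but not needed once the geometric disk-labelling correspondence is in hand.
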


\begin{proof}
Consider any holomorphic disk contributing to the differential in the
Legendrian contact homology of $\widetilde{\Lambda}_K$ and not
involving any short chords. Each piece of
the boundary of this disk lies on some component of
$\widetilde{\Lambda}_K$. Conversely, given a holomorphic disk
contributing to the differential of $\Lambda_K$, one can label each
piece of its boundary by any integer from $1$ to $\n$, and there is a
unique holomorphic disk with boundary on $\widetilde{\Lambda}_K$ whose
boundary pieces lie on the corresponding components of
$\widetilde{\Lambda}_K$.

Now assemble the long Reeb chords of $\widetilde{\Lambda}_K$ into
$\n\times \n$ matrices $A_i = (a_i^{j_1,j_2})$, and suppose that
in the DGA for $\Lambda_K$, $\partial(a_i) = \sum \gamma_0 a_{j_1}
\gamma_1 \cdots a_{j_\n} \gamma_\n$, where $\gamma_0,\ldots,\gamma_\n$
are words in $Q,\lambda,\mu$. Then by
the above argument, the differential in the DGA for
$\widetilde{\Lambda}_K$, omitting short chords, is given by
\[
\partial(A_i) = \sum \Gamma_0 A_{j_1} \Gamma_1 \cdots A_{j_\n} \Gamma_\n,
\]
where $\partial(A_i)$ is the matrix $(\partial(a_i^{j_1,j_2}))$,
multiplication on the right is matrix multiplication, and $\Gamma_j$
is the result of replacing $\lambda,\mu$ in $\gamma_j$ by diagonal
matrices $\operatorname{diag}(\lambda_1,\ldots,\lambda_\n)$,
$\operatorname{diag}(\mu_1,\ldots,\mu_\n)$. The result follows.
\end{proof}

Note that from a diagonal rank $\n$ representation of the knot contact
homology of a knot $K$, we can obtain a variety in $(\C^*)^{2\n+1}$
analogous to the augmentation variety: if $\rho$ is such a
representation and $\rho(\lambda) =
\operatorname{diag}(\lambda_1,\ldots,\lambda_\n)$, $\rho(\mu) =
\operatorname{diag}(\mu_1,\ldots,\mu_\n)$, then we obtain a point
$(\mu_1,\ldots,\mu_\n,\lambda_1,\ldots,\lambda_\n,\rho(Q))$ in
$(\C^*)^{2\n+1}$. From the above theorem, the resulting variety in
$(\C^*)^{2\n+1}$ where $\rho$ ranges over all diagonal rank $\n$
representations is a subset of the augmentation variety of the link
$\widetilde{K}$ consisting of $\n$ parallel copies of $K$.

We conclude with two examples. First consider the case where $K$ is
the unknot. In this case, a diagonal rank $\n$ representation of the
knot contact homology of $K$ is a map $\rho$ with values in $\n\times
\n$ matrices such that
$\rho(\lambda) = \operatorname{diag}(\lambda_1,\ldots,\lambda_\n)$, $\rho(\mu) =
\operatorname{diag}(\mu_1,\ldots,\mu_\n)$, and
\[
Q \, \text{Id} - \rho(\lambda) - \rho(\mu) + \rho(\lambda)\rho(\mu) =
0.
\]
Any such representation is reducible and a product of $\n$
$1$-dimensional representations. The resulting variety in
$(\C^*)^{2\n+1}$ is the same as the augmentation variety $V_{\widetilde{K}}(1^\n)$ for
any link $\widetilde{K}$ consisting of $\n$ parallel copies of the
unknot. Note that $\widetilde{K}$ could well have augmentations besides the ones in $V_{\widetilde{K}}(1^\n)$ (e.g. if $\widetilde{K}$ is a $(2,2m)$ torus link for $m\neq 0$, where $V_{\widetilde{K}}(2)$ is distinct from $V_{\widetilde{K}}(1^2)$), but these extra augmentations do not send short Reeb chords to $0$.

Next, suppose that $K$ is the right-handed trefoil. In this case, by a computation of knot contact homology (omitted here), a rank $\n$ representation of the knot contact homology of $K$ is a way to assign $\n\times \n$ matrices to $\lambda,\mu,a$ such that the following relations hold:
\begin{align*}
\lambda\mu &= \mu\lambda \\
\lambda\mu^6 a &= a \lambda\mu^6 \\
\lambda\mu^6-\lambda\mu^5+Q \mu^2 a \mu-Q\mu a \mu a &= 0 \\
Q-\mu-\mu a \mu^{-1}-Q \mu^2 a \lambda^{-1}\mu^{-4} a &= 0.
\end{align*}
A diagonal rank $\n$ representation further sends $\lambda,\mu$ to diagonal matrices as above.

When $\n=2$, diagonal rank $2$ representations for the trefoil fall into two classes. The reducible representations satisfy
\begin{align*}
Q^3-Q^3\lambda_1-Q^2\mu_1+Q^2\lambda_1\mu_1-2Q\lambda_1\mu_1^2+2Q^2\lambda_1\mu_1^2 \qquad & \\
+Q \lambda_1\mu_1^3-\lambda_1^2\mu_1^3-Q\lambda_1\mu_1^4+\lambda_1^2\mu_1^4 &= 0
\end{align*}
and the corresponding equation with $\lambda_1,\mu_1$ replaced by $\lambda_2,\mu_2$. (The left hand side of this equation is the augmentation polynomial for the right-handed trefoil.) This corresponds to $V_{\tilde{K}}(1^2)$ where $\tilde{K}$ is any link given by two parallel copies of $K$. The irreducible representations can be shown to satisfy the simultaneous equations
\begin{align*}
\lambda_1\mu_1^6 &= \lambda_2\mu_2^6 \\
-\lambda_1^2\mu_1^7+Q^2\lambda_1\mu_1^3\mu_2^2-Q\lambda_1\mu_1^4\mu_2^3+Q^2\mu_2^5 &= 0.
\end{align*}
This corresponds to at least a portion of $V_{\tilde{K}}(2)$, namely the subset corresponding to augmentations that send all short Reeb chords to $0$. We do not currently know if this must give all of $V_{\tilde{K}}(2)$ for any $\tilde{K}$, due to the difficulty of computing the augmentation variety for parallel copies of the trefoil.

When $\n=3$, diagonal rank $3$ representations for the trefoil are either reducible, in which case they come from a direct sum of a rank $1$ representation (i.e., an augmentation of $K$) and a rank $2$ representation, or irreducible. The latter can be shown to satisfy
\begin{align*}
\lambda_1\mu_1^6 &= \lambda_2\mu_2^6 \\
\lambda_1\mu_1^6 &= \lambda_3\mu_3^6 \\
-\lambda_1^3\mu_1^{13}+Q\lambda_1^2\mu_1^9\mu_2^2\mu_3^2+Q^3\lambda_1^4\mu_1^4\mu_2^5\mu_3^3-Q^3\mu_2^5\mu_3^5 &= 0,
\end{align*}
and correspond to a portion of $V_{\tilde{K}}(3)$ where $\tilde{K}$ consists of three parallel copies of $K$.

\section{Augmentations and Lagrangian fillings}\label{sec:augmentations}

In this section we will consider augmentations of knot contact homology, induced by exact and non-exact Lagrangian fillings in various ways. We start however by discussing the underlying general framework. Recall the setup in Section \ref{ssec:mathintro}: Legendrian contact homology associates a DGA, $\mathcal{A}(\Lambda)$, to a Legendrian submanifold $\Lambda$ in a contact manifold $V$. The algebra $\mathcal{A}(\Lambda)$ is generated by the Reeb chords of $\Lambda$ and has differential which counts holomorphic disks with one positive puncture and with Lagrangian boundary condition $\R\times\Lambda$ in the symplectization $(\R\times V, d(e^{t}\alpha))$ of $V$. In the case of knot contact homology, the ambient contact manifold is the unit conormal bundle $U^{\ast}S^{3}$, and the Legendrian submanifold is the conormal tori $\Lambda_K$ of a link $K\subset S^{3}$. Here the symplectization  $\R\times U^{\ast} S^{3}$ can be identified with the complement of the $0$-section in $T^{\ast}S^{3}$, and from the point of view of the conifold, this symplectization together with the Lagrangian $\R\times\Lambda_K$ represents the geometry at infinity of Lagrangians associated to the link $K$. The knot contact homology, or the DGA $\mathcal{A}(\Lambda_K)$, is then entirely determined by this geometry at infinity through Reeb chords of $\Lambda_K$ and holomorphic disks in $\R\times U^{\ast}S^{3}$ with boundary on $\R\times\Lambda_K$.

Returning to  the general setting of Legendrian submanifolds, our study in this section starts from properties of Legendrian contact homology that are similar to TQFT, which behaves functorially under cobordism, where the role of cobordisms is played by exact Lagrangian cobordisms $(W,L)$ defined as follows. The manifold $W$ is a non-compact symplectic manifold with exact symplectic form $\omega=d\beta$, which outside a compact subset consists of a positive end $[0,\infty)\times V_+$ where $\beta=e^{t}\alpha_+$ for a contact form $\alpha_+$ on a contact manifold $V_+$, and a negative end $(-\infty,0]\times V_-$ where $\beta=e^{t}\alpha_-$ for a contact form $\alpha_-$ on a contact manifold $V_-$. The Lagrangian submanifold $L\subset W$ is required to be exact, meaning that the closed form $\beta|_{L}$ is exact, i.e.~$\beta|_{L}=df$ for some function $f$ on $L$, and also to agree with $[0,\infty)\times\Lambda_+$ in $[0,\infty)\times V_+$ and with $(-\infty,0]\times \Lambda_-$ in $(-\infty,0]\times V_-$ for some Legendrian submanifolds $\Lambda_{\pm}\subset V_{\pm}$, see Figure \ref{fig:lagcob}.
\begin{figure}[htp]
\centering
\includegraphics[width=.5\linewidth, angle=0]{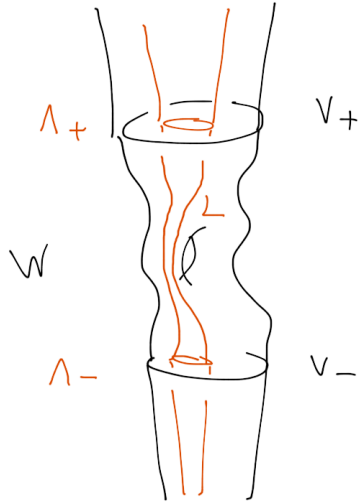}
\caption{An exact Lagrangian cobordism.}
\label{fig:lagcob}
\end{figure}
Now, if $\mathcal{A}_{\pm}$ denotes the DGA of $\Lambda_{\pm}$, then $(W,L)$ induces a DGA-map $\phi\colon \mathcal{A}_+\to\mathcal{A}_-$: that is, $\phi$ is an algebra map that commutes with the differentials, $\phi\circ\partial_+=\partial_-\circ\phi$.

The definition of $\phi$ is close to the definition of the DGA differential: $\phi$ counts rigid disks in $W$ with one positive puncture and Lagrangian boundary condition $L$. The proof of the chain map equation is analogous to the proof that the square of the differential is $0$; both derive from the identification of the boundary of 1-dimensional moduli spaces with two-level rigid curves, compare Theorem \ref{thm:exactchmap} below. We point out that the exactness assumption is crucial here: if the Lagrangian cobordism is exact then an elementary argument using Stokes' Theorem shows that no nonconstant closed holomorphic disks can form on $L$, while in the non-exact case such disks are present and typically give new boundary components of the moduli spaces that destroy the chain map property, compare Section \ref{sec:qchmaps} and Figure \ref{fig:boundarybubble} below.

Our discussion of exact Lagrangians $L$ below is a special case of the above TQFT-like functoriality in the case when the Legendrian submanifold $\Lambda_-$ at the negative end of $L$ is empty. In this case we say that $L$ is an exact Lagrangian filling of $\Lambda_+$. Since the empty Legendrian has no Reeb chords, its DGA simply equals the ground ring, which in this case is $\C[H_2(W,L)]$, with the trivial differential. Thus in this case the DGA-map $\phi\colon \mathcal{A}_{+}\to\C[H_2(W,L)]$ induces augmentations $\mathcal{A}_+\to\C$ upon choosing values in $\C^{\ast}$ of the generators of $\C[H_2(W,L)]$.

In the case of knot contact homology, we look at exact Lagrangian fillings of $\Lambda_K$ in $T^{\ast} S^{3}$, see Section \ref{ssec:exactaug}, which give augmentations of $\mathcal{A}(\Lambda_{K})$ with $Q=1$. Such exact fillings give only standard branches of the augmentation variety that contain very little information about the underlying link. (However, in Section \ref{ssec:exactcord}, we describe how a more involved count of holomorphic disks on an exact filling with the topology of the link complement, keeping track of the homotopy classes of the disk boundaries, relates to the so-called cord algebra \cite{Ngframed}, which in turn determines the augmentation variety for $Q=1$.)

In order to connect augmentations in knot contact homology with $Q\ne 1$ to geometry, it is thus not sufficient to consider exact Lagrangian fillings of $\Lambda_{K}$. This lead us to consider certain non-exact Lagrangian fillings, and our generalization of the discussion above from the exact to the non-exact case gave a new and striking relation between the DGA of $\Lambda_K$, determined by Reeb chords and holomorphic disks at infinity, and the Gromov-Witten disk potential of Lagrangian fillings, determined by counting certain configurations of disks in the compact part, see Section \ref{s:GWdef}. It is clear that similar considerations can be applied, restoring functoriality for more general non-exact Lagrangian cobordisms, but we restrict our discussion to the case most relevant to knot contact homology.

Consider a non-exact Lagrangian filling $L$ of $\Lambda_K$ lying in the resolved conifold $Y$. In Section \ref{ssec:nonexact} we specify the class of non-exact Lagrangian fillings $L$ and ambient symplectic manifolds more precisely. Here we just mention two key properties: holomorphic disks at infinity with boundary on $L$ correspond naturally to holomorphic disks in $\R\times U^{\ast} S^{3}$ with boundary on $\R\times\Lambda_K$, just like in the exact case, and closed holomorphic disks with boundary $L$ and bounded area lie in a compact subset of $Y$.

Following ideas of \cite{FO3}, we introduce obstruction chains in $L$, which are $2$-chains that connect boundaries of closed holomorphic disks on $L$ to basic $1$-cycles in $\Lambda_{K}$. Using these chains, we replace the disk count of an exact Lagrangian with a quantum corrected disk count for a non-exact Lagrangian that allows us to restore functoriality, provided the generators of the coefficient ring lie in the Lagrangian subvariety determined by the Gromov-Witten potential $W$ of the filling. For example, for the conormal filling of a knot, the constraint is $p=\frac{\partial W}{\partial x}$, and consequently $(e^{x},e^{\frac{\partial W}{\partial x}},Q)$ lies in the augmentation variety.

Analyzing this a little bit further we note that from the explicit form of the DGA $\mathcal{A}(\Lambda_K)$ one can compute the augmentation variety of a link and in particular show that it is an algebraic variety. Since any non-exact Lagrangian filling $L$ admits deformations by shifts along closed forms corresponding to $H^{1}(L)$, we find that such a filling, through the constraint given by its Gromov-Witten disk potential, parametrizes a small piece of the augmentation variety. However, since the variety is algebraic, the filling then in fact determines a whole irreducible component of the augmentation variety. For example, for the conormal filling of a knot, shifting the Lagrangian and changing the monodromy of its flat $U(1)$-connection corresponds to changing the real and imaginary parts of $x$ and thus gives a local parametrization $p=\frac{\partial W}{\partial x}$ (as $x$ varies) of the augmentation variety, which then determines a whole irreducible component of the variety (and in particular the whole variety if it is irreducible). Summarizing the discussion so far: the Gromov-Witten disk potential for non-exact fillings of $\Lambda_K$ near a given filling determines an irreducible component of the augmentation variety. Whether there exists a suitable Lagrangian filling for each irreducible component is not known at present, see Section \ref{ssec:antibrane} for partial results.

We can also view this correspondence between the augmentation variety and potentials of non-exact fillings from another perspective as follows. If $L$ is a non-exact Lagrangian filling of $\Lambda_K$, then it determines a Gromov-Witten disk potential as explained above. It is difficult to say how this potential changes under deformations (e.g.~of $L$, the almost complex structure, etc.) and, a priori, we have very little control of this change. However, the fact that the potentials of nearby fillings parametrize the augmentation variety means that possible Gromov-Witten potentials of Lagrangian fillings, which are defined by counting disks in the interior, are in fact (explicitly) restricted by the augmentation variety, which only counts disks at infinity. For example, in the case of the conormal filling of a knot, the fact that the equation $p=\frac{\partial W}{\partial x}$ parametrizes a branch of the augmentation variety determines $W(x)$ in terms of the augmentation variety (up to finite ambiguity since there are in general many branches of the augmentation variety at $x$).

Connecting to the physical perspective, this relation between knot contact homology and the Gromov-Witten potential shows that near augmentations that are geometrically induced by a Lagrangian filling, the augmentation variety of a link as defined through knot contact homology agrees with the physically defined mirror variety $V_K$; in particular, in the case of knots, the $Q$-deformed $A$-polynomial and the augmentation polynomial have zero sets that agree locally. As above, ``locally'' is in the algebro-geometric sense, so the subsets where the two agree are rather large. In fact, for many knots the augmentation variety is known to be irreducible; in these cases, since there is always the conormal filling, we know that the two polynomials must agree.

Here is an outline of the remainder of this section. In Section \ref{ssec:exactaug}, we discuss the most well-studied case of augmentations of contact homology DGAs that are induced by exact Lagrangian fillings. This gives a geometric explanation of the trivial branches  of the augmentation variety for any link. We then generalize this to certain non-exact Lagrangian fillings, which are the central objects for a geometric understanding the augmentation variety: in Section \ref{ssec:nonexact} we describe the class of non-exact filling that we use; in Sections~\ref{sec:qchmaps} and~\ref{s:GWdef}, following \cite{FO3}, we introduce obstruction chains that allow us to generalize the augmentation maps in the exact case to more involved chain maps induced by our class of non-exact Lagrangian fillings.

We then turn to geometric constructions of Lagrangian fillings. In particular, we aim at constructing  connected Lagrangian fillings of the conormal of a $\n$-component link $K$, that leads augmentations in the component $V_K(\n)$ corresponding to the trivial partition, and that would as for knots above explain why the physical and the mathematical 
varieties agree. In Sections \ref{ssec:constrlag} and~\ref{ssec:immersed}, we discuss these geometric constructions of Lagrangian fillings 
in connection with the conjecture on codimension-$1$ intersection between augmentation varieties of non-split links from Section \ref{sec:Dmirror}. In Section \ref{ssec:antibrane} we discuss a gluing operation for Lagrangian fillings that from the physical perspective corresponds to adding an anti-brane along a Lagrangian and that conjecturally leads to unobstructed connected Lagrangian fillings of the conormal of any link. Such Lagrangians would, as mentioned above, serve as actual geometric sources of the branch $V_K(\n)$ of the augmentation variety corresponding to the trivial partition of a $\n$-component link $K$. In Section \ref{ssec:alggeomex} we discuss geometric constructions and gluing explicitly in a simple example and in Section \ref{sec:legtransf} we look at Legendre duality of potentials from a geometric viewpoint.
Finally, in Sections \ref{ssec:exactcord} and \ref{ssec:nonexactcord} we study other ways of obtaining augmentations from Lagrangian fillings by coupling to flat connections on the fillings in the exact and the non-exact case, respectively.

We would like to point out that many of the arguments given here are at best outlines of possible proofs, and indeed all of them need more details. When we state a result as a theorem we believe that it can be proved with existing technology. Rather than providing proofs, the main purpose of this section is to give some idea of the geometry behind the behavior of augmentation varieties.

\subsection{Exact Lagrangian fillings and chain maps}\label{ssec:exactaug}
As always, let $K$ be a knot in $S^{3}$ and let $\Lambda_{K}$ denote its conormal torus. An \textit{exact Lagrangian filling} of $\Lambda_K$ is an exact Lagrangian submanifold $F_K\subset T^{\ast}S^{3}$ that agrees with the Lagrangian conormal $L_K$ of $K$ outside some finite-radius disk subbundle of $T^{\ast}S^{3}$. For grading reasons we also assume that the Maslov class of $F_K$ vanishes.
If $a$ is a Reeb chord of $\Lambda_{K}$, let $\mathcal{M}_{\tau}(a)$ denote the moduli space of holomorphic disks $u\colon D\to T^{\ast}S^{3}$ with boundary on $F_K$ and with one positive puncture where the disk is asymptotic to the Reeb chord $a$, and which (after capping) represents the homology class $\tau\in H_{2}(T^{\ast}S^{3},F_{K})$. Then the dimension of the moduli space is given by the grading of $a$, $\dim(\mathcal{M}_{\tau}(a))=|a|$. Define the map $\epsilon_{F_K}\colon \mathcal{A}\to\C[H_{2}(T^{\ast}S^{3},F_K)]$ by counting holomorphic disks with one positive puncture:
\begin{equation}\label{eq:exactaug}
\epsilon_{F_{K}}(a)=\sum_{\tau}|\mathcal{M}_{\tau}(a)|\,e^{\tau},
\end{equation}
where $|a|=0$ and $|\mathcal{M}_{\tau}(a)|$ denotes the algebraic number of disks in $\mathcal{M}_{\tau}(a)$.

\begin{theorem}\label{thm:exactchmap}
The map $\epsilon_{F_{K}}$ is an augmentation of $\mathcal{A}$, which we call ``the augmentation induced by $F_{K}$''.
\end{theorem}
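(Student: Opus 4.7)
The plan is to verify the three defining properties of an augmentation for $\epsilon_{F_K}$: that it is an algebra map sending $1$ to $1$, that it vanishes on generators of nonzero degree, and that it satisfies $\epsilon_{F_K}\circ\partial=0$. The first two are essentially built into the definition \eqref{eq:exactaug}: I would extend $\epsilon_{F_K}$ multiplicatively from the Reeb chord generators, set $\epsilon_{F_K}(1)=1$, and note that since $\dim(\mathcal{M}_{\tau}(a))=|a|$, the formula $|\mathcal{M}_{\tau}(a)|$ only produces a rigid count when $|a|=0$; on generators of nonzero degree we accordingly set $\epsilon_{F_K}=0$. The vanishing of the Maslov class of $F_K$ is used here to make the grading on $\mathcal{A}$ well-defined over $\C[H_2(T^{\ast}S^{3},F_K)]$.

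The core of the argument is the identity $\epsilon_{F_K}\circ\partial=0$, which I would establish by the standard cobordism/boundary argument for $1$-dimensional moduli spaces. Fix a Reeb chord $a$ with $|a|=1$ and a class $\tau\in H_2(T^{\ast}S^{3},F_K)$, choose a generic almost complex structure on $T^{\ast}S^{3}$ that is cylindrical at infinity, and consider the $1$-dimensional moduli space $\mathcal{M}_\tau(a)$ of holomorphic disks with boundary on $F_K$, one positive puncture at $a$, and total relative homology class $\tau$. By SFT compactness, the boundary of the compactification $\overline{\mathcal{M}_\tau(a)}$ consists of broken configurations, and I would argue that the only broken configurations that actually appear are two-level buildings with an upper level in the symplectization $\R\times U^{\ast}S^{3}$ consisting of a rigid disk with one positive puncture at $a$ and negative punctures at some Reeb chords $a_{j_1},\dots,a_{j_k}$, attached to a lower level in $T^{\ast}S^{3}$ consisting of rigid disks with boundary on $F_K$, one positive puncture at each $a_{j_i}$, and homology classes summing to $\tau$. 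Exactness of $F_K$ together with Stokes' theorem rules out non-constant closed holomorphic disks bubbling off on $F_K$, and the vanishing of the Maslov class rules out sphere bubbling in the relevant dimension; positivity of energy in an exact cobordism excludes purely negative-ended breakings.

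Counting boundary points with sign and using that a compact oriented $1$-manifold has algebraic boundary zero, the identification of $\partial\overline{\mathcal{M}_\tau(a)}$ just described becomes the equation
\[
\sum_{k\geq 0}\sum_{(a_{j_1},\dots,a_{j_k})}(\operatorname{sgn})\,\bigl(\text{coeff. of }a_{j_1}\cdots a_{j_k}\text{ in }\partial a\bigr)\;\epsilon_{F_K}(a_{j_1})\cdots\epsilon_{F_K}(a_{j_k})=0,
\]
which, after comparing with the formula defining $\partial$, is exactly $(\epsilon_{F_K}\circ\partial)(a)=0$ on generators of degree $1$. On higher degree generators both sides vanish for degree reasons, and the relation extends to all of $\mathcal{A}$ by the Leibniz rule and the algebra-map property. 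The decomposition of the coefficient in terms of classes $\tau=\tau_{\text{top}}+\tau_1+\dots+\tau_k$ in $H_2$, with $\tau_{\text{top}}$ the class carried by the symplectization level (pushed into $H_2(T^{\ast}S^{3},F_K)$ via capping), ensures that the homology weights match on both sides.

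The hard part, and the reason I present this only as a plan, is the analytic foundation: transversality for all relevant moduli spaces of disks in the non-exact target $T^{\ast}S^{3}$ with boundary on $F_K$, SFT compactness applied to $1$-dimensional moduli spaces, coherent orientations matched between the symplectization and the filling, and a careful verification that no unexpected breaking configurations (such as disks with only negative punctures) can appear. These ingredients have been developed for knot contact homology in the Ekholm--Etnyre--Sullivan and Ekholm--Etnyre--Ng--Sullivan series of papers referenced in Section \ref{ssec:mathintro}; the content of Theorem \ref{thm:exactchmap} is the assertion that these ingredients can be combined in the exact filling setting to produce a well-defined DGA map $\mathcal{A}\to\C[H_2(T^{\ast}S^{3},F_K)]$ with trivial differential target, i.e.\ an augmentation.
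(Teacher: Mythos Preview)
Your proposal is correct and follows essentially the same approach as the paper: identify the boundary of the $1$-dimensional moduli spaces $\bigcup_{|b|=1,\tau}\mathcal{M}_\tau(b)$ via SFT compactness with two-level broken configurations, and read off $\epsilon_{F_K}\circ\partial=0$ from the fact that the algebraic boundary of a compact oriented $1$-manifold vanishes. The paper's proof is a two-sentence sketch of exactly this argument; your version simply supplies the details (algebra-map extension, exclusion of disk bubbling via exactness, matching of homology weights) that the paper leaves implicit.
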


\begin{proof}
We need to check the chain map condition $\epsilon_{F_{K}}\circ \partial=0$. By SFT compactness \cite{BEHWZ}, the two level broken curves that contributes to $\epsilon_{F_{K}}\circ \partial$ are in one to one correspondence with the boundary of the compact oriented $1$-manifold
\[
\bigcup_{|b|=1,\tau\in H_{2}(T^{\ast}S^{3},F_{K})} \mathcal{M}_{\tau}(b),
\]
see Figure \ref{fig:sftcompactness}.
\end{proof}

\begin{figure}[htp]
\centering
\includegraphics[width=.45\linewidth, angle=-90]{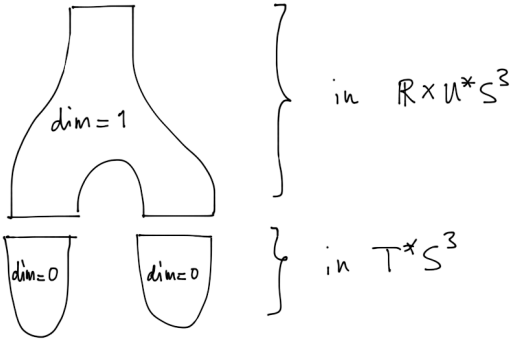}
\caption{A two level holomorphic building in the boundary of a 1-dimensional moduli space of disks with one positive puncture.}
\label{fig:sftcompactness}
\end{figure}

There are two natural exact Lagrangian fillings of $\Lambda_{K}$ in $T^{\ast} S^{3}$: its conormal $L_{K}$ and its complement $M_{K}\approx S^{3}-K$. Here $L_{K}$ is an exact Lagrangian filling of $\Lambda_{K}$ by definition. In order to see $M_{K}$ as a Lagrangian we include a brief general description of fronts.

Let $F\subset T^{\ast}M$ be an exact Lagrangian submanifold in the cotangent bundle of a smooth manifold $M$ and let $z\colon F\to \bR$ be a primitive of the pullback of the action form. Then
\[
\{(q,z(q))\in T^{\ast}M\times \R\colon q\in F\}
\]
is a Legendrian submanifold in $T^{\ast} M\times\R$ with the contact form $dz-p\,dq$, and its projection into $M\times\R$ is called the front of $F$. For generic $F$ the front is stratified. The top-dimensional smooth stratum consists of graphs of functions $g\colon U\to\R$ for $U\subset M$ and the corresponding part of $F$ is given by the graph $\Gamma_{dg}$ of the differential of $g$. Here we recover the fiber coordinates $y_j$ uniquely by
\[
y_j=\frac{\partial z}{\partial x_j},
\]
where $(x_1,\dots,x_m)$ are coordinates on $U$. The front has certain restricted singularities along lower dimensional strata that allows us to patch the solutions for $y_j$ corresponding to different sheets of the front. See Figure \ref{fig:frontex1} for an example. We will sometimes make use of a specific non-generic front that we call the Lagrangian cone, see Figure \ref{fig:lagcone}.

\begin{figure}[htp]
\centering
\includegraphics[width=.5\linewidth, angle=-90]{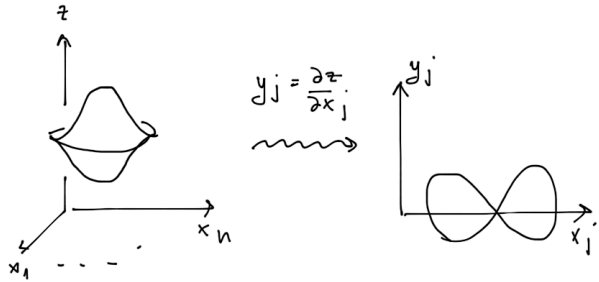}
\caption{The front of the standard immersed Lagrangian $n$-sphere: two smooth sheets come together over a cusp edge that locally is a product of a semi-cubical cusp $x^{3}=z^{2}$ and $\R^{n-1}$. The intersection of the Lagrangian immersion with a symplectic coordinate plane is shown on the right.}
\label{fig:frontex1}
\end{figure}

\begin{figure}[htp]
\centering
\includegraphics[width=.5\linewidth, angle=-90]{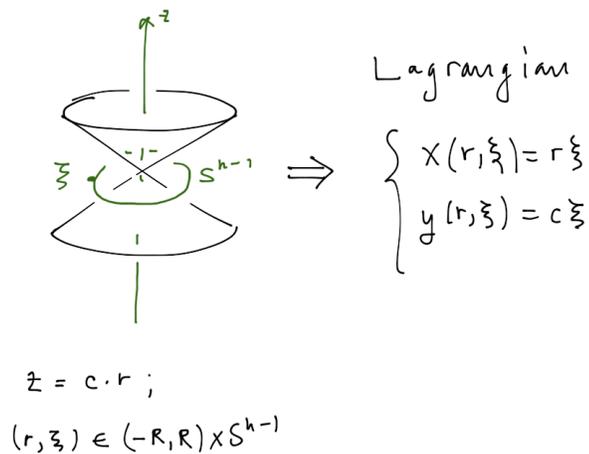}
\caption{A Lagrangian cone.}
\label{fig:lagcone}
\end{figure}

Using fronts we can represent $L_K$ and $M_K$ rather concretely. A front of the conormal $L_K$ is shown in Figure \ref{fig:conormalfront}.
To represent $M_K$, pick a Bott-Morse function on $S^{3}$ with a Bott maximum along $K$. Remove the Bott maximum to obtain an exact Lagrangian isotopic version by instead gluing in the conormal along the Bott minimum of the negative of the function. It is clear that $M_K$ agrees with $L_K$ outside a compact set, see Figure \ref{fig:complementfront}.

\begin{figure}[htp]
\centering
\includegraphics[width=.5\linewidth, angle=90]{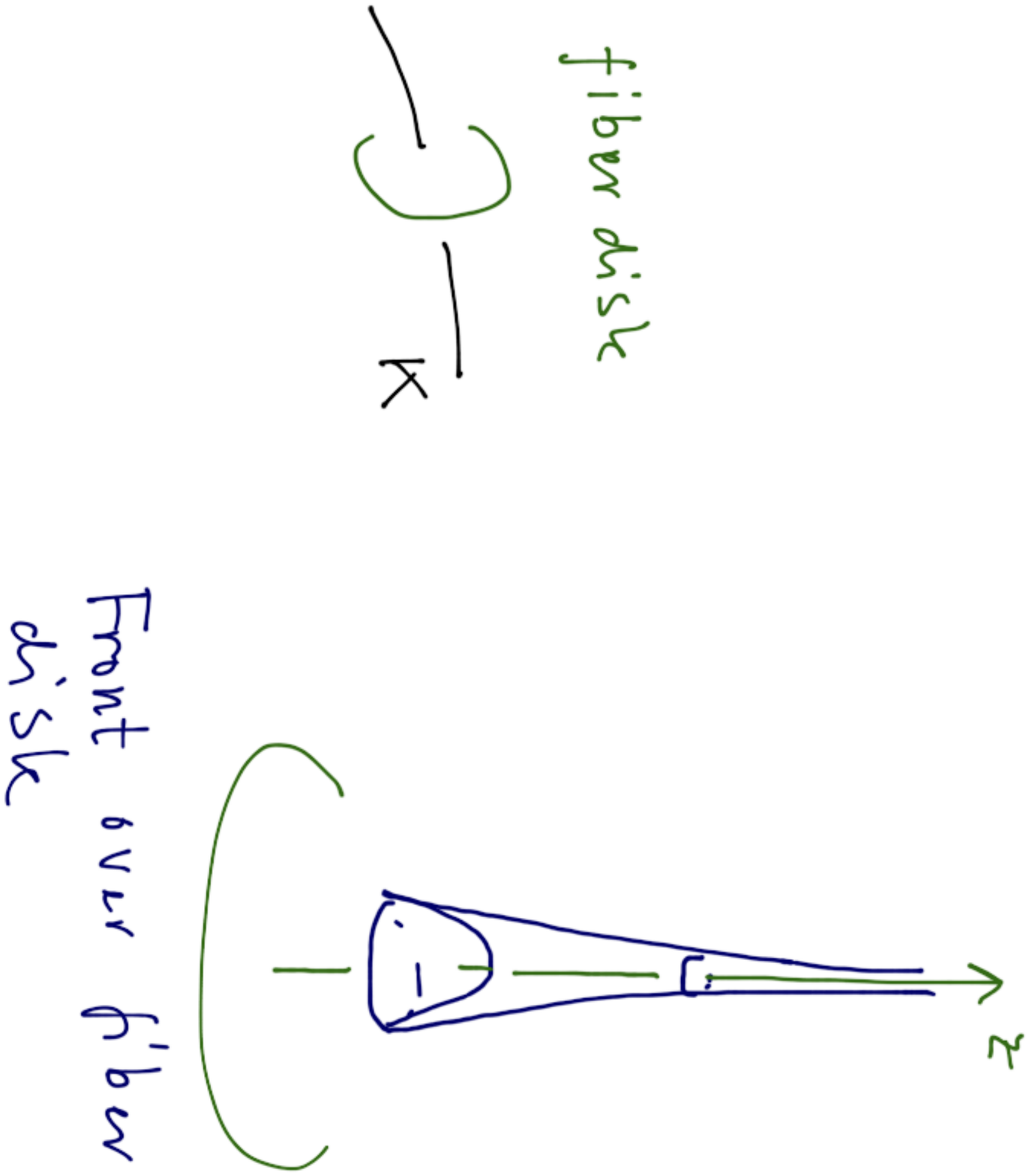}
\caption{The front of the conormal Lagrangian.}
\label{fig:conormalfront}
\end{figure}

\begin{figure}[htp]
\centering
\includegraphics[width=.5\linewidth, angle=90]{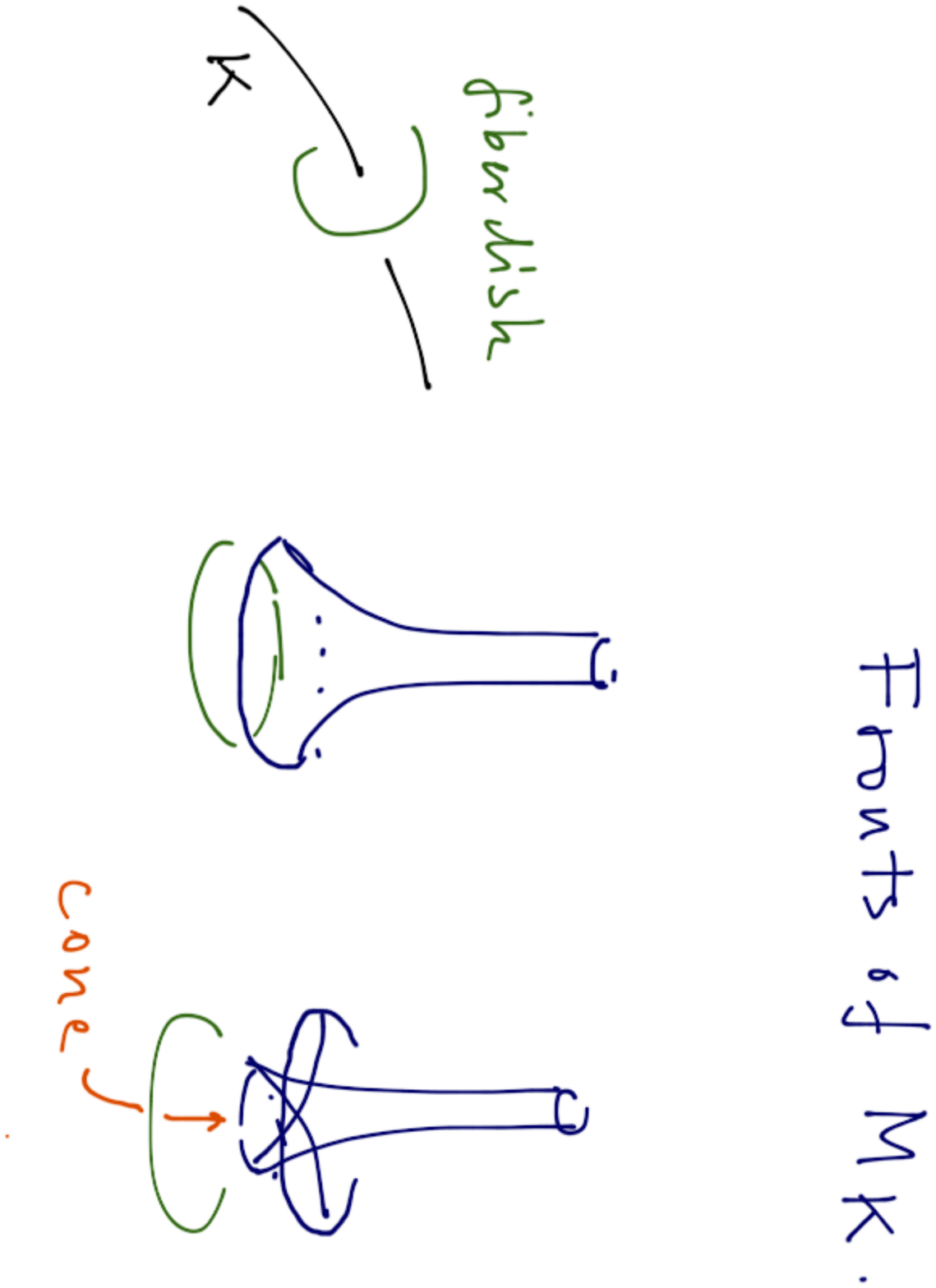}
\caption{Two fronts giving exact Lagrangian-isotopic Lagrangian versions of the link complement.}
\label{fig:complementfront}
\end{figure}

The second relative homologies $H_{2}(T^{\ast} S^{3},L_{K})$  and $H_{2}(T^{\ast} S^{3},M_{K})$ are both isomorphic to $\bZ$ and are generated by the longitude class $x$ and the meridian class $p$, respectively. On coefficients, the augmentation map in \eqref{eq:exactaug} is by definition the map on homology induced by the inclusion. Thus, the following result is immediate.
\begin{theorem}
The augmentation induced by the exact conormal filling $L_K$ of $\Lambda_K$ takes $x$ to $x$ and $p$
to $0$, and the augmentation induced by the exact knot complement filling $M_K$ takes $p$ to $p$ and $x$ to $0$
(or in multiplicative notation $e^{x}\mapsto e^{x}$ and $e^{p}\mapsto
1$ respectively $e^{x}\mapsto 1$ and $e^{p}\mapsto e^{p}$). Both
augmentations take $Q$ to $1$.
\end{theorem}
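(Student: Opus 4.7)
The plan is to reduce the statement to a relative homology computation. The map $\epsilon_{F_K}$ defined in \eqref{eq:exactaug} is, by construction, extended to the coefficient ring $\C[H_{2}(U^{\ast}S^{3},\Lambda_{K})]$ by the natural map $H_{2}(U^{\ast}S^{3},\Lambda_{K})\to H_{2}(T^{\ast}S^{3},F_{K})$ induced by the inclusion of pairs (a disk at infinity represents a relative class, and bringing it down into the filling changes only the $F_{K}$-boundary part). So I need to compute these maps for $F_{K}=L_{K}$ and $F_{K}=M_{K}$ and then compare with the $\C[t,x,p]$-generators $Q=e^{t}$, $e^{x}$, $e^{p}$.

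First I would set up long exact sequences. Since $T^{\ast}S^{3}$ deformation retracts to $S^{3}$, we have $H_{2}(T^{\ast}S^{3})=H_{1}(T^{\ast}S^{3})=0$, so in either case the connecting homomorphism gives
\[
H_{2}(T^{\ast}S^{3},F_{K})\;\xrightarrow{\;\cong\;}\;H_{1}(F_{K}).
\]
Since $L_{K}$ is the cotangent disk bundle of $K$, it deformation retracts to $K\simeq S^{1}$, so $H_{1}(L_{K})=\Z$ generated by (the homotopy class of) $K$. Since $M_{K}$ is the link complement, $H_{1}(M_{K})=\Z$ generated by the meridian of $K$ (by Alexander duality / abelianization of the knot group). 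For the source, the decomposition $H_{2}(U^{\ast}S^{3},\Lambda_{K})=\Z\langle t\rangle\oplus H_{1}(\Lambda_{K})=\Z\langle t,x,p\rangle$ from Section~\ref{ssec:mathintro} together with the commuting square with connecting maps reduces the problem to computing the images of $x$ and $p$ in $H_{1}(F_{K})$ and the image of $t$ in $H_{2}(T^{\ast}S^{3})=0$.

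The class $t$ is the image of the $S^{2}$-fiber at infinity, which bounds the full cotangent fiber $D^{3}\subset T^{\ast}S^{3}$; hence $t\mapsto 0$ in both cases and $Q=e^{t}\mapsto 1$, as claimed. For $F_{K}=L_{K}$: the longitude $x\subset\Lambda_{K}$ is a parallel of $K$ and is homotopic to $K$ inside $L_{K}$, so $x$ maps to the generator of $H_{1}(L_{K})$; the meridian $p\subset\Lambda_{K}$ is by definition the boundary of a cotangent disk fiber contained in $L_{K}$, hence null-homologous in $L_{K}$, so $p\mapsto 0$. For $F_{K}=M_{K}$: the meridian $p$ is precisely the generator of $H_{1}(S^{3}- K)$, so $p\mapsto p$; the longitude $x$ used in the definition of $\mathcal{A}$ is the Seifert-framed longitude (defined by linking number zero with $K$), which by definition bounds a Seifert surface in $S^{3}- K$ and is therefore null-homologous in $M_{K}$, so $x\mapsto 0$. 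Exponentiating gives the four assignments in the theorem statement.

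The one point that needs care — and is the only real content beyond unwinding definitions — is ensuring the framing convention matches: the choice of longitude $x$ in $H_{1}(\Lambda_{K})$ in Section~\ref{ssec:mathintro} is exactly the Seifert-framed one, which is what makes $x\mapsto 0$ in $H_{1}(M_{K})$ on the nose rather than mapping to some integer multiple of the meridian. With this convention fixed, no actual holomorphic disk count in the interior is required; both augmentations are determined entirely by the topological map on coefficient rings, because the Reeb chords of $\Lambda_{K}$ all have positive degree in the version of the DGA used here and thus the nontrivial higher-rank data of $\epsilon_{F_{K}}$ plays no role for the values on $Q,e^{x},e^{p}$.
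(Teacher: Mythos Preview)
Your approach is exactly the paper's: the paper simply observes that on coefficients the augmentation is, by definition, the map $H_{2}(U^{\ast}S^{3},\Lambda_{K})\to H_{2}(T^{\ast}S^{3},F_{K})$ induced by inclusion and declares the result ``immediate''; you have spelled out the relative-homology computation that justifies that word. Your observations about the Seifert-framing convention for $x$ and about the fiber $S^{2}$ bounding the cotangent fiber disk are exactly the content the paper leaves implicit.

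One minor inaccuracy in your final paragraph: it is not true that ``the Reeb chords of $\Lambda_{K}$ all have positive degree'' --- the DGA has degree-$0$ generators $a_{ij}$ (see Section~\ref{ssec:mathintro}), and the augmentation does assign them values via disk counts. This does not damage your argument, because the theorem only concerns the values on $Q,e^{x},e^{p}$, and these lie in the coefficient ring; the algebra map $\epsilon_{F_{K}}$ restricted to coefficients is the homology map regardless of what happens on Reeb chords. So the correct justification for ``no actual holomorphic disk count is required'' is simply that the theorem is a statement about the coefficient ring, not that the Reeb-chord part is vacuous.
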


We note that this explains the geometric origin of the trivial
branches $e^{x}=1$ and $e^{p}=1$ of the augmentation variety of
$K$ at $Q=1$.

The construction of the Lagrangian complement $M_K$ discussed above generalizes immediately to the case when $K$ is a several-component link $K=K_1\cup\dots\cup K_\n$. In this case the map on homology is a bit more involved and we get the following result.
\begin{theorem}
If $K=K_1\cup\dots\cup K_\n$ is an $\n$-component link,
then the augmentation induced by the exact knot complement filling $M_K$ takes the meridian class $p_j$ of the $j^{\rm th}$ component to $p_j$ and the corresponding longitude class $x_j$ to $\sum_{k\ne j}\lk_{kj}p_k$, where $\lk_{kj}$ is the linking number of $K_k$ and $K_j$, and as above takes $Q$ to $1$.
\end{theorem}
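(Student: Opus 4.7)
The plan is to reduce the statement to a purely homological computation, generalizing the knot case verbatim. Since the filling $M_{K}$ is exact, the induced augmentation $\epsilon_{M_{K}}$ acts on the coefficient ring $\bZ[H_{2}(U^{\ast}S^{3},\Lambda_{K})]$ via the map $\iota_{\ast}\colon H_{2}(U^{\ast}S^{3},\Lambda_{K})\to H_{2}(T^{\ast}S^{3},M_{K})$ induced by inclusion of pairs; this is because the capping paths used to associate a homology class in $H_{2}(U^{\ast}S^{3},\Lambda_{K})$ to each disk at infinity extend, under inclusion of the cylindrical end into $T^{\ast}S^{3}$ with the zero section removed, to capping paths for disks counted in $\epsilon_{M_{K}}$. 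The theorem will therefore follow once one identifies $H_{2}(T^{\ast}S^{3},M_{K})$ and computes $\iota_{\ast}$ on the three types of generators $t,x_{j},p_{j}$.

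First I would compute $H_{2}(T^{\ast}S^{3},M_{K})$. Since $T^{\ast}S^{3}$ deformation retracts onto the zero-section and $H_{1}(S^{3})=H_{2}(S^{3})=0$, the long exact sequence of the pair yields an isomorphism $H_{2}(T^{\ast}S^{3},M_{K})\cong H_{1}(M_{K})$. The Bott--Morse construction of $M_{K}$ described earlier shows that $M_{K}$ has the homotopy type of $S^{3}\setminus K$, so $H_{1}(M_{K})\cong H_{1}(S^{3}\setminus K)\cong\bZ^{\n}$, generated by the meridian classes $[\mu_{j}]$ of the components $K_{j}$.

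Next I would compute $\iota_{\ast}$ on each generator. The class $t=[S^{2}]$ represented by a unit cosphere fiber of $U^{\ast}S^{3}$ bounds the corresponding $D^{3}$ disk-fiber of $T^{\ast}S^{3}$, so $\iota_{\ast}(t)=0$ and therefore $Q\mapsto 1$. The meridian class $p_{j}$ of $\Lambda_{K_{j}}$ is represented by a small loop bounding a fiber disk transverse to $K_{j}$, and under inclusion into $M_{K}\simeq S^{3}\setminus K$ this loop is precisely the $j$th meridian generator of $H_{1}(S^{3}\setminus K)$, giving $\iota_{\ast}(p_{j})=p_{j}$. Finally, the longitude class $x_{j}$ is represented by a Seifert-framed parallel copy $\ell_{j}$ of $K_{j}$, and the standard linking-number formula in $H_{1}(S^{3}\setminus K)$ gives $[\ell_{j}]=\sum_{k}\lk(\ell_{j},K_{k})[\mu_{k}]$. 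Since $\ell_{j}$ is Seifert-framed we have $\lk(\ell_{j},K_{j})=0$, while $\lk(\ell_{j},K_{k})=\lk(K_{j},K_{k})=\lk_{kj}$ for $k\neq j$ because $\ell_{j}$ is parallel to $K_{j}$. Hence $\iota_{\ast}(x_{j})=\sum_{k\neq j}\lk_{kj}\,p_{k}$, as claimed.

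The main obstacle, such as it is, lies in the first step: one must verify that the capping conventions used to realize $H_{2}(U^{\ast}S^{3},\Lambda_{K})$ as the coefficient ring of $\mathcal{A}(\Lambda_{K})$ are compatible with the homology class assignment for disks with boundary on $M_{K}$ in $T^{\ast}S^{3}$, so that the map induced by $\epsilon_{M_{K}}$ on coefficients really is $\iota_{\ast}$. This is a bookkeeping argument that parallels the knot case (where it has already been carried out implicitly), and once it is in place the rest of the proof is an elementary computation of linking numbers between components and parallel copies of a link.
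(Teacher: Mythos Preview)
Your proposal is correct and follows exactly the approach the paper takes: the paper states just before the knot case that ``on coefficients, the augmentation map \ldots\ is by definition the map on homology induced by the inclusion,'' and then for links simply remarks that ``the map on homology is a bit more involved'' before stating the result without further argument. Your write-up supplies precisely the homological computation the paper leaves implicit, including the identification $H_{2}(T^{\ast}S^{3},M_{K})\cong H_{1}(S^{3}\setminus K)$ and the standard linking-number expression for the longitude in terms of meridians.
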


\subsection{Non-exact Lagrangian fillings}\label{ssec:nonexact}
Our next goal is to generalize the discussion of Section \ref{ssec:exactaug} to a certain class of non-exact Lagrangian fillings. We discuss concrete constructions of
non-exact Lagrangian fillings of this type in Section \ref{ssec:constrlag}. Here we concentrate on basic properties that are crucial for the augmentations to be constructed in Section \ref{sec:qchmaps}.

We will consider Lagrangian submanifolds in $T^{\ast}S^{3}$ or $\R\times U^{\ast}S^{3}$, or in the resolved conifold $Y$. For the general discussion all these ambient spaces can be treated on equal footing and we will denote the ambient symplectic manifold $E$ and have either one of these in mind. In particular, outside a compact set, $E$ agrees with the complement of a finite-radius disk bundle of $T^{\ast}S^{3}$, which we can think of as the symplectization of $U^{\ast}S^3$, and we write $E_{R}\subset E$ for the complement of a disk bundle of radius $R$.

Consider the conormal Lagrangian $L_K$ of a knot $K$, represented as a front, see Figure \ref{fig:conormalfront}. Then $L_K\cap E_R$ projects to a punctured tubular neighborhood $U$ of $K$ in $S^{3}$. If $\eta$ is a closed $1$-form in $U$ then the shift map
\[
S_{\eta}\colon T^{\ast} U\to T^{\ast} U,\quad S_{\eta}(q,p)=(q,p+\eta(q))
\]
is a symplectomorphism.

We will consider non-exact Lagrangian submanifolds $F_{K}$ and adjusted almost complex structures $J$ on $E$ that satisfy the following three conditions:
\begin{itemize}
\item The Maslov class of $F_{K}$ vanishes.
\item For any $A>0$ there is a compact subset $C_{A}\subset E$ such that any $J$-holomorphic disk with boundary on $F_{K}$ of area at most $A$ lies in $C_A$.
\item There exists $R_0$ such that $F_{K}\cap E_{R_0}$ is the image of a part of $L_K$ under a shift map $S_\eta$ for some uniformly bounded closed $1$-form $\eta$ on $U$.
\end{itemize}
We call such Lagrangians \emph{augmentation admissible fillings of $\Lambda_K$}.

\begin{lemma}\label{lma:Gromov-Hofer}
If $F_{K}$ is an augmentation admissible filling of $\Lambda_K$, then the following version of SFT compactness holds. Any holomorphic disk of energy $\le A<\infty$ either is contained in the compact subset $C_A$ or has at least one positive puncture where it is asymptotic to a Reeb chord. Furthermore, any sequence of holomorphic disks of energy $\le A$ has a subsequence that converges to a several-level holomorphic building with a first level in $E$ and higher levels consisting of holomorphic disks in the symplectization of $U^{\ast} S^{3}$.
\end{lemma}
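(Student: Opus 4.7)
The plan is to deduce this from the standard SFT compactness theorem of Bourgeois--Eliashberg--Hofer--Wysocki--Zehnder \cite{BEHWZ}, using the three defining properties of an augmentation admissible filling. The main point is that the second bullet in the definition of augmentation admissibility is designed precisely to rule out the single obstruction to compactness that arises from the non-exactness of $F_K$, namely closed disk bubbles sliding off to infinity.

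For the first statement, let $u$ be a holomorphic disk of energy at most $A$ with boundary on $F_K$. By standard punctured curve asymptotics, every puncture of $u$ is either positive or negative, with an asymptotic Reeb chord of $\Lambda_K$ at $\pm\infty$. Since $F_K$ is a filling (agreeing at infinity with a shift of the conical $L_K$ along the single positive end of $E$) there is no negative end available for Reeb chord asymptotes, so all punctures of $u$ are positive. Hence if $u$ has no positive puncture it is a closed holomorphic disk of area at most $A$, and the second bullet of augmentation admissibility directly places $u$ inside $C_A$.

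For the sequential statement, consider a sequence $u_n$ of holomorphic disks of energy at most $A$. The goal is to apply BEHWZ with boundary condition $F_K$. Near the positive end, the third bullet lets me write $F_K$ as the image of (a part of) the $\R$-invariant Lagrangian $L_K$ under a shift $S_\eta$ by a uniformly bounded closed $1$-form $\eta$. Writing $\eta$ locally as $dh$ and absorbing the shift into a Hamiltonian modification of the adjusted almost complex structure $J$, I can arrange the asymptotic boundary condition to agree with the $\R$-invariant setup needed to apply BEHWZ directly. That theorem then extracts a subsequence converging to a multi-level holomorphic building whose first level lies in $E$ with boundary on $F_K$ and whose remaining levels lie in the symplectization $\R\times U^{\ast}S^{3}$ with boundary on $\R\times\Lambda_K$, with adjacent levels joined along Reeb chord punctures of $\Lambda_K$. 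The vanishing Maslov class assumption ensures the usual index and orientation theory applies.

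The main obstacle, and the reason the lemma is not an immediate corollary of BEHWZ, is the handling of closed disk bubbling in the non-exact setting. Without extra input, closed holomorphic disks on a non-exact Lagrangian can escape to infinity along the sequence and produce non-compact limit configurations. Here, however, every closed disk component appearing as a bubble carries area at most $A$, since the total energy of $u_n$ is bounded by $A$, and by the second bullet each such component is therefore trapped in the fixed compact subset $C_A\subset E$. This confinement, combined with the standard Gromov--Hofer compactness argument in the first level and the punctured compactness in higher levels, yields the claimed multi-level limit and completes the proof.
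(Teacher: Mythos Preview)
Your overall strategy---invoke BEHWZ, handle the non-cylindrical boundary via the shift, and use the second bullet to confine closed disk bubbles---matches the paper's, and your treatment of the first assertion (no puncture $\Rightarrow$ closed disk $\Rightarrow$ lies in $C_A$) is fine. The gap is in how you handle the shift $S_\eta$.

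You write ``$\eta$ locally as $dh$'' and propose to absorb the shift as a ``Hamiltonian modification'' of $J$, after which BEHWZ applies directly. But in the examples that matter (e.g.\ the angular form $d\theta$ used to push $L_K$ off the zero section), $\eta$ is closed and \emph{not} exact on the punctured neighborhood $U$, so $S_\eta$ is a symplectomorphism but not a Hamiltonian one. More importantly, even if you simply conjugate $J$ by $S_\eta$ to move the boundary condition back to $L_K$, the resulting $J' = S_\eta^{-1}\circ J\circ S_\eta$ is \emph{not} $\R$-invariant on the end: in cotangent coordinates the $\R$-action is $(q,p)\mapsto(q,e^{s}p)$, which does not commute with $(q,p)\mapsto(q,p+\eta)$. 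So you have not actually produced a setup to which BEHWZ applies ``directly.''

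The paper supplies exactly the missing mechanism. In the SFT rescaling $t\mapsto t-T$ built into the compactness argument, the image of $S_\eta$ rescales to the image of $S_{e^{-T}\eta}$; since $\eta$ is uniformly bounded, the shifted Lagrangian is $e^{-T}$-close to the $\R$-invariant $L_K$, and the conjugated almost complex structure is $e^{-T}$-close to $J$. A standard bootstrap then shows the higher-level limits coincide with those for the genuinely cylindrical pair $(J,\,\R\times\Lambda_K)$. That exponential-decay-of-the-shift observation is the content you are missing; once you insert it, your argument becomes the paper's. (The Maslov remark is irrelevant to the compactness statement and can be dropped.)
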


\begin{proof}
This is a small extension of \cite{BEHWZ}. Consider the case when a non-trivial part of the disk goes to infinity. In order to recover the limit, we use the invariance of the almost complex structure under rescaling. Take $t$ to $t-T$ for some large $T>0$, so that level $T$ maps to level $0$ in the symplectization. Under this rescaling the image of $S_\eta$ maps to the image of $S_T=S_{e^{-T}\eta}$. For such small shift we find that the adjusted almost complex structure $J$ is at distance of the order of magnitude $e^{-T}$ from an almost complex structure $J'=\tilde S_T\circ J\circ \tilde S_T^{-1}$, where $\tilde S_{T}$ is a diffeomorphism that agrees with $S_T$ and is suitably cut off in $T^{\ast} U$, and the holomorphic disks for the latter structure agree with the ordinary holomorphic disks with boundary on $L_K$. Standard bootstrap arguments show that $J$- and $J'$-holomorphic disks lie within distance of order of magnitude $e^{-T}$.
\end{proof}

The condition on vanishing Maslov class of $F_{K}$ and the fact that $E$ is Calabi-Yau shows that the formal dimension of any closed holomorphic disk with boundary on $F_{K}$ equals zero. Assume that a perturbation scheme for rigid holomorphic disks with boundary $F_{K}$ has been fixed. The moduli space $\scM(F_K)$ of rigid disks in $E$ with boundary on $F_K$ then constitutes a weighted branched $0$-manifold, see \cite{Hofer, FO3}.


\subsection{Quantum corrected chain maps}\label{sec:qchmaps}
We consider first our main geometric example: let $K$ be a knot and take $F_K$ to be the conormal filling $L_K\approx S^{1}\times\R^{2}$ shifted off of the zero section, $L_K\subset E$. In this case $H_{2}(E,L_{K})$ is generated by the class of $t$ in $H_{2}(E)$, represented by the fiber 2-sphere, and the longitude class $x\in H_{1}(\Lambda_{K})$. (In case $E=T^{\ast}S^{3}$, the fiber class is trivial and $Q=1$ below.)

In order to motivate the use of obstruction chains below let us try to carry over the construction of augmentations from the case of an exact filling to the current more general case in the most naive way. Define $\epsilon\colon\mathcal{A}\to \C[H_{2}(E,L_K)]$ as in \eqref{eq:exactaug}. As there, we try to prove the chain map equation $\epsilon\circ\partial=0$ by looking at the boundary of 1-dimensional moduli spaces $\scM(b)$. Here, however, not all boundary components contribute to $\epsilon\circ\partial$ because of boundary bubbling with nodal disks with two components, one in $\scM(b)$ and the other in $\scM(L_K)$, see Figure \ref{fig:boundarybubble}. Consequently $\epsilon$ as defined here is not a chain map. In order to remedy this we will introduce a change of variables on the coefficients  $\mathcal{A}$, guided by obstruction chains much like in \cite{FO3}.

\begin{figure}[htp]
\centering
\includegraphics[width=.5\linewidth, angle=90]{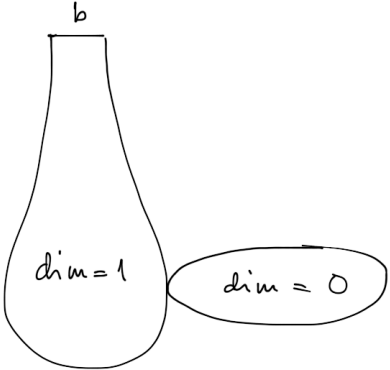}
\caption{Boundary bubbling gives new codimension one boundary strata that destroys the chain map property.}
\label{fig:boundarybubble}
\end{figure}

Consider a holomorphic disk $u\in \scM(L_K)$. Then $u\colon\partial D\to L_K$ is a cycle that represents the homology class $kx\in H_{1}(L_K)$, where $k$ is some integer and $x$ is the longitude class in $\Lambda_K$. Fix a circle $\xi$ in $\Lambda_K$ representing the class $x$. For each $u$ as above fix a smooth \emph{obstruction chain} $\sigma_u$ that connects the cycle $u(\partial D)$ to the fixed circle in $\Lambda_K$ traversed $k$ times. More precisely, recall that outside a compact subset of $E$, $L_K$ is the image of
$\R\times\Lambda_K$ 
 under a shift map $S_{\eta}$; we take the chain $\sigma_u$ to agree with 
$\R\times\xi^k$,
 where $\xi^{k}$ denotes the curve $\xi$ traversed $k$ times, with the appropriate weight, reflecting the weight of the curve $u$, in this region, see Figure \ref{fig:obstrchain}.

\begin{figure}[htp]
\centering
\includegraphics[width=.5\linewidth, angle=90]{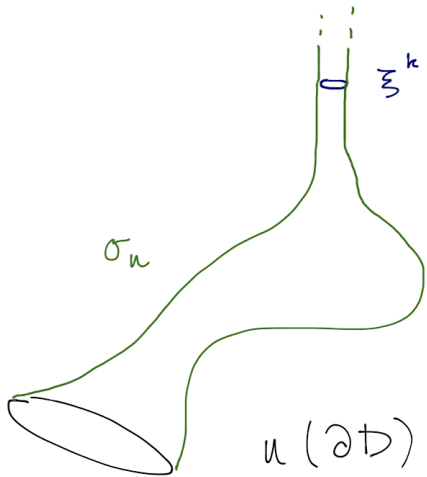}
\caption{An obstruction chain connecting the boundary of a holomorphic disk to a fixed homology generator at infinity.}
\label{fig:obstrchain}
\end{figure}

Let $\sigma=\{\sigma_{u}\colon u\in\scM(L_K)\}$ denote the collection of all the fixed obstruction chains. The collection $\sigma$ then associates to each Reeb chord $a$
a moduli space of \emph{quantum corrected disks} with positive puncture at $a$ and boundary on $L_{K}$. An element in this space is a holomorphic disk $u\in\scM(a)$, i.e.~a disk with boundary on $L_K$ and with positive puncture at $a$,
\[
u\colon (D-\{1\},\partial D-\{1\})\to (E,L_K),
\]
and with additional boundary marked points $\zeta_1,\dots,\zeta_s\in\partial D$ such that $u(\zeta_j)\in\sigma_{u_j}$ for some $u_{j}\in\scM(L_K)$. Furthermore, any disk $u_j$ attached in this first level also has additional boundary marked points which map to $\sigma_{u_{j'}}$ for disks $u_{j'}$ distinct from the ones already used, which are the disks attached at the second level. If the $r^{\rm th}$ level disks has been defined the $(r+1)^{\rm th}$ level is obtained by attaching obstruction chains at marked points in the $r^{\rm th}$ level. A quantum disk is such an object with finitely many levels, see Figure \ref{fig:qdisk}.
\begin{figure}[htp]
\centering
\includegraphics[width=.5\linewidth, angle=90]{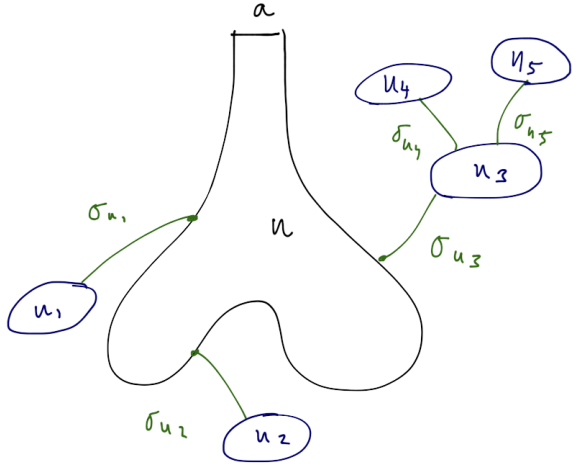}
\caption{Schematic picture of a quantum corrected disk with positive puncture. (In reality $\sigma_{u_j}$ is 2-dimensional and intersects the boundary $u(\partial D)$ or $u_{i}(\partial D)$ transversely inside the 3-dimensional $L_K$.)}
\label{fig:qdisk}
\end{figure}
Schematically, we can view a quantum corrected disk as a rooted tree, where the root is the holomorphic disk $u$ and the other vertices are the disks $u_j$. If we orient this tree so that all edges point away from the root, then an edge from $u_j$ to $u_{j'}$ corresponds to a way to map a boundary marked point on $u_j$ to a point in the cycle $\sigma_{u_{j'}}$.

We define the homology class represented by the quantum corrected disk to be the homology class in $H_{2}(E,L_K)$ given by the sum of the classes of $u$ (after capping) and all the attached disks $u_{j}$. We then write
\[
\scM_{kx+rQ}(a;\sigma)
\]
for the moduli space of quantum corrected disks with positive puncture at $a$ that represent the homology class $kx+rQ$, and $\scM(a)$ for the union of such spaces over all homology classes. In order to make sense out of counting quantum corrected disks, we note that there is a natural filtration by total area inside a fixed compact set where all closed disks lie, and that below every energy level there are only finitely many configurations that contribute. The following lemma shows how quantum corrected holomorphic disks, although their definition is rather involved, give rise to a simplification of the boundary of moduli spaces.

\begin{lemma}\label{l:qcorrboundary}
The boundary of a $1$-dimensional moduli space $\scM(b;\sigma)$ of quantum corrected disks consists of two-level quantum corrected disks, with one level being a disk in the symplectization in a $1$-dimensional moduli space, and the other level being rigid disks in $E$.
\end{lemma}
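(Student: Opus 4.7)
The strategy is to apply the SFT compactness statement of Lemma~\ref{lma:Gromov-Hofer} to sequences in $\scM(b;\sigma)$ and then enumerate the codimension-one degenerations that can occur. The key point is to show that the problematic boundary strata coming from closed disks bubbling off the boundary of $L_K$ are exactly canceled by strata in which boundary marked points migrate to the boundary of the obstruction chains $\sigma_{u_{j}}$. After this cancellation, the only remaining boundary is the standard SFT breaking at the positive puncture, which is the form of the boundary claimed in the lemma.

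Concretely, consider a $1$-parameter family of rigid quantum corrected disks in $\scM(b;\sigma)$. The root disk $u$ and the attached disks $u_j$ have bounded area (the total area is bounded in terms of $b$ and the fixed homology class of the component of $\scM(b;\sigma)$), so Lemma~\ref{lma:Gromov-Hofer} applies. The combinatorial data of boundary marked points $\zeta_j$ together with their images in the chains $\sigma_{u_j}$ lies in a compact parameter space. Gromov/SFT compactness combined with a dimension count then implies that a generic limit is one of three codimension-one configurations: (A) a two-level building where the first level is a disk in the symplectization of $U^{\ast}S^{3}$ in a $1$-dimensional moduli space with positive puncture at $b$ and negative punctures to which rigid quantum corrected disks in $E$ are glued; (B) a closed holomorphic disk $v\in\scM(L_K)$ bubbles off at a boundary point $q\in\partial D$ of some disk in the tree; (C) a boundary marked point $\zeta$ of the attaching data moves to the boundary of its target obstruction chain $\sigma_{v}$.

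The core of the argument is that the contributions of type (B) and type (C) cancel. By construction, the obstruction chain $\sigma_{v}$ satisfies
\[
\partial \sigma_{v}=v(\partial D)-k_{v}\cdot\xi^{k_{v}},
\]
where $\xi$ is the fixed reference longitude at infinity and $k_{v}\in\Z$ is the longitude class of $v(\partial D)$. The part of $\partial\sigma_{v}$ that contributes compact codimension-one boundary at a node is $v(\partial D)$. Hence the (C) configurations in which $\zeta$ hits $v(\partial D)$ match, configuration by configuration, the (B) configurations in which the disk $v$ bubbles off at the image point $q=u(\zeta)\in\partial D$. A local analysis near the node, together with coherent choices of orientations on moduli spaces and obstruction chains in the spirit of \cite{FO3}, shows that the two pieces enter with opposite signs and cancel. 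The residual (C) contributions, in which $\zeta$ escapes along $\sigma_{v}$ toward $\xi$ at infinity, are absorbed into (A): pushing the image point along the $\R$-direction of $\sigma_{v}$ to infinity makes the attached disk split off as a second level in the symplectization, producing precisely a configuration of type (A) rather than a new boundary stratum.

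The main obstacle is the sign and orientation bookkeeping in the cancellation of (B) and (C). To make this rigorous one needs a coherent orientation scheme on all the moduli spaces $\scM(a)$ and $\scM(L_K)$, together with compatible orientations on the obstruction chains $\sigma_{u}$, so that the two local models at a bubbling node induce opposite boundary orientations on the codimension-one stratum. In the exact case this reduces to the standard Legendrian contact homology orientation setup; in the present non-exact setting it must be combined with the Fukaya--Oh--Ohta--Ono bounding-cochain conventions of \cite{FO3}. Granting this orientation bookkeeping, the remaining arguments are local and follow standard SFT gluing techniques, so the lemma reduces to the orientation conventions plus Lemma~\ref{lma:Gromov-Hofer}.
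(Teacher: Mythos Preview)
Your argument is correct and follows exactly the approach the paper has in mind: the paper's own proof is two sentences pointing to Figures~\ref{fig:killboundary} and~\ref{fig:qboundary}, and what you have written is precisely the mechanism those figures encode --- the cancellation of boundary bubbling (your type~(B)) against marked points hitting the $v(\partial D)$ piece of $\partial\sigma_v$ (your type~(C)), leaving only the SFT two-level breaking (your type~(A)). Your identification of the orientation/sign bookkeeping as the residual technical point is also accurate and matches the paper's reliance on \cite{FO3}.

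One small imprecision: in your treatment of the ``(C) at infinity'' case you write that the marked point escaping along the $\R$-direction of $\sigma_v$ ``makes the attached disk split off as a second level in the symplectization.'' It is not the attached closed disk $v$ (which is rigid and sits in the compact part of $E$) that splits off; rather, as $u(\zeta)$ escapes to infinity along the cylindrical end $\R\times\xi^{k}$ of $\sigma_v$, the marked point $\zeta$ is approaching the positive puncture of the root disk $u$, and in the SFT limit it is $u$ that breaks into a symplectization level carrying the marked point and an $E$-level below. This is indeed part of the type~(A) boundary, as you say, so the conclusion stands; only the description of what is splitting needs correcting.
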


\begin{proof}
This is a standard application of obstruction chains: Figure \ref{fig:killboundary} indicates why boundary bubbling no longer is to be considered as a boundary of the moduli space, while in Figure \ref{fig:qboundary} a typical boundary configuration is depicted.
\end{proof}

\begin{figure}[htp]
\centering
\includegraphics[width=.5\linewidth, angle=90]{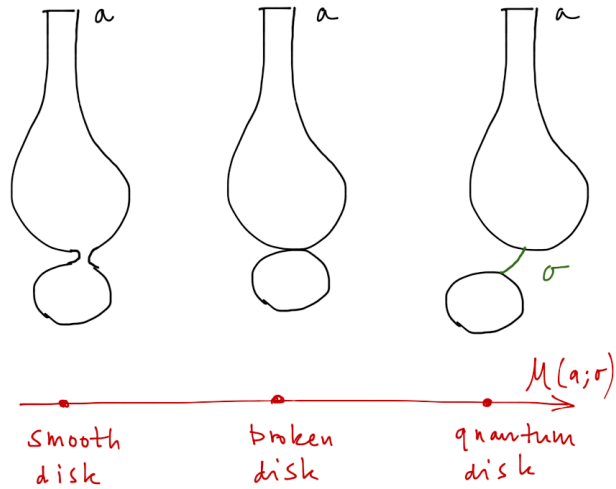}
\caption{Obstruction chains turn boundary bubbles to interior points in the moduli space.}
\label{fig:killboundary}
\end{figure}

\begin{figure}[htp]
\centering
\includegraphics[width=.5\linewidth, angle=90]{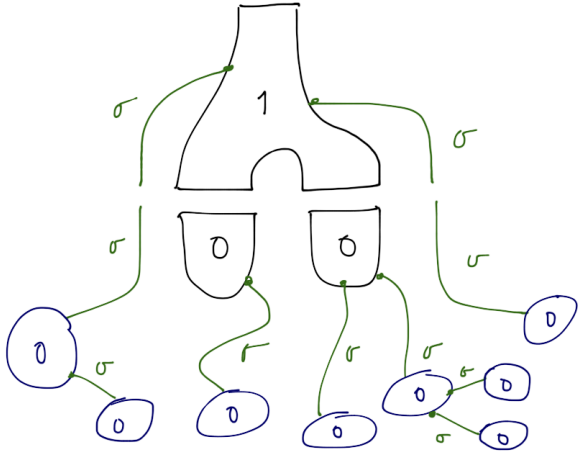}
\caption{A two-level quantum corrected disk in the boundary of the moduli space. The top level lies in $\R\times U^{\ast}S^3$, while the bottom level lies in $E$. Numbers inside the disks refer to the dimension of their moduli space.}
\label{fig:qboundary}
\end{figure}

\subsection{The GW-potential and chain maps}\label{s:GWdef}
We define the GW-potential as the generating function of the holomorphic disk configurations that are attached to a quantum corrected disk. More precisely, we count trees whose vertices are rigid holomorphic disks $u_i$ with boundary on $L_K$. To each such tree, we associate a weight as follows. First, each vertex $u_i$ in a tree has a weight given by the weight of $u_i$ viewed as a point in the moduli space of holomorphic disks. Next, let $u_{i_1}$ and $u_{i_2}$ be disks connected by an edge, and let $\sigma_{u_{i_1}}$ and $\sigma_{u_{i_2}}$ be the corresponding obstruction chains; then let the weight of the edge between $u_{i_1}$ and $u_{i_2}$ be the linking number of $u_{i_1}$ and $u_{i_2}$, which can be defined as the intersection number between $u_{i_1}(\partial D)$ and $\sigma_{u_{i_2}}$ in $L_K$, or equivalently the intersection number between $u_{i_2}(\partial D)$ and $\sigma_{u_{i_1}}$. (If necessary, we can shift the obstruction cycle off of itself before counting this intersection number.)
Finally, the weight of a tree is defined to be the product of the weights at all of its vertices and edges.

To motivate this definition of weight, suppose that we choose a vertex of the tree to be the root, and orient the tree so that all edges point away from this root. An edge from $u_{i_1}$ to $u_{i_2}$ should then be viewed as a choice of marked point on the boundary of $u_{i_1}$ that is mapped to $\sigma_{u_{i_2}}$, as in the definition of quantum corrected disk above. The weight of a tree measures the number of ways to choose these marked points. It is important for the definition of the GW-potential to note that because of the symmetry of the linking number, this weight is independent of the choice of root for the tree.


Write $\scM(L_K,\sigma)$ for the moduli space of (unrooted) trees, and define the homology class of a tree in $H_2(E,L_K)$ to be the sum of the homology classes of the disks in the tree.
We now define the GW-potential as follows:
\[
W(e^{x},Q)=\sum_{k,r\in\bZ} C_{k,r}\,e^{kx}Q^{r},
\]
where $C_{k,r}$ is the sum of the weights of the tree configurations in $\scM(L_K;\sigma)$ that represent the homology class $kx+rt$.

We claim that this GW-potential parametrizes a part of the augmentation variety of $K$.
Consider 
the specialization $(\hat{\mathcal{A}},\hat\partial)$  of the DGA $(\mathcal{A},\partial)$ to an algebra $\hat{\mathcal{A}}=\mathcal{A}|_{p=\frac{\partial W}{\partial x}}$ over (a completion of) $\C[e^{\pm x},Q^{\pm 1}]$, obtained by setting
\begin{equation}
p=\frac{\partial W}{\partial x} = \sum_{k,r\in\Z} kC_{k,r}\,e^{kx}Q^{r}.
\end{equation}
Define the graded algebra map
\[
\epsilon_{L_K}\colon \hat{\mathcal{A}}\to\C[e^{\pm x},Q^{\pm 1}]
\]
as follows on generators $a$ with $|a|=0$:
\begin{equation}\label{eq:nonexactaug1}
\epsilon_{L_K}(a)=\sum_{k,r\in\Z} |\scM_{kx+rt}(a;\sigma)| \,e^{kx}Q^{r}.
\end{equation}

\begin{theorem}\label{thm:shiftedL_K}
The map $\epsilon_{L_K}$ is a chain map, i.e.~$\epsilon_{L_K}\circ\hat\partial=0$.
\end{theorem}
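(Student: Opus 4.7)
The plan is to adapt the SFT chain-map argument from Theorem~\ref{thm:exactchmap} to the non-exact setting, using Lemma~\ref{l:qcorrboundary} in place of ordinary SFT compactness and using the specialization $\mu = e^{\partial W/\partial x}$ to absorb closed disk bubbles on $L_K$. For a fixed Reeb chord $b$ with $|b|=1$ and a fixed homology class $\beta \in H_{2}(E, L_{K})$, the first step is to consider the $1$-dimensional moduli space $\scM_{\beta}(b;\sigma)$ of quantum corrected disks with positive puncture at $b$. By Lemma~\ref{l:qcorrboundary}, its boundary consists of two-level configurations in which a rigid disk in the symplectization $\R\times U^{\ast}S^{3}$ (with positive puncture at $b$ and negative punctures at Reeb chords $a_{j_{1}},\dots,a_{j_{k}}$) is glued to rigid quantum corrected disks in $E$ with positive punctures at the $a_{j_{i}}$. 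The signed count of such boundary points is zero, and summing over $\beta$ yields an identity that should be identified with $\epsilon_{L_{K}}(\hat\partial b) = 0$.

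The delicate step is to match the homology classes and thereby verify that the specialization $\mu = e^{\partial W/\partial x}$ is forced. The top-level symplectization disk has boundary class $lx + mp$ in $H_{1}(\Lambda_{K})$ and contributes the monomial $e^{lx+mp}Q^{n}a_{j_{1}}\cdots a_{j_{k}}$ to $\partial b$. Because $p$ bounds in $L_K$, the bottom-level quantum corrected disks have boundary classes in $H_{1}(L_{K})=\Z x$ only, so to build a configuration of class $\beta$, the $mp$-part of the top boundary has to be capped inside $L_{K}$ by $m$ copies of a chosen meridian disk $D_{p}\subset L_{K}$. Each such cap $D_{p}$ intersects the boundary of a closed rigid disk of class $kx$ in $k$ points, and summing over the rooted trees of closed disks that the FOOO obstruction chain construction allows one to graft at these intersection points produces a weighted factor of $\sum_{k,r} k\,C_{k,r}\,e^{kx}Q^{r} = \partial W/\partial x$ per meridian. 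Taking the product over the $m$ caps exponentiates to give the substitution $e^{mp}\mapsto e^{m\partial W/\partial x}$, which is exactly the specialization defining $\hat{\mathcal{A}}$. What remains is to observe that no further codimension-one boundary strata arise: closed disk bubbling off the main disk, as well as degenerations inside the grafted tree, are precisely the phenomena neutralized by the obstruction chains as in Figure~\ref{fig:killboundary}, so they correspond to interior points of $\scM_{\beta}(b;\sigma)$ rather than boundary points.

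The hardest step will be the combinatorial verification that the grafting procedure reproduces exactly $\partial W/\partial x$, rather than some other polynomial in the coefficients $C_{k,r}$. This amounts to checking a generating function identity: the sum over rooted trees of rigid closed disks with one distinguished boundary crossing of a fixed longitude-transverse hypersurface in $L_{K}$ equals $\partial W/\partial x$, with the correct signs and multiplicities. The symmetry of linking numbers used in Section~\ref{s:GWdef} to define the weights of tree configurations is essential here, and one must also keep track of the combinatorics of inserting caps at Reeb-chord asymptotes in the presence of these trees. A secondary technical point is the convergence of the series defining $e^{\partial W/\partial x}$ in the completed coefficient ring, which should follow from Lemma~\ref{lma:Gromov-Hofer} since only finitely many tree configurations contribute at each fixed $Q$-order. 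Once these points are settled, together with the standard SFT gluing theorem for Reeb chord breakings, the chain-map equation $\epsilon_{L_{K}}\circ\hat\partial = 0$ follows on each Reeb chord and hence on all of $\hat{\mathcal{A}}$ by multiplicativity.
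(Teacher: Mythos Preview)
Your overall strategy matches the paper's: use Lemma~\ref{l:qcorrboundary} to identify the boundary of the one-dimensional moduli space $\scM(b;\sigma)$ with two-level configurations, and then verify that the signed count of such configurations equals $\epsilon_{L_K}(\hat\partial b)$. The paper's proof is brief and follows exactly this outline.

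However, your explanation of \emph{why} the specialization $e^{p}\mapsto e^{\partial W/\partial x}$ arises misidentifies the geometric mechanism. You introduce meridian caps $D_{p}\subset L_{K}$ and describe trees as being grafted at intersections of $D_{p}$ with boundaries of closed disks. But in the definition of quantum corrected disks in Section~\ref{sec:qchmaps} there are no meridian caps: trees are attached via marked points on the boundary of the main punctured disk that map into the obstruction chains $\sigma_{u}$. When the one-dimensional moduli space SFT-breaks, the top-level symplectization disk still carries such marked points on its boundary (lying in $\R\times\Lambda_{K}$), and these must land in the cylindrical ends $\R\times\xi^{l}$ of the obstruction chains at infinity. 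Since the boundary of the top-level disk represents a class $kx+mp$ in $H_{1}(\Lambda_{K})$ and $\xi$ is the longitude, the intersection number with $\R\times\xi^{l}$ is $m\cdot l$: the factor $m$ is the meridian coefficient of the symplectization disk's boundary, and the factor $l$ (which upon summation over attached trees produces $\partial W/\partial x$) is the longitude class of the root disk of the tree. This is the paper's mechanism, and it comes directly from the definition of $\scM(b;\sigma)$.

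Your dual picture---cap the meridian, then intersect with closed-disk boundaries---yields the same linking number $m\cdot l$, so your combinatorics would come out right. But it does not describe what actually happens in the moduli space, and it obscures why Lemma~\ref{l:qcorrboundary} already suffices. In particular there is no grafting at caps $D_{p}$ in the construction, so your ``hardest step'' (a generating-function identity for trees grafted at $D_{p}$) is not a step that needs to be carried out: it is replaced by the direct intersection count of the top-level boundary with $\R\times\xi^{l}$.
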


\begin{proof}
This follows from the fact that contributions to $\epsilon_{L_K}\circ\hat\partial(b)$ correspond to two-level disks in the boundary of $\scM(b;\sigma)$. To see this, consider a holomorphic disk $u$ in the symplectization $\R\times U^{\ast}S^3$
with positive puncture at $b$ and negative punctures at $a_1,\ldots,a_m$ which (after capping) represents the homology class $kx+mp+rt$ in $H_1(\Lambda_K)$.
To obtain a two-level structure as in Figure~\ref{fig:qboundary}, we attach a quantum corrected disk in $E$ with positive puncture at $a_i$ (i.e., a point in $scM(a_i)$) to each negative puncture $a_1,\ldots,a_m$, and a tree of holomorphic disks in $E$ at marked points on the boundary of $u$ (note that such a tree can be rooted and oriented by choosing the root to be the disk whose obstruction cycle contains the marked point). Now in the boundary of the one-dimensional moduli space $\scM(b;\sigma)$, this precisely gives a contribution of $e^{kx}e^{m\frac{\partial W}{\partial x}}Q^{r}$: first note that each rooted tree of disks in $E$ is counted with multiplicity given by the weight of the tree, and second recall that at infinity the obstruction cycles are products of the form $\R\times\xi^{l}$ that intersect any curve in the class $kx+mp$, $m\cdot l$ times. (In the formula $m$ is explicitly visible and $l$ comes from the differentiation.) By definition of the specialized differential, this contribution (summed over $k$ and $r$) is $\epsilon_{L_K}(\hat\partial(b))$, which must then be equal to $0$.
\end{proof}

\begin{remark}\label{rmk:proofA=A}
As a consequence of Theorem \ref{thm:shiftedL_K}, the equation $p=\frac{\partial W}{\partial x}$ is a parametrization of a branch of the augmentation variety of $K$, which is in agreement with the parametrization \eqref{eq:p=dW/dx} derived from physical arguments, and which in fact proves the mathematical and physical definitions of the variety agree if the augmentation variety is irreducible.
\end{remark}

In general it is difficult to find augmentation admissible Lagrangian fillings with the topology of a link complement that do not intersect the zero section, but e.g.~for fibered links it is possible; see Section \ref{ssec:antibrane}, where we will discuss alternative constructions. Here we will discuss such Lagrangian fillings in order to see the main properties of the more general fillings in the simplest possible context.    Thus, let $K=K_1\cup\dots\cup K_\n$ be an $\n$-component link. Let $M_{K}\subset E$ be an augmentation admissible Lagrangian filling with the topology of the  link complement $M_K\approx S^{3}-K$. Then $H_{2}(E,M_{K})$ is generated by the meridian classes $p_j\in H_1(\Lambda_{K_j})$ ($j=1,\dots,\n$) of the components and the fiber class $t$. In analogy with the case of the conormal discussed above, we fix circles $\eta_j$ in $\Lambda_{K_j}$ that represent the class $p_j$, $j=1,\dots, n$, and for each $u\in\scM(M_K)$ we fix an obstruction chain $\sigma_u$ connecting $u(\partial D)$ to a linear combination of multiples of the fixed curves $\eta_j$ in $\Lambda_{K_{j}}$, where as above we take the obstruction chains to be (weighted) cylinders on the fixed curves outside a compact subset.  Let again $\sigma$ denote the collection of all the fixed obstruction chains $\sigma_{u}$, $u\in\scM(M_K)$.

Let $p=(p_1,\dots,p_n)$ and let $\tilde U(p,Q)$ denote the GW-potential of $M_K$, defined as a sum over tree configurations exactly as above:
\[
\tilde U(p,Q)=\sum_{k\in\bZ^{n},\;r\in\bZ} B_{k,r}\;e^{k\cdot p}Q^{r},
\]
where $B_{k,r}$ is the sum of the weights of the configurations in $\scM_{k\cdot p+rt}(M_K;\sigma)$.

 Furthermore, exactly as above, if $a$ is a Reeb chord of $\Lambda_{K}$, then let the quantum corrected moduli space of holomorphic disks with positive puncture at $a$ and boundary on $M_{K}$ that represent the homology class $k\cdot p+rt$ be denoted $\scM_{k\cdot p+rt}(M_{K};\sigma)(a)$. Then Lemma \ref{l:qcorrboundary} holds for such moduli spaces of dimension 1.

Unlike in the case of knots, for links there is also a classical part of the potential which we discuss next. If $\lk_{ij}=\lk(K_i,K_j)$ denotes the linking number between $K_i$ and $K_j$, then the homology class in $H_{1}(M_{K})$ represented by the longitude $x_j$ of $K_j$ is $\sum_{i\ne j}\lk_{ij}p_i$. Define the superpotential $U(p,Q)$ of $M_K$ as
\begin{equation}
U(p,Q)=\sum_{1\le i<j\le n}\lk_{ij}p_ip_j + \tilde U(p,Q).
\end{equation}

After adding this quadratic form (related to classical rather than quantum homology) the construction is parallel to that for the conormal.
Consider the specialization $(\hat{\mathcal{A}},\hat\partial)$  of the DGA $(\mathcal{A},\partial)$ to $\hat{\mathcal{A}}=\mathcal{A}|_{x_j=\frac{\partial U}{\partial p_j}}$ over (a completion of) $\C[e^{\pm p_1},\dots,e^{\pm p_n},Q^{\pm 1}]$, obtained by setting
\begin{equation}
x_j=\frac{\partial U}{\partial p_j},\quad j=1,\dots,n.
\end{equation}
Define the graded algebra map
\[
\epsilon_{M_K}\colon \hat{\mathcal{A}}\to\C[e^{\pm p_1},\dots,e^{\pm p_n},Q^{\pm 1}]
\]
as follows on generators $a$ with $|a|=0$:
\begin{equation}\label{eq:nonexactaug2}
\epsilon_{M_K}(a)=\sum_{k\in\Z^{\n},r\in\Z} |\scM_{k\cdot p+rt}(a;\sigma)| \,e^{k\cdot p}Q^{r}.
\end{equation}

\begin{theorem}\label{thm:shiftedM_K}
The map $\epsilon_{L_K}$ is a chain map, i.e.~$\epsilon_{L_K}\circ\hat\partial=0$.
\end{theorem}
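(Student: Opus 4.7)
The plan is to mirror the argument of Theorem~\ref{thm:shiftedL_K}, now adapted to the link-complement geometry. Since no map labelled $\epsilon_{L_K}$ has been introduced in the $M_K$ setting and the only candidate defined in the preceding paragraph is $\epsilon_{M_K}$ from \eqref{eq:nonexactaug2}, I read the theorem as asserting $\epsilon_{M_K}\circ\hat\partial=0$ on $\hat{\mathcal{A}}=\mathcal{A}|_{x_j=\partial U/\partial p_j}$, and prove it in that form; the chain map property for $\epsilon_{L_K}$ in its own (knot) setting is already Theorem~\ref{thm:shiftedL_K}, so there is nothing further to show there.

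First I fix a Reeb chord $b$ with $|b|=1$ and consider the compactified one-dimensional moduli space $\bigcup_{k,r}\scM_{k\cdot p+rt}(b;\sigma)$ of quantum corrected disks in $E$ with boundary on $M_K$ and positive puncture at $b$. By Lemma~\ref{l:qcorrboundary}, its boundary is made up of two-level configurations: a rigid holomorphic disk $u$ in the symplectization $\R\times U^\ast S^{3}$ with positive puncture at $b$ and negative punctures at some Reeb chords $a_1,\dots,a_m$ of grading zero, together with a quantum corrected disk in $E$ with positive puncture at each $a_i$ and, attached at interior boundary marked points of $u$ via the chosen obstruction chains, a collection of trees of closed rigid disks on $M_K$.

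Next I read off the coefficient of each such boundary configuration. The disks glued at the negative punctures contribute exactly $\prod_i \epsilon_{M_K}(a_i)$, since summing them over all classes is by definition \eqref{eq:nonexactaug2}. The trees attached at boundary marked points of $u$ factor, by the same tree counting argument used in Section~\ref{s:GWdef}, as $\exp\bigl(\sum_j l_j \tfrac{\partial \tilde U}{\partial p_j}\bigr)$, where $\sum_j l_j x_j + \sum_j s_j p_j + rt$ denotes the homology class of (the capped boundary of) $u$ in $H_2(E,\R\times\Lambda_K)$: the power $l_j$ arises from the intersection multiplicity of $u(\partial D)$ with the obstruction chain cylinders on the fixed meridional representatives $\eta_j$ at infinity, exactly as in the proof of Theorem~\ref{thm:shiftedL_K}. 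What is new here is the classical contribution: inside $M_K$ the longitudinal cycle $x_j$ is homologous to $\sum_{i\ne j}\lk_{ij}\eta_i$, so closing off $u(\partial D)$ in $M_K$ picks up an additional factor $\exp\bigl(\sum_j l_j \sum_{i\ne j}\lk_{ij}p_i\bigr)$. Adding the quantum and classical contributions and recognizing the sum as the gradient of $U=\sum_{i<j}\lk_{ij}p_ip_j+\tilde U$, the total weight attached to $u$ is $\exp\bigl(\sum_j l_j \tfrac{\partial U}{\partial p_j}\bigr)e^{\sum_j s_j p_j}Q^{r}$, which is precisely the image of the monomial $e^{\sum_j l_j x_j+\sum_j s_j p_j+rt}$ under the specialization $x_j\mapsto \partial U/\partial p_j$.

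Summing over $u$, $k$, and $r$, the boundary count of $\scM(b;\sigma)$ therefore equals $\epsilon_{M_K}(\hat\partial b)$; since the boundary count of a compact oriented one-manifold vanishes, the chain map property $\epsilon_{M_K}\circ\hat\partial=0$ follows. The main obstacle is the last coefficient identification: one must check that the combined contribution of (a) intersections of $u(\partial D)$ with meridional obstruction chains at infinity, (b) linking-number contributions from closing longitudinal boundary components through $M_K$, and (c) the tree sums produced by obstruction chains attached to boundary marked points of $u$ collectively reproduce the gradient $\partial U/\partial p_j$ of the full potential, and not merely $\partial\tilde U/\partial p_j$. This rests on the symmetry of the linking pairing together with the identity $[x_j]=\sum_{i\ne j}\lk_{ij}[p_i]$ in $H_1(M_K)$, and is the one place where the link case genuinely differs from the single-knot case of Theorem~\ref{thm:shiftedL_K}.
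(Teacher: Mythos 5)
Your reading of the statement (that the map in question is $\epsilon_{M_K}$ from \eqref{eq:nonexactaug2}, the label $\epsilon_{L_K}$ being a typo carried over from Theorem~\ref{thm:shiftedL_K}) is the intended one, and your argument is correct and essentially identical to the paper's: both analyze the boundary of the one-dimensional moduli space of quantum corrected disks and split the coefficient of a symplectization disk in class $m\cdot x+k\cdot p+rt$ into a classical part $\sum_i m_i\sum_{j\ne i}\lk_{ij}p_j$ coming from the homology relation $[x_j]=\sum_{i\ne j}\lk_{ij}[p_i]$ in $M_K$ and a quantum part $m\cdot\partial\tilde U/\partial p$ from intersections with the obstruction chains at infinity, which together reproduce $\partial U/\partial p$. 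Your write-up merely spells out in more detail the coefficient identification that the paper states compactly.
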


\begin{proof}
The proof is analogous to the proof of Theorem \ref{thm:shiftedL_K}: a disk in the symplectization with positive puncture at $b$ and negative punctures at $a_1,\dots,a_m$ which represents the homology class $m\cdot x +k\cdot p+rt$ should be counted with coefficient $e^{k\cdot p}e^{m\cdot\left(\frac{\partial U}{\partial x_1},\dots,\frac{\partial U}{\partial p_\n}\right)}Q^{r}$ in the boundary of $\scM(b)$, where the classical part
\[
\sum_{i=1}^{n} m_i\left(\sum_{j\ne i}\lk_{ij} p_j\right)
\]
corresponds to the homology class represented by the disk itself and
\[
m\cdot\left(\frac{\partial \tilde U}{\partial x_1},\dots,\frac{\partial \tilde U}{\partial p_\n}\right)
\]
corresponds to intersections with the obstruction chains at infinity. By definition, this is also the contribution to $\hat\partial$.
\end{proof}

\begin{remark}
It follows from Theorem \ref{thm:shiftedM_K} that the equations
\[
x_j=\frac{\partial U}{\partial p_j},\quad
j=1,\dots,\n,
\]
give a parametrization of an $\n$-dimensional branch of the augmentation
variety of the link $K$, in agreement with \eqref{eq:x_i=dU/dp_i}. In general this branch
involves mixed augmentations (where Reeb chords between different
components are augmented) and hence 
appears to correlate with $V_K(\n)$.
\end{remark}

\begin{remark}\label{rmk:Hopf}
As an example of the above, consider the Hopf link $K_1\cup K_2$. It follows from \eqref{eq:Hopfpotential} that the part of the augmentation variety corresponding to mixed augmentations is given by the equations
\[
x_1=p_2,\quad x_2=p_1.
\]
In terms of potentials this corresponds to $U(p_1,p_2)=p_1p_2$, which means that $\tilde U(p_1,p_2)=0$. This is related to the fact that the Hopf link admits a exact Lagrangian filling $M_\Hopf$ in $T^{\ast} S^{3}-S^3$, on which no nonconstant holomorphic disks can form. To see this, note that there exists a Bott function on $S^{3}$ with maxima along $K_1$ and minima along $K_2$ and no other critical points. Such a Bott function can be obtained by pulling back a Morse function on $S^{2}$ with exactly two critical points by the Hopf map $S^{3}\to S^{2}$. Then adding cones along these Bott manifolds, we get a front that gives an exact Lagrangian $M_\Hopf$ that does not intersect the $0$-section.
\end{remark}

\subsection{Embedded augmentation admissible Lagrangians}\label{ssec:constrlag}
In this section we consider basic constructions of embedded Lagrangian fillings in $T^{\ast}S^{3}-S^3$. Together with the conjecture on intersections of branches of the augmentation varieties corresponding to partitions of links, we get a (correspondingly conjectural) geometric source of the augmentations corresponding to the trivial partition of any link.

Consider first the case of the conormal filling.   First we move $L_K$ off of the $0$-section using shifts by the angular form $d\theta$ supported in a tubular neighborhood of the knot.
\begin{lemma}\label{lma:L_Kadmiss}
For non-planar links, the shifted version of $L_K$ is augmentation admissible.
\end{lemma}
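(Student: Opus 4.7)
The plan is to verify, in turn, the three conditions defining an augmentation admissible filling from Section~\ref{ssec:nonexact}: vanishing Maslov class (1), Gromov compactness for bounded-area disks (2), and agreement at infinity with a shift of $L_K$ by a bounded closed $1$-form (3).

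Condition (3) holds by construction. The angular form $d\theta$ is a closed $1$-form on the tubular neighborhood $U$ that is uniformly bounded on any region staying at fixed positive distance from $K$. Outside the support of the shift, the shifted Lagrangian coincides with $L_K$, and in the region where the shift is active it equals $S_\eta(L_K)$ with $\eta=d\theta$, which is bounded and closed there. Condition (1) follows because $L_K$ has vanishing Maslov class in the Calabi--Yau ambient space and shifting by a closed $1$-form is a Lagrangian isotopy, which preserves the Maslov class.

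The substantive content is condition (2), which I would establish by contradiction using SFT compactness~\cite{BEHWZ}. Suppose, against the conclusion, that for some $A>0$ there is a sequence $u_n \colon D \to E$ of $J$-holomorphic disks with boundary on the shifted $L_K$ and area bounded by $A$, together with points $z_n \in D$ with $u_n(z_n)$ escaping every compact subset of $E$. After extracting a subsequence, $u_n$ converges to a multi-level holomorphic building whose higher levels live in the symplectization $\R\times U^{\ast}S^{3}$ with boundary on $\R\times\Lambda_K$ (or its shift by a small bounded form, which matters only asymptotically). Since the $u_n$ are closed disks without asymptotic punctures, no free positive punctures emerge at the top level of the limit building; any nonconstant top-level disk would then have only negative punctures. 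Applying Stokes' theorem to the Liouville primitive $e^{t}\alpha$, together with the Legendrian condition $\alpha|_{\Lambda_K}=0$, forces the area of such a disk to be non-positive, and hence zero by positivity of area for $J$-holomorphic maps, so the disk is constant. Iterating the same argument downward through the building, no genuine mass can sit in the symplectization, contradicting the escape hypothesis. Hence all disks of area at most $A$ lie in some compact $C_A$.

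The role of the non-planarity assumption is geometric: it ensures that $S_\eta(L_K)$ is a genuinely embedded Lagrangian in $T^{\ast}S^{3}-S^{3}$, asymptotically cylindrical over $\Lambda_K$ with the expected topology $T^{2}\times\R$ (one component per link component). For a link lying on an embedded $S^{2}\subset S^{3}$ the two natural normal directions to $K$ (along and across the spanning $S^{2}$) interact in a degenerate way, and two sheets of $L_K$ coming from opposite sides of $S^{2}$ can collide along the pushoff, spoiling embeddedness. The main obstacle to making this a full proof is specifying a concrete bounded closed form $\eta$ on $U$ with the correct meridional period and verifying carefully, via a local coordinate analysis near $K$, that $S_\eta(L_K)$ is embedded with the desired asymptotic structure; once embeddedness and the correct asymptotics are in place, conditions (1) and (3) are immediate and the compactness argument above delivers (2).
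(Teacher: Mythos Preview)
Your treatment of conditions (1) and (3) is fine and matches the paper's implicit handling. For condition (2) you take a genuinely different route from the paper. The paper argues by monotonicity: a bounded-area disk that enters both the region $|p|\le R/2$ and the region $|p|\ge R$ must cross the intervening annulus, and monotonicity forces area $\geq CR^{2}$; hence any such disk lies entirely outside $E_{R/2}$. There the area equals a constant times the longitudinal winding number of the boundary (by Stokes, since the primitive $p\,dq$ restricted to the shifted $L_{K}$ is the shift form), so the winding number is bounded. A second monotonicity step then uses that, for a non-planar knot, any longitudinal curve on the conormal at radius $\sim R$ has length $\geq \delta R$ (the normal plane genuinely rotates along $K$, so no section of the unit conormal bundle is parallel), again forcing area $\gtrsim R^{2}$, a contradiction.

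Your SFT-compactness approach is sound in outline and in fact does not seem to use non-planarity at all. One caveat: the shifted $L_{K}$ is not literally cylindrical, so \cite{BEHWZ} does not apply verbatim. The extension you need is exactly what the paper provides in Lemma~\ref{lma:Gromov-Hofer} via rescaling, and the proof there uses only condition (3), so there is no circularity---but you should say this rather than invoke \cite{BEHWZ} directly.

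Where your proposal goes wrong is the role of non-planarity. You attribute it to embeddedness of the pushoff, suggesting sheets of $L_{K}$ might collide for a planar link. This is not correct: the shift $S_{\eta}(q,p)=(q,p+\eta(q))$ is a fiberwise translation and hence a global diffeomorphism of the cotangent bundle, so embeddedness of $S_{\eta}(L_{K})$ is automatic regardless of planarity. In the paper's argument non-planarity enters only at the second monotonicity step above; for a planar knot one can take the constant normal to the plane, obtaining a longitudinal curve on the conormal at radius $R$ of length independent of $R$, and that step fails. So non-planarity is a convenience of the elementary monotonicity proof, and your SFT route, once the asymptotics are handled carefully, actually sidesteps the hypothesis.
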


\begin{proof}
We must check the second property for augmentation admissibility. Let $R>0$ be large and write $E_R$ for the complement in $T^{\ast}S^{3}$ of a radius $R$ disk bundle.
We argue using monotonicity: if a disk does not lie entirely outside $E_{\frac{R}{2}}$ but still leaves $E_R$ then its area is bounded below by $C R^{2}$ for some constant $C>0$, which is a contradiction for $R$ sufficiently large. Thus the entire disk must lie outside $E_{\frac{R}{2}}$. However, in this region the symplectic form is exact and the area is given by some constant times the number of times its boundary goes around $S^{1}$. Assuming that the link does not lie in a plane, the length of such a curve is bounded below by $\delta R$ for some $\delta>0$, and monotonicity for holomorphic disks again gives a lower area bound of the form $C\delta^{2}R^{2}$, which gives a contradiction.
\end{proof}

We next consider moving link complements off of the $0$-section.

\begin{lemma}\label{lma:fiberedadmiss}
If a link $K$ is fibered then $\Lambda_{K}$ admits an augmentation admissible Lagrangian filling $M_K\approx S^{3}-K$ in $T^{\ast}S^{3}-S^3$.
\end{lemma}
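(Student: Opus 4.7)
The plan is to use the fibered structure of $K$ to produce a closed nowhere-vanishing $1$-form on $S^{3}\setminus K$ that near each component $K_{i}$ of $K$ agrees with the meridional angle $1$-form, and then build $M_K$ out of this data by gluing a conormal-type cylindrical end onto a graph-like interior piece. Concretely, since $K$ is fibered there is a locally trivial fibration $\phi\colon S^{3}\setminus K\to S^{1}$ whose fibers are interiors of Seifert surfaces. After isotoping $\phi$ near the binding I may assume that in polar coordinates $(r_{i},\theta_{i},z_{i})$ on a tubular neighborhood of each $K_{i}$ one has $\phi=\theta_{i}$, so that $d\phi=d\theta_{i}$, which is exactly the meridional direction in the conormal plane to $K_{i}$. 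I would then carry out a Lagrangian surgery in $T^{\ast}S^{3}$ between the graph of a positive multiple $c\,d\phi$ (away from $K$) and a rescaled copy of the conormal cone $L_K$ (near $K$): the matching at the seam is possible precisely because $d\phi$ and the conormal direction $d\theta_{i}$ coincide there. The resulting submanifold $M_K$ is then made to coincide exactly with a shift of $L_K$ outside a compact set by prolonging the conormal piece as a standard cylindrical end modeled on $[0,\infty)\times\Lambda_K$ in the symplectization of $U^{\ast}S^{3}$.

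I would then verify that $M_K$ has the correct properties. The diffeomorphism $M_K\approx S^{3}\setminus K$ follows by tracking the topology through the surgery: the graph of a closed nowhere-zero $1$-form is diffeomorphic to its base via the cotangent projection, and the cylindrical collar $[0,\infty)\times\Lambda_K$ that is attached does not change the diffeomorphism type of the underlying open $3$-manifold (it simply re-parametrizes the cusps where we approach $K$ in $S^{3}\setminus K$). The fact that $M_K\cap S^{3}=\emptyset$ is immediate since $d\phi$ vanishes nowhere (as $\phi$ is a submersion) and the cylindrical end sits at strictly positive radial distance from the zero section. The Maslov class of $M_K$ vanishes because, in the Calabi-Yau manifold $T^{\ast}S^{3}$, $M_K$ is built by a local Lagrangian surgery from pieces of zero Maslov class: the graph of a closed $1$-form is Lagrangian isotopic to the zero section in $T^{\ast}(S^{3}\setminus K)$ (by scaling $c\,d\phi$), and the conormal cone $L_K$ has trivial Maslov class. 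The fibered hypothesis enters here too, via the orientability of the monodromy, to ensure no anomaly arises when the two pieces are assembled globally.

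Finally, I would establish the compactness property that makes $M_K$ augmentation admissible in the sense of Lemma~\ref{lma:Gromov-Hofer}. Outside a compact subset $C\subset T^{\ast}S^{3}$, $M_K$ agrees with a shifted copy of $L_K$, and for the $J$-holomorphic disks with boundary on $M_K$ that try to leave a larger compact set $E_R$, I would run the monotonicity argument of Lemma~\ref{lma:L_Kadmiss} verbatim: since the symplectic form in this cylindrical region is $d(e^{t}\alpha)$ and a disk escaping to radius $R$ must wrap around a cycle in $\Lambda_K$ whose length grows linearly in $R$, monotonicity yields an area lower bound of order $R^{2}$. This forces any disk of energy at most $A$ to lie in a compact set $C_A$ depending only on $A$, which is exactly the second bullet in the definition of augmentation admissibility.

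The main obstacle is the first step: carrying out the Lagrangian surgery between the interior graph-like piece and the cylindrical conormal end in such a way that the result is a smooth \emph{embedded} Lagrangian of the correct diffeomorphism type, globally over all of $S^{3}\setminus K$. Locally the surgery is the standard Lagrangian handle attachment, but globally one must check that the fibration $\phi$ can be chosen so that the $d\phi$-model and the conormal model match up consistently around the entire binding $K$; this is precisely where the fibered hypothesis is crucial (for non-fibered $K$ no such globally defined nowhere-vanishing closed $1$-form with the required asymptotics exists, and one needs the more elaborate anti-brane construction of Section~\ref{ssec:antibrane}).
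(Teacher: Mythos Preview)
Your approach is essentially the same as the paper's: both use the fibered hypothesis to obtain a nowhere-vanishing closed $1$-form on $S^{3}\setminus K$ that equals the meridional form $d\theta$ near $K$, build $M_K$ by shifting/graphing this form and attaching the conormal end, and then invoke the monotonicity argument of Lemma~\ref{lma:L_Kadmiss} for the disk-confinement condition. The one point you flag as the ``main obstacle'' --- that the glued Lagrangian stays embedded and off the zero section in the surgery region --- is exactly what the paper resolves with a short explicit computation: in fiber-disk coordinates $(r,\xi)\in(-\delta,\delta)\times S^{1}$ the exact $M_K$ has cotangent component $\alpha(r)\xi$ with $\alpha(r)>0$, and after the shift by $c\,d\theta$ this becomes $\alpha(r)\xi + cR_{\pi/2}\xi$, which is manifestly nonzero since the two summands are orthogonal, so no intersection with $S^{3}$ is introduced.
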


\begin{proof}
Since $K$ is fibered, there exists a non-vanishing $1$-form on $S^{3}-K$ that can be taken to agree with the standard circular form $d\theta$ of the meridian circles  in a small neighborhood of a knot. Consider a small Bott function with maximum along the link, and add in the conormal along the Bott maximum as in Figure \ref{fig:complementfront}. Then the exact Lagrangian is parametrized as follows over the fiber disks:
\[
x(r,\xi) = r\xi, \qquad y(r,\xi) = \alpha(r)\xi,
\]
where $(r,\xi)\in (-\delta,\delta)\times S^{1}$ and $\alpha(r)>0$.
Consequently, the Lagrangian shifted by $c\cdot d\theta$ is given by
\[
x(r,\xi) = r\xi, \qquad y(r,\xi) = \alpha(r)\xi+cR_{\frac{\pi}{2}}\xi,
\]
where $R_{\frac{\pi}{2}}$ denotes rotation by $\frac{\pi}{2}$. Thus adding
the conormal along the Bott maximum does not introduce any intersections with the $0$-section. Finally, the argument confining closed disks is a repetition of the proof of Lemma \ref{lma:L_Kadmiss}.
\end{proof}

If a link $K$ is not fibered then we can move off the complement $M_{K'}$ of a link $K'=K\cup K_0$ where $K_0$ is a braid axis for $K$. More precisely, fix a braid presentation of $K$ and let $K_0$ be a braid axis.
\begin{lemma}\label{lma:shiftbraidaxis}
With $K'$ as above, $\Lambda_{K'}$ admits an augmentation admissible Lagrangian filling $M_{K'}\approx S^{3}-K'$ in $T^{\ast}S^{3}-S^3$.
\end{lemma}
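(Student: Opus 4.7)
The plan is to reduce to the fibered case treated in Lemma \ref{lma:fiberedadmiss}. The essential observation is that while $K$ itself need not be fibered, the link $K'=K\cup K_{0}$ is always fibered once $K_{0}$ is a braid axis of $K$: the standard open book decomposition of $S^{3}$ with binding $K_{0}$ restricts, after removing the braid closure $K$, to a locally trivial fibration
\[
\pi\colon S^{3}- K'\longrightarrow S^{1}
\]
whose fiber is an $m$-times punctured disk, where $m$ is the braid index of the given presentation of $K$. In particular, $d\pi$ is a nowhere vanishing closed $1$-form on $S^{3}- K'$, which near $K_{0}$ is already the standard meridional angular form and which near each braid strand of $K$ has a nonzero meridional component since such a strand is transverse to every page of the open book.

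First I would fix a Bott--Morse function on $S^{3}$ with Bott maxima exactly along each component of $K'$ and no other critical points of index greater than $1$, and construct from it an exact Lagrangian version of the complement $M_{K'}\subset T^{\ast}S^{3}$ exactly as in the proof of Lemma \ref{lma:fiberedadmiss}: delete a neighborhood of each Bott maximum and glue in the conormal $L_{K_{j}}$ along the resulting boundary. Near each component of $K'$ the Lagrangian is then parametrized over the fiber disks in the standard cusp form $(x(r,\xi),y(r,\xi)) = (r\xi,\alpha(r)\xi)$, as in the proof of Lemma \ref{lma:fiberedadmiss}.

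Second, I would push $M_{K'}$ off of the zero section by the shift induced by a closed $1$-form $\eta$ on $S^{3}- K'$ that agrees with $c\,d\pi$ outside a neighborhood of $K'$ and that restricts, with small positive coefficient, to the meridional angular form $d\theta_{j}$ on a small tubular neighborhood of each component $K_{j}\subset K'$. Such an $\eta$ exists because $d\pi$ is already meridional near $K_{0}$, and near each braid strand the meridional cohomology class of $d\pi$ is nonzero (it equals the linking number of the strand with $K_{0}$, which is $1$), so $d\pi$ can be homotoped through closed forms to have the desired local form without acquiring zeros. Exactly as in the proof of Lemma \ref{lma:fiberedadmiss}, the resulting shift rotates the cusp parametrization by a small angle and therefore keeps the Lagrangian disjoint from the zero section. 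Finally, the area-confinement condition follows from the same monotonicity argument as in Lemma \ref{lma:L_Kadmiss}: outside a sufficiently large disk bundle the symplectic form is exact, and the boundary of any nonconstant holomorphic disk must wind nontrivially around a component of $\Lambda_{K'}$, which forces a quadratic lower bound on the area in terms of the disk bundle radius.

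The main obstacle I expect is the second step: producing a single global closed $1$-form $\eta$ that simultaneously agrees with positive multiples of every meridional angular form, while remaining nowhere vanishing on $S^{3}- K'$ and inducing a shift that preserves embeddedness near every component. The cohomological content is mild --- the linking numbers with $K_{0}$ are all nonzero --- but one must be careful that the homotopy from $d\pi$ to the desired local model can be carried out inside the space of nowhere vanishing closed $1$-forms on $S^{3}- K'$, which is where all essential analytic input of the lemma is concentrated.
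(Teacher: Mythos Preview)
Your strategy of exploiting the fibration of $S^{3}- K'$ coming from the open book with binding $K_{0}$ is exactly the paper's approach: the form the paper calls $d\theta$ (the angular form along the solid torus $N$ complementary to $K_{0}$) is your $d\pi$. Your monotonicity argument for the confinement of holomorphic disks is also fine. But your reduction to Lemma \ref{lma:fiberedadmiss} has a genuine error at the braid strands.

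You assert that near a component $K_{j}$ of $K$ the form $d\pi$ has nonzero \emph{meridional} cohomology class, equal to $\lk(K_{j},K_{0})$. This is backwards. The meridian $\mu_{j}$ bounds a small normal disk to $K_{j}$ lying entirely in $S^{3}-K_{0}$, so $\int_{\mu_{j}} d\pi = 0$; it is the \emph{longitudinal} period $\int_{\lambda_{j}} d\pi$ that equals $\lk(K_{j},K_{0})$. (Transversality of $K_{j}$ to the pages says exactly that $d\pi$ is nonzero on the tangent direction of $K_{j}$, again a longitudinal statement.) Since the restriction of $d\pi$ to a deleted tubular neighborhood of $K_{j}$ has zero meridional class, the homotopy you propose through nowhere-vanishing closed $1$-forms to a meridional model does not exist: the two local models lie in different cohomology classes.

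The paper does not attempt this reduction. Instead it keeps the shift longitudinal near the $K_{j}$ and argues directly: in the cusp parametrization $(x,y)=(r\xi,\alpha(r)\xi)$ over the normal $\xi$-disk, the shift by $c\,d\pi$ adds a covector $c\,e_{t}$ in the knot direction, which is orthogonal to the $\xi$-plane. Hence the shifted fiber coordinate is $\alpha(r)\xi + c\,e_{t}$, with norm squared $\alpha(r)^{2}+c^{2}>0$, so no intersection with the zero section is introduced. Near $K_{0}$ your observation that $d\pi$ is meridional is correct and the cusp analysis of Lemma \ref{lma:fiberedadmiss} applies there verbatim. So the fix is not to modify $d\pi$, but to run a different local check at the braid strands.
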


\begin{proof}
There is a non-vanishing $1$-form $d\theta$ along a tubular neighborhood of an unknot dual to the braid axis, i.e.~along the tubular neighborhood $N$ of an unknot $U$ such that the link lies in $N$ and projects with everywhere nonzero derivative to $U$. Add a small Bott maximum along the link; a similar analysis as in the proof of Lemma \ref{lma:fiberedadmiss}, where this time the shift is in the direction of the knot which is perpendicular to the $\xi$-plane, shows that there are no intersections introduced by adjoining the conormal.
\end{proof}

We next combine Lemma \ref{lma:shiftbraidaxis} with Conjecture \ref{cnj:codim1} on codimension-one intersections of augmentation varieties (see also the discussion in Section \ref{sec:Dmirror}) to get a description of the part of the augmentation variety of the original link $K$ that corresponds to the trivial partition in terms of that of $K'=K\cup K_0$. For convenience we express our result in terms of potentials. Let $W_{\bigcirc}(x)$ and $U_\bigcirc(p)$ denote the potentials of the unknot. Setting the augmentation polynomial of the unknot (see \eqref{eq:V_O}) to zero determines $e^{p}$ uniquely in terms of $e^{x}$, or symmetrically $e^{x}$ in terms of $e^{p}$. We thus find that $p=\frac{\partial W_\bigcirc}{\partial x}(x)$ implies $x=\frac{\partial W_\bigcirc}{\partial x}(p)=\frac{\partial U_\bigcirc}{\partial p}(p)$ and that the potentials are equivalent.

Let $K$ be a non-split link with $\n$ components $K=K_1\cup\dots\cup K_\n$. As in Lemma \ref{lma:shiftbraidaxis}, let $K'=K\cup K_0$, where $K_0$ is a braid axis for $K$. Then also $K'$ is non-split and for an appropriate choice of orientation of $K_0$, $\lk_{0j}=\ell_j>0$ for all $j>0$. Write $x=(x_1,\dots,x_n)$, $p=(p_1,\dots,p_n)$ and $\ell=(\ell_1,\dots,\ell_n)$. Let $U_{K'}(p_0,p)$ denote the superpotential determined by $M_{K'}$,
\[
U_{K'}(p_0,p)=p_0(\ell\cdot p)+\sum_{i<j}\lk_{ij}p_ip_j+\tilde U_{K'}(p_0,p).
\]
As $M_{K'}$ is connected, this gives rise to the following local parametrization of $V_{K'}(n+1)$:
\[
x_0=\frac{\partial U_{K'}}{\partial p_0},\qquad
x = \frac{\partial U_{K'}}{\partial p}.
\]
Consider now the intersection $V(n+1)\cap (V(1)\times V(n))$. Assuming 
Conjecture~\ref{cnj:codim1}, this intersection has dimension $n$. On the other hand $K_0$ is an unknot, we know the first factor $V(1)$ explicitly, and equating the $x_0$-coordinates of a point in the intersection gives
\[
\frac{\partial U_{K'}}{\partial p_0}(p_0,p)=\frac{\partial U_\bigcirc}{\partial p}(p_0)
\]
or equivalently
\[
\frac{\partial W_\bigcirc}{\partial x}\left(\frac{\partial U_{K'}}{\partial p_0}(p_0,p)\right)=p_0.
\]
Solving for $p_0$ we get solutions $p_0=g(p)$, say. Letting $U_K(p)$ be a potential for $V_{K}(n)$, we find, by equating the $x$-coordinates of a point in the intersection, that
\[
\frac{\partial U_{K'}}{\partial p}(g(p),p) = \frac{\partial U_{K}}{\partial p}(p).
\]
Now, our assumption on the dimension of the intersection implies that this equation must be trivially satisfied for all $p$ since otherwise the intersection would have dimension $<n$. This thus gives the following formula for the potential of $K$ corresponding to $V_{K}(n)$:
\begin{equation}\label{eq:U_K}
U_{K}(p) = U_{K'}(g(p),p) -U_\bigcirc(g(p)),
\end{equation}
which is an example of ``Lagrangian reduction'' (see Section \ref{ssec:augvarcomps}), rephrased in the language of potentials.

\subsection{Immersed Lagrangian fillings}\label{ssec:immersed}
In this section we first discuss immersed Lagrangians, and then 
indicate a geometric counterpart of intersecting augmentation varieties that physically corresponds to adding an anti-brane along a Lagrangian. This conjecturally yields a connected geometric Lagrangian which is a source for the augmentations in the branch of the augmentation variety corresponding to the trivial partition of any link, and indicates a geometric origin for Conjecture \ref{cnj:codim1}.

We use the notion $E$ for the ambient symplectic manifold as in Section \ref{ssec:nonexact}. Consider a generically immersed Lagrangian submanifold $F_K$ which satisfies the Maslov class condition and the condition at infinity of an augmentation admissible Lagrangian. Then $F_K$ has a finite number of transverse double points in a compact part of $E$. Let $q$ be a double point of $E$; then $q$ is the transverse intersection of two local branches of $F_K$. We write $q^{\pm}$ for $q$ considered as an ordered intersection of the two branches. One can associate a Maslov type index $|q^{\pm}|$ to $q^{\pm}$, see \cite{EES}, such that the dimension of the moduli space $\scM(q^{\pm})$ of holomorphic disks with boundary on $F_K$ and one puncture where the disk is asymptotic to $q^{\pm}$, with the boundary orientation inducing the order of the sheets corresponding to the decoration, equals
\[
\dim(\scM(q^{\pm}))=|q^{\pm}|-1,
\]
and where $|q^{-}|=(\dim(E)-2)-|q^{+}|=1-|q^{+}|$, see Figure \ref{fig:lagpuncture}.

\begin{figure}[htp]
\centering
\includegraphics[width=.5\linewidth, angle=90]{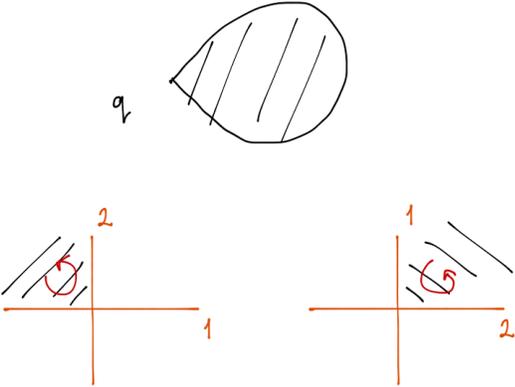}
\caption{Holomorphic disks with one puncture at an intersection point and the induced ordering of the local sheets.}
\label{fig:lagpuncture}
\end{figure}

Under additional conditions, the curve counting formula corresponding to \eqref{eq:nonexactaug1} and \eqref{eq:nonexactaug2} defines an augmentation also for immersed $F_K$. In order to understand the additional condition, we consider the boundary of a $1$-dimensional moduli space $\scM(b;\sigma)$ of quantum corrected holomorphic disks with boundary on $F_K$ and a positive puncture at a Reeb chord $b$. Here a new boundary phenomenon appears: there are two-component broken disks where one component is a disk in $\scM(b,q^{\mp};\sigma)$ with two punctures, a positive puncture at $b$ and another puncture at $q^{\mp}$ for some double point $q$, and the other is a disk in $\scM(q^{\pm};\sigma)$, see Figure \ref{fig:lagbreak}.
\begin{figure}
\centering
\includegraphics[width=.5\linewidth, angle=90]{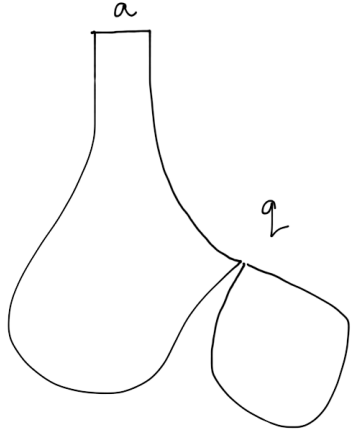}
\caption{In the presence of double points, disks with one puncture at the double point can break off, leading to new codimension-one boundary phenomena that can destroy the chain map property.}
\label{fig:lagbreak}
\end{figure}
We have
\[
\dim(\scM(b,\sigma))=\dim(\scM(b,q^{\mp}))+\dim(\scM(q^{\pm}))+1,
\]
which by genericity equals $1$ only if $\dim(\scM(b,q^{\mp}))=\dim(\scM(q^{\pm}))=0$. Conversely, any such configuration can be glued uniquely to a $1$-dimensional quantum corrected disk. In particular we see that if for every double point $q$ with $|q^{+}|=1$, the count of elements in the moduli space of quantum corrected disks $\scM(q^{+};\sigma)$ equals zero, then there is no contribution to the boundary from this type of breaking and consequently the chain map equation holds.

We call an immersed Lagrangian $F_K$ as above, which can be equipped with an obstruction chain $\sigma$ so that it has this property with respect to rigid once-punctured disks, \emph{unobstructed}. In these terms the above arguments can be summarized in the following way.

\begin{lemma}\label{lma:unobstr}
Unobstructed Lagrangian fillings define augmentations.
\end{lemma}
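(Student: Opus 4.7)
The plan is to mimic the proofs of Theorems \ref{thm:shiftedL_K} and \ref{thm:shiftedM_K}, with the extra care that the boundary analysis of the $1$-dimensional moduli spaces of quantum corrected disks must now incorporate the new codimension-one strata coming from once-punctured disks at double points. Concretely, given an unobstructed filling $F_K$ with obstruction chain $\sigma$, I would first define a GW-type potential $W_{F_K}$ exactly as in Section \ref{s:GWdef}, by counting weighted trees of rigid closed disks on $F_K$; the fact that $F_K$ may be immersed is not an issue at the level of closed disks provided none of them pass through a double point, which is generic. Then I would define, for each degree-zero Reeb chord $a$, the count
\[
\epsilon_{F_K}(a)=\sum_{\tau}|\scM_{\tau}(a;\sigma)|\,e^{\tau},
\]
summing over relative homology classes $\tau\in H_2(E,F_K)$, and pass to the specialization $(\hat{\mathcal{A}},\hat\partial)$ of the DGA obtained by setting the longitude/meridian variables equal to the appropriate partial derivatives of $W_{F_K}$ (plus the classical linking quadratic form in the link complement case).

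The main step is to verify $\epsilon_{F_K}\circ\hat\partial=0$ by identifying the terms on the left-hand side with the signed count of boundary points of the compact oriented $1$-manifolds $\scM(b;\sigma)$ with $|b|=1$. By SFT compactness in the form of Lemma \ref{lma:Gromov-Hofer}, together with the tree-bubbling analysis behind Lemma \ref{l:qcorrboundary}, the boundary of $\scM(b;\sigma)$ decomposes a priori into three types of configurations: (i) two-level quantum corrected buildings with a top-level rigid disk in the symplectization and a bottom-level rigid quantum corrected disk in $E$, which account precisely for $\epsilon_{F_K}(\hat\partial(b))$ after repeating the algebraic computation in the proof of Theorem \ref{thm:shiftedL_K} (the specialization at $p=\partial W/\partial x$, resp.\ $x=\partial U/\partial p$, absorbs the weighted tree contributions at each negative puncture and at each marked point on the top-level disk boundary); (ii) standard nodal disk bubbles on $F_K$, which are already killed by the obstruction chains $\sigma$ exactly as in Figure \ref{fig:killboundary}; and (iii) the new strata specific to the immersed setting, namely broken configurations where a once-punctured disk at a double point $q$ pinches off, as in Figure \ref{fig:lagbreak}.

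The crux is handling type (iii). By the dimension formula $\dim\scM(b,q^{\mp};\sigma)+\dim\scM(q^{\pm};\sigma)+1=1$, such a configuration contributes only when both factors are rigid, and the gluing theorem gives a one-to-one correspondence between these broken configurations and actual boundary points of $\scM(b;\sigma)$. The signed count of such boundary points factors as a product
\[
\sum_{q}\;|\scM(b,q^{\mp};\sigma)|\cdot|\scM(q^{\pm};\sigma)|,
\]
and the unobstructedness hypothesis, which demands that $|\scM(q^{+};\sigma)|=0$ for every double point $q$ with $|q^{+}|=1$, forces each such product to vanish (the two orientations $q^{+}$ and $q^{-}$ both appear, and for the opposite ordering the factor $|\scM(q^{-};\sigma)|$ is zero because $|q^{-}|=0$ means $\scM(q^{-};\sigma)$ has negative virtual dimension and is empty by genericity). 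Hence only the type (i) contributions survive, yielding $\epsilon_{F_K}(\hat\partial(b))=0$.

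The hard part is not the combinatorial bookkeeping but the transversality and gluing at the new boundary strata: one must ensure that the perturbation scheme used to define $\scM(F_K)$ is compatible with the ones used for $\scM(a;\sigma)$ and for disks with punctures at double points, so that gluing produces a genuine smooth $1$-manifold with boundary and the sign conventions match. This is standard in the SFT/obstruction-chain framework of \cite{FO3} and \cite{EES}, but as the authors themselves warn, a fully detailed verification would require substantial further work; the proposal above is the structural argument.
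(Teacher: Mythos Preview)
Your proposal is correct and follows essentially the same approach as the paper: the lemma is stated there as a summary of the preceding discussion, and you have reconstructed that discussion in appropriate detail, including the three-way decomposition of the boundary of $\scM(b;\sigma)$ and the observation that the type (iii) strata vanish under the unobstructedness hypothesis. Your treatment of the opposite ordering $q^{-}$ via negative virtual dimension is a correct elaboration of a point the paper leaves implicit.
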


\subsection{Geometric Lagrangian fillings and anti-branes}\label{ssec:antibrane}
We next construct a connected Lagrangian filling of any link (or more precisely, of the split union of the link with an unknot). Let $K$ be a link and consider the auxiliary link $K'=K_0\cup K$, where $K_0$ is a braid axis for $K$ as in Lemma \ref{lma:shiftbraidaxis}. We use notation as there. Let $M_{K'}\subset E$ denote the Lagrangian complement of $K'$ shifted off the $0$-section. Recall that the braid axis is an unknot which links all components of $K$ positively. Consider a Hopf link $K''=K_{0}\cup K_{-1}$, where $K_{-1}$ is a meridian circle of the unknot $K_{0}$ and hence in particular $K_{-1}$ is itself an unknot. Let $M_{K''}$ denote the complement of the Hopf link $K''$, see Remark \ref{rmk:Hopf}. Noting that $p_0$ and $\ell\cdot p$ determine the shift of the conormal torus of $K_{0}$ in the direction of the meridian and the longitude, respectively, in $M_{K'}$, we find that for $p_{-1}=\ell\cdot  p$, the shifts of the conormal tori in $M_{K'}$ and $M_{K''}$ agree.

The next step of our construction is to glue $M_{K'}$ to $M_{K''}$ along $\Lambda_{K_0}$. From a physical perspective this corresponds to adding an anti-brane along $M_{K''}$. In Figure \ref{fig:antibrane}, the gluing operation is described in terms of fronts.
\begin{figure}[tp]
\centering
\includegraphics[width=.8\linewidth]{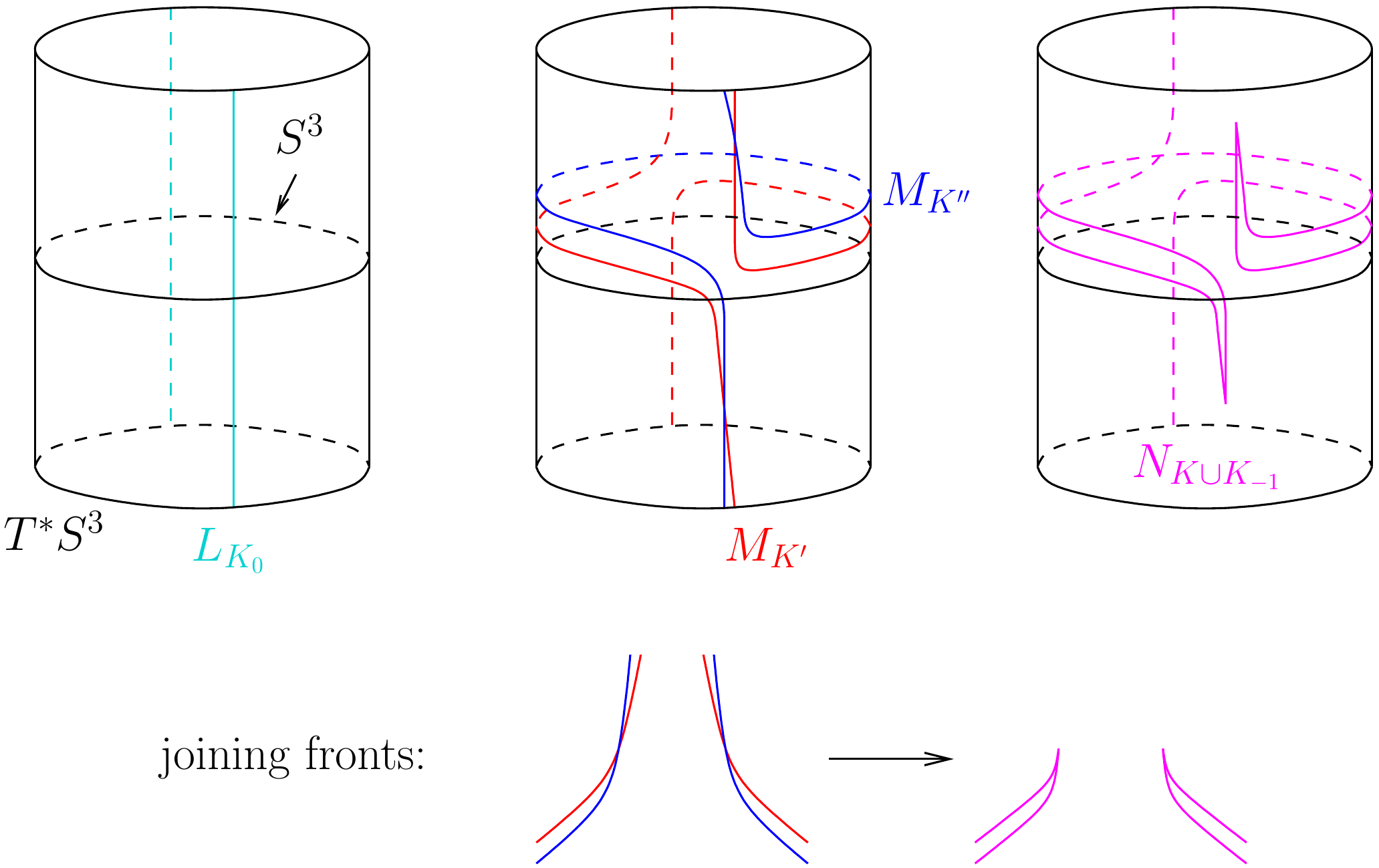}
\caption{Gluing the Lagrangian complements $M_{K'}$ and $M_{K''}$ along $\Lambda_{K_0}$, where $K'=K_0 \cup K$ and $K''=K_0\cup K_{-1}$, to obtain a new Lagrangian $N_{K\cup K_{-1}}$.}
\label{fig:antibrane}
\end{figure}
The result of this gluing is an immersed Lagrangian filling  $N_{K\cup K_{-1}}$. Since the  complement of the unknot is diffeomorphic to $S^{1}\times\R^{2}$, we find  that $N_{K\cup K_{-1}}$ has the topology of the complement of a link $K\cup K_{-1}$ in $S^{1}\times S^{2}$. Here $S^{2}$ is the union of two hemispheres $S^{2}=H_{-}\cup H_{+}$, $K$ is a braid in $S^{1}\times H_{-}$, and $K_{-1}$ is the trivial one-strand braid $S^{1}\times h_{+}\subset S^{1}\times H_{+}$ for $h_{+}$ a point in the interior of $H_{+}$. Viewed as a link in $S^{3}$, $K\cup K_{-1}$ is a split link formed by adding a distant unknot component $K_{-1}=\bigcirc$ to the link $K$.

Consider the GW-potential $\tilde U_{K\cup K_{-1}}$ of $N_{K\cup K_{-1}}$. Holomorphic disks with boundary on $N_{K\cup K_{-1}}$ (i.e.~closed disks without punctures) lie on either $M_{K'}$ or $M_{K''}$. However, $M_{K''}$ is the complement of a Hopf link which supports algebraically zero disks, see Remark \ref{rmk:Hopf}. Noting that $x_{-1},p_1,\dots,p_n$ generate $H_{1}(N_{K\cup K_{-1}})$ and that $p_0$ is homologous to $x_{-1}$, we then find that
\[
\tilde U_{K\cup K_{-1}}(x_{-1},p) = \tilde U_{K'}(x_{-1},p).
\]
Furthermore, since $p_{-1}$ is homologous to $\ell\cdot p$ in $N_{K\cup K_{-1}}$, we get the corresponding superpotential
\[
U_{K\cup K_{-1}}(x_{-1},p)=U_{K'}(x_{-1},p).
\]

We consider finding conditions that guarantee that the immersed Lagrangian filling $N_{K\cup K_{-1}}$ of $\Lambda_{K\cup K_{-1}}$ is unobstructed. Assuming that is the case, we find that
\[
p_{-1}=\frac{\partial U_{K'}}{\partial x_{-1}},\quad x=\frac{\partial U_{K'}}{\partial p}
\]
is a parametrization of $V_{K\cup K_{-1}}$. On the other hand, since $K\cup K_{-1}$ is split, we know that $V_{K\cup K_{-1}}=V_{K}\times V_{K_{-1}}$, and since $K_{-1}$ is an unknot, we conclude that
\begin{equation}\label{eq:intersO}
\frac{\partial U_{K'}}{\partial x_{-1}}=\frac{\partial W_{\bigcirc}}{\partial x}(x_{-1})
\end{equation}
or equivalently
\[
\frac{\partial W_\bigcirc}{\partial x}\left(\frac{\partial U_{K'}}{\partial x_{-1}}(x_{-1},p)\right)=x_{-1}.
\]
Solving this equation, we get $x_{-1}=g(p)$, and in complete agreement with \eqref{eq:U_K}, we find that
\[
U_{K}(p)=U_{K'}(g(p),p)-U_{\bigcirc}(g(p))=U_{K\cup K_{-1}}(g(p),p)-U_{\bigcirc}(g(p)),
\]
is a potential for $V_{K}(\n)$.

This leads us to conjecture that the immersed Lagrangian filling $N_{K\cup K_{-1}}$ is unobstructed provided that
\begin{equation}\label{eq:conjunobstr}
\frac{\partial W_{\bigcirc}}{\partial x}\left(\frac{\partial U_{K\cup \bigcirc}}{\partial x_{-1}}(x_{-1},p)\right)=
x_{-1},
\end{equation}
in the case that $K$ is an non-split link. This conjecture gives a connected unobstructed Lagrangian filling $N_{K\cup \bigcirc}$ of $K\cup \bigcirc$ for any link $K$, where $\bigcirc$ is an unknot split from $K$. The corresponding potential $U_K$ for $K$ is then obtained by subtracting the unknot potential from the geometrically defined potential $U_{K\cup \bigcirc}$ and gives a geometric source for augmentations in $V_{K}(\n)$.

We next consider the classical limit of $U_K$. In this limit, $\frac{\partial U_{K'}}{\partial p_0}=\ell\cdot p\ne 0$ and \eqref{eq:intersO} becomes
\[
\ell\cdot p=p_{-1},\
\]
where $(x_{-1},p_{-1})$ is a point on the augmentation variety of the unknot. But in the classical limit  the augmentation variety of the unknot is $\{x_{-1}=0\}\cup\{p_{-1}=0\}$, so $x_{-1}=0$ and the potential $U_K$ limits to $\sum_{i<j}\lk_{ij}p_ip_j$, which from the viewpoint of augmentations agrees with the Lagrangian link complement in $T^{\ast}S^{3}$.

We will often denote $N_{K\cup K_{-1}}$ equipped with the potential
$$
U_K(p)=U_{K\cup K_{-1}}(g(p),p)-U_{\bigcirc}(g(p))
$$
by $M_K$, see Section \ref{sec:linkcompl}.

We next consider a possible geometric source of Conjecture \ref{cnj:codim1} in a similar spirit. Consider a  link $K=K_1\cup K_2$ where $K_1$ is a knot and $K_2$ a link such that both $K$ and $K_2$ are non-split. Let $M_K$ and $M_{K_1}$ be Lagrangian fillings of $K\cup K_{-1}$ and $K_1\cup K_{-1}$, respectively. Here we use the same braid axis $K_0$ for $K$ and $K_{1}$ and for suitably values of $p_{1}$, the shifts of $\Lambda_{K_1}$ in $M_K$ and $M_{K_1}$ agree. We next glue $M_{K_1}$ to $M_K$ along $\Lambda_K$ (add an anti-brane along $M_{K_1}$). Denote the resulting Lagrangian filling $M_{K/K_1}$. Then $M_{K/K_1}$ is a filling of $K_2\cup \bigcirc\cup\bigcirc$, where $\bigcirc\cup\bigcirc$ is a two-component unlink split from $K_2$.

Assuming that $M_{K/K_1}$ is unobstructed, we consider conditions under which it would define an augmentation. As above, the holomorphic disks with boundary on $M_{K/K_1}$ are exactly the holomorphic disks in $M_K$ and those in $M_{K_1}$ (the latter counted with opposite signs), and we have obstruction chains from the two pieces. In $M_{K/K_1}$, however, the obstruction chains going to $p_1$ no longer go off to infinity, but end at a finite distance. In order to have the chain map property (see the proof of Theorem \ref{thm:shiftedL_K}), we must deal with the discontinuity that may appear when a disk crosses the reference cycle $p_1$. This is however straightforward: by equating the one-point functions of the potential of the pieces,
\beq\label{eq:reductioncond}
\frac{\partial U_{K\cup K_{-1}}}{\partial p_1}=
\frac{\partial U_{K_1\cup K_{-1}}}{\partial p_1},
\eeq
the total homology class of one parameter families of quantum corrected disks varies continuously as the  punctured disk crosses $p_1$. As in the proof of Theorem \ref{thm:shiftedM_K}, the classical part of the superpotentials in \eqref{eq:reductioncond} is related to the homology class of the boundary of the punctured disk, and the Gromov-Witten part is related to insertions. Imposing this condition, $M_{K/K_1}$ thus gives an augmentation of $K_2$.

Note that in \eqref{eq:reductioncond} we can replace $U_{K\cup K_{-1}}$ with $U_K$ since
\begin{align*}
\frac{\partial U_{K}}{\partial p_1}(g(p),p)
&=
\frac{\partial U_{K\cup K_{-1}}}{\partial p_1}(g(p),p)+
\left(\frac{\partial U_{K\cup K_{-1}}}{\partial p_{-1}}(g(p),p)
-\frac{\partial U_{\bigcirc}}{\partial p_{-1}}(g(p),p)
\right)\frac{\partial g}{\partial p_1}\\
&=\frac{\partial U_{K\cup K_{-1}}}{\partial p_1}(g(p),p)
\end{align*}
and similarly for $U_{K_1}$. We thus find that the augmentation variety $V_{K_2}$ contains a branch given by the potential
\[
U_{K_2}(p_2)=U_K(p_1(p_2),p_2)-U_{K_1}(p_1(p_2)),
\]
where $p_1(p_2)$ is a solution of the equation
\[
\frac{\partial U_K}{\partial p_1}(p_1,p_1)-\frac{\partial U_{K_1}}{\partial p_1}(p_1)=0.
\]
This is exactly ``Lagrangian reduction'' rephrased in terms of potentials, see Conjecture \ref{conj:reduction}, and thus gives a possible explanation for the codimension-one intersection property of Conjecture \ref{cnj:codim1}. In order to turn this into an actual proof, the main point would be to verify the assumptions we made on the immersed Lagrangians being unobstructed under the conditions stated.

\subsection{Geometric constructions in a simple example}\label{ssec:alggeomex}
In this section, we study a basic example that illustrates Conjecture~\ref{cnj:codim1} as well as the geometric constructions in Section \ref{ssec:antibrane}.

Consider two knots $K_1$ and $K_2$ and their connected sum $K_1\# K_2$. Assume that we have Lagrangian fillings $M_1=M_{K_1}$ and $M_2=M_{K_2}$ with GW-potentials $U(p_1,Q)$ and $U_2(p_2,Q)$ respectively. Then we construct a Lagrangian filling $M_{12}=M_{K_1\# K_2}$ with potential $U_{12}(p,Q)$. Studying the augmentations of $\Lambda_{K_1\# K_2}$ presented with a short Reeb chord near the connecting point, one can show that the GW-potentials are related by
\[
U_{12}(p,Q)=U_1(p,Q)+U_2(p,Q)-U_\bigcirc(p,Q),
\]
where $U_\bigcirc(p,Q)$ is the potential of the Lagrangian complement of the unknot (which is isomorphic to the conormal). To see this, recall from Section \ref{Sec:knotch} how the augmentation polynomial behaves under connected sum, and note that is a restatement of that in terms of GW-potentials.

We now consider the example $K$ of an $(n+1)$-component link consisting of an unlink $K'$ on $n$ components with a braid axis $K_0$, see Figure \ref{fig:unlinkaxis}.
\begin{figure}[htp]
\centering
\includegraphics[width=.5\linewidth, angle=90]{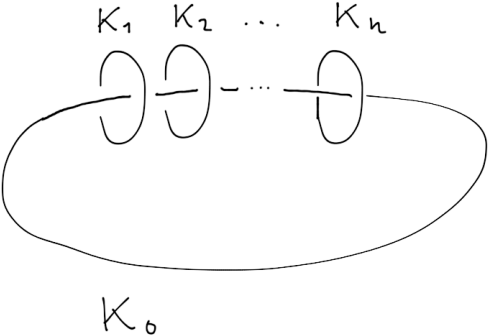}
\caption{The $n+1$ component link $K$}
\label{fig:unlinkaxis}
\end{figure}
Consider adding an anti-brane along $M_{K_0}$ to $M_K$. This results in a Lagrangian that lies over a tubular neighborhood of an unknot $K_{-1}$ dual to (Hopf linked with) $K_0$ and with front as depicted in Figure \ref{fig:frontunlinkaxis}.
\begin{figure}[htp]
\centering
\includegraphics[width=0.8\linewidth]{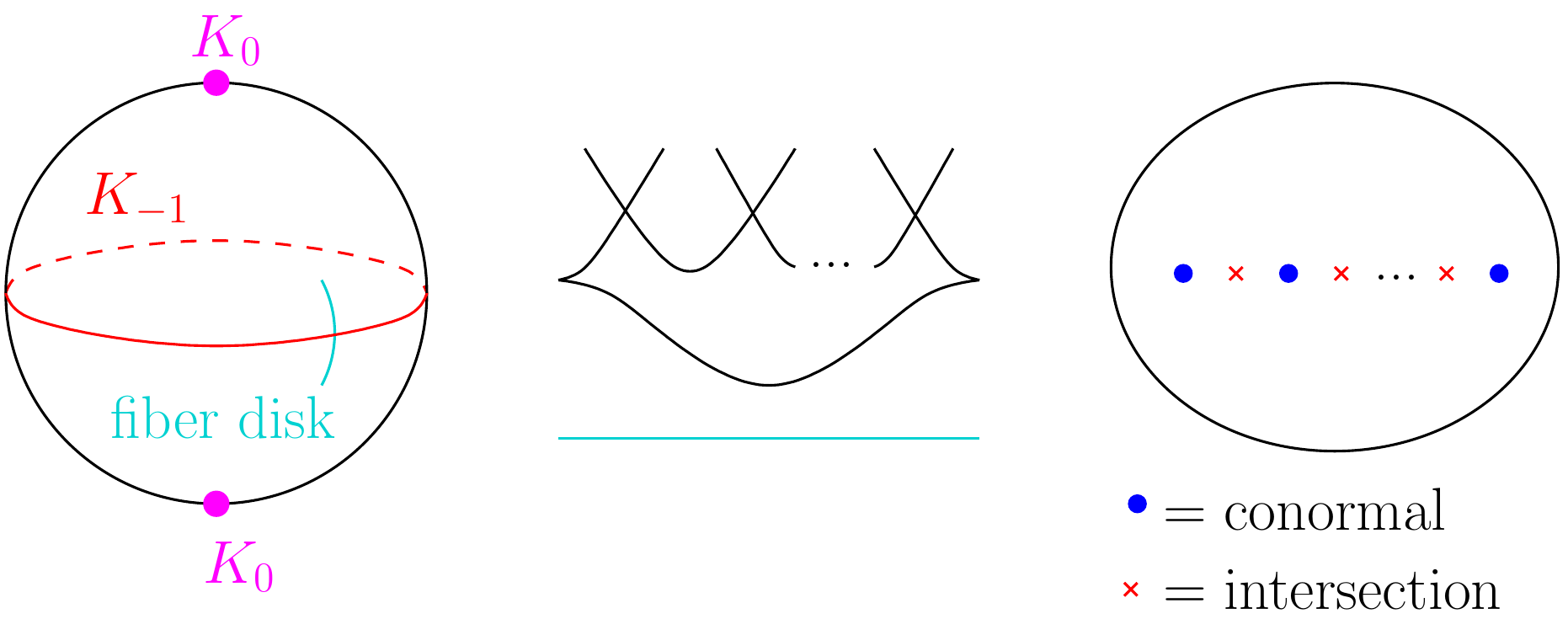}
\caption{Left, the configuration of $K_0$ and $K_{-1}$; middle, the front of $N_{K'\cup K_{-1}}$ over a fiber; right, the front of $N_{K'\cup K_{-1}}$ seen from above.}
\label{fig:frontunlinkaxis}
\end{figure}
By the above connected sum result and our knowledge of the potential of the Hopf link (see Section~\ref{ssec:Hopf}), we can compute the superpotential $U_{K}$ for $K$ as
\[
U_{K}(p_0,p_1,\dots,p_n)= p_0(p_1+\dots+p_n)-(n-1)U_\bigcirc(p_0).
\]
Adding an anti-brane along $M_{K_0}$ introduces the homology relation $p_1+\dots+p_n=0$ and gives the superpotential
\[
U_{K/K_0}(x_1,p_2,\dots,p_n)=x_1(p_1+\dots+p_n)-nU_\bigcirc(x_1),
\]
where we keep additional cycles in order to keep track of obstruction chains. The continuity argument from Section \ref{ssec:antibrane} requires that we remove $x_1$, where different obstruction cycles meet by passing to the critical point. Using the symmetry of the unknot potential under exchange of $x$ and $p$, we find the critical point condition
\[
x_1=\frac{\partial U_\bigcirc}{\partial p}\left(\frac{1}{n}(p_1+\dots+p_n)\right).
\]
Re-substituting this, we find that a ``potential'' for $K'$ is given by
\[
U_{K'}=n\,U_\bigcirc\left(\frac{1}{n}(p_1+\dots+p_n)\right),
\]
which, in the light of our knowledge of augmentations for split links, gives correct augmentations only if $p_1=\dots=p_n$.

Looking at the Lagrangian filling, we can see these phenomena from a geometric perspective. Applying \cite{E}, we get a description of holomorphic disks in terms of flow trees and in this case this is straightforward. The Lagrangian $M_{K/K_0}$ obtained by anti-brane addition has Bott circles of self-intersection points. Perturbing to a Morse situation leaves two intersection points on each circle, and the rigid disks have a puncture at the double point with the smallest action. By \cite{E}, these rigid disks correspond to flow lines starting at saddle points and hitting the cusp edge as shown in Figure \ref{fig:flowtrees}.
\begin{figure}[htp]
\centering
\includegraphics[width=.6\linewidth, angle=90]{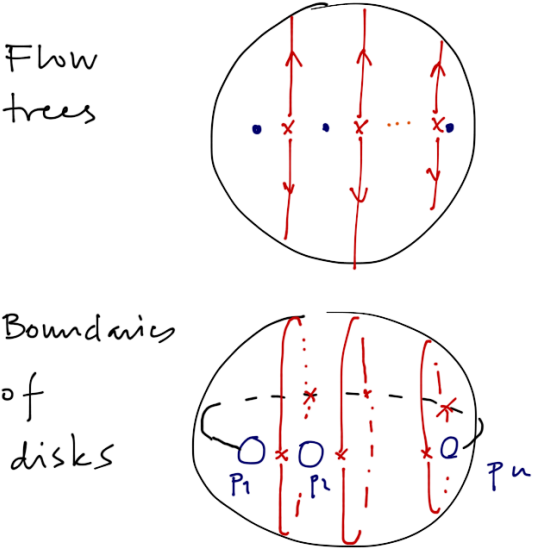}
\caption{The flow trees that correspond to rigid disks, and the boundaries of the corresponding rigid disks lying in a fiber punctured sphere of $N_{K/K_1}$.}
\label{fig:flowtrees}
\end{figure}
These rigid disks are indeed obstructions, with difference classes $p_1$, $p_1+p_2$, \dots, $p_1+\dots+p_{n-1}$ as indicated there. In the exact case, we find that the obstructions then vanish only if $p_1=\dots=p_n=0$. In the general case, one would expect these equations to be quantum corrected to
\[
p_j=\frac{\partial U_\bigcirc(p_0)}{\partial p_0},\quad j=1,\dots,n,
\]
which is indeed the case if we use a model where the rigid disks lie close to each conormal piece and look like the disks on the conormal of the corresponding unknot, i.e.~the disks lie near the cones in Figure \ref{fig:frontunlinkaxis}. This is in agreement with the above algebraic analysis.

\subsection{Legendre transform of superpotentials from a geometric perspective}\label{sec:legtransf}
In this section we consider the duality between the potentials of different Lagrangian fillings mentioned in Section \ref{sec:framingoflag} from a geometric perspective. Let $K$ be a knot, let $L_K$ be its conormal filling, and let $M_K$ be a filling that fills in the longitude, e.g.~the knot complement if the knot is fibered or the filling $N_{K\cup \bigcirc}$ constrained by \eqref{eq:conjunobstr}.  Consider a branch of the augmentation variety defined by $L_{K}$ and parametrized by $x$:
\[
x\mapsto \left(x, \frac{\partial W}{\partial x}\right).
\]
Parametrizing the curve by $p$ instead, we get a function $p\mapsto x(p)$, and the corresponding potential is given by
\begin{align*}
U(p)&=\int x(p)\,dp = x(p)p-\int p\,dx= x(p)p - W(x(p))\\
&=\frac{\partial U}{\partial p}(p)\cdot \frac{\partial W}{\partial x}(x(p)) - W(x(p)).
\end{align*}
This indicates that the GW-potential of $M_K$ should be Legendre dual to the GW-potential of $L_K$. We next discuss this duality heuristically from a geometric viewpoint.

First consider holomorphic disks with boundary on $L_{K}$. We fix a perturbation scheme so that no holomorphic disk intersects the central $S^{1}$ in the conormal. Note that the construction of $M_{K}$ involves gluing $L_{K}$ to the 0-section. We assume that the disks on $M_{K}$ are exactly those that come from $L_{K}$. In order to shift $M_{K}$ off the 0-section, we glue with a twist, taking a curve in the class $x+p$ in $\Lambda_K\subset M_K$ to the curve in class $x$ in $\Lambda_K\subset L_K$.

Looking at the holomorphic disks of $M_K$ in a neighborhood of the knot, we see that they look like the disks in $L_K$ except they all link with linking number $1$. Consider now a $1$-parameter deformation which changes this, energy level by energy level. First we let the disk on the second energy level cross the disk on the first energy level to change the linking number, and inductively we let the disk on energy level $k$ cross the disks of all lower energy levels. Below a fixed energy level we thus find three boundary components of the parametrized moduli spaces: disks of $L_K$, disks of $M_K$, and broken disks of two components that intersect at a boundary point. In the latter configuration, one of the disks is a disk of higher energy and can be viewed as a disk on $L_K$, and the other has already been moved in place and can be viewed as a disk on $M_K$. Summing over the ends of the weighted 1-dimensional moduli space, we find
\[
U(p)=\frac{\partial U}{\partial p}\cdot\frac{\partial W}{\partial x}-W(x(p)),
\]
where $x(p)$ is the function discussed above. See Figure~\ref{fig:legtransf}.
\begin{figure}
\centering
\includegraphics[width=.5\linewidth, angle=90]{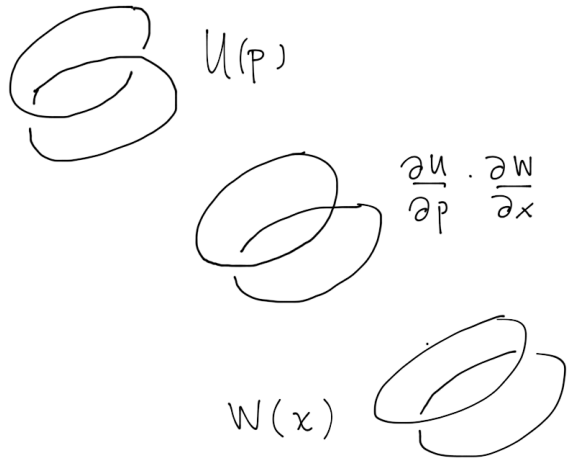}
\caption{The three boundary configurations.}
\label{fig:legtransf}
\end{figure}

Differentiating with respect to $p$ gives
\[
p\left(\frac{\partial x}{\partial p}-\frac{\partial^{2} U}{\partial p^{2}}\right)=0.
\]
Hence $x(p)=\frac{\partial U}{\partial p}$,
\[
U(p)=\left.\bigl(xp - W(x)\right)\bigr|_{x\text{ critical}},
\]
and $U(p)$ is the Legendre transform of $W(x)$.

\subsection{Augmentations, the cord algebra, and flat connections on exact fillings}\label{ssec:exactcord}
In order to find other geometric sources for augmentations, we consider the exact Lagrangian filling $M_K$ with topology of the link complement $S^{3}-K$ from a different perspective, in line with the discussion at the end of Section \ref{sec:linkcompl}.

Let $\bZ\pi$ denote the group ring of the fundamental
group $\pi:=\pi_1(S^{3}-K)=\pi_1(M_K)$. We define a chain map
$\Phi\colon \scA(\Lambda_{K})\to \bZ\pi$, where $\bZ\pi$ is endowed
with the trivial differential mimicking the usual augmentation
map. More precisely, let $\gamma\in\pi$ and let $c$ be a Reeb chord of
$\Lambda_{K}$. Write $\scM_{\gamma}(c)$ for the moduli space of
holomorphic disks $u\colon D\to T^{\ast}S^{3}$ with $u(\partial
D)\subset M_{K}$, with one positive puncture where the disk is
asymptotic to the Reeb chord strip of $c$, and with $u(\partial D)$
representing the homotopy class $\gamma\in\pi$ (after being closed up
by a capping path for $c$). Define $\Phi$ as follows: $\Phi$ takes $Q$ to $1$ and $\lambda = e^x$ and $\mu = e^p$ to the corresponding elements in $\bZ\pi$; and for a Reeb chord $c$,
\[
\Phi(c)=\sum_{\gamma\colon\dim\scM_{\gamma}(c)=0}|\scM_{\gamma}(c)|[\gamma],
\]
where $|\scM_{\gamma}(c)|$ denotes the algebraic number of disks in $\scM_{\gamma}(c)$ and where $[\gamma]$ is the element in $\bZ\pi$ corresponding to $\gamma\in\pi$. The usual argument using the description of the boundary of the 1-dimensional moduli space as two-level disks gives the chain map property $\Phi\circ\partial=0$.

Recall the concrete construction of $M_{K}$ (see Figure \ref{fig:complementfront}), where the conormal of the link was added to the $0$-section via Lagrange surgery.  Compactness properties of the space of holomorphic disks with boundary on $L_{K}\cup S^{3}\subset T^{\ast}S^{3}$ were studied in \cite{CEL}, where it was shown that for generic data there is a constant $C>0$ such that the number of corners (where the disk boundary switches between $L_{K}$ and $S^{3}$) for any disk is at most $C$. The corners of disks with boundary on $L_{K}\cup S^{3}$ can be smoothed to give a disk on $M_{K}$, and conversely, disks with boundary on $M_{K}$ limit to disks with corners on $L_{K}\cup S^{3}$ as the surgery degenerates. For isolated double points this relation was studied in \cite{FO3}.

In order to describe the relation between holomorphic disks with boundary on $L_{K}\cup S^{3}$ and disks with boundary on $M_{K}$, we first characterize the corners of disks. At any corner the boundary orientation of the disk gives one incoming and one outgoing arc. We say that a corner with incoming arc along $S^{3}$ and outgoing along $L_{K}$ is positive and that a corner with the opposite behavior is negative. Generically, rigid disks have only non-degenerate corners. In $1$-parameter families of disks with boundary on $L_K\cup S^{3}$, there are also degenerate corners where the disk has a vanishing derivative in the fiber direction and in a neighborhood of which two nondegenerate corners collide.

The relation between disks on $L_{K}\cup S^{3}$ and disks on the ``surgered'' version $M_{K}$ were studied in \cite{FO3}. Using this result, or alternatively the correspondence between flow trees and holomorphic disks established in \cite{E} together with standard gluing theorems, one establishes the following results, beginning with the case of rigid disks.

\begin{theorem}\label{thm:rounding0}
For all sufficiently small surgery parameters,
there is the following many-to-one correspondence between rigid holomorphic disks with boundary on $M_K$ and disks with boundary on $L_{K}\cup S^{3}$:

To each disk with $2k$ corners corresponds $2^{k}$ disks obtained by smoothing the corners of the disk. At a positive corner the smoothing is unique, see Figure \ref{fig:pospunct}, and at each negative corner there are two smoothings and the homotopy class of the corresponding boundaries differ by the meridian class in $\Lambda_K$, see Figure \ref{fig:negpunct}.  
\end{theorem}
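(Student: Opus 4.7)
The plan is to analyze the family of Lagrangians $M_K(\epsilon)$ obtained by Lagrange surgery on $L_K\cup S^{3}$ with surgery parameter $\epsilon>0$, and to study the behavior of rigid $J$-holomorphic disks with boundary on $M_K(\epsilon)$ as $\epsilon\to 0$. In this limit, any rigid disk on $M_K(\epsilon)$ of uniformly bounded area converges (modulo bubbling, which is excluded by rigidity and monotonicity in the relevant region) to a disk on $L_K\cup S^{3}$ whose boundary switches between the two components at a finite number of corner points; conversely, any rigid configuration on the boundary of this moduli space can be glued back to a disk on $M_K(\epsilon)$. The finiteness of the number of corners follows from \cite{CEL}. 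Thus the two moduli spaces are related by a standard degeneration/gluing picture, and what remains is to count the number of gluings at each corner.

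First I would set up a convenient local model near the knot $K$: choose Darboux coordinates in which $S^{3}$ is the zero section, $L_{K}$ is the conormal, and the surgery handle is the standard Lagrangian handle connecting the two. The surgered Lagrangian $M_{K}(\epsilon)$ then looks, in the fiber direction, like a smoothing of an ``$L$-shaped'' Lagrangian. The crucial local computation is the analysis of $J$-holomorphic strips meeting this handle in a neighborhood of a corner point. This is essentially the analysis carried out in \cite{FO3} for isolated double points, modified to the Bott-clean setting along $K$, or equivalently can be read off from the flow tree correspondence of \cite{E} applied to the front of $M_K$ pictured in Figure \ref{fig:complementfront}.

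Next I would separate the discussion into positive and negative corners. At a positive corner, the incoming boundary lies on $S^{3}$ and the outgoing boundary on $L_K$; the geometry of the handle forces the smoothed boundary to travel through the handle in a uniquely determined way, giving a single local gluing. At a negative corner, the incoming boundary lies on $L_{K}$ and the outgoing on $S^{3}$, and the handle offers two topologically distinct routes through it, corresponding to the two ways of resolving the crossing of the two Lagrangian sheets. These two routes differ by traveling once around the $S^{1}$-factor of the fiber disk bundle over $K$, i.e.~by the meridian class in $H_{1}(\Lambda_K)$. A rigid disk with $k$ negative and $k$ positive corners then admits $2^k$ distinct smoothings, matching the claim (note that the number of positive corners equals the number of negative corners, since the boundary orientation alternates between $S^3$ and $L_K$).

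The main obstacle will be verifying that the analytic gluing is valid uniformly in the number of corners and that no extra disks appear in the $\epsilon\to 0$ limit beyond those captured by the smoothing construction. The first point requires the quantitative transversality and gluing estimates of \cite{FO3}, adapted to the current setting where corners occur along the circle $K$ rather than at isolated points; the Bott-clean nature of $K$ means one must work with a perturbation scheme that makes the relevant moduli spaces cut out transversely. The second point follows from SFT compactness together with the exactness of $L_K$ and $M_K$ off a compact set, which prevents disk bubbling on the components individually. Alternatively, one may use the flow tree description of \cite{E} as a shortcut: rigid flow trees on the front of $M_K(\epsilon)$ correspond bijectively to rigid $J$-holomorphic disks for suitable $J$, and the combinatorics of such trees near the cusp edge produced by the surgery reproduces the $2^k$ count of smoothings directly.
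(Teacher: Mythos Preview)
Your proposal is correct and follows essentially the same approach as the paper. The paper itself does not give a detailed proof of this theorem: it simply states, in the paragraph preceding the statement, that the result follows from the surgery analysis of \cite{FO3} or alternatively from the flow tree correspondence of \cite{E} combined with standard gluing theorems. Your proposal spells out exactly this strategy in more detail, including the local model, the positive/negative corner dichotomy, the Bott-clean adaptation, and the compactness/transversality caveats, so there is nothing to correct.
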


\begin{figure}[htp]
\centering
\includegraphics[width=.5\linewidth, angle=0]{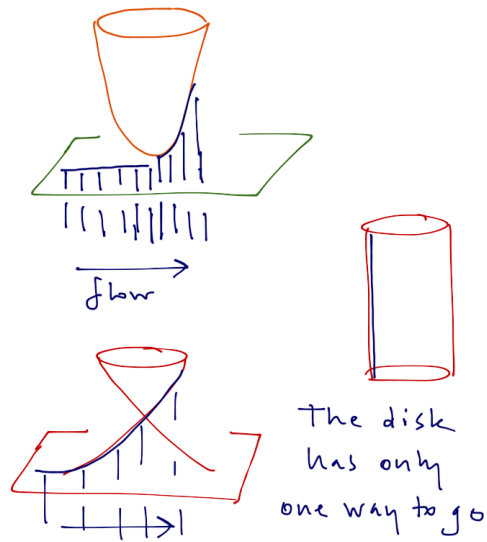}
\caption{Smothing at a positive corner.}
\label{fig:pospunct}
\end{figure}

\begin{figure}[htp]
\centering
\includegraphics[width=.5\linewidth, angle=0]{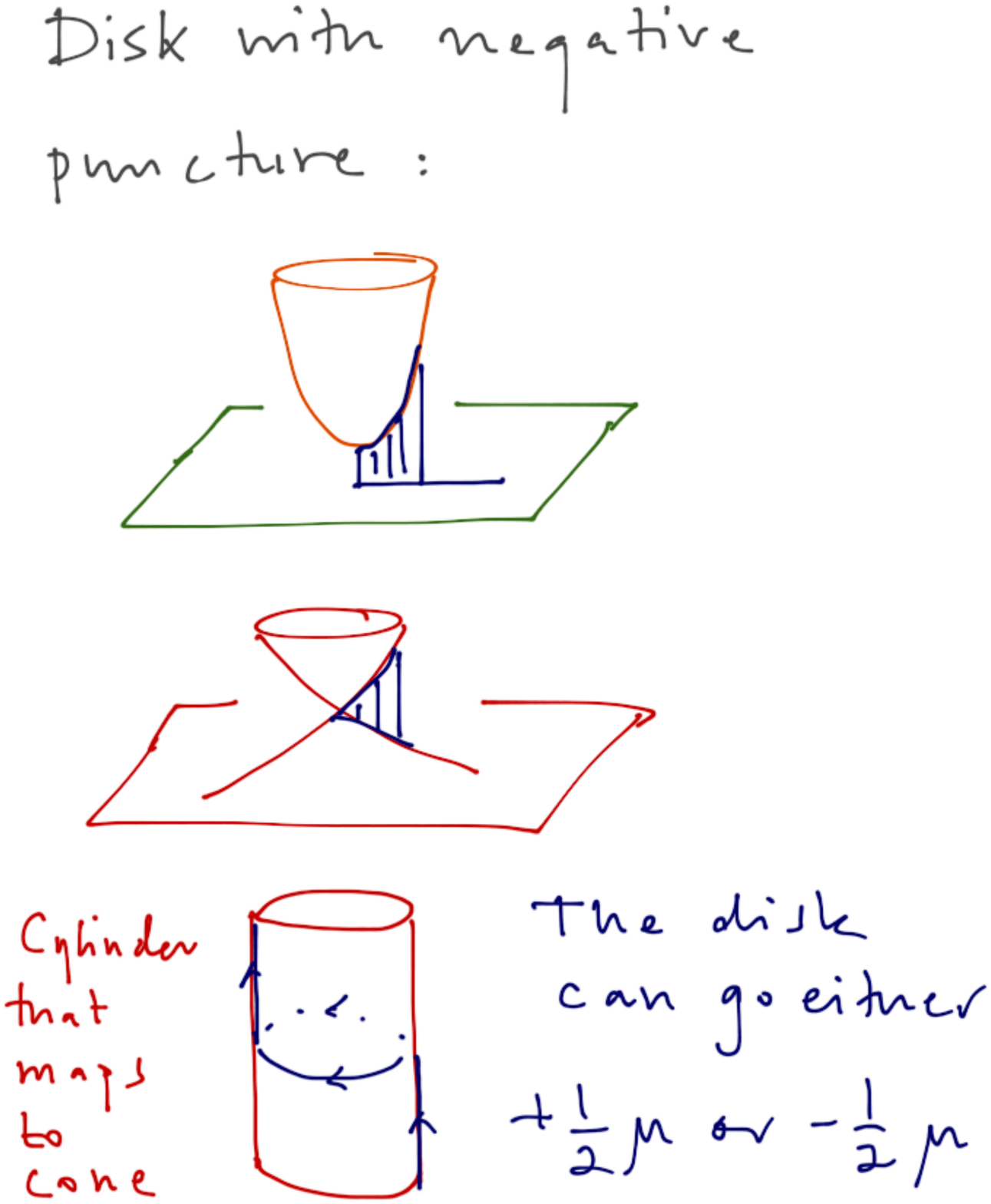}
\caption{Smothing at a negative corner.}
\label{fig:negpunct}
\end{figure}

We next consider $1$-parameter families of disks. Consider such a $1$-parameter family $\scM$. Outside of a finite set of points in $\scM$, the disks have nondegenerate corners that can be smoothed exactly as in Theorem \ref{thm:rounding0}. It thus remains to describe the smoothed family for parameters in a small neighborhood of a disk with a degenerate corner.
Consider such a family parametrized by $t\in[-\delta,\delta]$, with two nondegenerate corners for $t\in[-\delta,0)$, a degenerate corner at $t=0$, and no corner at $t\in(0,\delta]$. Note that the disk with degenerate corner also lies in a moduli space of disks without corners that simply cross through $K$, see Figure \ref{fig:degenerations}.
\begin{figure}[htp]
\centering
\includegraphics[width=.5\linewidth, angle=90]{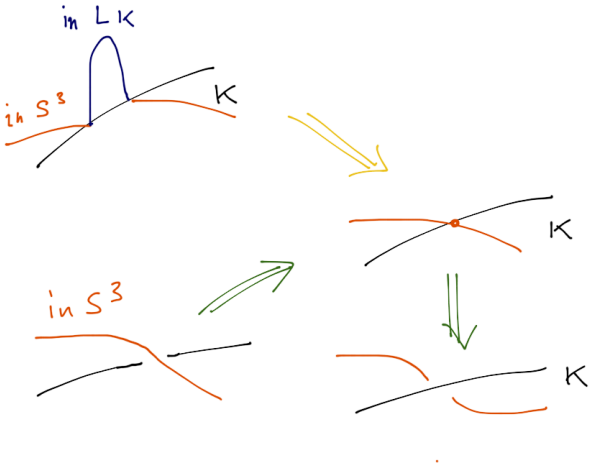}
\caption{Holomorphic disks with boundary on $L_K\cup S^{3}$ near a disk intersecting the knot.}
\label{fig:degenerations}
\end{figure}
Choose smoothings at all corners except for the corners near the degenerate corner. Write $\scM_{\pm}$ for the two branches of the moduli space corresponding to the two smoothings of corners of disks $(-\delta,0)$, see Figure \ref{fig:collision}. Also write $\scN_{\pm}$ for the two components of the moduli space without corners in the complement of the disk that intersects $K$, see Figure \ref{fig:cross}.

\begin{figure}[htp]
\centering
\includegraphics[width=.5\linewidth, angle=0]{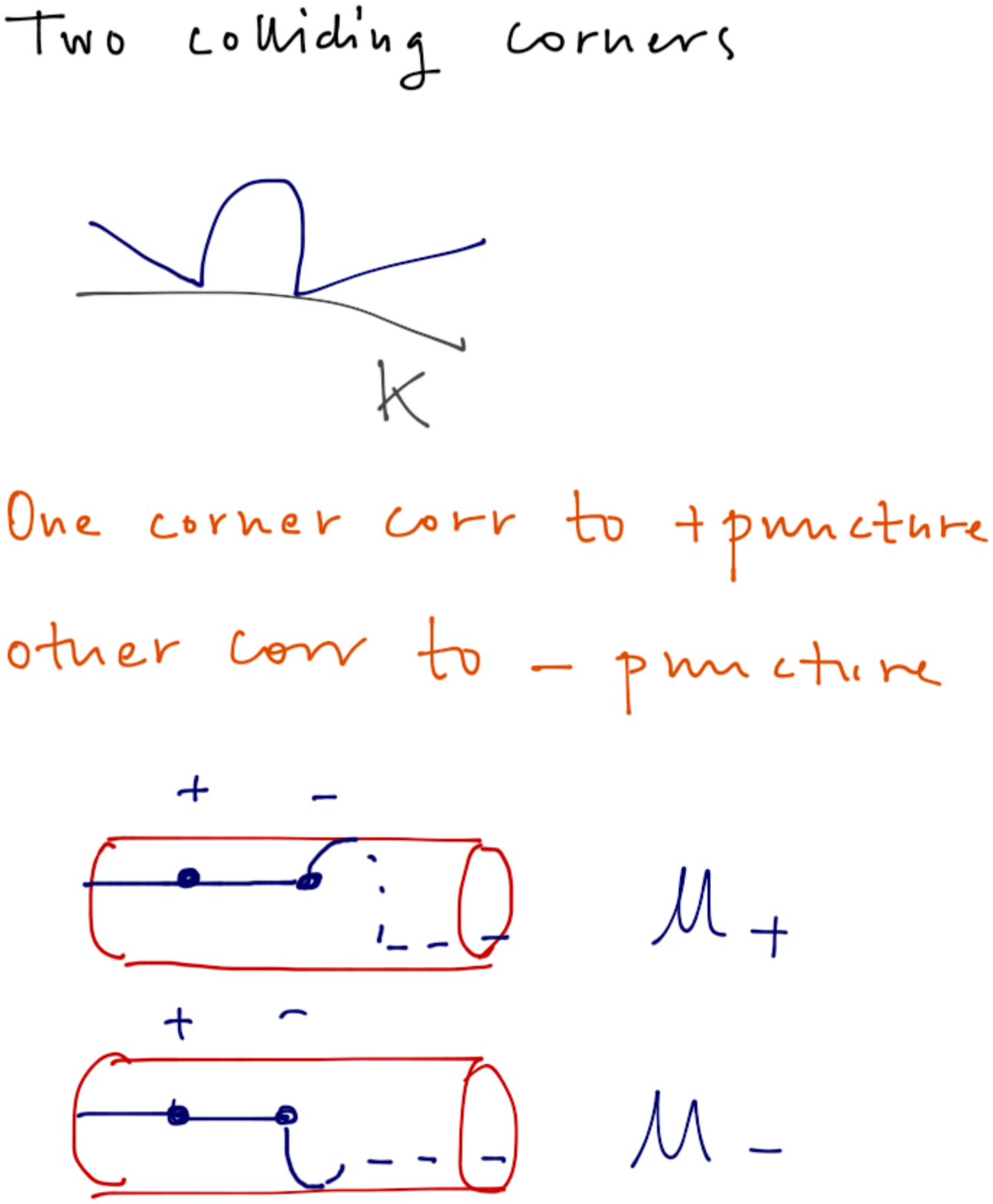}
\caption{Smoothings of the family of disks with two colliding corners. The lower picture shows the smoothed boundary in $M_K$.}
\label{fig:collision}
\end{figure}

\begin{figure}[htp]
\centering
\includegraphics[width=.5\linewidth, angle=90]{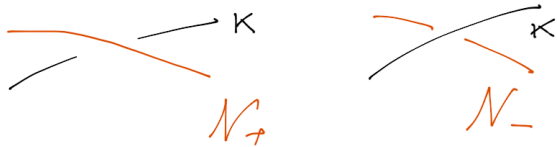}
\caption{Subdivision of the 1-parameter family $\mathcal{N}$.}
\label{fig:cross}
\end{figure}

\begin{theorem}\label{thm:rounding1}
For all sufficiently small surgery parameters, the moduli space of disks with boundary on $M_{K}$ corresponding to the intersecting family and the crossing family is obtained by joining $\scM_{+}$ to $\scN_+$ and $\scM_{-}$ to $\scN_{-}$, see Figures \ref{fig:collision} and \ref{fig:cross}. 
\end{theorem}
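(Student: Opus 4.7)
The plan is to reduce the statement to a local model near the degenerate corner at $t=0$ and then glue this local picture to the already-understood behavior outside a small neighborhood of the degeneration. Away from the degenerate corner, Theorem \ref{thm:rounding0} applies fiberwise in $t$ and identifies rigid smoothed disks with choices of smoothings at all non-degenerate corners. Since all other corners of members of $\mathcal{M}_\pm$ are already smoothed in the same way (we only vary the two corners that collide), the families $\mathcal{M}_{\pm}$ glue to 1-parameter families of disks on $M_K$ for $t\in[-\delta,-\epsilon]$, and similarly $\mathcal{N}_\pm$ gives a 1-parameter family for $t\in[\epsilon,\delta]$. Thus it suffices to understand what happens in the window $t\in[-\epsilon,\epsilon]$, i.e.~to produce a bijective cobordism between the two ends $\mathcal{M}_+\sqcup\mathcal{M}_-$ and $\mathcal{N}_+\sqcup\mathcal{N}_-$ inside the moduli space of surgered disks, respecting the claimed pairing.

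First I would set up the local model. Near the degenerate corner, the pair $(L_K,S^3)$ looks like a pair of Lagrangian planes in $\C^3$ meeting cleanly along $K$, and the disk is, to leading order, tangent to $K$ at the degenerate corner. Using the flow tree technology of \cite{E}, one can represent such disks by flow trees on $M_K$ whose defining Morse function comes from the Lagrangian surgery profile. The degenerate corner corresponds to a flow line hitting the surgery locus tangentially from the $L_K$ side, while the crossing disks correspond to flow lines passing through the former $K$ transversely. In this language, the local moduli problem is finite dimensional and one can compute the local model explicitly: the zero set of a single real-valued equation $h(t,s)=0$, where $s$ parametrizes the rounding parameter at the (formerly) degenerate corner, and the two branches at $t<0$ correspond to $s>0$ and $s<0$ while at $t>0$ the two crossing choices $\mathcal{N}_\pm$ again correspond to $s>0$ and $s<0$. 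A direct comparison of orientations then identifies the $+$ branches on both sides and similarly for $-$.

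Next I would upgrade this local model to the full gluing statement. The input is a family of approximately holomorphic disks obtained by pre-gluing the flow-tree local solution to the rest of the disk using cutoff functions adapted to the small surgery parameter. A Newton iteration in suitable Sobolev norms then produces an honest family of holomorphic disks on $M_K$, and surjectivity (for small surgery parameter) follows from SFT-type compactness: any sequence of surgered disks near the parameter window must limit to a disk with corners on $L_K\cup S^3$ in the family under consideration, since that is the only family of limit configurations available with the prescribed homology class and punctures. This matches the approach taken in \cite{FO3} for isolated double points; here the novelty is that the relevant limit configuration lives in a 1-parameter family and varies with $t$, so we must run the gluing uniformly in $t\in[-\epsilon,\epsilon]$.

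The main obstacle I expect is the uniform gluing across $t=0$, where the linearized Cauchy--Riemann operator develops an additional approximate kernel mode corresponding to the degenerate direction. Handling this requires either an obstruction-bundle calculation in the sense of \cite{FO3} or a careful reduction to the flow-tree model where the degeneracy becomes a transverse zero of a single function on $M_K$. The sign/orientation computation showing that the pairing is $\mathcal{M}_+\leftrightarrow \mathcal{N}_+$ and $\mathcal{M}_-\leftrightarrow \mathcal{N}_-$, rather than the opposite pairing, is also delicate and amounts to checking that the Lagrangian surgery preserves the canonical orientations in a compatible way on both sides of $t=0$; I would do this by reducing to a standard model of surgery in $\C^3$ and comparing with the orientation conventions fixed in Theorem \ref{thm:rounding0}.
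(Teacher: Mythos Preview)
The paper does not give a proof of this theorem. Immediately before Theorem \ref{thm:rounding0} the paper says only that ``using this result [of \cite{FO3}], or alternatively the correspondence between flow trees and holomorphic disks established in \cite{E} together with standard gluing theorems, one establishes the following results,'' and then states both Theorems \ref{thm:rounding0} and \ref{thm:rounding1} without further argument. The authors also caution at the start of Section~\ref{sec:augmentations} that ``many of the arguments given here are at best outlines of possible proofs'' and that theorems are stated when they believe existing technology suffices.

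Your proposal is therefore considerably more detailed than anything in the paper, and it follows exactly the route the paper points to: invoke the surgery/disk correspondence of \cite{FO3} (or the flow-tree description of \cite{E}) for the local model near the degenerate corner, and combine with a standard pregluing/Newton-iteration argument uniform in the family parameter $t$. Your identification of the delicate points --- the extra approximate kernel direction at $t=0$ requiring an obstruction-bundle or flow-tree reduction, and the orientation check that fixes the pairing $\scM_\pm \leftrightarrow \scN_\pm$ rather than the crossed one --- is accurate and goes beyond what the paper spells out. In short, your outline is compatible with and more explicit than the paper's own treatment; there is nothing to compare against beyond the bare references to \cite{FO3} and \cite{E}.
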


There is a topological interpretation for knot contact homology in homological degree $0$ (and $Q=1$) known as the cord algebra \cite{Ngframed}.
Theorems \ref{thm:rounding0} and \ref{thm:rounding1} precisely match with the generators and relations of the cord algebra.
Indeed, work in progress \cite{CELN} aims at establishing that the map
counting holomorphic disks with boundary on $L_K\cup S^{3}$ gives an isomorphism
between the degree zero contact homology of $\Lambda_{K}$ and the
cord algebra using a version of string topology. Together with Theorems \ref{thm:rounding0} and \ref{thm:rounding1}, this would lead to a proof of the following result that we state as a conjecture:

\begin{conjecture}\label{thm:cordiso}
The chain map $\Phi$ induces an isomorphism between the degree $0$ knot contact homology (with $Q=1$) and the cord algebra.
\end{conjecture}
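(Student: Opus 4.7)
The plan is to combine the surgery correspondence of Theorems \ref{thm:rounding0} and \ref{thm:rounding1} with the string topology framework announced in \cite{CELN}. First, I would introduce an auxiliary chain map $\Phi_{0}\colon\mathcal{A}(\Lambda_{K})\to\Z\pi$ defined by counting rigid holomorphic disks with boundary on the singular Lagrangian $L_{K}\cup S^{3}$, weighted by the homotopy class their boundary represents after smoothing corners. The rounding correspondence gives a finite-to-one matching between rigid disks on $M_{K}$ and rigid disks on $L_{K}\cup S^{3}$: positive corners smooth uniquely, while the two smoothings at each of the $k$ negative corners differ by a meridian insertion in $\pi$. Summing the $2^{k}$ smoothings produces a product of $(1-\mu)$-type factors and identifies $\Phi$ with a renormalized version of $\Phi_{0}$ at the chain level, up to terms controlled by the bounded-corner estimate of \cite{CEL}.

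Next, I would recall Ng's cord algebra \cite{Ngframed}: it is generated by homotopy classes of paths in $S^{3}-K$ with endpoints on $K$, together with $\lambda$ and $\mu$, modulo a skein-type relation expressing a cord passing transversally through $K$ as a difference of two pushed-off cords separated by a meridian, plus the normalization that a contractible cord equals $1-\mu$. The geometric interpretation of the two smoothings at a negative corner in Theorem \ref{thm:rounding0} realizes precisely the skein relation, and Theorem \ref{thm:rounding1} shows that the two families of smoothed disks join across the degenerate-corner wall in exactly the manner required for the skein relation to hold at the chain level. In this way $\Phi_{0}$ factors through the cord algebra and the induced map is surjective on generators: any cord is represented by a suitable flow tree on $M_{K}$ (via the correspondence of \cite{E}), hence by a rigid disk.

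Step three is to invoke the string topology model of degree-zero Legendrian contact homology developed in \cite{CELN}. There, classes in $H_{0}(\mathcal{A}(\Lambda_{K}),\partial)|_{Q=1}$ are represented by broken strings in $S^{3}-K$ with endpoints on $K$, and the relations among them come from codimension-one strata in moduli spaces of broken strings. The boundaries of rigid disks on $L_{K}\cup S^{3}$, read off corner by corner, are exactly such broken strings, and the two types of codimension-one degenerations of corner configurations match the two defining moves of the string-topology relations. Thus $\Phi_{0}$ coincides on $H_{0}$ with the string topology isomorphism to the cord algebra, and composing with the identification of $\Phi$ and $\Phi_{0}$ completes the argument.

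The hard part will be the string topology identification of \cite{CELN}: one must construct a coherent transverse perturbation scheme for moduli of holomorphic disks with boundary on the non-smooth Lagrangian $L_{K}\cup S^{3}$, use the corner bound of \cite{CEL} to ensure the chain complex is well-defined, and verify that the algebraic structure arising from the boundaries of one-parameter moduli spaces reproduces exactly the cord algebra presentation with the correct signs rather than some related but distinct variant. A secondary but nontrivial technical issue is to promote the rounding correspondence of Theorems \ref{thm:rounding0} and \ref{thm:rounding1} from the rigid and one-parameter cases to a chain-level homotopy that guarantees $\Phi$ and $\Phi_{0}$ induce the same map on $H_{0}$; this requires cobordism arguments for families of disks as the surgery parameter degenerates.
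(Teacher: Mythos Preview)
Your proposal aligns almost exactly with the route the paper itself sketches. Note that this statement is explicitly a \emph{conjecture} in the paper, not a theorem with a proof: the paper says that work in progress \cite{CELN} aims to show, via string topology, that the disk count on $L_{K}\cup S^{3}$ gives an isomorphism between degree-zero contact homology and the cord algebra, and that combining this with Theorems~\ref{thm:rounding0} and~\ref{thm:rounding1} would yield the conjecture. Your auxiliary map $\Phi_{0}$, the identification of $\Phi$ with $\Phi_{0}$ through the rounding correspondence, and the appeal to \cite{CELN} for the string-topology isomorphism are precisely the ingredients the paper names. You are also correct that the substantive gap is the content of \cite{CELN}.

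One small difference: for surjectivity the paper gives a more concrete argument than yours. Rather than invoking flow trees abstractly, it observes that over each geodesic binormal chord $\overline{\gamma}$ there is a trivial half-strip with boundary on $L_{K}\cup S^{3}$ asymptotic to the corresponding Reeb chord $\gamma$; smoothing its corners yields $(1-\mu)[\gamma]$ in the cord algebra, which suffices since such elements generate. Your flow-tree formulation would need to be unpacked to reach the same conclusion, and in particular you would want to verify that the image actually hits the cord-algebra generators rather than merely lying in the cord algebra.
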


\noindent The fact that these two algebras are isomorphic is already known \cite{Ngframed}, but this conjecture would give a geometric reason for the isomorphism.

As an aside, it is straightforward to show that the chain map $\Phi$ is surjective: over each geodesic binormal chord $\overline{\gamma}$ of the knot sits a trivial half strip with boundary on $L_K\cup S^{3}$ and which ends at the corresponding Reeb chord $\gamma$. After smoothing the corners, this gives the element $(1-\mu)[\gamma]$ in the cord algebra, proving surjectivity.

As a consequence of Conjecture \ref{thm:cordiso}, we would have that
augmentations of $\scA(\Lambda_{K})$ to $\bC$ (at $Q=1$) coincide with
algebra morphisms of the cord algebra of $K$ into $\bC$. In
particular, as shown in \cite{Ngframed}, any flat
$\mathrm{SL}(2,\bC)$-connection on $M_{K}$ induces such a map from
the cord algebra, which explains why the $A$-polynomial divides the
(classical) augmentation polynomial. Further, in examples some of the
factors of the augmentation polynomial which do not appear in the
$A$-polynomial arise from flat $\mathrm{GL}(n,\bC)$ connections for
$n>2$ in a similar way \cite{Ngsurvey,Cornwell}.

\subsection{The cord algebra and small non-exact deformations}\label{ssec:nonexactcord}
In this section we discuss a possible generalization of the material in Section \ref{ssec:exactcord} and
consider maps from the DGA of $\Lambda_{K}$ into the group ring of the fundamental group of a knot complement in presence of quantum corrections. The picture studied here applies for small shifts (i.e.~for Lagrangian fillings that are sufficiently close to being exact).

Let $K=K_{1}\cup\dots\cup K_{\n}$ be an $\n$-component link with complement $M_K$ that can be shifted off of the $0$-section; for example, $K$ could be a link with an adjoined braid axis, as discussed in Section~\ref{ssec:antibrane}. In the limit where periods are small, it is reasonable to expect that
the boundaries of all rigid holomorphic disks lie in a small neighborhood of the link. Here the disks look basically like the corresponding disks in the conormals of the link components that are glued in with a twist. We restrict attention to this case and write $x=(x_1,\dots,x_\n)$ and $p=(p_1,\dots,p_\n)$ for the longitude and meridian homology generators in $\Lambda_{K}$, where $H_{1}(\Lambda_K)=H_{1}(\Lambda_{K_1})\oplus\dots\oplus H_1(\Lambda_{K_\n})$ and $x_j,p_j\in H_{1}(\Lambda_{K_{j}})$. Since boundaries of holomorphic disks lie in a (punctured) tubular neighborhood of $K$ we fix obstruction chains from the disks' boundaries to linear combinations of fixed homology generators $x_j,p_j$ at infinity.
As above tree configurations then define a generating function (GW-potential)
\[
W(x,p)=\sum_{k,l,r} N_{k,l,r}e^{k\cdot x+l\cdot p}Q^{r},
\]
where $k=(k_1,\dots,k_\n)$ and $l=(l_1,\dots,l_\n)$ both range over all vectors in $\bZ^{n}$ and $r$ ranges over $\bZ$.

 In analogy with the above, we change variables in the coefficient ring of the DGA of $\Lambda_{K}$ by
\begin{equation}\label{eq:changeofvar}
x_j\mapsto X_j=x_j-\frac{\partial W}{\partial p_j},\qquad
p_j\mapsto P_j=p_j-\frac{\partial W}{\partial x_j},
\end{equation}
and write $\scA_{ml}$ for the DGA with transformed coefficient variables.

Let $\bC_{Q}\pi$ denote the group ring of the fundamental group of the link complement $\pi=\pi_1(S^{3}-K)$ over $\bC$, with the commutative variable $Q^{\pm 1}$ adjoined.
Define the algebra map $\Phi\colon \scA_{ml}\to \bC_{Q}\pi$ on Reeb chord generators as
\[
\Phi(a)=\sum_{\dim\scM_{\gamma}(a;\sigma)=0} |\scM_{Q^{k};\gamma}(a;\sigma)|[\gamma],
\]
where $\gamma\in\pi$ is the homotopy class of the boundary of the configuration, $\sigma$ is an obstruction chain as above, and $\scM_{\gamma}(a;\sigma)$ is the moduli space of quantum corrected holomorphic disks with boundary on $M_K$ and one positive puncture at the Reeb chord $a$. Note that this map makes sense as a map into $\bC_{Q}\pi$: since this is a module over the commutative ring $\bC[\pi_1(\Lambda_{K})]$, we need not keep track of the order of intersections of the reference chains in the boundary when all holomorphic disks lie near the link components.

The usual argument analyzing the boundaries of $1$-dimensional moduli spaces shows that
the map $\Phi\colon\scA_{ml}\to\bC_Q\pi$ is a chain map. Thus if $R_{Q}\pi$ denotes the subalgebra of $\bC_{Q}\pi$ that is generated by the image of $\Phi$, then any algebra map $R_{Q}\pi\to\bC$ induces an augmentation.
We next argue that the parts of the augmentation variety which come from representations of $R_{Q}\pi$ are Lagrangian with respect to the symplectic form $\sum_{j=1}^{n}dx_j\wedge dp_j$.

To see this, write $X_j$ and $P_j$ for the ``logarithmic coefficient variables'' in the algebra $\scA_{ml}$, see \eqref{eq:changeofvar}. Fix a link component $K_j$ of $K$ and write the function $\lk_j$, defined on generators $\gamma\in\pi$ of $\bC\pi$, for the linking number with $K_j$:
$\lk_j(\gamma)=\lk(K_j,\gamma)$. Write $\lk^{-1}(0)$ for the set of generators with vanishing linking number with each $K_j$. Since any generator $\gamma$ in $\lk^{-1}(0)$ is zero in homology, it is homotopic to a product of commutators: $\gamma=\Pi_{s=1}^{m}\alpha_j\beta_j\alpha^{-1}_{j}\beta^{-1}_{j}$ in $\bC\pi$.

Consider now an algebra morphism parametrized by $P_j$ and note that
\[
\frac{1}{\gamma}\frac{\partial \gamma}{\partial P_j}=
\sum_{s=1}^{m}\left(\frac{1}{\alpha_{s}}\frac{\partial \alpha_{s}}{\partial P_j}+
\frac{1}{\beta_{s}}\frac{\partial \beta_{s}}{\partial P_j}-
\frac{1}{\alpha_{s}}\frac{\partial \alpha_{s}}{\partial P_j}-
\frac{1}{\beta_{s}}\frac{\partial\beta_{s}}{\partial P_j}\right)=0
\]
We have
\[
e^{X_k}=e^{\sum_{j\ne k} \lk_j(X_k) P_j}\xi_k,
\]
where $\xi_k\in\lk^{-1}(0)$. Thus
\[
\frac{\partial X_k}{\partial P_j}=\lk_j(X_k)=\lk_k(X_j)=\frac{\partial X_j}{\partial P_k},
\]
since by symmetry of linking numbers, $\lk(K_j,K_k)=\lk(K_k,K_j)$.
Hence $X=X(P)$ is Lagrangian with respect to the symplectic form $\sum_{j=1}^{n} dX_j\wedge dP_j$, and since the change of coordinates $(X,P)$ to $(x,p)$ takes $\sum_{j=1}^{n} dX_j\wedge dP_j$ to $\sum_{j=1}^{n} dx_j\wedge dp_j$, we find that the original augmentation variety is Lagrangian as well.

\section{Examples}\label{sec:ex}

Here we provide a few computations of augmentation varieties for
links. The augmentation variety and the mirror-symmetry variety coincide in all cases where they have both been computed, and we denote both by $V_K$.
In addition, we present evidence for the conjectures presented in sections $4$ and $5$.

Let us summarize what we expect, based on the discussion of the previous sections. Let $K$ be an $\n$-component link.
The mirror/augmentation variety $V_K$
has a Zariski-closed subset (union of irreducible components) $V_K(P)$ for each primitive partition $P$ of $\{1,\ldots,n\}$, where we recall that primitive partitions divide $K$ into non-split sublinks $K_\alpha$. For fixed $Q$, we have
$$V_K(P)=\prod_\alpha V_{K_\alpha} (P_\alpha)$$,
where $K_\alpha$ corresponds to the subset $P_\alpha \subset \{1,\ldots,n\}$.
Moreover, for fixed $Q$, $V_{K_\alpha}\subset {(\C^*)}^{2n_\alpha}$ is expected to be a complex-Lagrangian variety of dimension $n_\alpha$, where $n_\alpha$ is the cardinality of $P_\alpha$, and the corresponding potential defining $V_{K_\alpha}$ is associated to the Lagrangian filling $M_{K_\alpha}$ as follows.  Let
$e^{x_i},e^{p_i}$ denote the coordinates of ${(\C^*)}^{2n_\alpha}$; then
$$x_i=\frac{\partial}{\partial p_i} W(p_1,...,p_{n_\alpha})$$
and
$$W(p_1,...,p_{n_\alpha})={\rm lim}_{\lambda\rightarrow 0}\ (\lambda\log
H_{{p_1/\lambda},\ldots,p_{n_\alpha}/\lambda}(K(P))),$$
%
where
$$
H_{{p_1/\lambda}, \ldots,p_{n_\alpha}/\lambda}(K(P))
$$
are the HOMFLY invariants of the link $K(P)$  colored by symmetric representations  of length $p_{n_i}/\lambda$.

Furthermore, let $P_1,P_2$ be primitive partitions such that $P_2$ is a refinement of $P_1$. If there is no primitive $P'$ such that $P'$ is a refinement of $P_1$ and $P_2$ is a refinement of $P'$, then we expect
\[
{\rm codim}(V_K(P_1) \cap V_K(P_2)) = 1.
\]
More generally, if we have a sequence $P_1,\ldots,P_r$ of primitive partitions such that each is a refinement of the previous one and the sequence cannot be lengthened by inserting another partition in the middle, then we expect
\[
{\rm codim}(V_K(P_1) \cap V_K(P_r)) \leq r.
\]
In particular, we expect $V_K(\n)= V_K(1\cdots \n)$ and $V_K(1^\n) = V_K((1)\ldots(\n))$ to intersect at least in a curve.

Finally, the case of knots and links colored by representations higher than the symmetric representation translates back to the case of links colored by totally symmetric representations. For any sublink $K_{\alpha}$ colored by a representation with $m$ rows corresponding to $m$ branes on the corresponding Legendrian, we get an equivalent description in terms of a link containing $m$ parallel copies of $K_{\alpha}$ with a single brane on each Legendrian. The only distinction between the two are the contributions of short strings, which can potentially kill some of the $V_K(P)$.

We will see that all the examples we have considered are consistent with the above predictions.
The polynomials determining the augmentation varieties in this section, along with a number of other examples, are available in a \textit{Mathematica} notebook at the third author's web site:

\centerline{
\url{http://www.math.duke.edu/~ng/}
}

\noindent
A number of the links considered in this section are depicted, along with their graphs, in Figure~\ref{fig:link-ex}.
Details of the computations are mostly suppressed; all computations use a \textit{Mathematica} package also available at the web site.
For notational purposes, note that the varieties $V_K$ and $V_K(P)$ are subsets of $(\C^*)^{2\n+1}$ with coordinates given by $\lambda_i=e^{x_i}$, $\mu_i=e^{p_i}$, and $Q=e^t$.

\begin{figure}[ht]
\centerline{
\includegraphics[width=\linewidth]{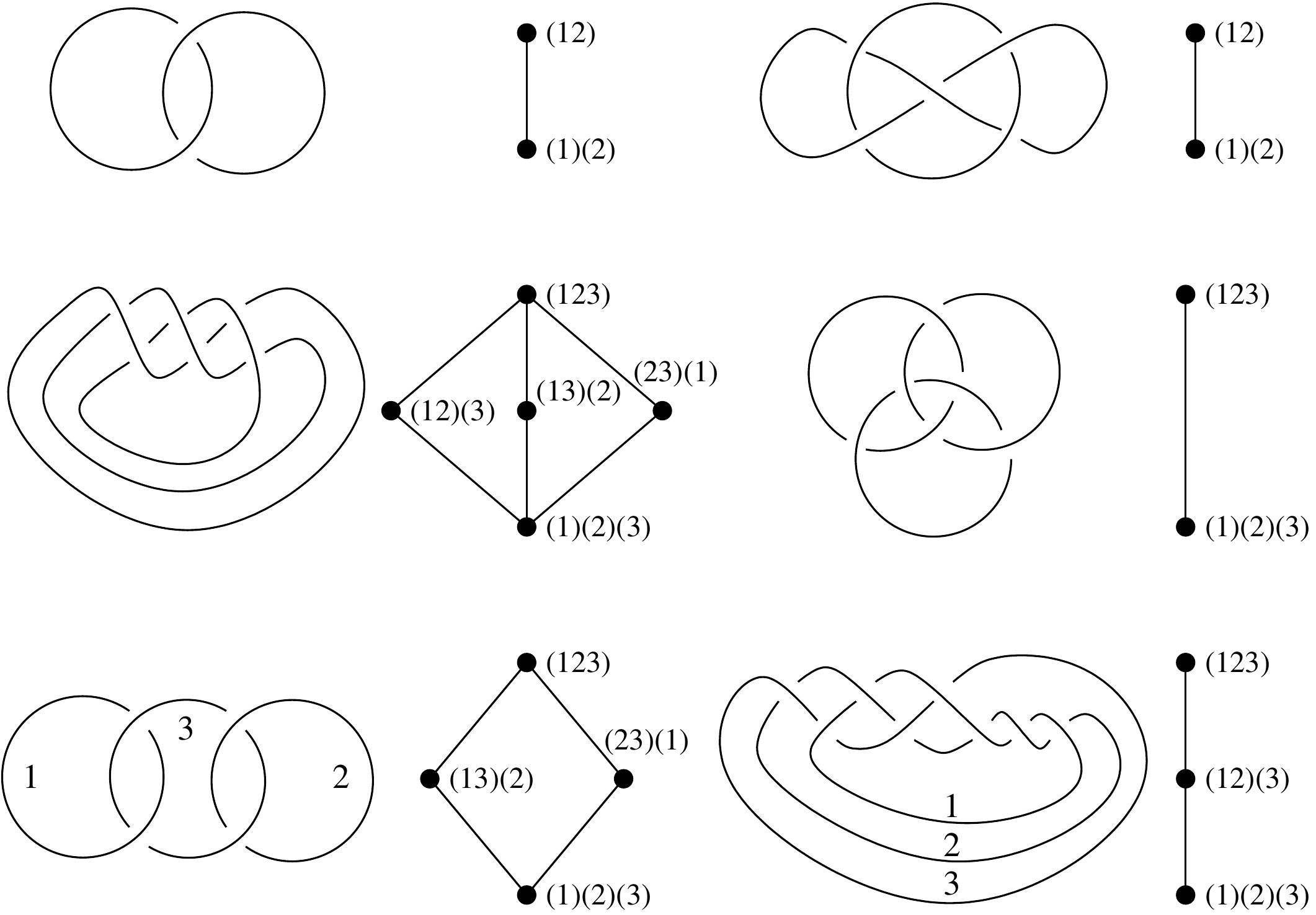}
}
\caption{
Examples of links $K$ and their graphs $\Gamma_K$. Left to right, top to bottom: Hopf link ($(2,2)$ torus link), Whitehead link; $(3,3)$ torus link, Borromean link;
connected sum of two Hopf links, L8n5. Where relevant, the components of the link are numbered; where no numbers appear, there is a symmetry given by interchanging components.
}
\label{fig:link-ex}
\end{figure}

\subsection{Torus links}
\label{ssec:toruslinks}

Here we give the augmentation varieties for various torus
links, obtained from knot contact homology.  In all cases, each link component is an unknot, and so
$V_K(1^n)$ is the vanishing set of
\[
(Q-\lambda_1-\mu_1+\lambda_1\mu_1,\ldots,
Q-\lambda_n-\mu_n+\lambda_n\mu_n),
\]
where $n$ is the number of components.

\begin{itemize}
\item
Hopf link ($(2,2)$ torus link): $V_K = V_K(1^2) \cup V_K(2)$ with
\[
V_K(2) = \left(\lambda_1-\mu_2,\lambda_2-\mu_1\right).
\]
\item
$(2,4)$ torus link: $V_K = V_K(1^2) \cup V_K(2)$ with
\[
V_K(2) = \left(\lambda_1\mu_1^2-\lambda_2\mu_2^2,~
\mu_1\lambda_1^2+(-\mu_1\mu_2^2+Q\mu_2)\lambda_1-\mu_2^3\right).
\]

\item
$(2,6)$ torus link: $V_K = V_K(1^2) \cup V_K(2)$ with
\begin{align*}
V_K(2) &= \left(\lambda_1\mu_1^3-\lambda_2\mu_2^3,~
\mu_1^3\lambda_1^3+(-\mu_1^3\mu_2^3+Q\mu_1^2\mu_2^2-\mu_1^2\mu_2^2)\lambda_1^2
\right.\\
& \left.\qquad +
(Q\mu_1\mu_2^4-\mu_1\mu_2^4-Q^2\mu_2^3)\lambda_1+\mu_2^6\right).
\end{align*}

\item
$(2,8)$ torus link: $V_K = V_K(1^2) \cup V_K(2)$ with
\begin{align*}
V_K(2) &= \left(\lambda_1\mu_1^4-\lambda_2\mu_2^4,~
\mu_1^6\lambda_1^4+(-\mu_1^6\mu_2^4+Q\mu_1^5\mu_2^3-\mu_1^5\mu_2^3+Q\mu_1^4\mu_2^2)\lambda_1^3\right.\\
&\qquad +
(-\mu_1^4\mu_2^6+2Q\mu_1^3\mu_2^5-2\mu_1^3\mu_2^5-Q^2\mu_1^2\mu_2^4)\lambda_1^2
\\
&\qquad
 \left.+(\mu_1^2\mu_2^8+Q^2\mu_1\mu_2^7-Q\mu_1\mu_2^7-Q^3\mu_2^6)\lambda_1+\mu_2^{10}\right).
\end{align*}
\end{itemize}

All of the listed examples of $(2,2f)$ torus links satisfy the Lagrangian reduction property, Conjecture~\ref{conj:reduction}. That is, if we write $K = K_1 \cup K_2$ and fix $Q$, then the intersection of $V_K(2)$ with $(\C^*)^2 \times V_{K_2}(1) = \{Q-\lambda_2-\mu_2+\lambda_2\mu_2=0\}$ projects in $\lambda_1,\mu_1$ space to a curve containing $V_{K_1}(1) = \{Q-\lambda_1-\mu_1+\lambda_1\mu_1=0\}$. Note that in general this projection contains more than just $V_{K_1}(1)$. For instance, when $K$ is the $(2,4)$ torus link, a quick calculation shows that the reduction is the union of $V_{K_1}(1)$ and $\{(Q+\mu_1)^2+\lambda_1\mu_1(1+\mu_1)^2=0\}$.


 We now also compare these computations to our Conjectures \ref{cnj:auglink} and \ref{cnj:codim1} relating the structure of the augmentation variety to the geometry of the link.  The $(2,2f)$ torus links have two knot components which are linked for $f\neq 0$ and unlinked for $f=0$. For $f=0$, we get a single primitive partition $P=(1)(2)$ (which we have written as $1^2$), and for $f\neq 0$ we get two: $P=(12)$ (which we have written as $2$) and $P=(1)(2)$. Thus, by our conjecture \ref{cnj:auglink}, we expect one (for $f=0$) or two (for $f\neq 0$) components of the augmentation variety in these cases, one of which, $V_K(1^2)$, is the variety of the unlink. This is exactly what we found.

For $f=0$, the graph $\Gamma_K$ has one vertex and no edges. For $f\neq 0$, $\Gamma_K$ has two vertices, $(12)$ and $(1)(2)$, and one edge connecting them. (This is depicted for $f=1$ in Figure~\ref{fig:link-ex}.) The intersection of the subvarieties joined by this edge has codimension $1$, in accordance with Conjecture~\ref{cnj:codim1}.


\begin{itemize}

\item
$(3,3)$ torus link: Here we have
\[
V_K = V_K((1)(2)(3)) \cup V_K((12)(3)) \cup V_K((13)(2)) \cup V_K((23)(1)) \cup V_K((123)),
\]
where $V_K((1)(2)(3)) = V_K(1^3)$ is given at the beginning of this section,
\[
V_K((123)) = V_K(3) = (\lambda_1\mu_1-\lambda_2\mu_2,~
\lambda_1\mu_1-\lambda_3\mu_3,~
\mu_1\lambda_1^3-\mu_1\mu_2\mu_3\lambda_1^2-Q\mu_2\mu_3\lambda_1+
\mu_2^2\mu_3^2),
\]
and the other components are given by
\[
V_K((12)(3)) =
(\lambda_1-\mu_2,\lambda_2-\mu_1,Q-\lambda_3-\mu_3+\lambda_3\mu_3)
\]
and cyclic permutation of indices.
\end{itemize}

One can show that $V_K(3)$ intersects $V_K((12)(3))$ in a surface, which projects in $(\lambda_1,\mu_1,\lambda_2,\mu_2)$ space to a curve containing $V_\Hopf(2)$, where the Hopf link is seen here as the sublink of the $(3,3)$ torus link consisting of components 1 and 2. Similarly, $V_K((12)(3))$ intersects $V_K(1^3)$ in a surface; this fact is a direct result of the fact that $V_\Hopf(2)$ and $V_\Hopf(1^2)$ intersect in a curve. Finally, $V_K(3)$ and $V_K(1^3)$ intersect in a curve.

The graph $\Gamma_K$ for the $(3,3)$ torus link is shown in Figure~\ref{fig:link-ex}, and the intersection dimensions are in accord with Conjecture~\ref{cnj:codim1}.

\subsection{Torus links from Chern-Simons theory}
\label{ssec:torusCS}

Before presenting other examples, in this subsection we will study torus links from the perspective of Chern-Simons theory.
We will show that for the $(2,2f)$ torus links considered in the previous subsection, the mirror variety of the link, obtained from Chern-Simons theory and large $N$ duality, agrees with the augmentation variety. Moreover, we will see that the geometry of the mirror variety is indeed captured by the link graph $\Gamma_K$. We expand our discussion to $(n,nf)$ torus links, which
provide a good testing ground of the general relation we spelled out earlier between a link ${\tilde K}$ that is an $n$-parallel of a knot $K$, and the knot $K$ itself, just colored by representations with $n$ rows.
In this case, if we let ${\tilde K}$ be the $(n,nf)$ torus link, the corresponding knot $K$ is the $(1,f)$ torus knot, i.e., the unknot in framing $f$>

Recall from Section~\ref{ssec:multiplebranes} that there is a relation between the HOMFLY polynomial $H_R(K)$ of a knot $K$ colored by representations $R$ with $n$ rows and certain colored HOMFLY polynomials of the link ${\tilde K}$ consisting of $n$ parallel copies of $K$.
One can see this by simply observing that, on the one hand, the HOMFLY polynomial in any representation is the expectation value of the Wilson loop
$$
H_R(K) = \langle {\rm Tr}_R U(K)\rangle,
$$
and on the other hand, tensoring $n$ copies of the symmetric representation, we can get any representation with up to $n$ rows, so
$$
 {\rm Tr}_{m_1}U\ldots {\rm Tr}_{m_n}U =  \sum_R N^R_{m_1,\ldots,m_n} {\rm Tr}_R U,
$$
where $N^R_{R_1,\ldots,R_n}$ are tensor product coefficients for a unitary group of rank greater than $n$. This holds inside the expectation values as well, leading to a relation between the HOMFLY's:
$$
H_{m_1,\ldots,m_n}({\tilde K}) = \sum_R N^R_{m_1,\ldots,m_n} H_R(K).
$$
The only novelty is that
$$H_{m_1,\ldots,m_n}(\tilde K) =  \langle {\rm Tr}_{m_1}U\ldots {\rm Tr}_{m_n}U\rangle,
$$
containing $n$ parallel Wilson loops along $K$, is interpreted as the expectation value of the link ${\tilde K}$ rather than the knot $K$.

For the unknot $K$ in framing $f$,
colored by an arbitrary representation $R$, the HOMFLY polynomial is
given by a well-known expression:
$$
H_R(K) = T^f_R \,S_{0R}/S_{00},
$$
where $S_{RR'}$ and $T_{RR'} = T_R \delta^R_{R'}$ are the $SU(N)$ WZW $S$ and $T$ matrices, as before. In particular, $S_{0R}/S_{00}$ is just the quantum dimension of the $SU(N)$ representation $R$. From this, the partition function of $n$ branes on $K$ is easy to compute:
$$
Z({ K})(x_1, \ldots , x_n) =
\sum_{R}H_R(K)\;{\rm Tr}_{R}(e^{x_1},\ldots,e^{x_n}) \Delta(e^{x_1},\ldots,e^{x_n}).
$$
As we show in Appendix~\ref{app:toruslink}, it follows that
$$Z({K})(x_1,\ldots,x_n) = \sum_{m_1,\ldots,m_n\geq 0}\prod_{i=1}^n  B_{f}({m}_i)e^{m_1 x_1 +\ldots +m_n x_n}\Delta(q^{m_1},\ldots,q^{m_n}).$$
In the above,  $\Delta(z_1,\ldots,z_n)$ is the usual Vandermonde determinant $ \prod_{1\leq i<j\leq n}({z_i} - {z_j})$, and $B_f(m)$ is the HOMFLY polynomial for an unknot in framing $f$, colored by a totally symmetric representation of rank $m$:
$$
B_{f}(m) \propto q^{(f-1) m^2/2}\prod_{i=1}^m \frac{1 - {\hat {Q}}q^{i-1}}{1-q^{i}}.
$$
By $\propto$ we mean equality up to overall normalization. We also drop factors that ultimately only shift what we mean by holonomy.

In the classical limit, the Vandermonde determinant does not contribute, correspondingly,
up to terms suppressed by $g_s$ relative to the leading order:
$$
Z(K)(x_1,\ldots,x_n) \sim \prod_{i=1}^n Z(K)(x_i).
$$
This is simply a product of $n$ copies of a single brane partition function
$$
Z(K)(x) = \sum_{m\geq 0}  B_{f}({m})e^{m x} \sim \exp\left({1\over g_s} W_K(x)\right).
$$
It is easy to show that $(\mu, \lambda) = (e^p,e^x)$ with $p$ defined by
$$p= \frac{\partial}{\partial x} W_K(x)$$
lies on a Riemann surface that is mirror to the unknot with framing $f$ (the unusual framing replaces $\lambda$ by $\lambda\mu^f$):
$$
V_K:\qquad (1-\mu) - \lambda \mu^f(1- Q \mu) =0.
$$

Thus, with rank $n$ representations coloring the knot $K$,
$$Z({K})(x_1,  \ldots x_n) \sim \exp\left({1\over g_s} (W_K(x_1)+\ldots +W_K(x_n))\right).
$$
Geometrically, this 
comes from $n$ Lagrangians conormal to the unknot, which corresponds to the partition
$$
P = (1)(2)\ldots(n).
$$
The corresponding mirror variety is just $n$ copies of $V_K$:
$$
V_{\tilde{K}}(1^n) = (V_{K})^n.
$$
This is consistent with the analysis of higher representations based on knot contact homology which we gave in Section~\ref{ssec:higherrank}.

Now consider the partition function of the link ${\tilde K}$:
$$
Z({\tilde K})(x_1, \ldots , x_n) =\sum_{m_1,\ldots,m_n} H_{m_1,\ldots,m_n}({\tilde K})  \;{\rm Tr}_{m_1}(e^{x_1}) \ldots {\rm Tr}_{m_n}(e^{x_n}).
$$
By either the reasoning of Section~\ref{ssec:multiplebranes}, or the above relation of the HOMFLY's and
the fact that
$$
{\rm Tr}_{R}(e^{x_1}, \ldots ,e^{x_n}) = \sum_{m_1, \ldots ,m_n} N^R_{m_1,\ldots,m_n} e^{m_1 x_1} \cdots e^{m_n x_n},
$$
it follows that
$$
Z({\tilde K})(x_1, \ldots , x_n) =Z({ K})(x_1, \ldots , x_n)/\Delta(e^{x_1},\ldots,e^{x_n}).
$$

We will consider the case $n=2$ in some detail. The partition function has two saddle points. At one, corresponding to
$P = (1)(2),$
we have
$$Z({\tilde K})(1^2)(x_1,  x_2) \sim \exp\left({1\over g_s} (W_K(x_1)+W_K(x_2))\right),
$$
where $W_K$ is the potential of a single knot, and
$$
V_{\tilde K}(1^2) = (V_K)^2.
$$
This agrees with our statements in Sections~\ref{ssec:knotparallels} and \ref{ssec:higherrank} that rank $n$ representations of a knot are a subset of the augmentations of the link which is the $n$-copy of the knot.  In this filling, both meridians are filled, $\mu_{1,2} = e^{p_{1,2}} \sim 1$, and we get two disconnected Lagrangians, as befits the fact that the corresponding partition is $P=(1)(2)$.

There is one more saddle point,  which as we will see corresponds to
$P=(12),$
and which gives a filling that is a single Lagrangian, the link complement.

As we will see, at this saddle point,  we can make the holonomy around the meridian, $\mu_{1,2}$, as large as we wish, as befits the complement of a link. In this case it is better to consider the Fourier transform of the amplitude, and consider  the coefficient $H_{m_1, m_2}$ of $e^{m_1x_1 + m_2 x_2}$. Assuming $m_1\geq m_2$,
we can write
$$
H_{m_1, m_2} (\tilde K)=\sum_{k=0}^{m_2} B_f({m_1 + k}) B_f({m_2 -k})( q^{m_1+k} -  q^{m_2-k}).
$$
Now we will show that this has a saddle point for $m_{1,2} \gg 0$.  Assume that the saddle point exists in this regime. Then, in the perturbative $g_s$ expansion around it,
we can replace the sum by an integral:
$$
H_{p_1/g_s, p_2/g_s}(\tilde K) = \int dq\; B_f(p_1+ q) B_f({p_2 -q})( e^{p_1+q} - e^{p_2-q}).
$$
It is easy to find the saddle point of this integral; we will show that it indeed occurs for  $\mu_{1,2}\gg1$. If we let
$$B_f(p) \sim \exp(U_K(p)/g_s+\ldots ),
$$
then $U_f(p)$ is the Legendre transform of the framed unknot potential $W_f(x)$, corresponding to working in terms of the meridian variable $p$, which is canonically conjugate to $x$:
$$
U_K(p) =_{crit} W_K(x) + px.
$$
This means that the potential $U_{{\tilde K}}(p_1, p_2)$ corresponding to $V_{{\tilde K}}$ is
$$
U_{{\tilde K}}(p_1, p_2) =_{crit} U_K(p_1+q) + U_K(p_2 -q),
$$
where the right hand side is taken at the critical point, obtained by extremizing with respect to $q$.
The corresponding variety $V(2)$ is given by the intersection of
\beq\label{feq}
\lambda_1=\lambda_2
\eeq
and a variety obtained by eliminating $z=e^q$ from the equations
\beq\label{zeq}\begin{aligned}
(1-\mu_1 z)(\mu_1 z)^{f} - \lambda_1 (1- Q \mu_1z) &=0\\
(1-\mu_2 /z)(\mu_2/ z)^{f} - \lambda_2 (1- Q \mu_2/z) &=0.
\end{aligned}
\eeq
We can solve for $\lambda_{1,2}$ and $z$ in these equations for any $\mu_{1,2}$, consistent with our assumption that treating
$\mu_{1,2}$ as a continuous variable was justified. Thus we have shown the saddle point corresponding to $V(2)$ indeed exists.

It is easy to show that for integer values of $f$, this reproduces the $V(2)$ varieties of $(2,2f)$ links obtained for integer values of $f$. Moreover, we can prove that, for any $f\neq 0$, the two varieties intersect on a curve.

For example, for $f=2$, corresponding to the $(2,4)$ link, the variety $V(2)$ is given by a pair of equations consisting of \eqref{feq}, which is the same for all $f$ and
$$\lambda_1^2  + \lambda_1 \mu_1 \mu_2 (1 - \mu_1 \mu_2 Q)- \mu_1^3 \mu_2^3=0.
$$
from solving \eqref{zeq}.
After a shift of framing
$$
\lambda_{1,2} \rightarrow \lambda_{1,2} \mu_{1,2}^2
$$
followed by a change of orientation $\lambda_{1,2},\mu_{1,2} \rightarrow \lambda_{1,2}^{-1},\mu_{1,2}^{-1}$,
this agrees with the $V(2)$ component of the augmentation variety of the $(2,4)$ link in the previous subsection.

For $f=3$, or the $(2,6)$ link, we get \eqref{feq} together with
$$\lambda_1^3 -
\lambda_1^2 \mu_1^2 \mu_2^2 (1 - Q + \mu_1 \mu_2 Q^2) +
 \lambda_1 \mu_1^3 \mu_2^3 (-1 + \mu_1 \mu_2 (-1 + Q)) + \mu_1^6 \mu_2^6=0.
$$
After a shift of framing
$$
\lambda_{1,2} \rightarrow \lambda_{1,2} \mu_{1,2}^3
$$
followed by inverting $\lambda_{1,2},\mu_{1,2}$, this agrees with the $V(2)$ component of the augmentation variety of $(2,6)$ link we gave earlier.


Finally, for $f=4$, or the $(2,8)$ link, we get \eqref{feq} and
$$\begin{aligned}
\lambda_1^4  +
 \lambda_1^3 \mu_1^2 \mu_2^2 (1 + \mu_1 \mu_2 (-Q + Q^2) - &\mu_1^2 \mu_2^2 Q^3)-
 \lambda_1^2 \mu_1^4 \mu_2^4 (1 - 2 \mu_1 \mu_2 (-1 + Q) + \mu_1^2 \mu_2^2 Q^2)  \\
+  &\lambda_1 \mu_1^6 \mu_2^6 (1 + \mu_1 \mu_2) (-1 + \mu_1 \mu_2 Q)+ \mu_1^{10} \mu_2^{10}=0,
\end{aligned}
$$
which gives the surface of the previous section, upon replacing
$$
\lambda_{1,2} \rightarrow \lambda_{1,2} \mu_{1,2}^4
$$
and inverting $\lambda_{1,2},\mu_{1,2}$.

The intersection of the two varieties $V_{\tilde K}(2)$ and $V_{\tilde K}(1^2)$ 
corresponds to asking that $z=z(\mu_i, \lambda_i)$, which solves the equations \eqref{zeq} provided $\mu_i$, $\lambda_i$ live on $V(2)$, in fact takes a specific value: $z=1$. This is because along the locus $z=1$, $V(2)$ and $V(1^2)$ coincide, as is easily seen from the two equations. Since a priori $z$ is a function of $\mu_i$, $\lambda_i$, this puts one additional constraint on these variables, which reduces $V(2)$ to a curve---this is the curve of its intersection with $V_{\tilde K}(1^2)$. Moreover, the curve itself is easily seen to be nothing but a copy of $V_K(1)$:
$$
V_{{\tilde K}}(2) \cap V_{{\tilde K}}(1^2) = V_{K}(1).
$$

\subsection{Whitehead link}

The Whitehead link, the closure of the $3$-braid
$\sigma_1\sigma_2^{-1}\sigma_1\sigma_2^{-1}\sigma_1$, consists of two
unknot components that have linking number $0$ but are nonetheless
linked. Here we find that $V_K
=  V_K(1^2) \cup V_K(2)$ with $V_K(1^2)$ as usual and
\begin{align*}
V_K(2) &= \left(
(\lambda_1\mu_1-\lambda_1\mu_2-\mu_1\mu_2+Q)\lambda_2
+ \lambda_1\mu_1\mu_2-Q\lambda_1-Q\mu_1+Q\mu_2, \right. \\
& \qquad \mu_1^2\mu_2\lambda_1^4 +
(-\mu_1^2\mu_2^2-Q\mu_1^2\mu_2+Q\mu_1\mu_2^2+2Q\mu_1\mu_2-Q\mu_1^2
-Q^2\mu_2+Q^2\mu_1)\lambda_1^3 \\
& \qquad
+(Q\mu_1^2\mu_2^2+Q\mu_1^2\mu_2-2Q\mu_1\mu_2^2+Q^2\mu_1^2-4Q^2\mu_1\mu_2+Q^2\mu_2^2-2Q^2\mu_1+Q^2\mu_2+Q^3)\lambda_1^2
\\
& \qquad \left.
+(-Q^2\mu_1^2\mu_2+Q^2\mu_1\mu_2^2+2Q^2\mu_1\mu_2-Q^2\mu_2^2+Q^3\mu_1-Q^3\mu_2-Q^3)
\lambda_1
+Q^3\mu_2 \right).
\end{align*}

One can show easily that $V_K(1^2)$ and $V_K(2)$ intersect on codimension $1$; this is consistent with our conjectures.
As depicted in Figure~\ref{fig:link-ex}, the graph $\Gamma_K$ has two vertices, corresponding to the partitions $(12)$ and $(1)(2)$, connected by an edge. (Note that $K(P) = K$ for $P = (12)$; for $P=(1)(2)$, $K(P)$ is the split link consisting of two unknots.)

It would be interesting to compare this with the direct computation of the mirror variety $V_K$ of the Whitehead link, based
on HOMFLY coming from symmetric representations. The recent progress in computing HOMFLY invariants of links colored by totally symmetric representations should make this possible \cite{morozov, kawagoe, ramadevi1,Zhu:2012tm, Morozov1, Morozov2, ramadevi2, ramadevi3, Morozov3}.

\subsection{Borromean link}

The Borromean link, the closure of the $3$-braid $\sigma_1\sigma_2^{-1}\sigma_1\sigma_2^{-1}\sigma_1\sigma_2^{-1}$, consists of three
unknot components, any two of which are unlinked. Here we have $V_K = V_K(1^3) \cup V_K(3)$ (note that there is no separate $V_K(P)$ for $P = (12)(3)$, $(13)(2)$, or $(23)(1)$, since any two-component sublink is split) with
\begin{align*}
V_K(3) &= \left(
(\lambda_1\mu_2-\lambda_1\mu_1+\mu_1\mu_2-Q)\lambda_2
+(-\lambda_1\mu_1\mu_2+Q\lambda_1+Q\mu_1-Q\mu_2), \right.
\\
& \qquad (\lambda_1\mu_3-\lambda_1\mu_1+\mu_1\mu_3-Q)\lambda_3
+(-\lambda_1\mu_1\mu_3+Q\lambda_1+Q\mu_1-Q\mu_3), \\
& \qquad \left. P(\lambda_1,\mu_1,\mu_2,\mu_3,Q) \right),
 \end{align*}
where $P$ is a certain polynomial of degree $6$ in $\lambda_1$ (see the \textit{Mathematica} notebook
for the precise polynomial).
It would be interesting to compare this with the direct computation of $V_K$ based
on HOMFLY coming from symmetric representations on the Borromean link.

For the Borromean link, $V_K(3)$ and $V_K(1^3)$ intersect on a surface, i.e. in codimension 1,
as expected from the D-module picture discussed in Section~\ref{sec:largeN}. For generic $Q$, one can calculate that the following surface in $(\bC^*)^6$ is contained in $V_K(3) \cap V_K(1^3)$:
\[
\{(\mu_1,\mu_2,\mu_3,\lambda_1,\lambda_2,\lambda_3) =
(Q^{1/2},\mu_2,\mu_3,-Q^{1/2},\frac{\mu_2-Q}{\mu_2-1},\frac{\mu_3-Q}{\mu_3-1})\}
\]
with two parameters $\mu_2,\mu_3 \in \bC^*$. (In fact, $V_K(3) \cap V_K(1^3)$ contains a union of three surfaces; the other two are obtained by cyclically permuting the indices $1,2,3$.)

This structure agrees with the conjectures of Sections~\ref{sec:largeN} and \ref{Sec:knotch}: in this case, the link graph $\Gamma_K$ has only two vertices, corresponding to the two primitive partitions of the link. One partition, $(123)$, is the original Borromean link itself. Removing any strand of the link, we get the unlink, so the only other primitive partition is $(1)(2)(3)$, corresponding to the unlink. The two vertices of the graph are connected, as we need to move only one component to go between the two links. The fact that there are only two vertices, one of which is $(1)(2)(3)$, is consistent with the fact that the augmentation variety has two components, one of which corresponds to the $V$ of the unlink with three components. Finally, the fact that the vertices of the graph are connected by an edge is consistent with the fact that $V_K(3)$ and $V_K(1^3)$ intersect in codimension 1. See Figure~\ref{fig:link-ex}.

\subsection{Other three-component links}

Here we briefly consider two more 3-component links. The first is the
link $K$ given by the connected sum of two Hopf links. In other words, $K$ consists of three unknots, with two of the unknots, $K_1$ and $K_2$, being
meridians of the third, $K_3$. Thus $K_1,K_3$ are Hopf linked, as are $K_2,K_3$, and $K_1,K_2$ are unlinked. This is a special case of the link considered in Section~\ref{ssec:alggeomex} (where $K_3$ here is $K_0$ there). Here there is no component of $V_K$ corresponding to the partition $(12)(3)$, since $K_1 \cup K_2$ is split, but all other partitions are primitive, and we have:
\begin{align*}
V_K(3) = V_K((123)) &= (\lambda_1-\mu_3,\lambda_2-\mu_3,(Q-\mu_3)\lambda_3+\mu_1\mu_2(\mu_3-1)) \\
V_K((13)(2)) &= (\lambda_1-\mu_3,\lambda_3-\mu_1,Q-\lambda_2-\mu_2+\lambda_2\mu_2) \\
V_K((23)(1)) &= (\lambda_2-\mu_3,\lambda_3-\mu_2,Q-\lambda_1-\mu_1+\lambda_1\mu_1) \\
V_K(1^3) = V_K((1)(2)(3)) &= (Q-\lambda_i-\mu_i+\lambda_i\mu_i)_{i=1}^3.
\end{align*}
The graph for this link is shown in Figure~\ref{fig:link-ex}, and intersections are in accordance with Conjecture~\ref{cnj:codim1}.

Another illustrative example is the 3-component link L8n5, also shown in Figure~\ref{fig:link-ex}. Here the components are all unknots, and $K_1,K_3$ are unlinked, as are $K_2,K_3$. We omit the calculation of the augmentation variety here, but simply state that there are three components to the augmentation variety, corresponding to primitive partitions $(123)$, $(12)(3)$, and $(1)(2)(3)$, and these intersect pairwise in codimension $1$. In particular, it is the case that $V_K(3)$ and $V_K(1^3)$ intersect in codimension $1$, showing that equality need not hold in the inequality
\[
{\rm codim}(V_K(P_1) \cap V_K(P_r)) \leq r
\]
in Conjecture~\ref{cnj:codim1}: choose $r=3$, $P_1=(123)$, $P_2=(12)(3)$, $P_3=(1)(2)(3)$.

\section{${\cal D}$-modules and quantization:  the D-model and the D-mirror}\label{Sec:Dmodel}
One of the most central objects of our study in this paper is the mirror variety $V_K$, the moduli space of disk instanton corrected Lagrangian fillings of the Legendrian conormal $\Lambda_K$ in the resolved conifold $Y$, associated to a link $K$. The purpose of this section is to propose a way to quantize the mirror variety $V_K$, and thereby obtain quantum invariants of $Y$ with Lagrangian branes that  approach $\R\times\Lambda_K$ at infinity.  In fact this leads to a new way of thinking
about mirror symmetry (at least for non-compact branes) which may be more general than the usual
picture and which we call the D-model on $V_K$.  Even though our primary examples
will be the mirrors for links and knots, it should be clear that our proposal is more general.

We start in the case when $K$ is a knot. Then we have a single D-brane probing the conifold $Y$. The mirror of $Y$ as probed by the brane was conjectured in \cite{AV} to be the topological B-model string on the Calabi-Yau threefold
\beq\label{bmod}
Y_K: \qquad A_K(e^x,e^p, Q)=uv.
\eeq
We consider a single B-brane wrapping the subvariety $\{(w,u,v)\in Y_K\colon w\in V_K, v=0\}$, where $V_K$ is the curve with equation $A_K=0$.  By large $N$ duality, the partition function of the B-brane should compute the HOMFLY polynomial of the knot colored by totally symmetric representations,
$$
\Psi_K(x, g_s) =  \sum_{m}H_{m}(K) \; e^{-m x}.
$$

The topological B-model on Calabi-Yau manifolds of type \eqref{bmod} was shown in \cite{IH} to be a theory of a single fermion $\psi(x)$ on the corresponding Riemann surface
$$
V_K: \qquad A_K(e^x,e^p, Q)=0.
$$
The fermion $\psi(x)$ is the operator that inserts the D-brane at a point $x$ on the Riemann surface\footnote{The Calabi-Yau manifolds that arise from mirror symmetry are a fair bit subtler, as the underlying Riemann surfaces are singular, and one has to describe how the singularities get resolved as a part of the data of quantization. Note that the fermion number conservation requires that a conjugate fermion $ \psi^*(\infty)$ be placed at infinity on $V_K$, which we will often include only implicitly.}, so that
$$
\Psi_K(x)=\langle \psi(x)\rangle_{V_K}.
$$
The expectation value of the fermion is determined by the geometry of the Riemann surface. In particular, near the classical limit,
$$
\langle \psi(x)\rangle_{V_K} \sim \exp\left(\frac{1}{g_s}W_K(x)+\ldots\right),
$$
which corresponds to $(p,x)$ lying on the Lagrangian submanifold $V_K$
$$
p =\frac{\partial W_K}{\partial x}(x).
$$

In quantum theory, as shown in \cite{IH}, $x$ and $p$ do not commute, but instead
\beq\label{Oper}
[x,p] = g_s,
\eeq
and we can view them as operators arising by quantization of $V_K$ viewed as
a Lagrangian submanifold of
$$
{\cal M}_1= T^*(T^2)
$$
with symplectic form
$$
\Omega= dx\wedge dp,
$$
which has its origin in the holomorphic three-form ${du \over u} \wedge dx\wedge  dp$ on $Y_K$.
From the present perspective, the quantization \eqref{Oper} expresses the fact that $x$ and $p$ are mirror to the holonomies of the brane on $Y$ around the longitude or $(1,0)$ and the meridian or $(0,1)$ cycles, respectively, of the torus $\Lambda_K$ at infinity. Indeed, as discussed in Section \ref{sec:SYZmirror}, the theory on the brane is Chern-Simons theory, and the commutation relation between the monodromies along the dual cycles in quantum Chern-Simons theory is well known.

The existence of a fermion and the above non-commutative structure for the space
that the Riemann surface lives on was rederived in \cite{DHSV} (see also \cite{Dijkgraaf2, Dijkgraaf3}) using string dualities, mapping
topological string amplitudes to the partition function of type IIA strings in flat space in a background with $D4$ and $D6$ branes intersecting
on a Riemann surface, leading to a fermion living on the intersection.  This picture was later simplified \cite{vstring} and reformulated in terms of  yet another topological string theory on ${\cal M}_1$.  This reformulation involves coisotropic
branes for A-models introduced in \cite{Kapustin-Orlov}
and studied in a similar context in \cite{Kapustin-Witten,Gukov-Witten}.  In that context,
one studies only the conformal theory.  However, in the context of topological strings, we consider the same worldsheet theory, but couple it to 2d gravity, i.e.~we view it as a string theory and in the usual way couple
topological strings to the worldsheet metric.
 The usual topological
string theory amplitudes can be reformulated as amplitudes of this simpler topological string theory,
where one is studying topological strings on ${\cal M}_1$ with a canonical coisotropic
brane filling ${\cal M}_1$ and a Lagrangian brane wrapped on the Riemann surface. This string
theory is simpler because there are no higher string loop corrections and the theory is exact
at the level of one-loop (annulus) computations!  We call this picture the `D-mirror' and the corresponding
model the `D-model' (so labeled for its relation to D-modules).  Note that the A-model, B-model, and D-model are dual
to each other. The A-model is the most difficult to solve, since even the genus 0 amplitude is
hard to compute.   The  B-model is simple in genus 0, but
more difficult at higher loops, and the D-model is exact at the level of one loop (and disk amplitudes).  In this way the D-model picture represents a satisfactory solution to the A-model.

We next consider the case of  many-component links. In this case we have branes with many moduli and multiple brane probes, and this naturally leads to higher dimensional mirrors.
This imposes the problem of how to define a critical topological strings in higher dimensions. We will
now argue that the above reformulation of topological strings for the case of knots can be extended to the
case of links leading to the resolution of this problem. For the case of knots or ordinary mirror
symmetry, the D-model can be viewed as simply a restatement of what one had learned in the context
of topological vertex formalism.  However, for the case of links, the D-model provides the only candidate for the mirror theory.

The D-mirror variety of a $\n$-component link $K$,
$$
V_K = \bigcup_P \;V_K(P),
$$
is a Lagrangian submanifold of the torus
$$
{\cal M}_\n= T^*{{T}^{2\n}}
$$
with coordinates
$$
x_i, \;\; p_j, \qquad i,j=1,\ldots,\n,
$$
corresponding to the complexified holonomies around longitudes and meridians of the components of $\Lambda_K$ at infinity of $Y$ (recall that $\Lambda_K$ is a union of $\n$ $T^2$'s, one for each knot component of $K$). The quantum theory on the $\R\times\Lambda_K$ brane is again $GL(1)$ Chern-Simons theory,  so we know that quantization leads to the standard commutation relations (with $x_i$ and $p_i$ non-commuting):
\beq\label{comm}
[x_i, p_j] = g_s \delta_{ij},\quad 1\le i,j\le \n.
\eeq
This means that quantization of the theory must be quantization of ${\cal M}_\n$, viewed as a phase space with symplectic form
$$
\Omega = \sum_i dx_i\wedge dp_i.
$$
Moreover, we must associate to the Lagrangian submanifold $V_K$ a wave function that corresponds to
\beq\label{wag}
\Psi_K(x_1,\ldots,x_\n) = \sum_{m_1,\ldots,m_\n}H_{m_1,\ldots,m_\n}(K) e^{-m_1 x_1-\ldots -m_\n x_\n}
\eeq
whose various semi-classical expansions lead to the Lagrangians $V_K(P)$,
$$
\Psi_K(x_1, \ldots, x_\n) \sim \exp\left(\frac{1}{g_s}W_K(P)(x_1,\ldots , x_\n)+\ldots\right).
$$

Generalizing what we already discussed in the context of mirror symmetry and knots, we propose that the mirror is based on the D-model on $V_K$.
To define it, we start with the topological A-model on ${\cal M}_\n$  studied in \cite{Kapustin-Witten,Gukov-Witten}.
This leads to quantization of the phase space, by putting a single coisotropic A-brane wrapping all of ${\cal M}_\n$. Studying the algebra of open string states on this A-brane leads to viewing coordinates of ${\cal M}_\n$ as operators satisfying \eqref{comm}. Adding an A-brane on $V_K$, the bifundamental strings with one boundary on $V_K$ and one on the coisotropic brane have a unique ground state which leads to a fermion
$\psi(x_1,\ldots, x_\n)$ living on $V_K$.  The theory of this fermion is the D-model on $V_K$. In particular, we conjecture that computing the expectation value of this fermion living on $V_K$ leads to $\Psi_K$:
$$
\Psi_K(x_1, \ldots , x_\n) = \langle \psi(x_1,\ldots, x_\n)\rangle_{V_K}.
$$
This conjecture is naturally motivated by extending the case
of knots, which we already know works, to the case of links.

Note that the D-model is exact at the annulus level. In this context we interpret the worldsheet of bifundamental string states between the coisotropic brane and the Lagrangian brane as corresponding to holomorphic annuli with one boundary on ${\cal M}_\n$ and the other on the $V_K$. Such annuli lie in the intersection ${\cal M}_\n\cap V_K$, which is similar to what happens for holomorphic curves with boundary components mapping to either ${\cal  M}_\n$ or ${V_K}$ considered in a suitable complexification of ${\cal M}_\n$, and in that setup the formal dimension of a curve of genus $g$ with $h$ boundary components
is, see e.g.~\cite{CEL},
\[
\dim=(\n-3)(2-2g-h).
\]
In particular, $\dim=0$ for the annulus $g=0, h=2$, and thus all the holomorphic annuli that contribute are rigid. (This in principle also gets a contribution from the Maslov index; the latter should vanish in this case since the annulus amplitude cannot be zero.) This is parallel to the situation when the Maslov index vanishes in the Calabi-Yau case for $m=3$, when the first factor in the dimension formula vanishes.

Our proposal for quantization of $V_K$ in terms of the A-model topological string on ${\cal M}_\n$ naturally leads to D-modules  \cite{Kapustin-Witten}, and provides an a priori way to associate a quantum D-module to the classical Lagrangian $V_K$.
 Namely, the fermion $\psi$ is a bifundamental string with one boundary on $V_K$ and the other on the coisotropic brane on ${\cal M}_\n$. As such, $\psi$, and hence $\Psi_K$ as well, provides a module for strings with two endpoints on the coisotropic brane. The latter simply correspond to functions on ${\cal M}_\n$, whose algebra is deformed by \eqref{comm}. Thus, $\Psi_K(x_1, \ldots, x_\n)$ generates a D-module for the algebra \eqref{comm}.

A strong piece of evidence for our conjecture is the fact, recently proven in \cite{Garoufalidis}, that the HOMFLY polynomial of a link  $H(K)_{m_1, \ldots, m_\n}$ colored by totally symmetric representations
 generates a D-module for the Weyl algebra ${\cal D}$, acting by
$$
e^{p_i} \,H_{{m_1}\ldots m_i \ldots{m_\n}}(K) = q^{m_i}  H_{{m_1}\ldots {m_i} \ldots {m_\n}}(K),
$$
and
$$
e^{x_i} \, H_{{m_1}\ldots {m_i} \ldots {m_\n}}(K)  =   H_{{m_1}\ldots (m_{i}+1) \ldots {m_\n}}(K),
$$
whose characteristic variety is a Lagrangian submanifold. This is equivalent,  by the discrete Fourier transform in \eqref{wag}, to the action of $x_i$ and $p_i$ on $\Psi_K(x_1, \ldots, x_\n)$.\footnote{Another way of saying the same thing is to say that $\Psi_K$ is $q$-holonomic \cite{GL}. The work \cite{Garoufalidis} showing that the HOMFLY polynomial colored by symmetric representations is holonomic generalizes earlier work of \cite{GL} showing that the colored Jones polynomial is $q$-holonomic.} This fact is predicted by our conjecture for quantization of $V_K$.
 As an aside, note that the present perspective provides a prediction
that the HOMFLY polynomial of the link $K$ colored by representations of $n$ rows is $q$-holonomic as well.


In the rest of this section we will review the work of \cite{Kapustin-Witten} and show how this applies to our context.

\subsection{Review of A-model on ${\cal M}_\n$ }

To begin with, notice that the torus
$$
{\cal M}_\n= T^*{{T}^{2\n}}
$$
is in fact hyper-K{\"a}hler. In the complex structure we have used so far on ${\cal M}_\n$, which we call $I$,
there is a holomorphic $(2,0)$ form
$$
\Omega =  \sum_i dx_i \wedge d{p}_i
$$
and a $(1,1)$ form
$$
k = i \sum_i (dx_i \wedge d{\bar x}_i +d p_i \wedge d{\bar p}_i).
$$
With this choice, the torus $T^{2\n}$ is Lagrangian, and the variety $V_K$ is holomorphic. However, if we let
$$
\Omega = \omega_J+i \omega_K,
$$
$$
k= \omega_I,
$$
then the triplet of forms
$$
\omega_I, \quad \omega_J,\quad \omega_K
$$
are the three K\"ahler forms of ${\cal M}_\n$, viewed as a hyper-K\"ahler manifold.  
Note that, in the other two complex structures, $J$ and $K$, $V_K$ becomes Lagrangian rather than holomorphic as it is for $I$. This fact will be important later.

We consider A-model defined on ${\cal M}_\n$ with real symplectic structure

$$\omega = {\rm Im}(\Omega) =\omega_K.
$$
The A-model admits, in addition to the usual Lagrangian branes, which are half the dimension of the symplectic manifold, coisotropic branes \cite{Kapustin-Orlov} that can have lower codimension. On ${\cal M}_\n$, which is hyper-K\"ahler, there is a canonical example of such a coisotropic brane of rank one that in fact wraps all of ${\cal M}_\n$. We will denote the canonical coisotropic brane by ${\cal B}_{c.c}$.  The discussion in this subsection borrows from \cite{Kapustin-Witten}.

On any coisotropic brane there is a line bundle whose curvature $F$ satisfies
$$
 (\omega^{-1} F)^2=-1
$$
where we view $\omega^{-1}F$ as a map that takes $T{\cal M}_\n$ to itself. In the present case, on the canonical coisotropic brane, $F$ is simply
$$
F = {\Re}(\Omega) = \omega_J,
$$
and the coisotropic condition is satisfied since ${\cal M}_\n$ is hyper-K\"ahler, with triplet of K\"ahler forms $\omega_I$, $\omega_J$, $\omega_K$
that satisfy
\begin{align*}
\omega_I^{-1}\omega_K &= J, \\
\omega_J^{-1}\omega_I &= K, \\
\omega_K^{-1}\omega_J &= I.
\end{align*}
Here $I,J,K$ satisfy $I^2=J^2=K^2 =-1$, $IJ = -JI= K$, etc., and are the three complex structures on ${\cal M}_\n$.
Thus, the ${\cal B}_{c.c}$ brane is a canonical coisotropic brane in complex structure $I$. In fact, with respect to the three complex structures $I,J,K$, the canonical coisotropic brane is an
$$
(A,B,A)
$$
brane. We have already seen it is an A-brane in complex structure $I$. It is also an A-brane in complex structure $K$, where we take
$\omega_I$ as the symplectic form, simply because $\omega_I^{-1} F = \omega_I^{-1} \omega_J = K$. Finally, in complex structure $J$ it is a B-brane, since in that complex structure $F=\omega_J$ is the $(1,1)$ form.

Consider now topological open string with two boundaries on the ${\cal B}_{c.c}$ brane.
The action of the A-model on a disk can be written up to $Q$-exact terms as
$$
S =  \int_{\Sigma} \Phi^*(\Omega),
$$
where we used $\Omega =  F+i \omega$. This reflects the fact that the instantons of the A-model are holomorphic in complex structure $K$,
$\omega= \omega_K$. It is easy to show that only degenerate maps contribute \cite{Kapustin-Witten}.

The algebra of open strings is associative, as usual, but due to the presence of the flux $F$, it becomes non-commutative. Since $\Omega =  \sum_i d x_i \wedge d p_i $ is exact, the action above introduces a boundary coupling
$$
S =- \sum_i\int_{\partial \Sigma}  p_i d x_i
$$
so that $p_i$ and $x_i$ are canonically conjugate and thus satisfy the standard commutation relations
$$
[p_i , x_j] = g_s \delta_{ij}.
$$
Note that $g_s$ corrections in this duality frame are not string loop corrections, but $\alpha'$ corrections instead.
Locally, this simply implies that
$$
p_i = g_s \partial_{x_i},
$$
and the coordinates $e^{x_i}$ and $e^{p_j}$ of the torus ${\cal M}_\n=({\C}^*)^{2\n}$ become operators, with
$$
e^{p_j} = e^{g_s {\partial_{x_j}}}\text{ or equivalently }e^{p_i}e^{x_j}=e^{g_s\delta_{ij}}e^{x_j}e^{p_i},
$$
and Laurent polynomials in these operators generate the (exponentiated) Weyl algebra 
\beq\label{Weyl}
{\cal D}_\n = \left\langle e^{x_1}, \ldots ,e^{x_\n}, e^{g_s {\partial_{x_1}}}, \ldots ,e^{ g_s {\partial_{x_n}}}\right\rangle.
\eeq
The algebra of open strings on ${\cal M}_\n$ is locally simply the space of differential operators of the Weyl algebra ${\cal D}_\n$ above acting on functions of $x_1,\dots,x_\n$. The elements of this algebra are operators $A\in{\cal D}_\n$ of the form
$$
A = \sum_{\k_1, \ldots, \k_\n} a_{\k_1, \ldots, \k_\n}(e^{ x_1},\ldots, e^{x_\n}) e^{ \sum_i \k_i  g_s {\partial_{x_i}} }.
$$
The quantization of ${\cal M}_\n$ defined by the A-model on ${\cal M}_\n$ is nothing but deformation quantization, where the operator product is simply the star product of functions on ${\cal M}_\n$.

\subsubsection{ Branes and ${\cal D}$-modules}

Every A-brane in ${\cal M}_\n$ in complex structure $K$
provides a module on which the elements of ${\cal D}_\n$ act. This follows from a general fact in string theory: given a pair of branes ${\cal B}$, ${\cal B}'$, the $({\cal B}, {\cal B}')$ strings form a module on which $({\cal B}, {\cal B})$ strings act by multiplication on the left.
In the present case, we apply this with the brane ${\cal B}$ being the canonical coisotropic brane.
For the ${\cal B}'$ brane, we will eventually take a Lagrangian brane defined by $V_K$ of the $\n$-component link $K$; for now, we will simply take the brane to wrap a Lagrangian submanifold $V$ of the same type, namely $V$ is given by equations\footnote{The set of equations is taken to be such that it defines a Lagrangian submanifold, by assumption.}
$$
V : \qquad A_{\alpha}(e^{x_i},e^{p_i})=0, \qquad \alpha = 1,\ldots, \n
$$
in ${\cal M}_\n$.  This brane is a Lagrangian $A$-brane in complex structures $J$ and $K$:  both $\omega_J$ and $\omega_K$, which are the $(1,1)$ forms in those complex structures,
vanish along $V$. Moreover, it is clearly holomorphic in complex structure $I$, so it defines a $B$ brane for the B-model on ${\cal M}_\n$. Thus ${V}$ gives rise to a
$$
(B,A,A)
$$
brane. It was shown  in \cite{Kapustin-Witten} that such an A-brane provides a ${\cal D}_\n$-module for the differential operators acting on the square root of the canonical bundle of $V$, ${\cal K}_{V}^{1/2}$.

%
%
 %


From quantization of $({\cal B}_{c.c}, {\cal B}')$ strings we get a single fermion living on $V$, which is naturally a section of  ${\cal K}^{1/2}_V$ \cite{Kapustin-Witten}.
Consider first the case where the $(B,A,A)$ brane is given simply by
$p_i =0$ for $i=1,\ldots \n$,
so that in this case $V$ is a flat Lagrangian, $V=N=({\mathbb C}^*)^\n$. The problem of quantizing open strings translates into a problem in supersymmetric quantum mechanics with target space $N$. To compute the spectrum of open string states, we need to find the ground states of the quantum mechanics problem. Fermions in the supersymmetric quantum mechanics on $N$ are sections of $TN$. When we quantize this, we get spinors on $N$. All together, the states of open $({\cal B}_{c.c}, {\cal B}')$ are sections of the spin bundle on $N$, which is
${\cal K}^{1/2}_{N}\otimes(\oplus_{j=0}^\n \Omega^{0,j}(N))$. These are $(0,j)$ forms with values in ${\cal K}^{1/2}_{N}$. Here we are using the fact that $N$ is a complex manifold in complex structure $I$. The supersymmetric ground states are states annihilated by the supersymmetry operator $Q$, modulo the $Q$-exact states. In this case, $Q$ is just the ${\bar \partial}$ operator acting on $\Omega^{(0,j)}$ forms on $N$; locally only the $j=0$ cohomology is nontrivial. This is where the fermion
$$
\psi(x)
$$
lives and is thus a section of ${\cal K}^{1/2}_{N}$.  When we consider more complicated $(B,A,A)$ branes $V$, locally the Lagrangian always looks like a copy of $N$. The question about the spectrum of the bifundamental strings is local, so we should always get a single fermion $\psi(x)$ at the intersection. 

\subsection{Second-quantizing D-modules}\label{sec:2ndquant}
We consider now the A-model string theory with a  ${\cal B}_{c.c}$ brane on ${\cal M}_\n$ and a Lagrangian ${\cal B}$ brane wrapping $V$. Since ${\cal M}_\n$ is hyper-K\"ahler, there are no higher string loop corrections. The theory is exact at the semi-classical level, and the fermions are free. The only corrections come from the sigma model loops, and these are controlled by $g_s$. Thus, we expect to get a theory of a single free fermion living on $V$, yet one that is a section of the ${\cal D}_\n$-module.

Consider the one point function of the fermion, viewed as the string field, inserted at a point $x$ on $V$, in the background of a single anti-fermion, which we insert at infinity,
\beq\label{corr}
\Psi(x) = \langle \psi(x)\rangle_V = \left\langle \psi(x) \psi^*(\infty)\right\rangle_V.
\eeq
As it stands, this is not gauge invariant, since $\psi$ is charged under the gauge fields on both the ${\cal B}_{c.c}$ and the ${\cal B}'$ brane. To fix this, we can modify the correlator by inserting
\beq\label{dressing}
\psi(x) \psi^*(\infty) \rightarrow \psi(x) \psi^*(\infty) \,e^{{1\over g_s}\int^x_{\infty} (A - A')},
\eeq
where $A$ is the connection on the ${\cal B}_{c.c}$ brane and $A'$ the connection on the Lagrangian brane ${\cal B}'$. The coupling to gauge fields is natural from the worldsheet perspective where the correlator is computed by the disk amplitude of the topological $A$ model, and the coupling to gauge fields comes from the standard way the target space gauge fields couple to the worldsheet of the string.\footnote{More precisely, we consider a worldsheet which is a ribbon with one boundary on the ${\cal B}_{c.c}$ brane and the other on the ${\cal B}'$ brane, and we integrate over all paths that start at the point with coordinates $x$ at the worldsheet time $\tau\rightarrow -\infty$ and go off to $x=\infty$ at $\tau \rightarrow +\infty$. The propagation in infinite time automatically projects to the vacuum states, and these correspond to fermion insertions.}

If the ${\cal B}'$ brane were the flat brane at $p_i=0$, in the flat target space ${\cal M}_\n = ({\C}^*)^{2\n}$, then we would find that $\Psi(x) = 1$, corresponding to expectation value of a fermion in a free field theory.
Here however the correlator is modified, and to leading order
\beq\label{onept}
\Psi(x)\sim \exp\left({1\over g_s} W(x)  +\ldots\right),
\eeq
where $W(x)$ is determined by the local equation of the ${\cal B}'$-brane in ${\cal M}_\n$, as the set of points with coordinates $(x,p)$ where
$$
p_i = \frac{\partial W}{\partial x_i}(x).
$$
Existence of such a function $W(x)$ is equivalent to vanishing of $\sum_i dp_i \wedge dx_i$ on $V$, which in turn follows by the definition of $V$ being a Lagrangian submanifold of ${\cal M}_\n$ in complex structure $K$.

We next discuss how this relates with \eqref{corr}. Recall that the connection on the coisotropic brane ${\cal B}_{c.c}$ does not vanish identically. Rather, the presence of curvature on the coisotropic brane,
$$
F = {\Re}\left(\sum_i d p_i \wedge dx_i\right),
$$
implies that $F=dA$, where
\beq\label{conn}
A ={\Re}\left( \sum_i p_i\,dx_i\right)
\eeq
and
$$
p_i = p_i(x)
$$
is the local equation of $V$ as embedded in ${\cal M}_\n$. In the topological A-model on any manifold, holomorphy of the amplitudes implies that the gauge field on the brane is always paired with the symplectic form $\omega$, so it is only the combination $F+i\omega = \sum_i dp_i \wedge dx_i$ that appears. Consequently, the dependence of the amplitude \eqref{corr} on $\int A$ implies it depends on $\int \sum_i p_i dx_i$.

This does not quite lead us to conclude that \eqref{corr} takes the value \eqref{onept} for the following reason: the expectation value \eqref{corr} depends on not only the gauge field $A$ on the coisotropic brane, but also the gauge field $A'$ on the Lagrangian brane. However, the condition $F=dA$ does not uniquely determine $A$ to satisfy \eqref{conn}, but only determines it up to a gauge transformation. As we shall see, the two deficiencies cancel.
Like on the coisotropic brane, on the Lagrangian brane wrapping $V$ one finds two sets of fields that are naturally paired: one set is the components $A'_i$ of the flat gauge field, and the other is scalar fields $\phi_i$ that are sections of the normal bundle to $V$ and describe deformations of the Lagrangian. Since $V$ is a Lagrangian and ${\cal M}_\n$ symplectic, the normal bundle is just the cotangent space, $T^*V$.
In the A-model in complex structure $K$, the connection on $V$ and the scalars $\phi_i$ always appear in the combination
$$
\sum_j (A_j' + i \phi_j)dx^j = \sum_j p_j'(x) dx^j,
$$
where we defined $p_j'(x) = A_j'(x) + i \phi_j(x)$ which in effect parametrizes a choice of $GL(1)$ connection on $V$.
This gives us two different yet equivalent ways of deforming the Lagrangian: one by varying the local defining equation $p_i = p_i(x)$, and the other by turning on the $GL(1)$ connection $p_j'(x)$. We can use the fact that that $p_j'$ itself is not gauge invariant to soak up the gauge dependence of the statement that $A_i = p_i$. We see that the only invariant combination is the difference $p_i-p_i'=\frac{\partial W}{\partial x_i}(x)$, and that describes the embedding of the branes.
Thus, gauge invariance of the amplitude \eqref{corr} and holomorphy of the topological A-model imply that the one point function takes the value \eqref{onept}.

We conclude that the A-model on ${\cal M}_\n$ with the two sets of A-branes, the canonically coisotropic brane ${\cal B}_{c.c}$ and the Lagrangian brane on ${\cal B}'$, leads to a \emph{holonomic} D-module for the Weyl algebra ${\cal D}_\n$. Among all the D-modules, the holonomic ones are special: they arise from quantization of Lagrangian (rather than coisotropic) submanifolds, and hence precisely correspond to ${\cal B}'$ branes that are Lagrangian. In particular, it is well known that holonomic D-module are cyclic, i.e.~generated by a single element.
This corresponds to the fact that among all the $({\cal B}_{c.c},{\cal B}')$ strings there is a unique special one, the ground state, which is our fermion.

The D-module ${\cal V}$ is the set of all operators in ${\cal D}_\n$ modulo the ideal ${\cal I}\subset {\cal D}_K$ of those that annihilate $\Psi(x)$:
$$
{\cal V} = {\cal D }/{\cal I}, \quad {\cal I}=\{A\in{\cal D}_\n\colon A\Psi=0\}.
$$
The fact that ${\cal V} $ is generated by a single element $\Psi$ means that any element $\Theta\in {\cal V}$ can be obtained by acting on $\Psi$ with elements of ${\cal D}_\n$, $\Theta=A\Psi$ for some $A\in{\cal D}$.
The ideal ${\mathcal I}$ is typically generated by 
a finite set of $m$ generators $A_\alpha$ satisfying
\beq\label{Dmod}
A_{\alpha}(e^{x_i}, \, e^{  g_s {\partial_{x_j}}})   \Psi =0,
\eeq
so that
$$
{\cal I} = {\cal D}_\n \cdot \langle A_1, \ldots, A_{m} \rangle.
$$
Recall that any element of ${\cal D}_\n$ can be written in the form
$$
A = \sum_{\k_1,\ldots,\k_\n} a_{\k_1, \ldots, \k_\n}(e^{ x_1},\ldots, e^{x_\n})\, e^{ \sum_i \k_i  g_s {\partial_{x_i}} }.
$$
From the semi-classical limit of $\Psi$ in \eqref{onept}, it follows that the classical Lagrangian $V$ that we started with is the set of points in ${\cal M}_\n$ satisfying
$$
A_{\alpha}(e^{x_i}, e^{p_j})=0, \qquad \alpha=1,\ldots m,
$$
which in the language of ${\cal D}$-modules means that $V$ is the characteristic variety of the ${\cal D}$-module ${\cal V}$.

There is another, equivalent perspective on the problem.
Every A-brane wrapping an arbitrary submanifold $V$ of ${\cal M}_\n =
({\C}^*)^{2\n}$ can be viewed as $r$ branes on  $N={(\mathbb C^*)}^\n$ for some $r$. Such a brane is equipped with a flat ${GL(r)}$ connection $A'$, which complexifies the $U(r)$ gauge group on the branes, with
$$
dA' + {1\over g_s}  A'\wedge A' =0.
$$
There is a set of $r$ fermions ${\Psi}_a$, $a=0, \ldots, r-1$, transforming in the fundamental representation of ${GL(\n)}$, that give rise to a covariantly constant section with respect to this connection:
$$
\left(\partial_i  +  {1\over g_s} A'_i\right) \begin{pmatrix}
\Psi_0 \\
\vdots \\
\Psi_{r-1}
\end{pmatrix}=0.
$$
The flatness of the $GL(\n)$ connection guarantees that there are $r$ linearly independent solutions to this equation. The $r$ linearly independent solutions naturally correspond to $r$ fermions living on the $r$ sheets of $V$.
The fact that these two different descriptions exist is well known in the theory of D-modules: the fact that they get related to flat connections is one of the key aspects of the theory. Moreover, the relation between the two pictures is also well known.  One can prove that there is a basis and a set of $r$ operators $P_0, P_1, \ldots, P_{r-1}$, which are generators of the module ${\cal V}$,  such that
$$
\Psi_0 = P_0\Psi, \qquad \Psi_1 = P_1 \Psi, \qquad \ldots \qquad ,\Psi_{r-1} = P_{r-1} \Psi.
$$
The operators $P_0, \ldots  ,P_{r-1}$ form a basis of the module ${\cal V}$ viewed as a free module over functions on $({\C}^*)^{\n}$ (that is, with coefficients given by functions of $e^{x_i}$).  Different choices of basis are related by $GL(\n)$ gauge transformations. In particular, we can always choose $1$ to be one of the generators, so up to $GL(\n)$ transformations, we can set $P_0=1$.
In this description, we effectively flatten out the Lagrangian brane, and make it manifest that the theory on the branes is a free theory.

\subsection{Knots and D-modules}
We next consider the general theory developed in Section \ref{sec:2ndquant} in the special case of Lagrangian submanifolds $V=V_K$ of ${\cal M}_\n$ where $V_K$ is the variety associated to a link $K$. Above we treated $V$ as smooth, whereas, as we saw in Sections \ref{sec:largeN}, \ref{Sec:knotch}, and \ref{sec:augmentations}, $V_K$ is a union of Lagrangians
$$
V_K = \bigcup_P V_K(P)
$$
intersecting according to the graph $\Gamma_K$. At the outset, classically,  the fact that $V_K$ is reducible implies that the connection $A'$ on the $V$-brane is reducible, and we get one fermion $\psi_P$ for each $P$. From the physical perspective, when the link is non-split, we expect to get a single theory that unifies all these different $V_K(P)$  into a single object, quantum mechanically. Thus one would expect all the different components of $V_K$ to play a role in the quantization.  The fact that the characteristic varieties of irreducible ${\cal D}$-modules are in general reducible is a general feature of ${\cal D}$-modules, and indeed, here it is naturally forced on us.

Now, recall that we observed that in all computed cases, the graph $\Gamma_K$ encodes the geometry of the intersections of the components in $V_K$. The vertices of the graph correspond to components $V_K(P)$ and for a pair of vertices  $P$, $P'$ connected by an edge of the graph, $V_K(P)$ and $V_K(P')$  intersect in codimension $1$. For every intersection between $V_K(P)$ and $V_K(P')$ we get additional states at the intersection which are bifundamental $(P,P')$ strings.
 If the intersection is in codimension $1$, then by giving expectation value to these bifundamental matter fields, one can fuse the two Lagrangians.\footnote{Near the  intersection we can model the two Lagrangians by $p_1 x_1=0$, and $p_2=\ldots=p_{\n}=0$, which can be smoothed out to $p_1 x_1 = \mu $, which describes the single Lagrangian. Here 
 $\mu\sim \lambda$, since at $\lambda=0$ we must get the classical Lagrangian that factors.}
One can think of these as corresponding to the off-diagonal components of the connection $A'$ on $({\C}^*)^{\n}$, connecting blocks corresponding to $P$ and $P'$.

This structure matches the mathematical theory very well: it is known \cite{kashiwara} that every ${\cal D}$-module whose characteristic variety is a union of two components that intersect in codimension greater than $1$ is in fact reducible. Thus, both physically and mathematically, the interaction between the component $V_K(P)$ for different $P$ comes from intersections of codimension $1$.

\subsection{Examples}

The simplest case to consider is when the link is totally split: each component is separated from all the rest.  In this case, as already noted in Section \ref{Sec:knotch}, the augmentation variety has a single component, with $P=1^\n$. This is the product of $V(1)'$s of the individual knots, and all the other $V_K(P)$'s are empty.  This also implies that
$$
V_K=\prod_i V_{K_i}(1)
$$
and that 
${\cal B}_{c.c}$ is the product
of the individual ${\cal B}_{c.c}$'s for each link component. Thus the partition function of the topological
string should be the product of the partition function for each knot component.  This is consistent with the factorization
property of HOMFLY for totally split links:
$$
H_K=\prod_{i=1}^\n H_{K_i},
$$
where $H_{K_i}$ is the HOMFLY invariant associated with the $i$-th knot component $K_i$.

Now let us consider more general links.  For every link $K$, $V_K$ will include $V_K(1^\n)$, which is indeed known to be part of the
full amplitudes, where we ignore worldsheet configurations that can end in distinct knots.
If there were no other components
to $V_K$, or if there were other components but all intersected over codimension higher than $1$, then we would again simply get the above product formula for the HOMFLY, since the theory on each component is that of a free fermion. However, there is a correction
to this formula, because there {\it are} components to $V_K$ other than $V(1^\n)$, with the 
graph $\Gamma_K$ describing how $V_K(P)$'s intersect each other pairwise over codimension-$1$ loci. On each of these codimension-$1$ loci, there is a bifundamental matter field coming from the off-diagonal components of the gauge field $A'$. If these bifundamentals get expectation values of order $g_s$, then they are not visible classically, so classically the connection is reducible and we get a free fermion for each $P$. At finite $g_s$ we end up unifying the different components. So the intersection
locus and the degrees of freedom of open strings localized at such intersections could in principle provide the necessary correction of order $g_s$.

To see how this may come about, it is instructive to consider the simple example of the Hopf link.  As described previously, the
two components of $V_K$ are given by:
\begin{align*}
V_K(1^2):  &\qquad Q-X_i-P_i+X_iP_i=0, \qquad i=1,2 \\
V_K(2): &\qquad  X_1=P_2 , \quad P_1=X_2.
\end{align*}

If the two unknot components were unlinked, we would only have $V_K(1^2)$, and we would get the product of the partition
function of two unknots.  However, the fact that $V_K(2)$ is also there changes the story.
From the HOMFLY polynomial, we can read off the exact answer for the ideal $I$ of operators that annihilate $\Psi$. We can write this in two different ways. One is from the perspective of the coisotropic brane, describing which of the ${\cal B}_{c.c}-{\cal B}_{c.c}$ strings act trivially on the fermion. This gives the ideal
$I$ determined by three quantum difference equations:
\begin{align*}
\Bigl((-X_1+X_2) - (1-Q X_1)P_1 + (1-Q X_2) P_2\Bigr)\Psi&=0
\\
\Bigl(  (1-q^{-1} X_2)X_2 -(1-q^{-1} X_2)P_1 - (1-Q X_2) q^{-1}X_2 P_2 + (1-Q X_2)P_1P_2  \Bigr)\Psi&=0
\\
\Bigl( (1-q^{-1} X_1)X_1 -(1-q^{-1} X_1)P_2   - (1-Q X_1) q^{-1}X_1 P_1 + (1-Q X_1)P_1P_2\Bigr)\Psi&=0.
\end{align*}
In the above, 
$X_{1,2} =e^{x_{1,2}}$ and $P_{1,2} = e^{g_s \partial_{x_{1,2}}}$ are operators that do not commute, due to the flux on the coisotropic brane.
Viewed as a set of linear differential equations, this has two solutions, as discussed in Section~\ref{sec:largeN}, associated to $V(1^2)$ and $V(2)$, so that the general solution is
$$
\Psi = c_{V(1^2)} \Psi_{V(1^2)} + c_{V(2)} \Psi_{V(2)},
$$
with two a priori arbitrary constants $c$. As previously discussed, $\Psi_{V(2)}$ simply solves
\begin{align*}
(P_1 -X_2)\Psi_{V(2)}&=0, \\
(P_2-X_1)\Psi_{V(2)}&=0,
\end{align*}
but  $\Psi_{V(1^2)}$ is more complicated.

Another way to characterize this is from the perspective of the Lagrangian brane. From this perspective, the fact that the brane has two components $V_K(1^2)$ and $V_K(2)$, each of which covers the $X_1,X_2$ plane once, means that we can describe the vacuum of the theory in terms of a flat rank $2$ connection on $N = ({\mathbb C^*})^2$. Indeed, one can show that the above system of equations is equivalent to a set of two linear equations:
\begin{align*}
P_1  \left(\begin{array}{c}
\Psi_0 \\
\Psi_{1}
\end{array}\right) &= \begin{pmatrix}
X_2 & (1-Q X_2)(1-X_1)/(1-QX_1) \\
0& (1-X_1)/(1-QX_1) \end{pmatrix} \left(\begin{array}{c}
\Psi_0 \\
\Psi_{1}
\end{array}\right) \\
P_2  \left(\begin{array}{c}
\Psi_0 \\
\Psi_{1}
\end{array}\right) &= \begin{pmatrix}
X_1 & 1-X_1 \\
0& (1-X_2)/(1-QX_2) \end{pmatrix}\left(\begin{array}{c}
\Psi_0 \\
\Psi_{1}
\end{array}\right),
\end{align*}
where $\Psi_0$ is what we previously called $\Psi$. 

In this language, we can see how the solution corresponding to $V_K(1^2)$ is corrected by the presence of $V_K(2)$. To get $\Psi_{V_K(1^2)}$, consider the projection sending $(\Psi_0, \Psi_1)$ to $\Psi_1$. This gives the $V_K(1^2)$ brane embedded in ${\cal M}_2$ in isolation, associated to two unlinked unknots.  This is not the exact solution, since the off-diagonal components of the connection do not vanish. But once we have picked a specific solution for $\Psi_1$, it determines $\Psi_0$ in perturbation theory for small $g_s$. In particular, one can easily show that classically, $\Psi_{V(1^2)} =\Psi_1(1+{\cal O}(g_s))$, where the subleading corrections are determined from $\Psi_1$ by off-diagonal components of $A'$. Finally, to this solution, we can add any multiple of a solution associated to $V_K(2)$.

Finally, note that in the case of the Hopf link, we can recover all the data of the link from studying more branes---or more fermions, on the Riemann surface $V_{\bigcirc}$ mirror to the unknot. The exact answer for the HOMFLY for the case of the Hopf link is known to be obtainable
by putting two branes on the curve associated to the conifold, one along the $x$-leg and one along the $p$-leg \cite{IH}. We have seen examples of this in Section \ref{sec:ex} as well, when we studied torus links. This should be an example of a more general phenomenon,
where the topological string on ${\cal M}_k$ for any $\k$  provides a way to get a hierarchy of ${\cal D}$-modules for free, simply by inserting more fermions. For any number $\k$ of fermions inserted, we get a module for the $\k\n$-th Weyl algebra, and moreover all the ${\cal D}$-modules that arise are holonomic, with the $\n$-fermion correlation function being the cyclic generator of the ${\cal D}$-module.

As an example, consider the link ${\tilde K}$ given by $\n$ parallel copies of a given knot ${ K}$. In this case, we have two descriptions of the theory. One is the standard one in terms of ${\cal M}_\n=T^*(T^{2\n})$ with a ${\cal B}_{c.c}$ brane and a single Lagrangian brane wrapping $V_{\tilde K}$. In this theory, the partition function of the system

\beq\label{lp}
Z_{\tilde K}(x_1, \ldots, x_\n) = \sum_{m_1, \ldots,m_\n} H_{m_1, \ldots, m_\n}({\tilde K} )e^{-m_1 x_1-\cdots -m_\n x_\n}
\eeq
should be the one point function of the fermion
\beq\label{opt}
Z_{\tilde K}(x_1, \ldots, x_\n) =  \langle\psi(x_1, \ldots, x_\n)\rangle_{V_{\tilde K}}
\eeq
inserted at a point in $V_{\tilde K}$.  One of the components of $V_{\tilde K}$ is $V_{\tilde K}(1^\n)$; this simply equals $\n$ copies of $V_K(1)$.  For most framed knots $K$, ${\tilde K}$ is a nontrivial link, and when this is the case, $V_{\tilde K}$ has additional components.\footnote{Only when $K$ is the unknot in trivial framing is ${\tilde K}$ a split link.} Correspondingly, the partition function \eqref{opt} does not factor. Instead, the $S_\n$ symmetry of the link translates into the permutation symmetry of $V_{\tilde K}$ and the fermion one point function is totally symmetric in its $\n$ variables.

The second description is based on the Riemann surface $V_K$ in ${\cal M}_1=T^*(T^{2})$ which is the mirror for the knot ${ K}$. We insert $\n$ fermions
$$\langle\psi(x_1) \ldots \psi(x_\n)\rangle_{V_{K}}
$$
at $\n$ points on the Riemann surface $V_{{\tilde K}}$ (with $\n$ fermions at infinity to cancel the charge).
Since the $\n$ fermions are totally identical particles that anticommute, the partition function is totally antisymmetric. As we discussed in Section \ref{sec:largeN},  the correspondence between the two pictures is
$$\langle\psi(x_1,\ldots,x_\n)\rangle_{V_{\tilde K}} = \langle\psi(x_1) \ldots \psi(x_\n)\rangle_{V_{K}} /\Delta(x_1, \ldots , x_\n).
$$
Dividing by the Vandermonde restores the invariance of the partition function under all permutations of $\n$ strands.
We saw an example of this in the case of $(n, nf)$ links we studied in Section~\ref{ssec:torusCS}. There we showed that the partition function \eqref{lp} of the link has the form:\footnote{In the present case,
$
w_i(x) =\sum_m B_f(m) q^{m i} e^{-m x},\;\;i=1, 2, \ldots,
$
and $B_f(m)$ is given in Section~\ref{ssec:torusCS} and Appendix \ref{app:toruslink}.}
\beq\label{lpb}
\langle\psi(x_1, \ldots, x_\n)\rangle_{V_{\tilde K}} ={\rm det}(w_{i-1}(x_j))/\Delta(x_1, \ldots , x_\n).
\eeq
The simplicity of this expression has a natural explanation in the second viewpoint: the right hand side is the wave function of $\n$ free fermions on $V_K$, provided we treat the fermions as sections of the ${\cal D}$-module. Here $w_j(x)$ is a basis of fermion wave functions that arises from $V_K$. These shift the vacuum of the theory on $V_K$ from the canonical one, where the single particle wave functions are $z_j(x) = e^{j x}$, to $w_j$, so that
$
\langle\psi(x_1) \cdots \psi(x_\n)\rangle_{V_{K}} ={\rm det}(w_{i-1}(x_j)).
$

\section*{Acknowledgments}
We are greatly indebted to Robbert
Dijkgraaf for collaboration at the early stages of this work.

 We would like to thank SCGP where part of this work was done in the tenth Simons Workshop on math and physics.   M.A. and C.V. would also like to thank the Simons Foundation for hospitality at the Simons Symposium on Knot
Homologies and BPS States, April 2012.
In addition we would like to thank Mohammed Abouzaid, Denis Auroux, Pavel Etingof, Sergei Gukov, Albrecht Klemm, Andrei Okounkov, Ivan Smith, and Maxim Zabzine for valuable discussions.
The research of M.A. is supported in part by the Berkeley Center for Theoretical Physics, by the National Science Foundation (award number 0855653), by the Institute for the Physics and Mathematics of the Universe, and by the US Department of Energy under Contract DE-AC02-05CH11231.
T.E. is supported in part by VR grant 2012-2365 and by the Knut and Alice Wallenberg Foundation.
L.N. is supported in part by NSF grant DMS-0846346.
C.V. is supported in part by NSF grant PHY-0244821.

\appendix
\section{Hopf link}\label{app:Hopf}
 The simplest nontrivial link is the Hopf link. The corresponding Chern-Simons expectation value is given by
the matrix element of the $SU(N)_k$ $S$ matrix in the representations coloring the two components of the link,  normalized as in Section~\ref{Sec:review} to set the partition function with no insertions equal to 1.
In the present case, we take in addition the representations to be totally symmetric:
\[
H_{m_1,m_2}^{\Hopf}=S_{m_1, m_2}/S_{00},
\]
where $m_{1,2}$ as before denote totally symmetric representations with $m_{1,2}$ boxes. The corresponding wave function
$$
\Psi_{\Hopf}(x_1,x_2)=\sum_{m_1, m_2} H_{m_1, m_2}^{\Hopf} e^{m_1 x_1} e^{m_2 x_2},
$$
after redefining $e^{-x_i}$ by a constant shift of $(Qq^{-1})^{1/2}$, can be written as
$$
\Psi_{\Hopf}(x_1,x_2) = \sum_{m_2} \frac{(Q,q)_{m_2}}{(q,q)_{m_2}} e^{m_2 x_2}\times \frac{1}{1- q^{m_2} e^{x_1}  }\times
 \sum_{m_1} \frac{(Qq^{-1},q)_{m_1}}{(q,q)_{m_1}} e^{m_1 x_1};
$$
while it is not manifest from this expression as written, one can show that the partition function is invariant under exchanging the two knots.

We will now show that this has two distinct saddle points. Note that we can rewrite the amplitude as
$$
\Psi_{\Hopf}(x_1,x_2) ={1\over 1-  e^{g_s \partial_{x_2} }e^{x_1}} \Psi_{\bigcirc}(x_1, Q)\Psi_{\bigcirc}(x_2, Qq^{-1}),
$$
where $ \Psi_{\bigcirc}(x, Q)$ is the partition function corresponding to the unknot in the shifted variable $x$:
$$
 \Psi_{\bigcirc}(x, Q) = \sum_{m} \frac{(Q,q)_{m}}{(q,q)_{m}} e^{m x}.
$$
One of the saddle points, corresponding to $V_{\Hopf}(1^2)$, corresponds to the wave function being dominated by
$$
\Psi_{\Hopf}(x_1,x_2) ={1\over 1-  e^{-p_2 }e^{x_1}} \exp({1\over g_s} W_{\Hopf}(1^2)(x_1, x_2)+\ldots),
$$
where the missing terms are suppressed by powers of $g_s$. The saddle point potential is
$$
W_{\Hopf}(1^2)(x_1,x_2) = W_{\bigcirc}(x_1)+W_{\bigcirc}(x_2),
$$
the sum of two unknot potentials $W_{\bigcirc}(x_i, Q)$, with longitude $x_i$. Denoting
$$
 e^{x_i} = \lambda_i , \qquad  e^{p_i} = e^{-\partial_{x_i} W_{\bigcirc}(x_i, Q)}=\mu_i,
$$
then $\lambda_i$, $\mu_i$ live on a $2$-complex-dimensional variety which is Lagrangian and a direct product of two copies of the Riemann surface mirror to the unknot:
$$
V_{\Hopf}(1^2): \qquad Q- \lambda_1 - \mu_1 + \lambda_1 \mu_1=0, \qquad Q- \lambda_2 - \mu_2 + \lambda_1 \mu_1=0.
$$

The potential $W_{\Hopf}(1^2)(x_1,x_2) $ dominates the saddle point as long as the factor ${1\over 1-  e^{-p_2 }e^{x_1}}$
is analytic there. To study the opposite case, consider the Fourier transform ${\tilde \Psi}_{\Hopf}$ of $\Psi_{Hopf}$:
$${\tilde \Psi}_{\Hopf}(p_1, p_2) = \int dx_1 dx_2\; e^{p_1 x_1/g_s + p_2 x_2/g_s} \Psi_{Hopf}(x_1,x_2).
$$
The integral over $x_2$ at fixed $x_1$ is a Fourier transform
$$
{\tilde \Psi}_{\Hopf}(p_1, p_2) = \oint dx_1 \; {1\over 1-  e^{-p_2 }e^{x_1}} \exp(-p_1 x_1/g_s) \Psi_{\bigcirc}(x_1) \Psi_{\bigcirc}^{-1}(p_2),
$$
where we used the fact that the Fourier transform of the unknot inverts the wave function. Now, due to the pole in the integrand, the integral will depend on the contour. The saddle point corresponding to $V_{\Hopf}(2)$ corresponds to choosing a contour which makes a small circle around the pole at $e^{p_2} = e^{x_1}$. Making this choice, the integral is the residue of the pole
$$
{\tilde \Psi}_{\Hopf}(p_1, p_2)= e^{-p_1 p_2 /g_s},
$$
which is exact up to non-perturbative corrections. The potential corresponding to this is simply
$$
W_{\Hopf}(2)(p_1,p_2)=p_1p_2
$$
and the corresponding mirror variety is
$$
V_{\Hopf}(2): \qquad  \lambda_1 =  \mu_2, \qquad  \lambda_2= \mu_1.
$$

We could have chosen a different contour, which would have led to the ordinary Fourier transform, and we would have gotten the saddle point corresponding to $P=1^2$ instead. The two distinct fillings of $\Lambda_K$ at infinity, associated with the partitions $1^2$ and $2$,
thus correspond to two distinct contours of integration that pick out different saddle point contributions to ${\tilde \Psi}_{\Hopf}(p_1, p_2).$

\section{Chern-Simons and torus link invariants}\label{app:toruslink}

Here we consider the case of a $(n,nf)$ torus link ${\tilde K}$, for $f\in {\mathbb Z}$. This link consists of $n$ parallel copies of an unknot $K$ in framing $f$, i.e., a $(1,f)$ torus knot.
The HOMFLY polynomial for the unknot in this framing colored by an arbitrary representation $R$ is
$$
H_R(K) = T^f_R S_{0R}/S_{00},
$$
where $S_{RR'}$ and $T_{RR'} = T_R \delta^R_{R'}$ are the $SU(N)$ WZW S and T matrices as before. The explicit expressions for these are well known.
Using the fact that $S_{0R}/S_{00}$ is the dimension of the corresponding quantum group representation,
we have
$$
S_{0R}/S_{00} =
\prod_{s \in R}(Q q)^{-1/2} q^{-a'(s) } \frac{1-Q q^{a'(s) - \ell'(s)}}{1-q^{-a(s)-\ell(s)-1}};
$$
here the product runs over all boxes at position $s=(i,j)$ in the Young diagram corresponding to $R$, and $a(s) = R_i -j$, $\ell(s) = R_j^T-i$,
$a'(s) = j-1$, $l'(s)=i-1$. Furthermore,
$T_R = q^{{1\over 2}(\lambda_R, \lambda_R + 2\rho)}$, where $\lambda_R$ is the highest weight vector of an $SU(N)$ representation $R$, and $\rho$ is the Weyl vector. This gives
$$
T_R = q^{{1\over 2} \sum_i R_i(R_i - 2i+1)} Q^{\sum_i{R_i}/2}.
$$

There is a nice way to rewrite this by considering the representations $R_n$ whose Young diagrams have at most $n$ of rows of length $R_i$ for $i=1,\ldots,n$.
Then
$$
H_{R_n}(K)  \sim \prod_{i=1}^n  B_f({m_i})  \prod_{1\leq i< j\leq n} \Delta(q^{{m}_1}, \ldots , q^{{m}_n})
$$
where ${m}_i = R_i + n-i$,
and
$$
B_f(m) = q^{(f-1) m^2/2} \; \prod_{i=1}^m\frac{1 - \hat{Q}q^{i-1}}{1-q^{i}} {\hat Q}^{(f-1) m/2}\; q^{-m (n (f-1)+1)/2}
$$
with ${\hat Q} = Q q^{-n+1}$; here we have dropped the overall normalization, and $\Delta(z_1, \ldots , z_n) = \prod_{i<j}(z_i-z_j)$ is the Vandermonde determinant. Note that the shift that replaces $R_i$ by $m_i$ simply ensures that no two $m_{i,j}$ are equal.

Next, consider
$$
Z({\tilde K})(x_1, \ldots, x_n) =\sum_{R_n} H_{R_n}(K) {\rm Tr}_{R_n}(e^x_1,\ldots,e^x_n) .
$$
Using the fact that
$$
{\rm Tr}_{R_n}(e^x_1,\ldots,e^x_n) = \frac{{\rm det}(e^{x_j ({R}_i+n-i)})}{\Delta(e^{x_1},\ldots e^{x_n})}
$$
and antisymmetry, we can replace
$$
Z({\tilde K})(x_1, \ldots, x_n) =\sum_{m_1>m_2\ldots >m_n\geq 0}  B_f(m_i) \; \Delta(q^{{m}_1}, \ldots , q^{{m}_n})
 \frac{{\rm det}(e^{x_j {m}_i})}{\Delta(e^{x_1},\ldots e^{x_n})}
$$
with
$$
Z({\tilde K})(x_1, \ldots, x_n) =\sum_{m_1,m_2\ldots ,m_n\geq 0}  B_f({m_i})  \;
e^{m_1 x_1+\ldots +m_n x_n} \frac{ \Delta(q^{{m}_1}, \ldots , q^{{m}_n})}{\Delta(e^{x_1},\ldots e^{x_n})}.
$$

\bibliographystyle{utcaps}	
\bibliography{myrefs-2}	

\providecommand{\href}[2]{#2}\begingroup\raggedright\begin{thebibliography}{10}

\bibitem{AV}
M.~Aganagic and C.~Vafa, ``{Large N Duality, Mirror Symmetry, and a Q-deformed
  A-polynomial for Knots},''
\href{http://arxiv.org/abs/1204.4709}{{\tt arXiv:1204.4709 [hep-th]}}.

\bibitem{Gukov:2003na}
S.~Gukov, ``{Three-dimensional quantum gravity, Chern-Simons theory, and the A
  polynomial},'' \href{http://dx.doi.org/10.1007/s00220-005-1312-y}{{\em
  Commun.Math.Phys.} {\bf 255} (2005)  577--627},
\href{http://arxiv.org/abs/hep-th/0306165}{{\tt arXiv:hep-th/0306165
  [hep-th]}}.

\bibitem{2009ForPh}
R.~{Dijkgraaf} and H.~{Fuji},
  \href{http://dx.doi.org/10.1002/prop.200900067}{``{The volume conjecture and
  topological strings},''{\em Fortschritte der Physik} {\bf 57} (Sept., 2009)
  825--856}, \href{http://arxiv.org/abs/0903.2084}{{\tt arXiv:0903.2084
  [hep-th]}}.

\bibitem{Dijkgraaf:2010ur}
R.~Dijkgraaf, H.~Fuji, and M.~Manabe, ``{The Volume Conjecture, Perturbative
  Knot Invariants, and Recursion Relations for Topological Strings},''
  \href{http://dx.doi.org/10.1016/j.nuclphysb.2011.03.014}{{\em Nucl.Phys.}
  {\bf B849} (2011)  166--211},
\href{http://arxiv.org/abs/1010.4542}{{\tt arXiv:1010.4542 [hep-th]}}.

\bibitem{Gukov:2011qp}
S.~Gukov and P.~Sulkowski, ``{A-polynomial, B-model, and Quantization},''
  \href{http://dx.doi.org/10.1007/JHEP02(2012)070}{{\em JHEP} {\bf 1202} (2012)
   070},
\href{http://arxiv.org/abs/1108.0002}{{\tt arXiv:1108.0002 [hep-th]}}.

\bibitem{Borot:2012cw}
G.~Borot and B.~Eynard, ``{All-order asymptotics of hyperbolic knot invariants
  from non-perturbative topological recursion of A-polynomials},''
\href{http://arxiv.org/abs/1205.2261}{{\tt arXiv:1205.2261 [math-ph]}}.

\bibitem{Fuji:2012nx}
H.~Fuji, S.~Gukov, and P.~Sulkowski, ``{Super-A-polynomial for knots and BPS
  states},'' \href{http://dx.doi.org/10.1016/j.nuclphysb.2012.10.005}{{\em
  Nucl.Phys.} {\bf B867} (2013)  506--546},
\href{http://arxiv.org/abs/1205.1515}{{\tt arXiv:1205.1515 [hep-th]}}.

\bibitem{marino}
A.~Brini, B.~Eynard, and M.~Marino, ``{Torus knots and mirror symmetry},''
\href{http://arxiv.org/abs/1105.2012}{{\tt arXiv:1105.2012 [hep-th]}}.

\bibitem{Jockers:2012pz}
H.~Jockers, A.~Klemm, and M.~Soroush, ``{Torus Knots and the Topological
  Vertex},''
\href{http://arxiv.org/abs/1212.0321}{{\tt arXiv:1212.0321 [hep-th]}}.

\bibitem{Witten:1988hf}
E.~Witten, ``{Quantum field theory and the Jones polynomial},''
\href{http://dx.doi.org/10.1007/BF01217730}{{\em Commun. Math. Phys.} {\bf 121}
  (1989)  351}.

\bibitem{HOMFLY}
P.~Freyd, D.~Yetter, J.~Hoste, W.~Lickorish, K.~Millet, and A.~Ocneanu, ``A new
  polynomial invariant of knots and links,''
\href{http://dx.doi.org/%10.1016/0550-3213(89)90232-0}{{\em Bull. AMS} {\bf 12}
  (1985)  239--246}.

\bibitem{Witten:1992fb}
E.~Witten, ``{Chern-Simons gauge theory as a string theory},'' {\em Prog.Math.}
  {\bf 133} (1995)  637--678,
\href{http://arxiv.org/abs/hep-th/9207094}{{\tt arXiv:hep-th/9207094
  [hep-th]}}.

\bibitem{OV}
H.~Ooguri and C.~Vafa, ``{Knot invariants and topological strings},''
  \href{http://dx.doi.org/10.1016/S0550-3213(00)00118-8}{{\em Nucl.Phys.} {\bf
  B577} (2000)  419--438},
\href{http://arxiv.org/abs/hep-th/9912123}{{\tt arXiv:hep-th/9912123
  [hep-th]}}.

\bibitem{GV}
R.~Gopakumar and C.~Vafa, ``{On the gauge theory/geometry correspondence},''
  {\em Adv. Theor. Math. Phys.} {\bf 3} (1999)  1415--1443,
\href{http://arxiv.org/abs/hep-th/9811131}{{\tt arXiv:hep-th/9811131}}.

\bibitem{HV}
K.~Hori and C.~Vafa, ``Mirror symmetry,''
\href{http://arxiv.org/abs/hep-th/0002222}{{\tt arXiv:hep-th/0002222
  [hep-th]}}.

\bibitem{Dimofte2}
T.~Dimofte, M.~Gabella, and A.~B. Goncharov, ``{K-Decompositions and 3d Gauge
  Theories},''
\href{http://arxiv.org/abs/1301.0192}{{\tt arXiv:1301.0192 [hep-th]}}.

\bibitem{LMV}
J.~Labastida, M.~Marino, and C.~Vafa, ``{Knots, links and branes at large N},''
  {\em JHEP} {\bf 0011} (2000)  007,
\href{http://arxiv.org/abs/hep-th/0010102}{{\tt arXiv:hep-th/0010102
  [hep-th]}}.

\bibitem{Eynard:2008he}
B.~Eynard and M.~Marino, ``{A Holomorphic and background independent partition
  function for matrix models and topological strings},''
  \href{http://dx.doi.org/10.1016/j.geomphys.2010.11.012}{{\em J.Geom.Phys.}
  {\bf 61} (2011)  1181--1202},
\href{http://arxiv.org/abs/0810.4273}{{\tt arXiv:0810.4273 [hep-th]}}.

\bibitem{Marino:2009dp}
M.~Marino, S.~Pasquetti, and P.~Putrov, ``{Large N duality beyond the genus
  expansion},'' \href{http://dx.doi.org/10.1007/JHEP07(2010)074}{{\em JHEP}
  {\bf 1007} (2010)  074},
\href{http://arxiv.org/abs/0911.4692}{{\tt arXiv:0911.4692 [hep-th]}}.

\bibitem{Beem:2012mb}
C.~Beem, T.~Dimofte, and S.~Pasquetti, ``{Holomorphic Blocks in Three
  Dimensions},''
\href{http://arxiv.org/abs/1211.1986}{{\tt arXiv:1211.1986 [hep-th]}}.

\bibitem{Garoufalidis}
S.~Garoufalidis, ``{The colored HOMFLY polynomial is q-holonomic},''
\href{http://arxiv.org/abs/1211.6388}{{\tt arXiv:1211.6388 [math.GT]}}.

\bibitem{kashiwara}
M.~Kashiwara, ``Micro-Local Calculus,'' {\em International Symposium on
  Mathematical Problems in Theoretical Physics, Lecture Notes in Physics Volume
  39} (1975)  30--37.

\bibitem{miwa}
M.~J. T.~Miwa, T.~Oshima, ``Introduction to Microlocal Analysis,'' {\em Publ.
  RIMS, Kyoto Univ. 12 Suppl., Lecture Notes in Physics Volume 39} (1974)
  267--300.

\bibitem{kashiwarad}
M.~Kashiwara, ``D-modules and Microlocal Calculus,'' {\em Volume 217 of
  Translations of mathematical monographs} (2003)  .

\bibitem{Witten:1982im}
E.~Witten, ``{Supersymmetry and Morse theory},''
{\em J.Diff.Geom.} {\bf 17} (1982)  661--692.

\bibitem{Ngsurvey}
L.~Ng, ``A topological introduction to knot contact homology,''
  \href{http://arxiv.org/abs/1210.4803}{{\tt arXiv:1210.4803}}.

\bibitem{EGH}
Y.~Eliashberg, A.~Givental, and H.~Hofer, ``Introduction to symplectic field
  theory,'' {\em Geom. Funct. Anal.} (2000) no.~Special Volume, Part II,
  560--673, \href{http://arxiv.org/abs/math.SG/0010059}{{\tt
  arXiv:math.SG/0010059}}. GAFA 2000 (Tel Aviv, 1999).

\bibitem{EES}
T.~Ekholm, J.~Etnyre, and M.~Sullivan, ``Legendrian contact homology in
  {$P\times\Bbb R$},''
  \href{http://dx.doi.org/10.1090/S0002-9947-07-04337-1}{{\em Trans. Amer.
  Math. Soc.} {\bf 359} (2007) no.~7, 3301--3335 (electronic)},
  \href{http://arxiv.org/abs/math/0505451}{{\tt arXiv:math/0505451}}.

\bibitem{Ekholm_rsft}
T.~Ekholm, ``Rational symplectic field theory over {$\Bbb Z\sb 2$} for exact
  {L}agrangian cobordisms,'' {\em J. Eur. Math. Soc. (JEMS)} {\bf 10} (2008)
  no.~3, 641--704, \href{http://arxiv.org/abs/math/0612029}{{\tt
  arXiv:math/0612029}}.

\bibitem{Ngframed}
L.~Ng, ``Framed knot contact homology,'' {\em Duke Math. J.} {\bf 141} (2008)
  no.~2, 365--406, \href{http://arxiv.org/abs/math/0407071}{{\tt
  arXiv:math/0407071}}.

\bibitem{EENStransverse}
T.~Ekholm, J.~Etnyre, L.~Ng, and M.~Sullivan, ``Filtrations on the knot contact
  homology of transverse knots,'' {\em Math. Annalen} (2012)  ,
  \href{http://arxiv.org/abs/1010.0450}{{\tt arXiv:1010.0450}}. To appear.

\bibitem{Ngtransverse}
L.~Ng, ``Combinatorial knot contact homology and transverse knots,'' {\em Adv.
  Math.} {\bf 227} (2011) no.~6, 2189--2219,
  \href{http://arxiv.org/abs/1010.0451}{{\tt arXiv:1010.0451}}.

\bibitem{EENS}
T.~Ekholm, J.~Etnyre, L.~Ng, and M.~Sullivan, ``Knot contact homology,''
  \href{http://arxiv.org/abs/1109.1542}{{\tt arXiv:1109.1542}}.

\bibitem{Tillmann}
S.~Tillmann, ``Boundary slopes and the logarithmic limit set,''
  \href{http://dx.doi.org/10.1016/j.top.2004.07.002}{{\em Topology} {\bf 44}
  (2005) no.~1, 203--216}. \url{http://dx.doi.org/10.1016/j.top.2004.07.002}.

\bibitem{Mishachev}
K.~Mishachev, ``The {$N$}-copy of a topologically trivial {L}egendrian knot,''
  {\em J. Symplectic Geom.} {\bf 1} (2003) no.~4, 659--682.

\bibitem{FO3}
K.~Fukaya, Y.-G. Oh, H.~Ohta, and K.~Ono, {\em Lagrangian intersection {F}loer
  theory: anomaly and obstruction. {P}art {I}}, vol.~46 of {\em AMS/IP Studies
  in Advanced Mathematics}.
\newblock American Mathematical Society, Providence, RI, 2009.

\bibitem{BEHWZ}
F.~Bourgeois, Y.~Eliashberg, H.~Hofer, K.~Wysocki, and E.~Zehnder,
  ``Compactness results in symplectic field theory,'' {\em Geom. Topol.} {\bf
  7} (2003)  799--888, \href{http://arxiv.org/abs/math.SG/0308183}{{\tt
  arXiv:math.SG/0308183}}.

\bibitem{Hofer}
H.~Hofer, ``A general {F}redholm theory and applications,'' in {\em Current
  developments in mathematics, 2004}, pp.~1--71.
\newblock Int. Press, Somerville, MA, 2006.
\newblock \href{http://arxiv.org/abs/math/0509366}{{\tt arXiv:math/0509366}}.

\bibitem{E}
T.~Ekholm, ``Morse flow trees and {L}egendrian contact homology in 1-jet
  spaces,'' {\em Geom. Topol.} {\bf 11} (2007)  1083--1224,
  \href{http://arxiv.org/abs/math.SG/0509386}{{\tt arXiv:math.SG/0509386}}.

\bibitem{CEL}
K.~Cieliebak, T.~Ekholm, and J.~Latschev, ``Compactness for holomorphic curves
  with switching {L}agrangian boundary conditions,'' {\em J. Symplectic Geom.}
  {\bf 8} (2010) no.~3, 267--298, \href{http://arxiv.org/abs/0903.2200}{{\tt
  arXiv:0903.2200}}.

\bibitem{CELN}
K.~Cieliebak, T.~Ekholm, J.~Latschev, and L.~Ng. In preparation.

\bibitem{Cornwell}
C.~Cornwell, ``Knot contact homology and representations of knot groups,''
  \href{http://arxiv.org/abs/1303.4943}{{\tt arXiv:1303.4943}}.

\bibitem{morozov}
H.~Itoyama, A.~Mironov, A.~Morozov, and A.~Morozov, ``{HOMFLY and
  superpolynomials for figure eight knot in all symmetric and antisymmetric
  representations},'' \href{http://dx.doi.org/10.1007/JHEP07(2012)131}{{\em
  JHEP} {\bf 1207} (2012)  131},
\href{http://arxiv.org/abs/1203.5978}{{\tt arXiv:1203.5978 [hep-th]}}.

\bibitem{kawagoe}
K.~Kawagoe, ``On the formulae for the colored HOMFLY polynomials,''
  \href{http://arxiv.org/abs/1210.7574}{{\tt arXiv:1210.7574 [math.GT]}}.

\bibitem{ramadevi1}
P.~Ramadevi and Zodinmawia, ``{Reformulated invariants for non-torus knots and
  links},''
\href{http://arxiv.org/abs/1209.1346}{{\tt arXiv:1209.1346 [hep-th]}}.

\bibitem{Zhu:2012tm}
S.~Zhu, ``{Colored HOMFLY polynomial via skein theory},''
\href{http://arxiv.org/abs/1206.5886}{{\tt arXiv:1206.5886 [math.GT]}}.

\bibitem{Morozov1}
A.~Mironov, A.~Morozov, and A.~Morozov, ``{Character expansion for HOMFLY
  polynomials. I. Integrability and difference equations},'' {\em Strings,
  Gauge Fields, and the Geometry Behind: The Legacy of Maximilian Kreuzer,
  edited by A. Rebhan, L. Katzarkov, J. Knapp, R. Rashkov, E.} {\bf
  Scheidegger} (World Scietific Publishins Co.Pte.Ltd. 2013)  pp.101--118,
\href{http://arxiv.org/abs/1112.5754}{{\tt arXiv:1112.5754 [hep-th]}}.

\bibitem{Morozov2}
A.~Anokhina, A.~Mironov, A.~Morozov, A.~Morozov, A.~Mironov, {\em et al.},
  ``{Racah coefficients and extended HOMFLY polynomials for all 5-, 6- and
  7-strand braids},''
  \href{http://dx.doi.org/10.1016/j.nuclphysb.2012.11.006}{{\em Nucl.Phys.}
  {\bf B868} (2013)  271--313},
\href{http://arxiv.org/abs/1207.0279}{{\tt arXiv:1207.0279 [hep-th]}}.

\bibitem{ramadevi2}
S.~Nawata, P.~Ramadevi, and Zodinmawia, ``{Multiplicity-free quantum 6j-symbols
  for $U_q(sl_N)$},''
\href{http://arxiv.org/abs/1302.5143}{{\tt arXiv:1302.5143 [hep-th]}}.

\bibitem{ramadevi3}
S.~Nawata, P.~Ramadevi, and Zodinmawia, ``{Colored HOMFLY polynomials from
  Chern-Simons theory},''
\href{http://arxiv.org/abs/1302.5144}{{\tt arXiv:1302.5144 [hep-th]}}.

\bibitem{Morozov3}
A.~Anokhina, A.~Mironov, A.~Morozov, and A.~Morozov, ``{Colored HOMFLY
  polynomials as multiple sums over paths or standard Young tableaux},''
\href{http://arxiv.org/abs/1304.1486}{{\tt arXiv:1304.1486 [hep-th]}}.

\bibitem{IH}
M.~Aganagic, R.~Dijkgraaf, A.~Klemm, M.~Marino, and C.~Vafa, ``Topological
  strings and integrable hierarchies,''
  \href{http://dx.doi.org/10.1007/s00220-005-1448-9}{{\em Commun.Math.Phys.}
  {\bf 261} (2006)  451--516},
\href{http://arxiv.org/abs/hep-th/0312085}{{\tt arXiv:hep-th/0312085
  [hep-th]}}.

\bibitem{DHSV}
R.~Dijkgraaf, L.~Hollands, P.~Sulkowski, and C.~Vafa, ``{Supersymmetric gauge
  theories, intersecting branes and free fermions},''
  \href{http://dx.doi.org/10.1088/1126-6708/2008/02/106}{{\em JHEP} {\bf 0802}
  (2008)  106},
\href{http://arxiv.org/abs/0709.4446}{{\tt arXiv:0709.4446 [hep-th]}}.

\bibitem{Dijkgraaf2}
R.~Dijkgraaf, L.~Hollands, and P.~Sulkowski, ``{Quantum Curves and
  D-Modules},'' \href{http://dx.doi.org/10.1088/1126-6708/2009/11/047}{{\em
  JHEP} {\bf 0911} (2009)  047},
\href{http://arxiv.org/abs/0810.4157}{{\tt arXiv:0810.4157 [hep-th]}}.

\bibitem{Dijkgraaf3}
R.~Dijkgraaf, ``{D-branes and D-modules},''
\href{http://dx.doi.org/10.1143/PTPS.177.12}{{\em Prog.Theor.Phys.Suppl.} {\bf
  177} (2009)  12--32}.

\bibitem{vstring}
C.~Vafa, ``{Supersymmetric Partition Functions and a String Theory in 4
  Dimensions},''
\href{http://arxiv.org/abs/1209.2425}{{\tt arXiv:1209.2425 [hep-th]}}.

\bibitem{Kapustin-Orlov}
A.~Kapustin and D.~Orlov, ``{Vertex algebras, mirror symmetry, and D-branes:
  The Case of complex tori},''
  \href{http://dx.doi.org/10.1007/s00220-002-0755-7}{{\em Commun.Math.Phys.}
  {\bf 233} (2003)  79--136},
\href{http://arxiv.org/abs/hep-th/0010293}{{\tt arXiv:hep-th/0010293
  [hep-th]}}.

\bibitem{Kapustin-Witten}
A.~Kapustin and E.~Witten, ``{Electric-Magnetic Duality And The Geometric
  Langlands Program},'' {\em Commun.Num.Theor.Phys.} {\bf 1} (2007)  1--236,
\href{http://arxiv.org/abs/hep-th/0604151}{{\tt arXiv:hep-th/0604151
  [hep-th]}}.

\bibitem{Gukov-Witten}
S.~Gukov and E.~Witten, ``{Branes and Quantization},'' {\em
  Adv.Theor.Math.Phys.} {\bf 13} (2009)  ,
\href{http://arxiv.org/abs/0809.0305}{{\tt arXiv:0809.0305 [hep-th]}}.

\bibitem{GL}
S.~Garoufalidis and T.~T.~Q. L{\^e}, ``The colored {J}ones function is
  {$q$}-holonomic,'' \href{http://dx.doi.org/10.2140/gt.2005.9.1253}{{\em Geom.
  Topol.} {\bf 9} (2005)  1253--1293 (electronic)}.
  \url{http://dx.doi.org/10.2140/gt.2005.9.1253}.

\end{thebibliography}\endgroup
\end{document}